\newtheorem{theorem}{Theorem}[section]
\newtheorem{lemma}[theorem]{Lemma}
\newtheorem{proposition}[theorem]{Proposition}
\newtheorem{definition}[theorem]{Definition}
\newtheorem{corollary}[theorem]{Corollary}
\newtheorem{question}[theorem]{Question}
\newtheorem{claim}[theorem]{Claim}
\newtheorem{observation}[theorem]{Observation}
\newenvironment{innerproof}
 {\proof}
 {\endproof}
\newcommand{\defparproblem}[4]{
	\vspace{3mm}
	\noindent\fbox{
		\begin{minipage}{0.96\linewidth}
			\begin{tabular*}{\linewidth}{@{\extracolsep{\fill}}lr} \textsc{#1} & {\bf{Parameter:}} #3 \\ \end{tabular*}
			{\bf{Input:}} #2 \\
			{\bf{Task:}} #4
		\end{minipage}
	}
	\vspace{2mm}
}
\newcommand{\tw}{\mathrm{\textbf{tw}}}
\newcommand{\twsf}{\ensuremath{\mathsf{tw}}\xspace}
\newcommand{\br}[1]{\left(#1\right)}
\newcommand{\Oh}{\mathcal{O}}
\newcommand{\sub}{\subseteq}
\newcommand{\sm}{\setminus}
\newcommand{\rr}{\ensuremath{\mathbb{R}}}
\newcommand{\nn}{\ensuremath{\mathbb{N}}}
\newcommand{\pp}{\ensuremath{\mathcal{P}}}
\newcommand{\qq}{\ensuremath{\mathcal{Q}}}
\newcommand{\ff}{\ensuremath{\mathcal{F}}}
\newcommand{\bb}{\ensuremath{\mathbf{b}}}
\newcommand{\vv}{\ensuremath{\mathbf{v}}}
\newcommand{\tcal}{\ensuremath{\mathcal{T}}}
\newcommand{\scal}{\ensuremath{\mathcal{S}}}
\newcommand{\func}{\ensuremath{\gamma}}
\newcommand{\dist}{\ensuremath{\mathsf{dist}}}
\newcommand{\rdist}{\ensuremath{\mathsf{rdist}}}
\newcommand{\inter}{\ensuremath{\mathsf{int}}}
\newcommand{\disc}{\ensuremath{\mathsf{Disc}}}
\newcommand{\ring}{\ensuremath{\mathsf{Ring}}}
\newcommand{\ringfull}{\ensuremath{\mathsf{Ring}(I_{in}, I_{out})}\xspace}
\newcommand{\nullnoncross}{\textsc{Non-crossing Multicommodity Flow}\xspace}
\ifdefined\DEBUG{}
\newcommand{\mic}[1]{{\color{blue}{#1}}}
\def\rem#1{{\marginpar{\raggedright\scriptsize #1}}}
\newcommand{\micr}[1]{\rem{\textcolor{blue}{\(\bullet \) #1}}}
\newcommand{\meir}[1]{\rem{\textcolor{purple}{\(\bullet \) #1}}}
\newcommand{\mic}[1]{#1}
\newcommand{\micr}[1]{ }
\newcommand{\meir}[1]{ }
\title{Planar Disjoint Paths, Treewidth, and Kernels}
\author{Micha{\l} W{\l}odarczyk\footnote{Address: \texttt{michal.wloda@gmail.com}} \\ Ben Gurion University
\and Meirav Zehavi\footnote{Address: \texttt{meiravze@bgu.ac.il}} \\ Ben Gurion University}
\date{}
\begin{document}
\pagestyle{empty}
\maketitle

\thispagestyle{empty}

\begin{abstract} 
In the {\sc Planar Disjoint Paths} problem, one is given an undirected planar graph with a set of $k$ vertex pairs $(s_i,t_i)$ and the task is to find $k$ pairwise vertex-disjoint paths such that the $i$-th path connects $s_i$ to $t_i$.
We study the problem through the lens of kernelization, aiming at efficiently reducing the input size in terms of a parameter.
We show that  {\sc Planar Disjoint Paths} does not admit a polynomial kernel when parameterized by $k$ unless coNP $\sub$ NP/poly, resolving an open problem by [Bodlaender, Thomass{\'e}, Yeo, ESA'09].
Moreover, we rule out the existence of a polynomial Turing kernel unless \mic{the} WK-hierarchy collapses.
Our reduction carries over to the setting of edge-disjoint paths, where the kernelization status remained open even in general graphs.

On the positive side, we present a polynomial kernel for  {\sc Planar Disjoint Paths} parameterized by $k + \twsf$, where $\twsf$ denotes the treewidth of the input graph.
As a consequence of both our results,
we rule out the possibility of a polynomial-time (Turing) treewidth reduction to $\twsf = k^{\Oh(1)}$ under the same assumptions.
To the best of our knowledge, this is the first hardness result of this kind.
\mic{Finally}, combining our kernel with the known techniques [Adler, Kolliopoulos, Krause, Lokshtanov, Saurabh, Thilikos, JCTB'17; Schrijver, SICOMP'94] yields an 
alternative (and arguably simpler) proof that {\sc Planar Disjoint Paths} can be solved in time $2^{\Oh(k^2)}\cdot n^{\Oh(1)}$, matching the result of [Lokshtanov, Misra, Pilipczuk, Saurabh, Zehavi, STOC'20].
\end{abstract}

\newpage


\newpage
\pagestyle{plain}
\setcounter{page}{1}

\section{Introduction}

{\sc Disjoint Paths} is a fundamental routing problem: for several decades, it has been extensively studied in a wide variety of areas in computer science and graph theory. We focus on the area of algorithm design, specifically of parameterized algorithms. Phrased as a parameterized problem, given an $n$-vertex undirected graph $G$  and a set of $k$ pairwise disjoint vertex pairs, $\{s_i, t_i\}^k_{i=1}$, the objective is to decide whether there exist $k$ pairwise vertex-disjoint paths connecting $s_i$ to $t_i$ for each $i \in \{1,\ldots, k\}$. Here, the classic parameter choice is $k$. The problem was shown to be NP-hard by Karp (attributing it to Knuth) in 1975~\cite{karp1975computational}, in a follow-up paper to his classic list of 21 NP-complete problems~\cite{DBLP:conf/coco/Karp72}. Since then, the problem was shown to be NP-hard on various simple graph classes~\cite{heggernes2015finding,lynch1975equivalence,natarajan1996disjoint} 
\mic{including} the class of grid graphs~\cite{kramer1984complexity}, a highly restricted subclass of planar graphs. Notably, {\sc Disjoint Paths} is a cornerstone for the widely celebrated graph minors project of Robertson and Seymour, considered to be one of the greatest feats of modern mathematics (see Section~1.5). Moreover, {\sc Disjoint Paths} finds  applications in various practical fields such as VLSI layout and virtual circuit routing~\cite{frank1990packing,ogier1993distributed,schrijver2003combinatorial,srinivas2005finding}.

Due to its computational hardness, {\sc Disjoint Paths} was studied from the perspectives of parameterized complexity and approximation algorithms. In particular, {\sc Disjoint Paths} was shown to be in FPT (that is, solvable in time $f(k)\cdot n^{O(1)}$ for some computational function $f$ of $k$) in 1995 as part of the graph minors project~\cite{robertson1995graph}, being one of the first problems classified in FPT. Here, the polynomial is $n^3$. In 2012, the polynomial was improved to $n^2$~\cite{kawarabayashi2012disjoint}. Unfortunately, the dependency on $k$ in both algorithms is ``galactic''~\cite{lipton2013people,johnson1985np}, being a tower of exponents. Concerning approximation algorithms, the state-of-the-art is  grim as well: despite substantial efforts, the currently best-known approximation algorithm is still a simple greedy one that achieves a ratio of~$O(\sqrt{n})$~\cite{kolliopoulos2004approximating}.

We focus on {\sc Planar Disjoint Paths}, the most well-studied special case of {\sc Disjoint Paths}, where the input graph is restricted to be planar.  Understanding this special case is critical for algorithms for {\sc Disjoint Paths}, {\sc Minor Testing} and {\sc Topological Minor Testing} on general graphs (see Section~1.5). Moreover, it finds most of the general case's applications. Fortunately, this special case is  known to be more tractable than the general one. 
Already in the 90s, {\sc Disjoint Paths} on planar~\cite{reed1995rooted,reed1993finding} and bounded genus graphs~\cite{reed1995rooted,dvovrak2009coloring,kobayashi2009algorithms} were shown to admit algorithms with running times whose dependency on $n$ in linear.
Regarding the dependency on $k$, \mic{the state-of-the-art for {\sc Planar Disjoint Paths} is} 
$2^{O(k^2)}\cdot n^{O(1)}$~\cite{LokshtanovMPSZ20}, improving upon earlier works \cite{AdlerKKLST17, reed1995rooted}.  Very recently, the dependency on $n$ was improved to be linear without compromising this dependency on~$k$~\cite{cho2023parameterized}. It is also noteworthy  that when extended to directed graphs, {\sc Planar Disjoint Paths} is in FPT~\cite{DBLP:conf/focs/CyganMPP13} (and, for three decades, already known to be in XP~\cite{schrijver1994finding}), while {\sc Disjoint Paths} is NP-hard \mic{already} when $k=2$~\cite{fortune1980directed}.  {\sc Planar Disjoint Paths} has also been intensively studied from the perspective of approximation algorithms, with a burst of activity in recent years. Some of the highlights of this line of works include a polynomial-time approximation algorithm with a factor of $n^{9/19} \log^{O(1)} n$~\cite{chuzhoy2016improved}, and, under reasonable complexity-theoretic assumptions, the proof of hardness of polynomial-time approximation within a factor of $2^{o(\sqrt{\log n})}$~\cite{chuzhoy2017new}.

\paragraph{1.1. Our focus: Kernelization of planar disjoint paths.}
From the perspective of parameterized complexity, the (arguably) biggest open question that remains regarding {\sc Planar Disjoint Paths} is whether it admits a polynomial kernel. Kernelization is \mic{a} mathematical paradigm for \mic{the} analysis of preprocessing procedures~\cite{fomin2019kernelization}. Due to the profound impact of preprocessing, kernelization has been termed ``the lost continent of polynomial time''~\cite{fellows2006lost}. Formally, a parameterized problem $\Pi$ admits a {\em kernel} if there is a polynomial-time algorithm (called a kernelization algorithm) that, given an instance $(I,k)$ of $\Pi$, translates it into an equivalent instance $(I',k')$ of $\Pi$ of size $f(k)$ for some computable function $f$ depending only on $k$. (Equivalence means that $(I, k)$ is a yes-instance if and only if $(I',k')$ is a yes-instance.) A (decidable) problem admits a kernel if and only if it is in FPT~\cite{DBLP:journals/apal/CaiCDF97}. So, the central question in kernelization is: Which problems admit kernels of size $f(k)$ where $f$ is polynomial in $k$, called {\em polynomial kernels}. Originally in 2009, {\sc Disjoint Paths} was shown not to admit a polynomial kernel with respect to $k$ unless coNP $\subseteq$ NP/poly~\cite{DBLP:conf/esa/BodlaenderTY09,bodlaender2011kernel}, being one of the first problems for which such a result was proved. In the same paper, it was already asked whether {\sc Planar Disjoint Paths} \mic{has} a polynomial kernel. 
Still, up until this paper, it was not even known whether \mic{its extension 
to directed planar} graphs admits a polynomial kernel. 

Remarkably, the literature abounds with problems that do not admit polynomial kernels on general graphs unless coNP $\subseteq$ NP/poly, but admit polynomial kernels on planar graphs~\cite{fomin2019kernelization}. What is more, many of them are W[1]-hard\footnote{A W[1]-hard problem is unlikely to be in FPT~\cite{cygan2015parameterized}.} or even W[2]-hard on general graphs, while the sizes of their polynomial kernels on planar graphs are, in fact, linear; {\sc Dominating Set} is a prime example for this phenomenon. Today, we have very general techniques to design such kernels on planar graphs~\cite{DBLP:journals/siamcomp/FominLST20,fomin2019kernelization}
\mic{and} there exist only few\footnote{We are aware of one example: for {\sc Steiner Tree} on planar graphs parameterized by the number of terminals, the unlikely existence of a polynomial kernel is implied by the combination of the lower and upper bounds given in~\cite{MarxPP18}.} natural problems that are in FPT on planar graphs, but 
have non-trivial \mic{kernelization lower bounds.}
By non-trivial we mean that the proof for planar graphs is not essentially the same as for general~graphs.


In this paper, we  decipher the complexity of preprocessing procedures (kernels and treewidth reductions) for {\sc Planar Disjoint Paths}. 
\mic{Below,} we present our main theorems and their implications. \mic{Next, we discuss} the role of our work in the efforts of making the graph minors theory~efficient.

\paragraph{\bf 1.2. On the negative side: Our first main theorem.} First, we resolve the almost decade-and-a-half open question of whether {\sc Planar Disjoint Paths} admits a polynomial kernel with respect to $k$: unless the polynomial hierarchy collapses, the answer~is~negative. 

\begin{theorem}[{\bf Main Theorem I}]\label{thm:noPolyKer}
Unless coNP $\subseteq$ NP/poly, {\sc Planar Disjoint Paths} does not admit a polynomial kernel with respect to $k$. 
\end{theorem}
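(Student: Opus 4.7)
The plan is to establish \Cref{thm:noPolyKer} via the standard \emph{OR-cross-composition} framework of Bodlaender, Jansen, and Kratsch. Concretely, the goal is to design a polynomial-time algorithm which, given $t$ instances $I_1,\dots,I_t$ of some NP-hard source problem $\Pi$ (each of size at most $n$), builds a single \textsc{Planar Disjoint Paths} instance $(G,\mathcal{D})$ with $|\mathcal{D}|=k$ such that: (i) $(G,\mathcal{D})$ is a yes-instance iff some $I_j$ is; (ii) $|G|$ is polynomial in $t+n$; and, crucially, (iii) $k$ is $\mathrm{poly}(n)\cdot\mathrm{polylog}(t)$. This suffices to rule out a polynomial kernel parameterized by $k$ unless coNP $\subseteq$ NP/poly.

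For the source problem, a natural candidate is a planar NP-hard selection problem (for instance a variant of \textsc{Planar 3-SAT}, or \textsc{Planar Disjoint Paths} itself restricted to instances of size~$n$) whose single-instance encoding already lives in a planar gadget with a polynomially-bounded boundary. The construction then has two ingredients. A \emph{verifier block} per instance $I_j$: a planar gadget $G_j$ with $\mathrm{poly}(n)$ designated terminal pairs that can be simultaneously routed iff $I_j$ is a yes-instance. A \emph{selector block}: $O(\log t)$ terminal pairs whose permissible routings encode a binary choice of an index $j\in[t]$. The two ingredients are wired so that the selector's choice ``opens'' exactly one $G_j$ for the verifier pairs to traverse while closing off all other $G_{j'}$, typically by having selector paths occupy and block their entry vertices.

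The main obstacle is planarity. In the general (non-planar) setting any wiring of selector paths to entry points of the $G_j$'s is allowed, so a construction is essentially straightforward; here, however, all $t$ verifier gadgets together with the $O(\log t)$ selector pairs must coexist in one planar embedding. The intended layout arranges the $G_j$'s sequentially along the outer face of the drawing (e.g.\ along one long boundary arc) and realizes the selector as a planar ``tree of choices'' of depth $O(\log t)$: each internal node is a constant-size gadget where a single terminal pair has two possible routings that short-circuit away one half of the remaining $G_j$'s. Designing this elementary gadget so that, no matter which half survives, the selected verifier's pairs can still reach the selector's exit without forced crossings against either the surviving selector paths or the verifier paths inside $G_j$ is the delicate combinatorial step; my best guess is to use a rectilinear ``ladder'' arrangement where the two routings of the selector pair pass on opposite sides of a separating spine.

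To additionally rule out polynomial Turing kernels under non-collapse of the WK-hierarchy, the plan is to upgrade the OR-cross-composition to a cross-composition from a suitably WK[1]-hard source, which typically reduces to a polynomial-parameter-transformation-style argument layered on top of the same planar construction. Finally, the claim in the abstract that the reduction also handles \textsc{Planar Edge-Disjoint Paths} will follow by keeping all gadgets of maximum degree~$3$, so that the vertex-disjoint and edge-disjoint variants coincide after the standard vertex-splitting trick.
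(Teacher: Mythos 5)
Your proposal correctly identifies the general framework (a composition-type lower bound) and even the right shape of the difficulty---in a single planar embedding, few terminal pairs must select and certify one of many alternatives---but the step you defer as ``the delicate combinatorial step'' is not a detail: it is the entire content of the theorem, and your sketch of it does not work as stated. A selector realized as a depth-$O(\log t)$ tree of constant-size branching gadgets must, at each level, physically block the entries of half of the surviving verifier blocks while leaving a planar corridor through which the verifier pairs of the selected block can still be routed. At the same time the verifier pairs cannot be instance-specific (otherwise the parameter is $\Omega(t\cdot n)$), so a single fixed set of $\mathrm{poly}(n)$ pairs must be able to traverse any one of the $t$ gadgets and succeed exactly when that instance is a yes-instance; you give no construction for either requirement, and the obvious attempts fail (identifying terminals across gadgets destroys planarity, and giving each $G_j$ its own pairs blows up the parameter). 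The underlying obstruction is that \textsc{Planar Disjoint Paths} is polynomial-time solvable once the homotopy class of the solution is fixed, and only $n^{O(k)}$ classes exist, so the reduction must inject exponentially many solution candidates into the homotopy classes of $\mathrm{poly}$-many paths; a ``rectilinear ladder where the two routings pass on opposite sides of a spine'' does not address this counting barrier, which is why the question stood open since 2009.

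The paper's route is genuinely different and is built precisely around that barrier. Instead of an OR-cross-composition it gives a polynomial parameter transformation from \textsc{Set Cover} parameterized by the universe size (which is simultaneously kernelization-hard and WK[1]-complete, so one reduction yields both \cref{thm:noPolyKer} and \cref{thm:MK2}, whereas a cross-composition alone would not give the Turing-kernel lower bound you mention). The ``selection'' is a vector $\bb\in\{0,1\}^r$ chosen by the solution's homotopy class and forced to be consistent around a ring of ladder gadgets by a flow-saturation argument; the ``verification'' that $\bb$ indexes a set containing a given element uses a vector-containment gadget whose dual is a graph $H_r$ with $2^r$ pairwise edge-disjoint \emph{unique} shortest $(s,t)$-paths, one per homotopy class, so that membership in an arbitrary $Z\subseteq\{0,1\}^r$ is encoded by subdividing one edge per element of $Z$. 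A further ingredient your plan does not anticipate is that this construction requires demands as large as $2^{\Theta(k)}$, which must then be simulated by $O(\log d)$ unit requests via the Adler--Krause grid before one can speak of disjoint paths at all. Your closing observation about keeping gadgets subcubic to cover the edge-disjoint variant does match the paper (which uses cylindrical walls for this), but the core selection/verification machinery is missing from the proposal.
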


Our reduction also shows that {\sc Planar Disjoint Paths} is WK[1]-hard\footnote{The WK-hierarchy organizes parameterized problems with respect to polynomial parameter transformations. WK[1]-complete problems include {\sc Set Cover} (par. by the universe size),  {\sc Connected Vertex Cover} (par. by the solution size), {\sc {Min Ones 3-SAT}} (par. by the number of 1's in an assignment), or {\sc {Binary Nondeterministic Turing Machine Halting}} (par. by the number of steps).}, which means (see~\cite{hermelin2015completeness}) that it is unlikely 
even to admit a weaker form of a preprocessing procedure called a {\em polynomial Turing kernel}. Formally, a parameterized problem $\Pi$ admits a {\em Turing kernel} if 
there exists a polynomial-time algorithm for $\Pi$ \mic{using an oracle that solves instances of $\Pi$ of size at most $f(k)$ for some computable function $f$}.
Similarly to standard kernelization, polynomial Turing \mic{kernel} refers to the case where $f$ is polynomial in $k$. Note that a kernel is a~special case of a Turing kernel where the algorithm can perform exactly one call to the oracle. To date, we know of many problems that admit a polynomial \mic{Turing} kernel but are unlikely to admit a polynomial kernel. 
\mic{This is the case} for the other most famous path problem in parameterized complexity, called {\sc $k$-Path} (determine whether a given undirected graph contains a path on $k$ vertices): while {\sc $k$-Path} is unlikely to admit a polynomial kernel when restricted to planar graphs (which can be shown by a trivial OR-composition~\cite{fomin2019kernelization}), it does admit a polynomial Turing kernel 
\mic{on} planar graphs~\cite{DBLP:journals/jcss/Jansen17} or even \mic{on} topological-minor-free graphs~\cite{DBLP:journals/algorithmica/JansenPW19}. In light of this result, we find Theorem~\ref{thm:MK2} quite  surprising.

\begin{theorem}\label{thm:MK2}
{\sc Planar Disjoint Paths} is WK[1]-hard.
\end{theorem}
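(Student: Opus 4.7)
The plan is to prove WK[1]-hardness by exhibiting a polynomial parameter transformation (PPT) from a canonical WK[1]-complete problem to {\sc Planar Disjoint Paths}. Since the authors indicate that the same construction used for Theorem~\ref{thm:noPolyKer} also yields this stronger result, the strategy is to arrange the reduction so that (i) its source is WK[1]-complete and (ii) the number of terminal pairs $k$ in the target instance is polynomial in the source parameter, not merely in the total source instance size.

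A natural source is {\sc Set Cover} parameterized by the universe size $|U|$, which is WK[1]-complete and whose structure decomposes into ``element checks'' coupling to the selected sets. The reduction would encode each element of $U$ by a planar gadget that demands one incident ``set-routing'' path to pass through it, with the global vertex-disjointness condition ensuring that the chosen set-paths collectively cover every element. A natural planar scaffolding, consistent with the paper's macros for the \ringfull{} region and its \nullnoncross{} problem, is to place terminals on two concentric boundary circles, so that the cyclic order of pairs together with the planarity-induced non-crossing condition enforces the combinatorial consistency.

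The main technical obstacle is keeping $k$ polynomially bounded in $|U|$ alone. A naive encoding using one terminal pair per $(\text{element},\text{set})$ incidence would blow $k$ up to roughly the whole instance size, yielding only a cross-composition --- enough for Theorem~\ref{thm:noPolyKer} but not for WK[1]-hardness. To achieve a genuine PPT, I would share and reuse a common bundle of $|U|^{O(1)}$ terminal pairs across all element gadgets: the choice of which set covers a given element is then recorded by the topological routing pattern of this shared bundle through that gadget, rather than by fresh terminals dedicated to that element. Proving that these shared routings faithfully enumerate valid set covers, preserve planarity, and produce a correct YES/NO equivalence is the central challenge, and this is presumably where the paper's careful use of non-crossing multicommodity flow on a ring pays off, since the ring's two boundary cycles provide exactly enough ``room'' to host a single shared family of $|U|^{O(1)}$ pairs while still forcing each element gadget to witness a covering choice.
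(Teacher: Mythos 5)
You have correctly identified the source problem (\textsc{Set Cover} parameterized by the universe size, which is WK[1]-complete) and the right high-level strategy: since there may be $2^{\Theta(k)}$ candidate sets but only $k^{O(1)}$ terminal pairs are allowed, the identity of the chosen sets must be recorded in the routing pattern (homotopy class) of a shared, polynomially sized bundle of paths rather than in dedicated terminals. This matches the paper's guiding intuition. However, the proposal stops exactly where the proof begins. The entire technical content of the theorem is the construction of a planar gadget with $k^{O(1)}$ terminal pairs whose feasible routings correspond to an \emph{arbitrary} family of up to $2^{k}$ subsets of $[k]$, together with a correctness proof in both directions; your text asserts that such a gadget exists (``this is presumably where the paper's careful use of non-crossing multicommodity flow on a ring pays off'') without constructing it. In the paper this requires three nontrivial ingredients, none of which is hinted at: (i) a vector-containment gadget obtained by dualizing a graph $H_r$ of size $2^{O(r)}$ that has a \emph{unique} shortest $(s,t)$-path in each of its $2^r$ homotopy classes, so that subdividing one edge per member of the encoded family $Z$ raises the max-flow value in the dual by exactly one precisely for the homotopy classes in $Z$; (ii) a pattern-propagation mechanism (a ring of ladder blocks with powers-of-two demands saturating the ring passages) forcing every block to select the same bit vector $\bb$; and (iii) a correction layer (``dynamic flow generators''), needed because the capacity of the gadget in (i) unavoidably depends on $\bb$ through the function $\gamma(\bb)$ and not only on whether $\bb\in Z$.

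A second, independent gap: any construction along these lines naturally produces requests with demands as large as $2^{O(k)}$ (for instance, to saturate the ring passages), and a demand-$d$ request cannot simply be split into $d$ unit requests without destroying the parameter bound. The paper handles this with a separate weight-implementation step based on the Adler--Krause instance, simulating a demand of $2^i-1$ by only $i$ unit requests; your proposal does not address how the final instance ends up with $k^{O(1)}$ \emph{unit} terminal pairs, which is what \textsc{Planar Disjoint Paths} requires. So while the plan points in the right direction, it does not yet contain a proof.
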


Additionally, our reduction carries over to {\sc Planar Edge-Disjoint Paths}, 
where the solution paths are required to be edge-disjoint rather than vertex-disjoint. 
Specifically, we show that it is also unlikely 
to admit a polynomial kernel (or even a polynomial Turing kernel). Remarkably, prior to our work, it was not even known whether the problem admits a polynomial kernel on general graphs, although it was already asked as an open question close to a decade ago~\cite{bodlaender2014graph,heggernes2015finding}. 

\begin{theorem}\label{thm:noPolyKerEdge}
Unless coNP $\subseteq$ NP/poly, {\sc Planar Edge-Disjoint Paths} does not admit a polynomial kernel with respect to $k$. Moreover, {\sc Planar Edge-Disjoint Paths} is WK[1]-hard.
\end{theorem}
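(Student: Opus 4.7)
My plan is to show that the same reduction used to establish Theorems~\ref{thm:noPolyKer} and~\ref{thm:MK2} already witnesses Theorem~\ref{thm:noPolyKerEdge}, possibly after a routine local modification of its output. The high-level argument rests on a structural observation: on a planar graph in which every internal (non-terminal) vertex has degree at most $3$ and every terminal is a degree-$1$ leaf, the $k$-vertex-disjoint and $k$-edge-disjoint variants of the disjoint paths problem have the exact same set of yes-instances. Indeed, a single routing path uses exactly two edges at each internal vertex and exactly one at each terminal; at an internal vertex $v$ with $\deg(v)\le 3$, a single path already saturates two of the three edges, so no second edge-disjoint path can traverse $v$. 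The leaf condition handles terminals analogously.

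Concretely, I would first inspect the reduction that proves Theorem~\ref{thm:noPolyKer}/\ref{thm:MK2} and check that every gadget it glues together produces a planar subcubic graph with leaf terminals. If some auxiliary vertex of higher degree arises (e.g. a hub of some composition), I would replace it by a planar subcubic tree whose leaves inherit the edges of the original vertex in the cyclic order given by a planar embedding, so that the solution structure is preserved. Since no new pair $(s_i,t_i)$ is introduced, the parameter $k$ is unchanged, and since the replacement is of polynomial size the overall transformation remains a valid polynomial parameter transformation from the WK[1]-hard source problem used in Theorem~\ref{thm:MK2} and a valid AND/OR-composition (or cross-composition) for the kernelization lower bound of Theorem~\ref{thm:noPolyKer}.

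With subcubic instances and leaf terminals in hand, the reduction becomes simultaneously a reduction to both {\sc Planar Disjoint Paths} and {\sc Planar Edge-Disjoint Paths}: an instance is a yes-instance for the former if and only if it is a yes-instance for the latter by the structural observation above. Hence the same lower bounds transfer, yielding both the unconditional absence of a polynomial kernel (assuming $\mathrm{coNP}\not\sub\mathrm{NP}/\poly$) and WK[1]-hardness for the edge-disjoint variant.

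The main obstacle I anticipate is the subcubic-with-leaf-terminals condition on the output of the reduction. If the composition used to prove Theorem~\ref{thm:noPolyKer} naturally creates vertices of unbounded degree when it stitches instances together, the degree-reduction gadget must be designed with care: it has to respect the planar embedding used by the reduction and, more importantly, must not create edge-disjoint routings that have no vertex-disjoint counterpart in the original gadget. This is a local check, but it is where the bulk of the verification work lies; everything else reduces to re-reading the proofs of Theorems~\ref{thm:noPolyKer} and~\ref{thm:MK2} through the lens of edges instead of vertices.
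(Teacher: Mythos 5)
Your overall strategy is exactly the paper's: the reduction is arranged so that its output is a simple subcubic planar graph in which every terminal has degree~$1$, and on such instances edge-disjointness, vertex-disjointness, and the paper's non-crossing condition all coincide (this is precisely the equivalence invoked in Theorem~\ref{thm:weight:setcover}, using your observation that a path saturates two of the three edges at an internal vertex and the unique edge at a leaf terminal). So a single polynomial parameter transformation from \textsc{Set Cover} (parameterized by universe size, which is WK[1]-complete and has no polynomial kernel unless coNP\,$\subseteq$\,NP/poly) yields Theorems~\ref{thm:noPolyKer}, \ref{thm:MK2}, and \ref{thm:noPolyKerEdge} simultaneously.

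The one genuine gap is your fallback gadget for degree reduction, and the fallback is not optional: the paper's reduction passes through \nullnoncross{} on multigraphs with vertices of enormous degree (huge bundles of parallel edges), through which \emph{many} pairwise non-crossing walks must pass, so every such vertex must be replaced. Replacing a degree-$d$ vertex by a planar subcubic \emph{tree} fails in both directions. It is unsound: a tree with $d$ leaves generally admits two vertex-disjoint paths between disjoint leaf pairs (e.g.\ a caterpillar), so two solution paths could traverse what was a single vertex, creating spurious yes-instances even for the plain vertex-disjoint semantics. It is also incomplete for the paper's setting: the gadget must realize \emph{every} cross-free set of pairs on its boundary by vertex-disjoint paths, so that any non-crossing flow through the original vertex survives the replacement; a tree cannot do this. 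The paper instead uses a $k$-cylindrical wall (Lemmas~\ref{lem:weight:gadget} and~\ref{lem:weight:subcubic}) --- subcubic, planar, with degree-$2$ boundary vertices, and realizing all cross-free patterns via the cut criterion of Lemma~\ref{lem:single:criterion} --- and even notes that a single cycle (let alone a tree) is insufficient, correcting a flaw in prior work. With the wall in place of your tree, your argument goes through as stated.
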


The {\sc Edge-Disjoint Paths} problem in general, and the {\sc Planar Edge-Disjoint Paths} problem in particular, have been intensively studied in the literature (see, e.g., \cite{andrews2005hardness,chekuri2004edge,chekuri2006edge,frank1985edge,okamura1981multicommodity,kawarabayashi2012disjoint,nishizeki2001edge,heggernes2015finding}), although perhaps to a lesser extent than their vertex counterparts. We remark that the vertex and edge versions sometimes behave very differently---for example, while {\sc Disjoint Paths} is in FPT with respect to treewidth~\cite{scheffler1994practical}, {\sc Edge-Disjoint Paths} is NP-complete even on series-parallel
graphs~\cite{nishizeki2001edge} and thus on graphs of treewidth at most 2. Still, the work of Robertson and Seymour implies that {\sc Edge-Disjoint Paths} is solvable in time $f(k)\cdot n^3$; later,  the polynomial factor was reduced to $n^2$ by Kawarabayashi et al.~\cite{kawarabayashi2012disjoint}.

\paragraph{1.3. On the positive side: Our second main theorem.} We prove that {\sc Planar Disjoint Paths} admits a polynomial kernel with respect to $k+\mathsf{tw}$, where $\mathsf{tw}$ is the treewidth of the input graph.\footnote{A trivial AND-composition~\cite{fomin2019kernelization} implies that parameterization by $\mathsf{tw}$ alone is unlikely to yield a polynomial kernel.}
This  theorem is (arguably) the broadest and the most involved positive result known to date regarding the kernelization complexity of {\sc Disjoint Paths}; other  results in the literature concern highly restricted graph classes: split graphs~\cite{heggernes2015finding,yang2018kernelization} and well-partitioned chordal graphs~\cite{ahn2020well}.

\begin{theorem}[{\bf Main Theorem II}]\label{thm:polyKer}
{\sc Planar Disjoint Paths} admits a polynomial kernel with respect to $k+\mathsf{tw}$, where $\mathsf{tw}$ is the treewidth of the input graph.
\end{theorem}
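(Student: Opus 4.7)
The plan is to combine a reduction to \nullnoncross{} with a protrusion-replacement argument tailored to planar graphs of bounded treewidth.

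First, I would reduce \textsc{Planar Disjoint Paths} to \nullnoncross{} on a disk-embedded planar graph. Adding a planar ``frame'' (for instance, a large cycle drawn outside the original embedding through all faces incident to terminals) forces all terminals to lie on a single face and makes any hypothetical solution a set of non-crossing paths between boundary vertices. This reduction preserves both $k$ and $\twsf$ up to constant factors and, crucially, restricts the shape of a solution inside any subdisk of the embedding to a non-crossing matching of boundary vertices.

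Second, I would compute a tree decomposition of width $\Oh(\twsf)$ using a constant-factor approximation, and convert it into a sphere-cut branch decomposition of the planar graph. The resulting nooses are closed curves in the plane that meet the graph in $\Oh(\twsf)$ vertices each, and every noose bounds a topological disk whose interior forms a subgraph (a ``ring'' or protrusion) with $\Oh(\twsf)$ boundary vertices. Among these, the protrusions containing no terminal are candidates for replacement; since terminals appear in only $\Oh(k)$ bags, this leaves at most $\Oh(k)$ ``active'' protrusions surrounded by terminal-free regions.

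Third---the main step---for every terminal-free protrusion $P$ its effect on any solution is captured by its \emph{linkage signature}: the family of non-crossing matchings of the boundary that $P$ can realize by internally vertex-disjoint paths. I would show that every such signature admits a planar canonical gadget of size $\poly(\twsf)$ realizing exactly the same signature. Replacing each large protrusion by its canonical representative yields a kernel consisting of $\Oh(k)$ pieces of size $\poly(\twsf)$, giving overall size $\poly(k+\twsf)$.

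The principal obstacle is this last step. A blackbox application of the protrusion-replacement theorem only bounds the size of a canonical representative doubly exponentially in $\twsf$, since the number of non-crossing matchings on $\Oh(\twsf)$ boundary points is already Catalan-exponential. Driving this bound down to polynomial requires genuinely planar tools: I expect the construction to exploit Schrijver's structural characterisation of planar multicommodity flows (together with the non-crossing normal form from Step~1) to build explicit small representatives---for example, by routing every realizable matching through a shared grid-like gadget of polynomial size---rather than appealing to the generic protrusion-replacement machinery.
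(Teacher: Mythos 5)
Your high-level shape---decompose the plane graph into $\Oh(k)$ terminal-free regions with $\Oh(\twsf)$ boundary vertices each and replace each region by a small gadget with the same linkage behaviour---matches the spirit of the paper, but two of your steps have genuine gaps, and they sit exactly where the paper's technical work lies. Step~1 does not work as stated: you cannot force all terminals onto a single face by drawing a cycle ``outside the embedding through all faces incident to terminals''; terminals may lie in faces that are far apart in the radial graph, and no planarity-preserving augmentation brings them to a common face. The paper achieves the single-face reduction only by cutting the graph open along a Steiner tree of the terminals in the radial graph, which is affordable only after the radial diameter has been reduced to $\Oh(k\cdot\twsf^2)$ by iteratively deleting irrelevant edges. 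That irrelevant-edge step is the heart of the proof: in a deep annulus one must exhibit an edge whose deletion changes no realizable linkage, which requires a rerouting argument for non-maximal linkages (cut conditions plus an augmenting path give enough slack to shift paths around the ring) and a separate winding-number analysis of maximal cylindrical linkages. Relatedly, the terminal-free regions produced by a sphere-cut protrusion decomposition are in general annuli bounded by two nooses, not disks; paths can wind around an annulus, so its ``signature'' is not a family of non-crossing matchings of one boundary circle, and the disk replacement of your Step~3 simply does not apply to them. Your proposal mentions ``rings'' in passing but never handles them.

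Second, you correctly identify that the Catalan-exponential number of non-crossing matchings defeats blackbox protrusion replacement, but the fix you gesture at is the missing theorem rather than a routine construction. The paper's resolution is the Robertson--Seymour single-face criterion: a cross-free set of requests on the boundary of a disk is realizable if and only if every canonical division $(X_1,X_2)$ of the boundary satisfies $\mu_G(X_1,X_2)\ge\mu_\tcal(X_1,X_2)$. Consequently one only needs to preserve $\Oh(|X|^2)$ minimum vertex separators (each representable by a noose), mark the $\Oh(|X|^3)$ vertices they contain, and fill the remaining faces with cylindrical-grid gadgets that realize every cross-free matching of their boundary; this gives an $X$-linkage-equivalent replacement of size $\Oh(|X|^6)$. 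Without this criterion (or an equivalent structural statement) your ``shared grid-like gadget'' carries no proof that it realizes exactly the right set of matchings and no others. So the proposal as written is not yet a proof: the missing ingredients are the irrelevant-edge/winding-number argument for annular regions and the cut-condition-based mimicking network for the disk case.
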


The interest in the parameterization of {\sc Disjoint Paths} and {\sc Planar Disjoint Paths} by $\mathsf{tw}$ stems, mainly, from the fact that all known algorithms for these problems as well as for 
\mic{{\sc (Topological) Minor Testing}} rely on {\em treewidth reduction} (defined below), and, in particular, require the resolution of these problems when $\mathsf{tw}$ is small as part of their execution (see Section~1.5). In fact, some of the running times are stated as a function of $k+\mathsf{tw}$ rather than $k$ alone (e.g., the algorithm of~\cite{LokshtanovMPSZ20} is stated to run in time $\mathsf{tw}^{O(k)}\cdot n^{O(1)}$). 
Moreover, 
\mic{treewidth} is the most well-studied structural parameter is parameterized complexity~\cite{cygan2015parameterized,DowneyF13}. It is known that {\sc Disjoint Paths} parameterized by $\mathsf{tw}$ is solvable in time $2^{O(\mathsf{tw}\log \mathsf{tw})}\cdot n$~\cite{scheffler1994practical}, while, under the Exponential Time Hypothesis (ETH), {\sc Disjoint Paths} and {\sc Planar Disjoint Paths} cannot be solved in times $2^{o(\mathsf{tw}\log \mathsf{tw})}\cdot n^{O(1)}$~\cite{lokshtanov2018slightly} and $2^{o(\mathsf{tw})}\cdot n^{O(1)}$~\cite{baste2015role}, respectively.

A treewidth reduction for a parameterized graph problem $\Pi$ is a polynomial-time algorithm that, given an instance $(I,k)$ of $\Pi$, translates it into an equivalent instance of $\Pi$ where the treewidth of the new graph is bounded by $f(k)$ for some computational function $f$ of $k$. For {\sc Disjoint Paths}, unfortunately, the best-known function is a tower of exponents~\cite{robertson1995graph,kawarabayashi2012disjoint}. However, for {\sc Planar Disjoint Paths}, $f(k)=2^{O(k)}$~\cite{AdlerKKLST17}.
Thus, since \mic{the problem} 
is solvable in time $n^{O(k)}$~\cite{schrijver1994finding}, Theorem~\ref{thm:polyKer} yields a $2^{O(k^2)}\cdot n^{O(1)}$-time algorithm: 
Reduce the treewidth of the graph to $2^{O(k)}$ in polynomial time~\cite{cho2023parameterized},
then run our kernelization algorithm in polynomial time, \mic{obtaining} an equivalent instance 
\mic{with} $2^{O(k)}$ vertices, and lastly solve the new instance  in time $2^{O(k^2)}$ using the $n^{O(k)}$-time algorithm.
This provides an alternative (and much shorter) proof of the result of~\cite{LokshtanovMPSZ20}. Unlike~\cite{cho2023parameterized, LokshtanovMPSZ20}, we use the algorithm of~\cite{schrijver1994finding} as a black box, 
\mic{so any improvement upon it (i.e., an~$n^{o(k)}$-time algorithm) would immediately entail an improvement also in the FPT running time.}

\begin{theorem}
The algorithm  of Schrijver~\cite{schrijver1994finding}  can be used in a black-box manner to solve {\sc Planar Disjoint Paths} in time $2^{O(k^2)}\cdot n^{O(1)}$.
\end{theorem}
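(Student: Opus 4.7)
The plan is exactly the three-step pipeline already hinted at in the paragraph preceding the theorem statement; my job is to verify that the stages compose cleanly and that Schrijver's algorithm can indeed be invoked as a black box on the output of the earlier stages.

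First, I would invoke the treewidth reduction of Adler, Kolliopoulos, Krause, Lokshtanov, Saurabh, Thilikos, in the polynomial-time form due to Cho et al., to obtain in time $n^{\Oh(1)}$ an equivalent instance of \nullnoncross{} on a planar graph $G'$ whose treewidth satisfies $\twsf(G') \le 2^{\Oh(k)}$. This step does not by itself reduce the number of vertices of $G'$; it only controls the treewidth. Second, I would apply our kernelization algorithm from Theorem~\ref{thm:polyKer}, parameterized by $k + \twsf$, to $G'$. Since $\twsf(G') \le 2^{\Oh(k)}$, the kernel guarantee yields, in polynomial time, an equivalent instance whose size is $(k + \twsf(G'))^{\Oh(1)} = 2^{\Oh(k)}$ in the number of vertices.

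Third, I would feed this final instance to Schrijver's $n^{\Oh(k)}$-time algorithm for \nullnoncross{} (which in fact solves the more general directed planar case). Plugging $n = 2^{\Oh(k)}$ into $n^{\Oh(k)}$ gives a running time of $\br{2^{\Oh(k)}}^{\Oh(k)} = 2^{\Oh(k^2)}$ on the kernel. Summing the three stages, the total running time is $2^{\Oh(k^2)} \cdot n^{\Oh(1)}$, as claimed.

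The main thing to check, rather than any nontrivial obstacle, is that the three preprocessing/solving steps are mutually compatible: the treewidth-reduction output is a planar \nullnoncross{} instance on which our $k+\twsf$ kernel applies without modification, and our kernel returns a planar \nullnoncross{} instance on which Schrijver's algorithm applies without modification. No interaction with the internals of \cite{AdlerKKLST17}, \cite{cho2023parameterized}, or \cite{schrijver1994finding} is needed, which is precisely the sense in which the latter is used as a black box; consequently, any future improvement of Schrijver's $n^{\Oh(k)}$ bound to $n^{o(k)}$ would immediately propagate through this pipeline to improve the $2^{\Oh(k^2)}$ dependence on $k$.
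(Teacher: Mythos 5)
Your pipeline --- polynomial-time treewidth reduction to $2^{\Oh(k)}$ via \cite{AdlerKKLST17,cho2023parameterized}, then the $k+\twsf$ kernel to shrink the instance to $2^{\Oh(k)}$ vertices, then Schrijver's $n^{\Oh(k)}$ algorithm on the kernel --- is exactly the paper's argument, and the three stages compose as you claim. The only slip is terminological: every intermediate instance is an instance of {\sc Planar Disjoint Paths} itself, not of \nullnoncross (which plays no role in the positive results); with that substitution the proof matches the paper's.
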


\paragraph{1.4. Implication for treewidth reductions.} A remarkable  corollary of the combination of Theorems~\ref{thm:noPolyKer} and~\ref{thm:polyKer} rules out the existence of a polynomial treewidth reduction: If there existed a polynomial treewidth reduction for {\sc Planar Disjoint Paths} with respect to $k$, then combined with Theorem~\ref{thm:polyKer}, this would have yielded a polynomial kernel 
with respect to $k$, contradicting Theorem~\ref{thm:noPolyKer}.  This result can be viewed as a significant strengthening of Theorem~\ref{thm:noPolyKer}: not only we cannot efficiently  preprocess the graph so that its size will be polynomial in $k$, but we even cannot preprocess it so that only its treewidth will be polynomial in $k$.

\begin{theorem}\label{thm:noTwRed}
Unless coNP $\subseteq$ NP/poly, {\sc Planar Disjoint Paths} does not admit a polynomial treewidth reduction with respect to $k$.
\end{theorem}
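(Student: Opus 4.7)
The plan is to derive Theorem~\ref{thm:noTwRed} as a direct consequence of Theorems~\ref{thm:noPolyKer} and~\ref{thm:polyKer}, via a simple proof by contradiction. First I would assume, toward a contradiction, that there exists a polynomial treewidth reduction for {\sc Planar Disjoint Paths} with respect to $k$: a polynomial-time algorithm $\mathcal{A}$ which, given an instance $(G,\{(s_i,t_i)\}_{i=1}^{k})$, outputs an equivalent instance $(G',\{(s'_i,t'_i)\}_{i=1}^{k'})$ on a planar graph $G'$ with $k' \le k$ and $\twsf(G') \le p(k)$ for some polynomial $p$. (If the reduction changed $k$, it would still be polynomially bounded in the original $k$; this does not affect the argument.)

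Next, I would compose $\mathcal{A}$ with the kernelization algorithm guaranteed by Theorem~\ref{thm:polyKer}. Apply $\mathcal{A}$ in polynomial time to obtain $(G',\{(s'_i,t'_i)\}_{i=1}^{k'})$ with $\twsf(G') \le p(k)$, then run the polynomial kernel for {\sc Planar Disjoint Paths} parameterized by $k + \twsf$ on $(G',\{(s'_i,t'_i)\}_{i=1}^{k'})$. Since the combined parameter $k' + \twsf(G')$ is bounded by $k + p(k)$, which is polynomial in $k$, the resulting instance has size polynomial in $k$ and is equivalent to $(G',\{(s'_i,t'_i)\}_{i=1}^{k'})$, and hence to the original instance. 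The whole procedure runs in polynomial time, so we obtain a polynomial kernel for {\sc Planar Disjoint Paths} parameterized by $k$.

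Finally, I would invoke Theorem~\ref{thm:noPolyKer} to conclude that such a kernel does not exist unless coNP $\subseteq$ NP/poly, yielding the desired contradiction and proving Theorem~\ref{thm:noTwRed}.

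Because the statement is, by design, a corollary of the two main theorems, there is no genuine obstacle in the proof itself. The only minor technical point to verify is that the composition produces a \emph{bona fide} kernel in the standard sense, namely that the output size (not merely the treewidth) is polynomial in $k$; this follows from the fact that Theorem~\ref{thm:polyKer} yields a kernel whose size is polynomial in $k + \twsf$, and by hypothesis $\twsf$ is polynomial in $k$ after applying~$\mathcal{A}$.
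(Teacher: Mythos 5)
Your proof is correct and is exactly the argument the paper gives in Section~1.4: assume a polynomial treewidth reduction, compose it with the polynomial kernel of Theorem~\ref{thm:polyKer} for parameter $k+\twsf$ to obtain a polynomial kernel for parameter $k$ alone, and contradict Theorem~\ref{thm:noPolyKer}. Your added remark that the output size (not just the treewidth) is polynomial in the original $k$ is the right point to check, and it holds for the same reason you state.
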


To the best of our knowledge, this is the first non-trivial result of this form,\footnote{Here, by non-trivial, we mean that the result does not follow simply because the problem does not admit any treewidth reduction (e.g., since it is in FPT with respect to $\mathsf{tw}$ but it is not in FPT with respect to the parameter under consideration).}\meir{Changed the footnote. Old footnote commented.}
although treewidth reduction is a common tool in parameterized complexity, particularly since it is tightly linked to the irrelevant vertex technique as well as to Bidimensionality theory~\cite{cygan2015parameterized}. We refer to \cite{DBLP:conf/soda/LokshtanovP0SZ18,marx2013finding,jansen2014near,DBLP:journals/algorithmica/MarxS12,golovach2013obtaining,DBLP:conf/stoc/GroheKMW11,DBLP:conf/stoc/FominLP0Z20,DBLP:journals/algorithmica/Marx10,DBLP:journals/jcss/Grohe04,DBLP:journals/jacm/DemaineFHT05,DBLP:journals/cj/DemaineH08,DBLP:journals/csr/DornFT08} for a few illustrative examples of treewidth reductions for problems other than {\sc Disjoint Paths} and {\sc Minor Testing}. Prior to our work, there was hope that {\sc Planar Disjoint Paths} would admit a polynomial treewidth reduction with respect to $k$---notably, coupled with the $\mathsf{tw}^{O(k)}$-time algorithm of~\cite{LokshtanovMPSZ20}, this would have yielded a $2^{O(k\log k)}\cdot n^{O(1)}$-time algorithm. 

A negative hint was given by Adler et al.~\cite{AdlerLB}, who constructed yes-instances of {\sc Planar Disjoint Paths} where the treewidth of the graph is $2^{\Omega(k)}$ and every vertex is {\em relevant}, that is, the removal of any vertex would turn the instance into a no-instance. This is indeed a negative hint since all known algorithms for 
\mic{{\sc (Planar) Disjoint Paths}} apply treewidth reduction by iteratively finding and removing irrelevant vertices until the treewidth of the graph becomes ``small enough''. However, the result of~\cite{AdlerLB} does not imply that  {\sc Planar Disjoint Paths} does not admit a polynomial treewidth reduction---indeed, it {\em provably cannot} even show that the removal of irrelevant edges rather than irrelevant vertices is futile. 
To see this, consider any yes-instance 
and some solution of it, and remove from the graph all \mic{the} edges that are not part of \mic{the} solution (which are irrelevant edges)---then we are left with a collection of paths, having treewidth $1$. Our result rules out not just the success of removal of irrelevant edges, but the success of any method implementing a polynomial treewidth reduction for {\sc Planar Disjoint Paths}. In fact, since we show that 
\mic{the problem} is WK[1]-hard, our result even rules out a ``Turing-version''  of a polynomial treewidth reduction (defined in the natural way), 
strengthening Theorem~\ref{thm:MK2}.

\begin{theorem}
Unless the WK-hierarchy collapses, {\sc Planar Disjoint Paths} does not admit a polynomial Turing treewidth reduction with respect to $k$.
\end{theorem}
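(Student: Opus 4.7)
The plan is to derive a contradiction with Theorem~\ref{thm:MK2} by composing the hypothetical polynomial Turing treewidth reduction with the polynomial kernel of Theorem~\ref{thm:polyKer}. First I would fix a definition: a \emph{polynomial Turing treewidth reduction with respect to $k$} is a polynomial-time oracle algorithm deciding \textsc{Planar Disjoint Paths} whose oracle queries are instances $(G',k')$ satisfying $k' + \twsf(G') \le p(k)$ for some polynomial $p$. This mirrors the standard notion of a (non-Turing) polynomial treewidth reduction, in which the output instance retains a parameter bounded in $k$ and has treewidth $\poly(k)$; dropping the bound on $k'$ would trivialize the notion, as the parameter could then be inflated arbitrarily.

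Assuming such a Turing treewidth reduction $\mathcal{A}$ exists, I would build from it a polynomial Turing kernel $\mathcal{B}$ for \textsc{Planar Disjoint Paths}. On input $(G,k)$, the algorithm $\mathcal{B}$ simulates $\mathcal{A}$; whenever $\mathcal{A}$ issues an oracle query on some $(G',k')$ with $k' + \twsf(G') \le p(k)$, the algorithm $\mathcal{B}$ first applies the polynomial kernel of Theorem~\ref{thm:polyKer} to $(G',k')$, obtaining in polynomial time an equivalent instance of size $(k' + \twsf(G'))^{O(1)} = \poly(k)$. This reduced instance is then forwarded to $\mathcal{B}$'s own polynomial-Turing-kernel oracle (which decides \textsc{Planar Disjoint Paths} on instances of size $\poly(k)$), and the answer is relayed back to $\mathcal{A}$. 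The composed procedure runs in polynomial time and queries its oracle only on instances of size $\poly(k)$, so it constitutes a polynomial Turing kernel for \textsc{Planar Disjoint Paths}, contradicting Theorem~\ref{thm:MK2}.

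The main ``obstacle'' here is definitional rather than mathematical: one must pin down the formal notion of a Turing treewidth reduction so that the composition with Theorem~\ref{thm:polyKer} actually yields a polynomial Turing kernel. Once the natural bound $k' \le \poly(k)$ on queried parameters is included in the definition---which is necessary for any reasonable notion of a \emph{reduction with respect to $k$}---the argument is a direct black-box composition and requires no further combinatorial ingredients beyond Theorems~\ref{thm:MK2} and~\ref{thm:polyKer}.
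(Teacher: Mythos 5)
Your proposal is correct and follows exactly the paper's intended argument: compose the hypothetical polynomial Turing treewidth reduction with the polynomial kernel of Theorem~\ref{thm:polyKer} so that every oracle query is first shrunk to size $\poly(k)$, yielding a polynomial Turing kernel. One small phrasing point: this does not literally \emph{contradict} Theorem~\ref{thm:MK2} (WK[1]-hardness is consistent with having a polynomial Turing kernel); rather, WK[1]-hardness together with a polynomial Turing kernel implies, via~\cite{hermelin2015completeness}, that the WK-hierarchy collapses, which is exactly the hypothesis the theorem negates.
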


\paragraph{1.5. Part of the development of an efficient graph minors theory.}
The concept of a {\em minor} has been extensively studied already in the early 20th century, and it is defined as follows: A~graph $H$ is a {\em minor} of a graph $G$ if $H$ can be obtained from $G$ by deleting vertices and edges, and contracting edges. Kuratowski’s famous theorem states that a graph is planar if and only if it does not contain the graphs $K_{3,3}$ and $K_5$ as topological minors~\cite{kuratowski1930probleme}, which holds also for minors~\cite{wagner1937eigenschaft}. Thus, the class of planar graphs is characterized by a set of two forbidden minors. The graph minors project of Robertson and Seymour is a series of 23 papers spanning more than two decades, dedicated to proving the generalization of Kuratowski’s theorem called Wagner’s conjecture~\cite{wagner1937eigenschaft}: The class of all graphs is well-quasi ordered by the minor relation, or, equivalently, any minor-closed family of graphs can be characterized by a finite set of forbidden minors. The graph minors project had tremendous impact on various areas of theoretical computer science and on graph theory, particularly due to numerous concepts, structural results, and algorithms that it yielded. Notably, it is considered to be the  origin of the field of parameterized complexity~\cite{downey2012birth} and the source for a large number of its most central notions and techniques~\cite{DBLP:conf/birthday/Lokshtanov0Z20}.

Unfortunately, the dependencies on the parameters of most algorithms based on the graph minors project are huge, being towers of exponents, and so they are called ``galactic algorithms''~\cite{lipton2013people,johnson1985np}. As written in~\cite{DBLP:conf/birthday/Lokshtanov0Z20}: ``keeping in mind that the primary objective of the paradigm of parameterized complexity is to cope with computational intractability, we are facing a blatant discrepancy.'' Thus, \mic{the} holy grail of parameterized complexity is to amend this discrepancy, making the graph minors theory efficient. While substantial efforts have been made in this direction (e.g., see~\cite{grohe2013simple,kawarabayashi2012disjoint,kawarabayashi2010shorter,chekuri2016polynomial,chuzhoy2014improved,DBLP:journals/jctb/ChuzhoyT21}), it will likely take a 
\mic{long time for the matter to be well understood.} 
In particular, two central concrete goals posed for this purpose are to solve {\sc Minor Testing} and {\sc Disjoint Paths} efficiently~\cite{DBLP:conf/birthday/Lokshtanov0Z20}.

All known algorithms for 
\mic{{\sc (Topological) Minor Testing}} and {\sc Disjoint Paths} use the following case distinction. First, if the treewidth of the graph is ``small'', then they directly solve the problem using dynamic programming (classically) or a different mean~\cite{cho2023parameterized, LokshtanovMPSZ20}. Else, if the graph contains a large clique as a minor, then they use rerouting arguments to find an irrelevant vertex within it. Lastly, we are left with the case where the treewidth is large and the graph does not have a large clique as a minor.
\mic{This reduces the problem to almost-embeddable graphs~\cite{GM17}, where}
a~so-called flat wall theorem \mic{ensures the existence of} a large almost planar piece within the graph, which is afterwards analyzed.
\mic{Hence understanding the planar case is paramount to understand the problem in general. 
Our} contributions can be viewed as a piece of the ongoing efforts of many researchers to \mic{establish} which parts of the graph minor theory can be made algorithmically efficient.

\paragraph{1.6. Organization.} First, in Section~\ref{sec:outline}, we outline the proofs of our results. Afterwards, in Section~\ref{sec:prelims}, we present the more basic preliminaries required for our work. Then, in Sections~\ref{sec:polyKer} and \ref{sec:hardness}, we provide the full details of the proofs of our positive and negative results, respectively.
Finally, in Section~\ref{sec:conclusion}, we conclude the paper with \mic{several} open questions.
{Whenever we want to emphasize the importance of a statement (e.g., when it is a building block of the main proof), we use the ``proposition'' environment, instead of ``lemma''.}

\section{Outline}\label{sec:outline}
In this section, we give an informal overview of our technical contributions,
beginning from the positive result, which requires fewer intermediate steps.

\subsection{Polynomial kernel for parameter $k+\mathsf{tw}$} 
Our kernelization algorithm is based on several steps that reduce the size of certain subgraphs of $G$ while treating their boundary vertices as temporary terminals.
Since we have no control over which pairs of these terminals might be connected by paths in a solution, it is convenient to work in a slightly more general setting. 
We say that two graphs $G_1, G_2$ sharing a set of vertices $X$ are {\em $X$-linkage-equivalent} if for every set of pairs $\tcal \sub X^2$, the instances $(G_1,\tcal)$ and $(G_2,\tcal)$ of {\sc Disjoint Paths} are equivalent.
In fact, we prove a theorem that is more general than \cref{thm:polyKer} as we do not need to know in advance which pairs of terminals should be connected.

\begin{restatable}{theorem}{polyKer}
\label{thm:outline:polyKer}
Let $G$ be a planar graph of treewidth $\twsf$ and $X \sub V(G)$ be of size $k$. 
Then we can construct, in polynomial time, a planar graph $G'$ with $X \sub V(G')$ such that $|V(G')| = \Oh(k^{12}\twsf^{12})$ and $G'$ is $X$-linkage-equivalent to $G$.
\end{restatable}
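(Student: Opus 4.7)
The plan is to identify regions of $G$ that are separated from the rest by small separators and, treating the separators as temporary terminals, replace each large region by a small gadget preserving the set of non-crossing linkages realizable on its boundary. Planarity is essential here: a family of vertex-disjoint paths in a planar piece whose endpoints lie on the boundary of an embedded disk must induce a non-crossing pairing of those endpoints, so $\partial H$-linkage-equivalence is determined by a combinatorial object with at most Catalan-many realizable states, rather than by the piece itself.

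Concretely, I would first compute a sphere-cut decomposition of $G$ of width $\Oh(\twsf)$ in polynomial time, which for planar graphs yields a hierarchy of nooses of length $\Oh(\twsf)$ and corresponding planar pieces. Using a marking argument on the decomposition tree, I would single out an $\Oh(k)$-sized set of \emph{essential} nooses, namely those separating distinct vertices of $X$, and observe that every other noose can be absorbed into a neighbouring region. This leaves the task of shrinking each of $\Oh(k)$ planar pieces bounded by a noose of length $\Oh(\twsf)$ and with no vertex of $X$ in its interior.

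The heart of the argument is a replacement lemma: any planar graph $H$ drawn in a disk with boundary $\partial H$ of length $b$ admits a planar graph $H'$, drawn in the same disk with the same boundary, such that $|V(H')| = \poly(b)$ and $H'$ is $\partial H$-linkage-equivalent to $H$. I would construct $H'$ by retaining, for each ordered pair $(u,v)$ of boundary vertices, a maximum set of at most $b$ internally vertex-disjoint $u$--$v$ paths (a \emph{frame} obtained via Menger's theorem), then contracting unused regions and suppressing degree-two chains. A planar linkage-exchange argument, in the style of Robertson--Seymour, should show that these frames collectively witness every realizable non-crossing pairing, hence preserve linkage equivalence through $\partial H$.

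The main obstacle I expect is bounding the total size of the retained frames by a polynomial rather than by the naive $2^{\Omega(b)}$ that counts all non-crossing pairings. Resolving this requires exploiting planarity to show that frames for different boundary pairs either share long subpaths or split off cleanly, so that their union sits inside a canonical planar ``skeleton'' of size $\poly(b)$. Combining the per-piece bound $\poly(\twsf)$ with the $\Oh(k)$ essential pieces, and accounting for the $\Oh(k)$ terminal-carrying pieces and the nested rounds of simplification, yields the stated bound $\Oh(k^{12}\twsf^{12})$; the exponent $12$ reflects several nested polynomial factors coming from the decomposition hierarchy, the frame construction, and the linkage-exchange analysis.
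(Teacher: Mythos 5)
Your proposal has a genuine gap, and it sits exactly where the paper's main technical work is. Your opening premise --- that $\partial H$-linkage-equivalence of a planar piece is captured by the (Catalan-many) non-crossing pairings of its boundary --- is valid only for pieces homeomorphic to a disc. The decomposition you describe does not produce only discs: between two consecutive ``essential'' nooses on a degree-two path of the decomposition tree lies a piece bounded by \emph{two} nooses, i.e.\ an annulus, and this is unavoidable whenever the terminals are far apart in the graph. For an annulus the premise fails: a linkage crossing from the inner to the outer boundary is classified not just by a non-crossing pairing but by a winding number, and which winding numbers are feasible depends on the whole interior of the piece. Compressing such a piece while preserving linkage-equivalence is precisely the hard part of the theorem; the paper devotes its entire radial-diameter-reduction section to it (tight concentric cycles, rerouting non-maximal linkages via cylindrical-grid minors, the interval structure of feasible winding numbers via Robertson--Seymour, and a disentangling lemma for two extremal cylindrical linkages), and only after that reduction can it cut the graph open along a short Steiner tree in the radial graph and land in the single-face (disc) situation. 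Your proposal silently assumes the annular case away.

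There is also a secondary issue in your replacement lemma for the disc case. The correct invariant is the Robertson--Seymour criterion: a set of requests on the boundary of a disc is realizable iff it is cross-free and the cut condition $\mu_G(X_1,X_2)\ge \mu_{\tcal}(X_1,X_2)$ holds for every \emph{canonical division} of the boundary into two arcs. So what must be preserved is a mimicking network for the $\Oh(b^2)$ minimum vertex separators between complementary boundary arcs --- which the paper builds by marking $\Oh(b^3)$ vertices on these separators and filling the residual faces with cylindrical-grid gadgets --- not maximum systems of internally disjoint paths between ordered pairs of boundary vertices. Preserving all pairwise Menger systems is neither clearly sufficient for simultaneous realizability of a linkage nor clearly of polynomial total size, so the obstacle you flag at the end is real but is not the only one: the invariant itself needs to be changed before the size bound even becomes the right question.
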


\paragraph{Single-face case.}
The problem becomes simpler when we are equipped with an embedding of $G$ with all the terminals from $X$ lying on a single face (we can assume that this is the outer face by flipping the embedding).
In fact, in this case {\sc Disjoint Paths} is solvable in polynomial time~\cite{GM6}, similarly as {\sc Steiner Tree}~\cite{EricksonMV87}.
To design a useful subroutine, we need to reduce the size of~$G$ to be polynomial in $|X|$ while maintaining $X$-linkage-equivalency.
To this end, we take advantage of the criterion by Robertson and Seymour~\cite{GM6}, stating that when $X$ lies on the outer face of $G$ and $\tcal \sub X^2$, then the instance $(G,\tcal)$ is solvable if and only if (1) $\tcal$ is cross-free\footnote{
When $X$ lies on the outer face, then $\tcal \sub X^2$ is called {\em cross-free} if there are no pairs $(a,b), (c,d) \in \tcal$ such that $a,c,b,d$ lie in this order on the outer face.
} with respect to the cyclic ordering of $X$ and (2) for every partition of $X$ into continuous segments $(X_1,X_2)$ the number of requested paths with one endpoint in $X_1$ and the other one in $X_2$ is not greater than the minimum vertex $(X_1,X_2)$-cut (see \Cref{fig:split-outline}).
So, to compress $G$ we need a {\em mimicking network} that preserves such minimum cuts.
There are known constructions of mimicking networks for planar~\cite{GoranciHP20, KrauthgamerR20} and general~\cite{KratschW14} graphs, but they are designed to preserve edge-cuts.
We give a self-contained  construction of
a mimicking network of size $\Oh(|X|^6)$ preserving the necessary vertex-cuts.

\begin{figure}[t]
    \centering
\includegraphics[scale=0.8]{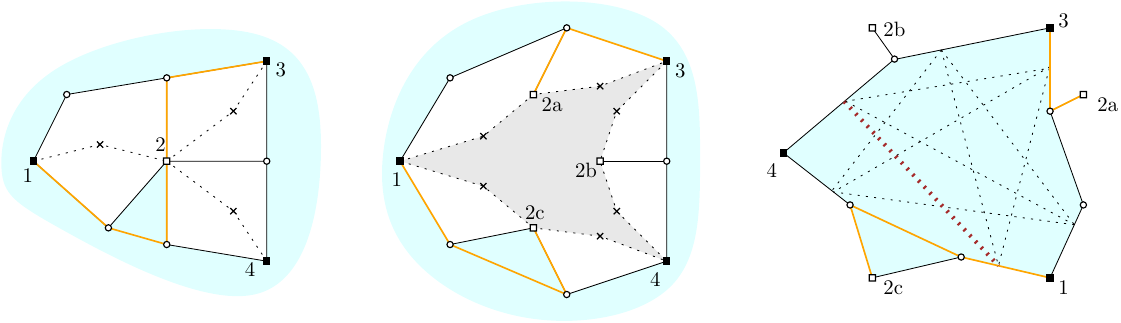}
\caption{A visualization of cutting  a graph open alongside a tree $T$ in the radial graph.
Left: The tree $T$ is sketched with dotted lines; we have $V(T) \cap V(G) = \{1,2,3,4\}$.
The vertices $1,3,4$ (black squares) belong to $X$.
A $(1,3)$-path $P$ is drawn with solid orange lines.
Middle: After opening the graph, the vertex 2 is split into three copies $2a, 2b, 2c$, while the path $P$ is split into a $(1, 2c)$-path and a $(2a, 3)$-path.
The white and black squares are the new terminals.
Right: After flipping the embedding, we can assume that the face where the cut happened is the outer face.
The dotted lines illustrate all the cuts that must be preserved in a mimicking network. \mic{As an example, the brown heavier line stands for the minimum cut between $\{4,2c\}$ and $\{1,2a,3,2b\}$}.
} 
\label{fig:split-outline}
\end{figure}

To reduce the general case to the single-face case, we follow the idea from the kernelization algorithm for \mic{{\sc Planar Steiner Tree}} 
\mic{parameterized by the solution size}~\cite{PilipczukPSvL18} (cf. \cite{BorradaileKM09}).
To~adapt it to our setting, we consider the {\em radial graph} of the plane graph $G$, 
obtained by inserting a vertex inside each face, connecting it to all vertices from $G$ lying on this face, and removing original edges from $E(G)$.
Let $T$ be a tree in the radial graph that spans all the terminals from $X$.
Imagine ``cutting the graph open'' alongside $T$ by widening the fissure marked by $T$ on the plane and duplicating the vertices of $G$ lying on this fissure (see \Cref{fig:split-outline}).
This operation creates a new face, incident to all the copies of terminals.
We could compress the obtained graph using the approach outlined above and afterwards stitch it back alongside $T$.
However, now we need to treat all the vertices lying on $T$ (not only the ones from $X$) as terminals in order to keep track of paths that might traverse $T$.
This means that, prior to opening the graph, we need to ensure that the tree $T$ is not too large, that is, the vertices from $X$ are close to each other in the radial graph.
In other words, we need to reduce the {\em radial diameter} of the graph, i.e., the maximum number of faces one must cross to reach a certain vertex from another one.
Such an approach has been applied in the reductions to a~single-face case for {\sc Vertex Multiway Cut}~\cite{JansenPvL19} and {\sc Vertex Planarization}~\cite{JansenW22}.

\paragraph{Radial diameter reduction.}
The radial diameter of a plane graph $G$ is proportional to the maximal number of concentric cycles in $G$ (i.e., these cycles are vertex-disjoint and each one is located in the \mic{interior of the next one}, resembling a well).
In particular, when the radial diameter is as large as $\Omega(k\cdot \mathsf{tw}^2)$, then one can find a sequence
$C_1, C_2, \dots, C_m$ of concentric cycles such that $m = \Omega(\mathsf{tw}^2)$ and each terminal from $X$ is located in either the interior of $C_1$ or the exterior of $C_m$.
We show that in this case, $G$ must contain an {\em irrelevant edge}, that is, an edge $e$ for which the graph $G\sm e$ is $X$-linkage-equivalent to $G$.
Our strategy is to iteratively remove irrelevant edges from $G$ until its radial diameter becomes bounded by $O(k\cdot \mathsf{tw}^2)$.
Afterwards, we will be able to find a Steiner tree $T$ of $X$ of size $O(k^2\cdot \mathsf{tw}^2)$ in the radial graph.
This will allow us to reduce the problem to the single-face case with $O(k^2\cdot \mathsf{tw}^2)$ terminals.

Consider a sequence of concentric cycles $C_1, C_2, \dots, C_t$.
It is known~\cite{GuT12} that the existence of $t$ vertex-disjoint paths between $V(C_1)$ and $V(C_t)$ yields a minor model of a $t\times t$-grid, thus implying that the treewidth of the graph is at least $t$.
Conversely, if we know that treewidth is less than $t$, Menger's theorem implies that there is a vertex $(V(C_1), V(C_t))$-separator of size less than $t$.
We can always find such a separator located within the well, that is, between $C_1$ and $C_t$ (inclusively).
In our setting, this implies that we can find a $(V(C_1), V(C_m))$-separator of size at most \twsf located inside the cycle $C_{\twsf+1}$, and similarly, one located outside $C_{m-\twsf-1}$.
Therefore, the search of an irrelevant edge can be reduced to the two-face case: we are given a plane graph $G$ with a set of terminals $V_{out}$ located on the outer face, another set of terminals $V_{in}$ located on some internal face, such that $|V_{in}|, |V_{out}| \le \twsf$, and a sequence of  $\Omega(\mathsf{tw}^2)$ concentric $C_1, C_2, \dots, C_m$ cycles around $V_{in}$ (see \Cref{fig:irrelevant-edge-outline}).
Now the task is to find an edge $e \in E(G)$ such that $G \sm e$ and $G$ are $(V_{in} \cup V_{out})$-linkage-equivalent.

We begin with computing the minimum $(V_{in}, V_{out})$-separator $S_{in}$ that is {\em closest} to $V_{in}$.
By ``closest to $V_{in}$'' we mean that for any other minimum separator $S$ the set of vertices reachable from $V_{in}$ in $G-S$ is a superset of those reachable from $V_{in}$ in $G-S_{in}$.
It is well-known~\cite[Thm. 8.4]{cygan2015parameterized} that such a separator exists.
Similarly, we compute  the minimum $(V_{in}, V_{out})$-separator $S_{out}$ closest to $V_{out}$.
Every inclusion-minimal vertex separator $S$ in a plane graph $G$ can be represented by a~{\em noose} in the plane that intersects the image of $G$ exactly at the vertices of $S$.
Since $|S_{in}|, |S_{out}| \le \twsf$, 
the corresponding nooses cannot cross more than $\twsf$ cycles from $C_1, C_2, \dots, C_m$.
These nooses form a partition of $G$ into three parts, one of which must contain  $\Omega(\mathsf{tw}^2)$  cycles from $C_1, C_2, \dots, C_m$.
Depending on which part is ``deep'',
we apply different strategies for detecting an irrelevant edge. 

\begin{figure}[t]
    \centering
\includegraphics[scale=0.8]{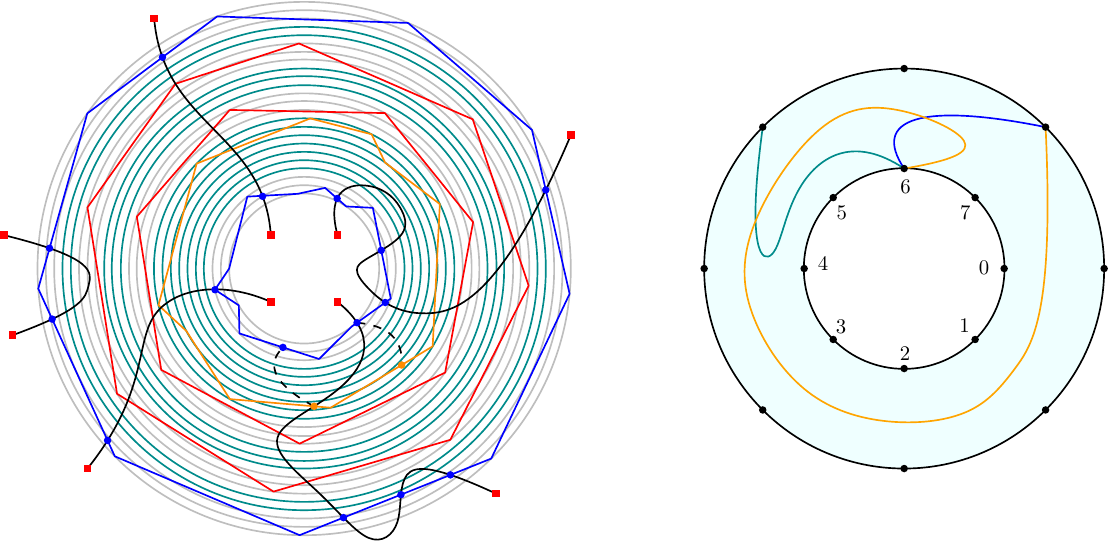}
\caption{Left: An overview of the two-face case.
The well comprises a sequence of concentric cycles $C_1, \dots, C_m$ such that each terminal from $X$ (the red squares) is either inside $C_1$ or outside $C_m$.
The nooses representing separators $V_{in}, V_{out}$ are drawn blue, while the nooses of $S_{in}, S_{out}$ are red; each of them can intersect at most $\twsf$ cycles.
The cycles contained entirely in each of the three parts of the graph are highlighted in green.
A family of vertex-disjoint paths with endpoints at $X$ is sketched with black lines.
We can assume that the $(V_{in}, V_{in})$-subpaths and the $(V_{out}, V_{out})$-subpaths, which do not traverse the well from inside to outside, intersect only few cycles.
Since the remaining  $(V_{in}, V_{out})$-subpaths must cross the separator $S_{in}$, which is the minimum $(V_{in}, V_{out})$-separator closest to $V_{in}$,
there is \mic{a space for an augmenting path} between $V_{in}$ and $S$ (the orange noose).
This fact, crucial for the analysis of Case I, is illustrated with two dashed lines.
Right: An illustration of the notion of the winding number from Case II.
The blue, green, and orange paths have winding numbers 1,~-1,~-7, respectively.
} 
\label{fig:irrelevant-edge-outline}
\end{figure}

\paragraph{Case I: Deep well in the interior/exterior.}
First consider the case that there are $\Omega(\mathsf{tw}^2)$ concentric cycles between $V_{in}$ and $S_{in}$.
(The case with a deep subgraph between $S_{out}$ and $V_{out}$ is analogous.)
Let $\pp$  be a family of vertex-disjoint paths (a linkage) with endpoints in $V_{in} \cup V_{out}$ and
$\pp_{long}$ denote the subfamily of paths in $\pp$ that connect $V_{in}$ to $V_{out}$.
By a standard argument, 
the paths from $\pp \sm \pp_{long}$ can be assumed to cross only few cycles from $C_1, C_2, \dots, C_m$, namely at most $\max(|V_{in}|, |V_{out}|) \le \twsf$, so our main focus is on the paths from $\pp_{long}$.
Each of these paths must traverse the separator $S_{in}$, let $p$ denote its size, so $|\pp_{long}| \le p$.
But since $S_{in}$ is the minimum $(V_{in}, V_{out})$-separator closest to $V_{in}$, for any  $(V_{in}, V_{out})$-separator $S$, located inclusively between $V_{in}$ and $S_{in}$, there exists at least $p+1$ vertex-disjoint $(V_{in}, S)$-paths, i.e., $\mu(V_{in}, S) > p$.
This allows us to focus on the following variant of the two-face case: $S$ replaces the set of terminals $V_{out}$ lying on the outer face, $\mu(V_{in}, S) > p$, there are still $\Omega(\mathsf{tw}^2)$ concentric cycles between $V_{in}$ and $S$, and we want to detect an edge $e$ that is not relevant for any family $\pp$ of at most $p$ 
vertex-disjoint $(V_{in}, S)$-paths.
Here, by ``not relevant for $\pp$'' we mean that there exists a linkage $\pp'$ in $G \sm e$ connecting the same pairs of vertices as $\pp$.
Let $V' \sub V_{in}$ (resp. $S' \sub S$) denote the endpoints of $\pp$ at $V_{in}$ (resp. at $S$).
We prove a criterion that under the given assumptions, such a linkage exists if and only if (1) the cyclic ordering of $V'$ matches the cyclic ordering of $S'$ and (2) the cut-condition $\mu(V', S') \ge |\pp|$ holds.
To prove this, we take advantage of the slack  $\mu(V_{in}, S) > p$ to 
show that any family of at most $p$ disjoint $(V', S')$-paths can be ``shifted'' clockwise using  $\Omega(\mathsf{tw})$ concentric cycles.
As we need at most \twsf shifts to transform any $(V', S')$-linkage into $\pp$, having  $\Omega(\mathsf{tw}^2)$ concentric cycles suffices.
With this criterion at hand, we show that there always exists an edge $e$ whose removal does not affect the cut-condition for any pair $(V',S')$, implying that $e$ is irrelevant.

\paragraph{Case II: Deep well in the middle.}
Now consider the case that there are $\Omega(\mathsf{tw}^2)$ concentric cycles between $S_{in}$ and $S_{out}$.
Recall that $p = |S_{in}| = |S_{out}| = \mu(S_{in}, S_{out})$.
Let $\pp$ be a
family of vertex-disjoint $(S_{in} \cup S_{out})$-paths.
If $\pp$ contains less than $p$ paths that connect $S_{in}$ to $S_{out}$, then the analysis is the same as in the previous case.
Therefore, the problem boils down to a very restricted case: every path in $\pp$ connects $S_{in}$ to $S_{out}$ and every vertex in $(S_{in} \cup S_{out})$ is an endpoint of a path from $\pp$.
We call such $\pp$ a {\em cylindrical linkage}.
It is now convenient to fix a concrete plane embedding of the graph: assume that the vertices from $S_{in}$ lie on the circle $\{(x,y) \in \rr^2 \mid x^2 + y^2 = 1\}$ and the vertices from $S_{out}$ lie on 
$\{(x,y) \in \rr^2 \mid x^2 + y^2 = 4\}$.
Furthermore, assume that the $j$-th element of $S_{in}$, $0 \le j < p$,
 has polar coordinates $(1,\frac{-2\pi}{p} j)$ and the $j$-th element of $S_{out}$ has polar coordinates $(2,\frac{-2\pi}{p} j)$.
 For an $(S_{in},S_{out})$-path $P$, we define its {\em winding number} $\theta(P) \in \mathbb{Z}$ as $\frac{p}{2\pi}$ times the total angle traversed by the curve corresponding to $P$, measured clockwise (see \Cref{fig:irrelevant-edge-outline}).
 It is easy to see that all paths in a cylindrical linkage $\pp$ share the same winding number, so we can also define a winding number $\theta(\pp)$ of $\pp$.
 We say that $\theta \in \mathbb{Z}$ is {\em feasible} if there exists a cylindrical linkage $\pp$ with $\theta(\pp) = \theta$.
 Note that if $\theta(\pp_1) \equiv \theta(\pp_2) \mod p$ then the linkages $\pp_1, \pp_2$ connect the same pairs of terminals.
 Therefore there are at most $p$ different connection-patterns  that we need to preserve.

The structure of cylindrical linkages has been studied by Robertson and Seymour~\cite{GM6} who showed that when $\theta_1 < \theta_2 < \theta_3$ and $\theta_1, \theta_3$ are feasible, then so is $\theta_2$.
This means that it suffices to preserve just the minimal and maximal values $\theta_{min}, \theta_{max}$ that are feasible.
They can be efficiently computed because the problem becomes polynomial-time solvable in this special case~\cite{GM6}.
Moreover, the observation above allows us to assume that the values  $\theta_{min}, \theta_{max}$ differ by at most $p-1$.
We prove that there exist cylindrical linkages $\pp_1, \pp_2$ with $\theta(\pp_1) = \theta_{min}$, $\theta(\pp_2) = \theta_{max}$, such that the intersection of any $P_1 \in \pp_1$ and $P_2 \in \pp_2$ has at most one connected component.
Combined with the presence of many concentric cycles between $S_{in}$ and $S_{out}$, this implies that there exists an edge $e$ used by neither $\pp_1$ nor $\pp_2$.
Consequently, removing $e$ from the graph preserves the set of feasible values of $\theta$ (modulo $p$) and hence
yields an $(S_{in} \cup S_{out})$-linkage-equivalent instance.
This concludes the description of radial diameter reduction.

\paragraph{Comparison to the previous approach.}
The known
$2^{O(k^2)}\cdot n^{O(1)}$-time algorithms for {\sc Planar Disjoint Paths}~\cite{cho2023parameterized, LokshtanovMPSZ20} are based on bounding the number of relevant homotopy classes of a~solution by $2^{O(k^2)}$.
Afterwards, for each fixed homotopy class,
the problem \mic{can be} 
solved in polynomial time~\cite{schrijver1994finding}.
An important case in the analysis of  the homotopy classes resembles the two-face case, described above.
To bound the winding numbers of paths, these proofs rely on a highly technical argument originating from the FPT algorithm for the directed variant of the problem~\cite{DBLP:conf/focs/CyganMPP13}.
By dividing the analysis into two cases, one with non-maximal linkages and one with highly structured linkages, we  avoid these technicalities and obtain a stronger result (i.e., a kernel) by simpler means.

\subsection{Kernelization hardness for parameter $k$}

To establish the kernelization hardness, we present a polynomial-time reduction from {\sc Set Cover} with parameter (the universe size) $k$ to {\sc Planar Disjoint Paths} with parameter (the number of terminal pairs) $k' = k^{\Oh(1)}$.
Under this parameterization, {\sc Set Cover} is known not to admit a polynomial kernel unless  coNP $\subseteq$ NP/poly~\cite{DomLS09} and to be WK[1]-complete~\cite{hermelin2015completeness}.
Hence, such a reduction entails Theorems~\ref{thm:noPolyKer} and~\ref{thm:MK2}.

Before we present the reduction, we discuss the intuition that guided its construction (and, in particular, of the so-called vector-containment gadget described later). Recall that {\sc Planar Disjoint Paths} is solvable in polynomial time once we fix the {\em homotopy class} of a sought solution~\cite{schrijver1994finding}.
This suggests that a polynomial reduction from an NP-hard problem should map different NP-witnesses (each encoding a solution candidate that can be verified in polynomial time) into different homotopy classes of a solution. Notice that the solution candidates to {\sc Set Cover} are tuples of sets (whose number can be huge), while the number of different homotopy classes is bounded by $n^{\Oh(k')}$~\cite{schrijver1994finding}.\footnote{The number of different homotopy classes is also known to be bounded by $2^{\Oh(k^2)}$~\cite{LokshtanovMPSZ20}, if we restrict ourselves to \mic{only} ``relevant'' ones. The precise definition of ``relevant'' in this context is immaterial for our work.}
Thus,
{\em we must map the solution candidates to {\sc Set Cover} into homotopies in an economical fashion.}
The main crux of the reduction is an intricate mechanism that allows to encode a set family of size as large as $2^{\Omega(k)}$ using a homotopy class of just $k^{\Oh(1)}$ paths.

\paragraph{Non-crossing multicommodity flow.}
We present our reduction in the language of non-crossing edge-disjoint walks.
Focusing on edge-disjoint walks allows us to utilize the convenient link between max-flows and shortest paths in the dual graph.
What is more, this setting generalizes finding both vertex-disjoint or edge-disjoint paths in planar graphs~\cite{BercziK17}\footnote{
We remark that there is a flaw in \cite[Proposition 12]{BercziK17} because
replacing a vertex with merely a cycle is not sufficient.
We give a correct argument using a cylindrical wall instead of a single cycle.
}, which will make Theorems~\ref{thm:noPolyKer},~\ref{thm:MK2},~\ref{thm:noPolyKerEdge} simple corollaries from the main reduction.

For a multigraph $G$ with a fixed plane embedding, 
two pairs of edges $(e_1,f_1)$ and $(e_2,f_2)$, with all edges incident to a vertex $v \in V(G)$, {\em cross} if $e_1,e_2,f_1,f_2$ appear in this order in the cyclic ordering of edges around $v$.
Next, two edge-disjoint walks $W_1, W_2$ in $G$ are {\em non-crossing} if there are no
pairs of consecutive edges $(e_1,f_1)$ in $W_1$ and $(e_2,f_2)$ in $W_2$ that cross (see \Cref{fig:non-crossing-setcover}).
In the {\sc Non-crossing Multicommodity Flow} problem (cf.~\cite{BercziK17,LokshtanovMPSZ20}), we are given a plane multigraph $G$ and a family $\tcal$ of $k$ tuples $(s_i,t_i,d_i) \in V(G) \times V(G) \times \nn$, called {\em requests}.
A solution is a family $\pp$ of pairwise edge-disjoint non-crossing walks containing $d_i$ walks connecting $s_i$ to $t_i$, for $i \in [k]$, called a {\em non-crossing $\tcal$-flow}.\footnote{
In this paper, we consider only integral flows.
}
We add one more technical requirement for a solution, which is irrelevant in this informal outline (see \cref{def:reduction:flow}).
Since we do not impose any bounds on the {\em demands} $d_i$,  the total size of the family $\pp$ may be exponential in the parameter $k$.
We address this issue later.

\begin{figure}[t]%
    \centering
    \subfloat{
    {\includegraphics[width=5.5cm,valign=c]{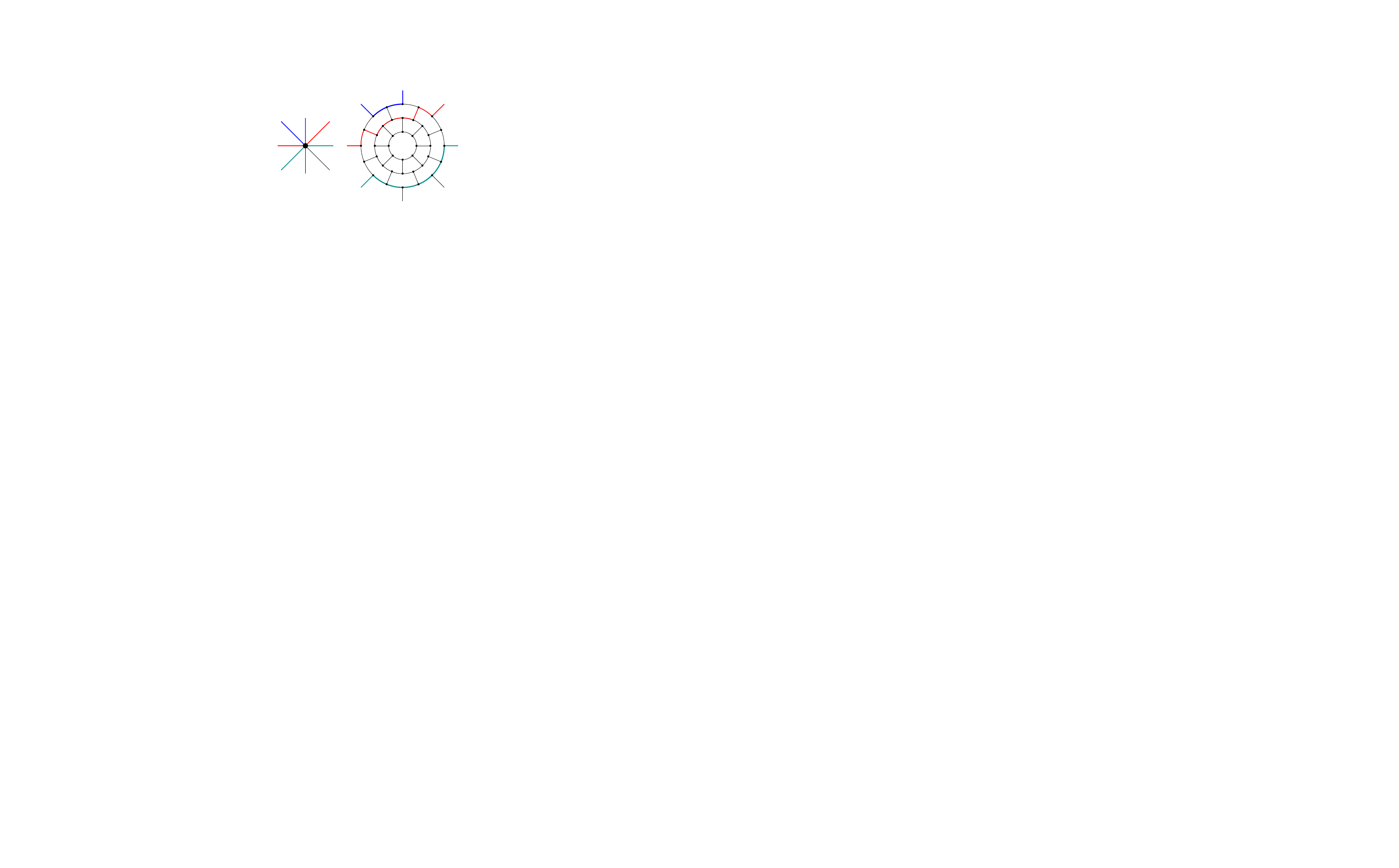}}
    }%
    \qquad\qquad
    \subfloat 
    {{\includegraphics[width=7.5cm,valign=c]{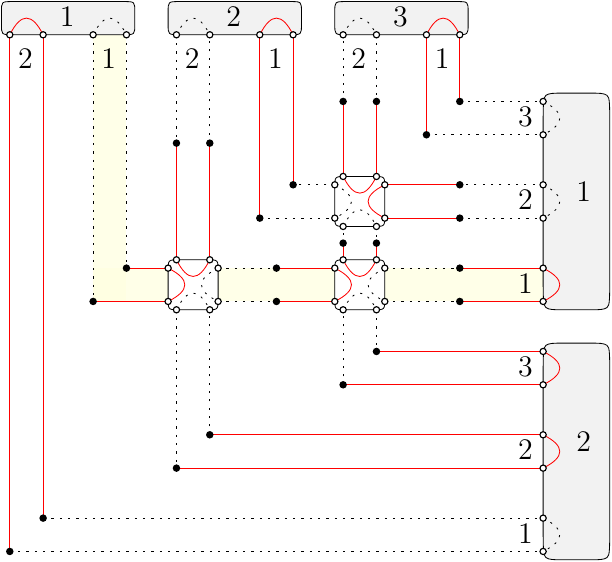}}}%
\caption{
Left: Three non-crossing walks traversing a vertex.
In the reduction from \nullnoncross to {\sc Planar (Edge-)Disjoint Paths} we replace each vertex with a cylindrical wall.
This transformation makes the graph \mic{simple and} subcubic, so the three notions of (a) edge-disjoint non-crossing walks, (b) edge-disjoint paths, and (c) vertex-disjoint paths become equivalent.
Right: A system of gadgets for $k = 3$, $\ell = 2$, in a reduction from {\sc Set Cover}.
The existential gadgets are on the top and the subset gadgets are on the right.
The terminal pairs in each gadget are numbered clockwise. 
The three squares in the middle are the junction gadgets.
The highlighted stripe shows a way of communication between the first existential gadget and the first subset gadget.
The red flow encodes a solution $S_1 = \{1\}$, $S_2 = \{2,3\}$.}%
    \label{fig:non-crossing-setcover}%
\end{figure}

\paragraph{The main gadgets.}

We employ three types of gadgets sharing a common interface: each gadget is a plane multigraph $G$ with a set of requests $\tcal \sub  V(G) \times V(G) \times \nn$ and distinguished vertices $s_1, t_1, \dots, s_m, t_m$ lying on the outer face in this clockwise order.
For a subset $F \sub [m]$, let $\tcal_F = \{(s_i,t_i,1) \mid i \in F\}$.
A gadget $(G,\tcal)$ encodes some downward-closed family of sets $\ff$ as follows: the instance $(G, \tcal \cup \tcal_F)$ should be solvable if and only if $F \in \ff$.
In other words, routing all $(s_i,t_i)$-walks for $i\in F$ through the gadget should be possible only when $F$ satisfies property $\ff$.

The first type of a gadget is an $\ell$-$\mathsf{Existential\, Gadget}$ with $\ell$ terminal pairs $(s_i,t_i)$.
For this gadget, $\ff$ is defined as a family of all proper subsets of $[\ell]$.
That is, $(G, \tcal \cup \tcal_F)$ is solvable if and only if $|F| < \ell$.
Suppose we are given an instance $(k,\scal,\ell)$ of \textsc{Set Cover}, i.e., $\scal$ is a family of subsets of $[k]$ and we ask whether there are $\ell$ sets in $\scal$ that cover $[k]$.
We will make a single copy of an $\ell$-$\mathsf{Existential\, Gadget}$ for each $i \in [k]$.
For $j \in [\ell]$ the intended meaning of  $j \not\in F$ in the $i$-th gadget is that the element $i$ should be covered by the $j$-th set in a solution.
The $\ell$-$\mathsf{Existential\, Gadget}$ ensures that for at least one $j \in [\ell]$ this condition will hold.

Next, we introduce an $(r,k,\scal)$-$\mathsf{Subset\, Gadget}$.
By padding the family $\scal$ with empty sets, we can assume that $|\scal| = 2^r$ for some integer $r \le k$.
The $(r,k,\scal)$-$\mathsf{Subset\, Gadget}$ has $k$ terminal pairs $(s_i,t_i)$ and we require $F \in \ff$ if and only if there exists $S \in \scal$ with $F \sub S$.
In other words, the set of additional terminals should encode a subset of some set from $\scal$.

Imagine the following construction: we make $k$ copies of an $\ell$-$\mathsf{Existential\, Gadget}$, $\ell$ copies of an $(r,k,\scal)$-$\mathsf{Subset\, Gadget}$ and, for each $i \in [k]$, $j \in [\ell]$, we add terminals $u_{i,j}$, $v_{i,j}$, connected to the $j$-th pair of terminals in the $i$-th existential gadget  and the $i$-th pair of terminals in the $j$-th subset gadget.
For each created pair $u_{i,j}$, $v_{i,j}$, we demand a single unit of flow between $u_{i,j}$ and $v_{i,j}$.
By the property of an $\ell$-$\mathsf{Existential\, Gadget}$, for each $i \in [k]$ there should be at least one $j \in [\ell]$ for which the $(u_{i,j}, v_{i,j})$-walk goes through the $j$-th subset gadget.
On the other hand, for each $j \in [\ell]$ the set of such indices $i$ 
forms a subset of some set from $\scal$. 
Therefore satisfying all the requests is possible exactly when there are $\ell$ sets in $\scal$ whose union is $[k]$, as intended.

The problem with this construction is that already for $\ell = k = 3$ such a graph contains $K_{3,3}$ as a minor, so it cannot be planar.
To circumvent this, we need yet another gadget to allow the links between each $i$-th existential gadget and each $j$-th subset gadget to cross.
To this end, we utilize a $\mathsf{Junction\, Gadget}$ $(G, \tcal)$ with 4 terminal pairs $(s_i,t_i)$.
We demand that $(G, \tcal \cup \tcal_F)$ should be solvable if and only if $\{1,3\} \not\sub F$ and $\{2,4\} \not\sub F$.
That is, when we allow a walk on the left then we cannot route a walk on the right, and when we allow a walk at the top then we cannot route a walk at the bottom, and vice versa.
These two exclusion mechanisms are independent from each other, thus allowing two bits of information to ``travel'' in a crossing fashion (see \Cref{fig:non-crossing-setcover}).

The existential and junction gadgets have been employed in the original NP-hardness proof of {\sc Planar Disjoint Paths}~\cite{kramer1984complexity} and we can easily adapt them for our purposes.
The main challenge though is to construct the $(r,k,\scal)$-$\mathsf{Subset\, Gadget}$.
In order to design a meaningful reduction we can produce only $(r+k)^{\Oh(1)}$ requests while we need to encode as many as $2^r$ sets from $\scal$.

\paragraph{Subset gadget: The first attempt.}

We begin with a simplified construction, first presenting the pattern propagation mechanism alone.
This also reflects how the full construction is presented in Sections \ref{sec:subset-simple} and \ref{sec:subset-full}.
Additionally, here we do not delve into formulas regarding the numbers of parallel edges and the non-crucial demands $d_i$, aiming at the simplest presentation of the main ideas.   
By a slight abuse of notation, we treat $\scal$ as a function from $\{0,1\}^r$ to subsets of $[k]$.
We will build the gadget from $k$ blocks, each of which could ``choose'' a {pattern} encoded by a vector $\bb \in \{0,1\}^r$.
When $i \in F$ and the $i$-th block chooses a vector $\bb^i$, this should imply $i \in \scal(\bb^i)$.
The pattern propagation will ensure that each block chooses exactly the same vector $\bb$.
In turn, this will imply that $i \in F \Rightarrow i \in \scal(\bb)$, matching the gadget specification.

Let $r$-ladder be the $(r+1) \times 2$-grid, with internal faces numbered bottom-up as $f_1, \dots, f_r$.
We construct the $i$-th block using two $r$-ladders, an upper one $L_i^+$ and a lower one $L_i^-$, and connect the consecutive blocks as depicted in \Cref{fig:ladder1-outline}.
We also connect the first and the last block, thus creating a ring-like structure with the lower ladders in its interior.
In each ladder $L$ we attach two vertices $L[u_0]$, $L[u_1]$ to, respectively, the bottom and the top of $L$, and add a request  $(L[u_0]$, $L[u_1],1)$ to $\tcal$.
We will enforce that the $(L[u_0]$, $L[u_1])$-walk $W$ must be entirely contained within $L$, and so it can be associated with a vector $\bb^L \in \{0,1\}^r$ encoding which faces $f_1, \dots, f_r$ are to the left  of $W$ (then the corresponding bit in $\bb^L$ is set to 0) and which are to the right (then the corresponding bit is 1).
We shall call $\bb^L$ the {\em pattern} in $L$.

Next, for each $i \in [k]$ and $j \in [r]$ we attach a vertex inside the face $f_j$ of the ladder $L_i^+$ (we refer to this vertex as $L_i^+[x_j]$) and a vertex inside the face $f_j$ of the ladder $L_i^-$ (denoted $L_i^-[x_j]$).
We create a request $(L_i^+[x_j], L_i^-[x_j], 2^{j-1})$.
Because the $(L_i^+[x_j], L_i^-[x_j])$-walks cannot cross the $(L_i^+[u_0]$, $L_i^+[u_1])$-walk, they need to use the passage on the left when the $j$-th bit of the pattern is 0, or the passage on the right when this bit is 1.
This already implies that the patterns in the ladders $L_i^-$, $L_i^+$ must be the same, and we will refer to this common pattern as $\bb^i$.

\begin{figure}[t]
    \centering
\includegraphics[scale=0.4]{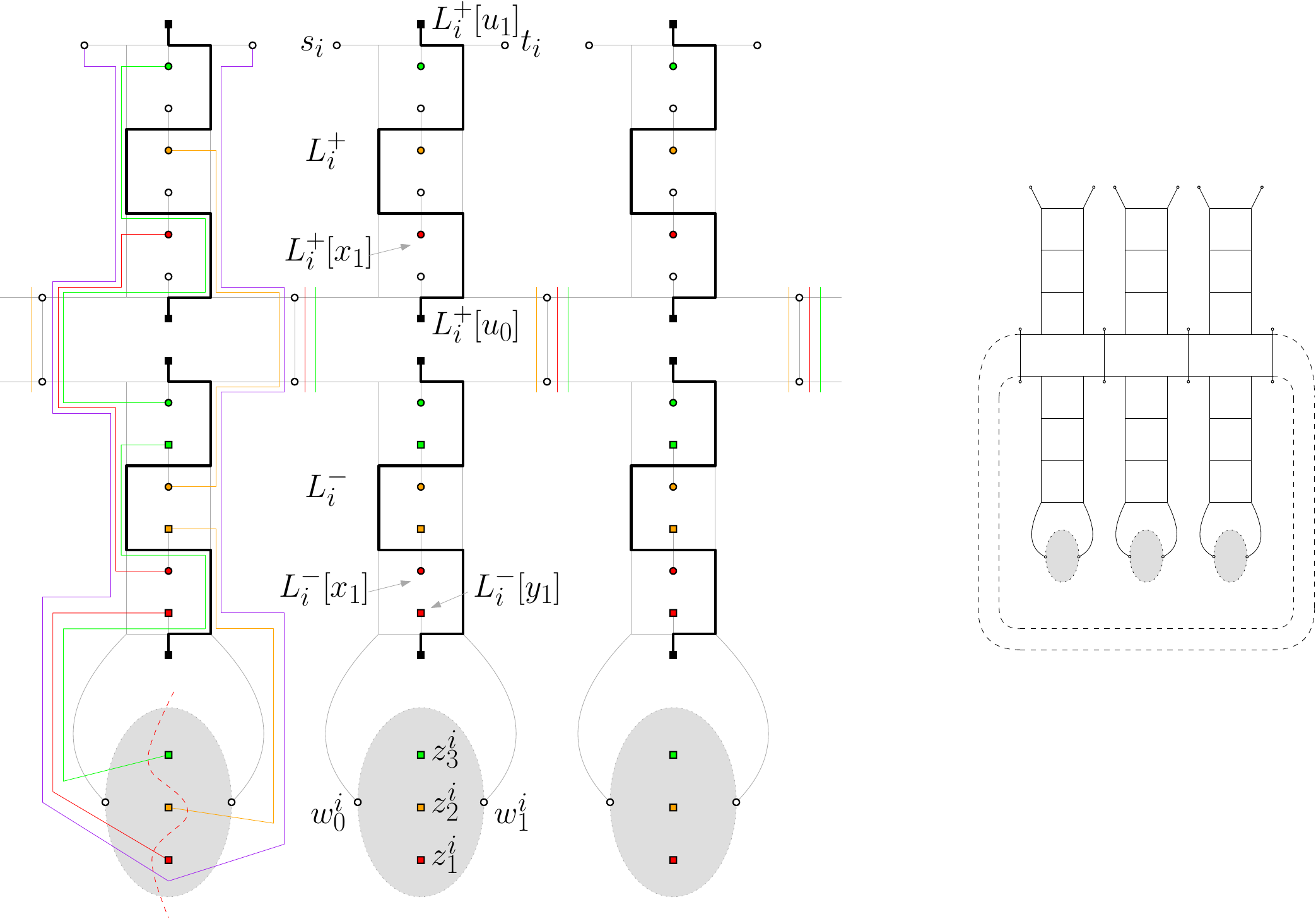}
\caption{
A simplified construction of a $(3,k,\scal)$-$\mathsf{Subset\, Gadget}$ and a sketch of a solution (the colorful lines).
The vertices that need to be connected by walks share common colors and shapes.
The $i$-th block is built from two $3$-ladders $L_i^{+}, L_i^-$, and a $(3,Z^\scal_i)$-$\mathsf{Vector\, Containment\, Gadget}$.
The blocks are combined into a ring-like structure (on the right).
The area separating the upper and lower ladder is referred to as the {\em middle belt}.
Due to  pattern propagation, the choice of which colors are being routed through the left or right passage    must the same in all blocks.
The common pattern $\bb = (010)$ chosen by the solution is sketched with solid lines.
Accommodating an $(s_i,t_i)$-walk (purple) through the $i$-th vector-containment gadget is possible if and only if $\bb \in Z^\scal_i$. 
} 
\label{fig:ladder1-outline}
\end{figure}

We set the capacity of each passage going through the middle belt to be $2^{r}-1$, i.e., we place this many parallel edges in each of $k$ passages.
Note that the $x$-requests from the $i$-th block send $\sum_{j=1}^r 2^{j-1} \cdot 1_{[\bb^i_j = 0]}$ units of flow through the passage to the left of the $i$-th block and $\sum_{j=1}^r 2^{j-1} \cdot 1_{[\bb^i_j = 1]}$ units of flow through the right passage.
If all the patterns are the same, then the total amount of flow going through each passage is $\sum_{j=1}^r 2^{j-1} = 2^{r} - 1$.
Because we work on a ring structure, when some patterns differ then there is a passage through which one would need to push at least $2^{r}$ units of flow.
Consequently, all the vectors $\bb^i$ must coincide and so pattern propagation works~as~intended.

So far we have established a mechanism that makes a solution choose a single vector  $\bb \in \{0,1\}^r$ that appears as a pattern in each block of the $(r,k,\scal)$-$\mathsf{Subset\, Gadget}$.
The next step is to enforce that $i \in F \Rightarrow i \in \scal(\bb)$.
It is convenient to define $Z^\scal_i$ as the set of vectors $\bb$ for which $i \in \scal(\bb)$.
Then the set $Z^\scal_i$ can be encoded in the graph, and we need to check whether the pattern $\bb$ chosen by a solution belongs to $Z^\scal_i$.
To this end, we will need another kind of a gadget, with a slightly different interface.
For $Z \sub \{0,1\}^r$, an $(r,Z)$-$\mathsf{Vector\, Containment\, Gadget}$ is a plane multigraph $G$ with distinguished vertices $z_1, \dots, z_r$ and $w_0, w_1$, with the last two lying on the outer face.
For $\bb \in \{0,1\}^r$ we define $\tcal_{\bb,d}$ as the family of following requests:
\begin{enumerate}[nolistsep]
        \item $(w_0, z_j, 1)$ for each $j \in [r]$ with $\bb_j = 0$,
        \item $(w_1, z_j, 1)$ for each $j \in [r]$ with $\bb_j = 1$,
        \item the request $(w_0, w_1, d)$. 
\end{enumerate}
(The vertex $z_j$ needs to be connected by a walk to either $w_0$ or $w_1$ depending on the $j$-th bit in $\bb$.)
We require that the instance $(G,\tcal_{\bb,0})$ is satisfiable for any $\bb \in \{0,1\}^r$ but the instance $(G,\tcal_{\bb,1})$ is satisfiable if and only if $\bb \in Z$. 
In other words, the choice of the vector $\bb$ governs whether we can insert a $(w_0, w_1)$-walk on top of the $r$ walks with endpoints at $z_1, \dots, z_r$.

Assuming that such a gadget exists, we could finish the construction of the subset gadget as follows.
For each $i \in [k]$ we insert an $(r,Z^\scal_i)$-$\mathsf{Vector\, Containment\, Gadget}$ below the $i$-th block and connect it to the lower corners of the ladder $L_i^-$.
We refer to its distinguished vertices with $i$ in the superscript, e.g., $w^i_0, z^i_j$.
Next, for each $i \in [k]$, $j \in [r]$, we create another vertex inside the face $f_j$ of the ladder $L_i^-$ (denoted $L_i^-[y_j]$) and add a request $(L_i^-[y_j], z^i_j, 1)$. 
By the same argument as above, these walks must go through $w^i_0$ when $\bb_j = 0$ or through $w^i_1$ when $\bb_j = 1$.
Therefore, they contain subwalks satisfying the requests from $\tcal_{\bb,0}$.
Next, we create vertices $s_i, t_i$ attached to the upper corners of the ladder $L_i^+$; note that they end up in the exterior of the ring structure, i.e., on the outer face of the subset gadget.
The only possible way from $s_i$ to $t_i$ leads through the $i$-th vector-containment gadget.
Hence when $i \in F$ and we need to satisfy the request $(s_i,t_i,1)$, there must exist a non-crossing $\tcal_{\bb,1}$-flow in the $(r,Z^\scal_i)$-$\mathsf{Vector\, Containment\, Gadget}$.
By its definition, this implies $\bb \in Z^\scal_i$ and so $i \in \scal(\bb)$, as intended.
The last issue is that the passages through the middle belt are already saturated by the $x$-requests.
This is not a big problem though as we can multiply the demands in the $x$-requests by a constant and similarly multiply the number of the parallel edges in the passages,
creating a little slack sufficient for routing the $(s_i,t_i)$-walks.

\paragraph{Vector-containment gadget.}

Unfortunately, the construction above does not work because we do not know how to construct an $(r,Z)$-$\mathsf{Vector\, Containment\, Gadget}$.
Instead, we now present a~gadget with a slightly more complicated specification.
First, we explain how to construct a gadget that behaves almost like an $(r,Z)$-$\mathsf{Vector\, Containment\, Gadget}$ and then augment the construction of the subset gadget with additional elements necessary to plug in the proper vector-containment~gadget.

It is convenient to analyze the gadget from the dual perspective.
For each $\bb \in \{0,1\}^r$ we can draw a curve (the red dashed curve in \Cref{fig:ladder1-outline}) that has to intersect all the walks in a $\tcal_{\bb,d}$-flow.
When there is a~path $P$ in the dual graph whose homotopy aligns with this curve (that is, $P$ traverses $z_j$ from the side of $w_0$ when $\bb_j =0$ and  from the side of $w_1$ when $\bb_j =1$) then the length of $P$ imposes an upper bound on the maximal number of walks in a $\tcal_{\bb,d}$-flow, what in turn entails an upper bound on $d$.
Given the set $Z$, we would like to construct a plane graph, being a prototype of a dual graph of the gadget, with two vertices $s,t$ on the outer face and $r$ distinguished internal faces $f_1,\dots,f_r$ with the following property:
the length of a shortest $(s,t)$-path with homotopy class encoding the vector $\bb$ (with respect to the faces $f_1,\dots,f_r$) depends on whether $\bb \in Z$.

\begin{figure}
    \centering
\includegraphics[scale=1.3]{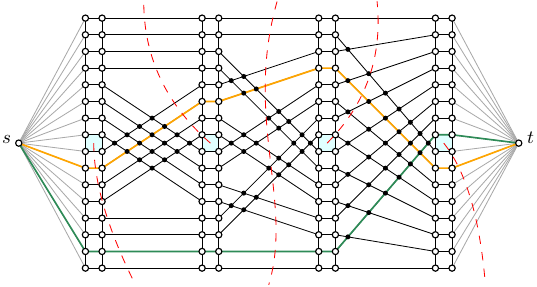}
\caption{The graph $H_4$ with $2^4$ vertices in each vertical path. 
Four distinguished faces $f_1,f_2,f_3,f_4$ are highlighted in light blue.
For $\bb = (1001)$ the path $P_\bb$ is drawn in orange and
for $\vv = (1110)$ the path $P_\vv$ is drawn in green.
These vectors encode whether a path passes above or below each of the faces $f_1,f_2,f_3,f_4$.
The red dashed curves illustrate the directions of walks in a $\tcal_{\bb,d}$-flow in the dual of $H_4$: note that each of them intersects $P_\bb$.
} 
\label{fig:homotopy-outline}
\end{figure}

We construct a plane graph $H_r$ (see \Cref{fig:homotopy-outline}) having a {\em unique} shortest $(s,t)$-path $P_\bb$ for each homotopy class given by  $\bb \in \{0,1\}^r$.
Moreover, the paths $P_\bb$ are pairwise edge-disjoint.
The graph $H_r$ has size $2^{\Oh(r)}$, which is polynomial in the size of family $\scal$, and its treewidth is $2^{\Omega(r)}$ due to large grid subgraphs; this is inevitable in the light of \cref{thm:polyKer}.
The faces $f_1,\dots,f_r$ are the dual counterparts of the vertices $z_1,\dots,z_r$, while the areas above and below $H_r$ are the placeholders for $w_0,w_1$.
In order to achieve our goal, we would like to modify $H_r$ to increase the length of the path $P_\bb$ exactly when $\bb \in Z$, thus allowing more slack for flows in the dual of $H_r$.
This can be obtained by subdividing the first edge (incident to $s$) on the path $P_\bb$ when  $\bb \in Z$; due to edge-disjointness of the paths $P_\bb$, this does not affect the remaining homotopy classes.
This modification raises the upper bound on the size of a $\tcal_{\bb,d}$-flow by one; by applying some other adjustments to $H_r$ we can make this upper bound tight for non-crossing flows.
As a consequence, we can accommodate one more $(w_0,w_1)$-walk in the dual exactly when the pattern $\bb$ belongs to $Z$, as intended.

The main technical hurdle comes from the fact that the length of the path $P_\bb$ in $H_r$ depends on~$\bb$: it is very short for $\bb$ being a 0-vector and very long for $\bb$ comprising alternating 0's and 1's.
So the bound on the size of a $\tcal_{\bb,d}$-walk depends not only on whether  $\bb \in Z$ but also on some function $\gamma(\bb)$, making it useless for the current construction of the subset gadget.
To circumvent this, we first prove that  the function $\gamma$ enjoys a very special form, which will play a crucial role later.

\[
\gamma(b_1b_2\dots b_r) = \sum_{1 \le p < q \le r} 1_{[b_p \ne b_q]} \cdot 2^{r-q+p-1}
\]

We will now work with a generalization of an  $(r,Z)$-$\mathsf{Vector\, Containment\, Gadget}$, namely an  $(r,\gamma,Z)$-$\mathsf{Vector\, Containment\, Gadget}$.
The difference is that a non-crossing $\tcal_{\bb,d}$-flow should be feasible exactly when $d \le \gamma(\bb) + 1_{[\bb \in Z]}$.
In fact, the proper definition (\ref{def:homo:elemenet-gadget}) requires more subtleties concerning factors depending on $r$ but we omit them here.

\begin{figure}[t]
    \centering
\includegraphics[scale=0.35]{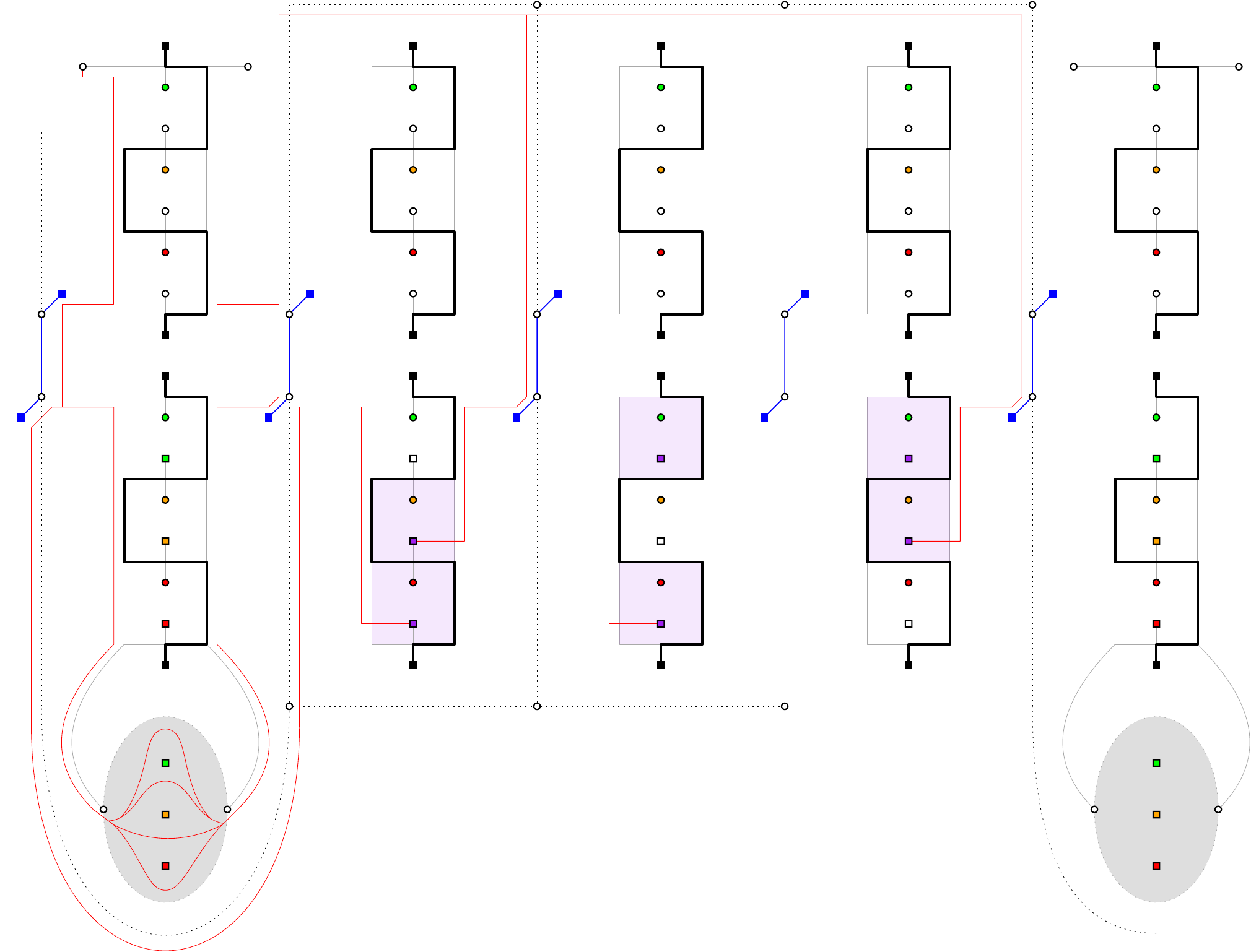}
\caption{
A refined construction of a $(3,k,\scal)$-$\mathsf{Subset\, Gadget}$, showing new blocks inserted between the $i$-th block and the $(i+1)$-th block from the previous construction.
The faces with the endpoints of the new requests are highlighted.
The ``pipes'' are drawn with dotted lines.
The blue paths show other new requests whose purpose is to block the connections between the upper and lower pipes.
The common pattern of the solution is depicted with solid lines while the red lines illustrate the workaround for the new requests traversing the $i$-th vector-containment gadget.
Observe that one pair of highlighted faces is not being separated by the pattern so the corresponding request can be satisfied locally. 
} 
\label{fig:ladder-full-outline}
\end{figure}

\paragraph{Subset gadget: The real deal.}

The current task is to adapt the construction of an $(r,k,\scal)$-$\mathsf{Subset\, Gadget}$ to be compatible with the more complicated $(r,\gamma,Z)$-$\mathsf{Vector\, Containment\, Gadget}$.
The condition $\bb \in Z$ becomes meaningful for the vector-containment gadget only when it needs to additionally accommodate $\gamma(\bb)$ units of $(w_0,w_1)$-flow.
We will extend the previous construction with ``dynamic flow generators'': new requests that could be satisfied either locally, within their ladder, or via walks passing through the vector-containment gadget.
We insert ${r \choose 2}$ new blocks (without vector-containment gadgets) between each pair of blocks in the ring structure.
In each new block, labeled now with a triple $(i,p,q)$, $i\in [k]$, $1 \le p < q \le r$, we request $2^{r-q+p-1}$ walks from a vertex within the face $f_p$ of the corresponding lower ladder $L$ to a vertex within the face $f_q$ of the same ladder (see \Cref{fig:ladder-full-outline}).
Due to pattern propagation, these faces will end up on the same side of the $(L[u_0], L[u_1])$-walk only when a solution chooses a pattern $\bb$ with $\bb_p = \bb_q$.
In this case the new requests can be satisfied by walks contained in $L$.
However, when $\bb_p \ne \bb_q$ the faces $f_p, f_q$ in the ladder $L$ become separated and the corresponding $2^{r-q+p-1}$ walks must be routed elsewhere.
Observe that the total amount of flow that cannot be satisfied locally, summing over all pairs $1 \le p < q \le r$, matches the formula for $\gamma(\bb)$.

We need a way to satisfy the requests, that cannot be satisfied locally, with walks that traverse the $i$-th vector-containment gadget.
To this end, we design an ``irrigation system'' of workarounds that are too narrow \mic{(in the number of parallel edges)} to be used by the other requests. 
It comprises two groups of ``pipes'', one inside the ring structure, gathering the walks starting on the left-sides of the ladders and leading towards $w^i_0$, and the other one outside  the ring structure, gathering the walks starting on the right-sides of the ladders  and leading towards $w^i_1$.
By designing these systems carefully, we can ensure that (a) the walks going through the pipes can traverse the vector-containment gadget without crossing each other or the remaining walks, (b) the only way from the upper pipe system to the lower pipe system leads through the vector-containment gadget, and (c) the terminals $s_i,t_i$ stay on the outer face of the subset gadget, as required by its specification.
This concludes the description of the $(r,k,\scal)$-$\mathsf{Subset\, Gadget}$ $(G,\tcal)$ with the total number of requests $|\tcal| = \Oh(k\cdot r^3)$.

\paragraph{Implementing weights.}

The last issue is to deal with the fact in the {\sc Non-crossing Multicommodity Flow} problem we allow requests of the form $(s_i, t_i, d_i)$ where the demand $d_i$ may be exponentially large in the parameter.
As our final goal is to reduce the problem to {\sc Planar Disjoint Paths}, we would like to set all the demands to $d_i = 1$ without increasing the number of requests too much.
To this end, we take advantage of the construction by Adler and Krause~\cite{AdlerLB} who presented an instance of {\sc Planar Disjoint Paths} with parameter $\ell$ and a roughly $2^\ell \times 2^\ell$-grid subgraph in which every vertex must be used in the unique solution  (hence no vertex is irrelevant despite large treewidth).
What is important for us, the unique solution must traverse the grid $2^\ell-1$ times from left to right
(see \Cref{fig:topological-dpp} on page \pageref{fig:topological-dpp}).
We utilize this property to implement a request of the form $(s_i, t_i, 2^\ell-1)$ using only $\ell$ unitary requests.
We replace the endpoint $s_i$ with a gadget mimicking  the left-side of the grid and $t_i$ with a gadget for the right-side of the grid.
By the arguments from~\cite{AdlerLB} we obtain that a solution must contain $2^\ell-1$ subwalks between these two gadgets, which can be translated back into $2^\ell-1$ walks between $s_i$ and $t_i$.
Finally, we can partition each request  $(s_i, t_i, d_i)$ into $\Oh(\log d_i)$ requests of the form as above, thus increasing the total size of $\tcal$ by a factor of $\Oh(k^2)$ when the demands are bounded by $2^{\Oh(k)}$.
This concludes the outline of the reduction.

\section{Preliminaries}\label{sec:prelims}

The set $\{1,\ldots,p\}$ is denoted by $[p]$.
A~graph $G$ has vertex set $V(G)$ and edge set $E(G)$ of distinct pairs of vertices.
\mic{We also consider multigraphs that may have parallel edges but no loops, i.e., $E(G)$ becomes a multiset of pairs of distinct vertices.}
For a vertex $v \in V(G)$ we denote by $E_G(v)$ the set of edges incident to $v$.
For $A,B \subseteq V(G)$ we define $E_G(A,B) = \{uv \mid u \in A, v \in B, uv \in E(G)\}$.
 The {open} neighborhood of $v \in V(G)$ is $N_G(v) := \{u \mid uv \in E(G)\}$. 
 For a vertex set~$S \subseteq V(G)$ the open neighborhood of~$S$, denoted~$N_G(S)$, is defined as~$\bigcup _{v \in S} N_G(v) \setminus S$.
For $S \subseteq V(G)$, the graph induced by $S$ is denoted by $G[S]$. 
We use shorthand $G-S$ for the graph $G[V(G) \setminus S]$. For $v \in V(G)$, we write $G-v$ instead of $G-\{v\}$.
{For~$A \subseteq E(G)$ we denote by~$G \sm A$ the graph with vertex set~$V(G)$ and edge set~$E(G) \sm A$. For~$e \in E(G)$ we write~$G \setminus e$ instead of~$G \setminus \{e\}$.}

\paragraph{Paths, linkages, and separators.}

For $X,Y \sub V(G)$ an $(X,Y)$-walk in $G$ is an alternating sequence of vertices and edges from $G$ so that the first element is a vertex from $X$, the last one element is a vertex from $Y$, and each edge is incident to the proceeding and  succeeding vertex.
An~$(X,Y)$-path is an $(X,Y)$-walk without vertex repetitions; we often identity a path with a subgraph of $G$.
\mic{When $X = \{x\}$, $Y = \{y\}$, we simply refer to $(x,y)$-walks and $(x,y)$-paths.
The length of a path equals the number of its edges.
For $x,y \in V(G)$ we define $\dist_G(x,y)$ as the length of the shortest $(x,y)$-path.
}

A {\em linkage} in $G$ is a family of vertex-disjoint paths in $G$.
For $X,Y \sub V(G)$ we say that $\pp$ is an {\em $(X,Y)$-linkage} when $\pp$ comprises $(X,Y)$-paths. 
We use shorthand {\em $X$-linkage} for an $(X,X)$-linkage.
For $\tcal \sub V(G) \times V(G)$ we say that $\pp$ is a {\em $\tcal$-linkage} when $|\pp| = |\tcal|$ and $\pp$ contains an
$(s_i, t_i)$-path for each $(s_i,t_i) \in \tcal$.
We say that $\tcal$ is {\em realizable} in $G$ if there exists a $\tcal$-linkage in $G$.
Two linkages $\pp_1, \pp_2$ are {\em aligned} if there is a bijection $f \colon \pp_1 \to \pp_2$ such that $P \in \pp_1$ has the same endpoints as $f(P)$.
For $\tcal \sub V(G) \times V(G)$ we denote by $V_\tcal$ the set of all vertices occurring in $\tcal$.

For two sets $X,Y \subseteq V(G)$, a set $S \subseteq V(G)$ is an $(X,Y)$-separator if no connected component of $G-S$ contains a vertex from both {$X \setminus S$} and {$Y \setminus S$}. Note that such a separator may intersect $X \cup Y$. 
{By Menger's theorem, the minimum cardinality of such a separator is equal to the maximum cardinality of an $(X,Y)$-linkage.}
We denote this quantity by $\mu_G(X,Y)$.
An $(X,Y)$-separator $S$ is {\em inclusion-minimal} if no proper subset of $S$ is an $(X,Y)$-separator.

\begin{theorem}[{\cite[Thm. 8.4, 8.5]{cygan2015parameterized}}]
\label{thm:prelim:closets-cut}
    Let $G$ be a graph and $X, Y \sub V(G)$ be disjoint sets of vertices.
    There exists a minimum-size
    $(X,Y)$-separator $S$ such that for any other minimum-size $(X,Y)$-separator $S'$ the set of vertices reachable from $X$ in $G-S$ is a subset of the set of vertices reachable from $X$ in $G-S'$.
    Furthermore, $S$ can be constructed in polynomial time.
\end{theorem}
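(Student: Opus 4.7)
My plan is to derive both the existence of $S$ and its polynomial-time construction from a single integral max-flow computation, via the classical vertex-splitting reduction to edge-capacitated flow together with the well-known characterisation of the source-closest min cut by residual reachability.

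First, I build an auxiliary edge-capacitated digraph $H$: replace each $v \in V(G)$ by two copies $v^-, v^+$ linked by a unit-capacity arc, replace each edge $uv \in E(G)$ by two infinite-capacity arcs $u^+ v^-$ and $v^+ u^-$, and attach a super-source $s$ joined to every $x^-$ with $x \in X$ and a super-sink $t$ joined from every $y^+$ with $y \in Y$, both with infinite capacity. Since all non-split arcs have infinite capacity, integral finite-value $s$-$t$ cuts in $H$ are in bijection with $(X,Y)$-separators in $G$ of the same size; hence the max-flow value $\lambda$ equals $\mu_G(X,Y)$ and can be computed in polynomial time. Let $f$ be an integral max flow, $G_f$ its residual graph, and $B \sub V(H)$ the set of vertices reachable from $s$ in $G_f$. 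Define
\[
S \;:=\; \{\, v \in V(G) \;:\; v^- \in B \text{ and } v^+ \notin B \,\}, \qquad R_S \;:=\; \{\, v \in V(G) \;:\; v^+ \in B \,\}.
\]
A direct inspection shows that $|S| = \lambda$, that $S$ is an $(X,Y)$-separator of $G$, and that $R_S$ is precisely the set of vertices reachable from $X$ in $G - S$.

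Next, I would verify that $R_S \sub R_{S'}$ for every other minimum $(X,Y)$-separator $S'$. Translating $S'$ back to $H$ yields a minimum $s$-$t$ cut with some source-side $A' \sub V(H)$, and a short check gives $R_{S'} = \{\, v : v^+ \in A' \,\}$. The source-closest-cut lemma then forces $B \sub A'$: for any $v \in B$, pick a residual path $s = u_0 \to u_1 \to \dots \to u_k = v$ in $G_f$ and let $i$ be the first index with $u_{i-1} \in A'$, $u_i \notin A'$; the arc $u_{i-1} \to u_i$ is then either the forward residual of an unsaturated cut edge (contradicting that every cut edge is saturated in a min cut) or the backward residual of a positive-flow edge $u_i u_{i-1}$ going from sink-side to source-side (contradicting that such backward-cut edges must carry zero flow in a min cut). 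Both cases are impossible, so $B \sub A'$, which projects to $R_S \sub R_{S'}$ in $G$ as required.

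The main technical care goes into the bookkeeping between the three objects involved --- the $s$-reachable set $B$ inside the split network $H$, the induced vertex separator $S$ in $G$, and the reachable-from-$X$ set $R_S$ inside $G - S$ --- and in particular into distinguishing the $B$-membership of the two copies $v^-, v^+$ of each vertex $v$ in order to recover the separator correctly. Once this dictionary is fixed, the minimality property of $R_S$ is immediate from the source-closest-cut characterisation of residual reachability, and the polynomial-time construction reduces to a single max-flow computation followed by a breadth-first search inside $G_f$.
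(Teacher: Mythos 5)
Your proof is correct: the paper does not prove this statement itself but imports it from the textbook of Cygan et al., and your argument is precisely the standard one behind that result --- vertex splitting into an edge-capacitated network, one integral max-flow computation, taking $S$ from the residual-reachable set $B$, and observing that $B$ is contained in the source side of every minimum cut, which yields both the minimality of the reachable set and the polynomial-time construction in one stroke. The bookkeeping between $B$, $S$, and $R_S$ that you flag is exactly the only place requiring care (note that $v^+\in B$ need not be witnessed through the arc $v^-\to v^+$, since that arc may be saturated by a flow path even when $v\notin S$, so the verification that $R_S$ equals the reachable set must go through residual arcs of both orientations), and it works out as you describe.
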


We say that $S$ is the minimum-size
$(X,Y)$-separator closest to $X$. 
\mic{Next, we shall need the following fact, which follows from the analysis of the residual graph in the Ford-Fulkerson algorithm.}

\begin{lemma}[Augmenting path, \cite{ford1956maximal,cormen2022introduction}~(Implicit)]
\label{lem:prelim:augmenting}
Let $X,Y \sub V(G)$ and $X' \subset X,\, Y' \subset Y$ be such that $|X'| = |Y'| = \mu_G(X',Y') < \mu_G(X,Y)$.
Then there exist $x \in X \sm X'$, $y \in Y \sm Y'$, and an $(X' \cup \{x\} ,Y' \cup \{y\})$-linkage of size $|\mu_G(X',Y')|+1$. 
\end{lemma}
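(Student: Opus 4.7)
The plan is to invoke the standard Ford--Fulkerson augmenting path argument on a vertex-capacitated flow network built from $G$. First I would set $k := \mu_G(X',Y') = |X'| = |Y'|$, fix an $(X',Y')$-linkage $\pp_0$ of size $k$ (which exists by Menger's theorem), and note that every vertex of $X'$ (resp.\ $Y'$) must occur as the start (resp.\ end) of some path in $\pp_0$. I would then construct the usual auxiliary directed network $N$: add a super-source $s$ with capacity-$1$ arcs to every $v \in X$, a super-sink $t$ with capacity-$1$ arcs from every $v \in Y$, split each remaining vertex $u$ into $u^{in}, u^{out}$ with a capacity-$1$ arc $u^{in} \to u^{out}$, and replace each $uv \in E(G)$ by the two arcs $u^{out} \to v^{in}$ and $v^{out} \to u^{in}$. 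Under the standard reduction between vertex-disjoint linkages in $G$ and integral $(s,t)$-flows in $N$, the linkage $\pp_0$ corresponds to an integral flow $f_0$ of value $k$.

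Next I would exploit the hypothesis $\mu_G(X,Y) > k$: by the same reduction, the maximum $(s,t)$-flow in $N$ equals $\mu_G(X,Y) > k$, so $f_0$ is not maximum and therefore there must exist an augmenting $s$-to-$t$ path $Q$ in the residual network $N_{f_0}$. The key structural observation is where $Q$ may enter and leave: in $f_0$ precisely the arcs $s \to v$ with $v \in X'$ are saturated (one per path of $\pp_0$), so every forward residual arc out of $s$ leads to a vertex of $X \sm X'$; the analogous statement holds for forward residual arcs into $t$. Hence $Q$ forces some $x \in X \sm X'$ and some $y \in Y \sm Y'$ as its second and second-to-last vertices. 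Augmenting $f_0$ along $Q$ by one unit yields an integral flow $f_1$ of value $k+1$ in which the saturated $s$-arcs are exactly those to $X' \cup \{x\}$ and the saturated $t$-arcs are exactly those from $Y' \cup \{y\}$.

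Decoding $f_1$ back then produces a vertex-disjoint linkage $\pp_1$ of size $k+1$ in $G$ whose start set equals $X' \cup \{x\}$ and whose end set equals $Y' \cup \{y\}$, precisely as required. The only bookkeeping subtlety I expect is the translation step: one must carefully check that the vertex-splitting construction guarantees internal vertex-disjointness of the decoded paths and that an integral flow of value $k+1$ indeed decomposes into $k+1$ such paths (a routine flow-decomposition argument). If $X \cap Y \ne \emptyset$, vertices playing a dual role can be handled by permitting trivial single-vertex paths. Beyond this routine translation, the argument is essentially the standard proof of Menger's theorem, so I do not foresee any real conceptual obstacle.
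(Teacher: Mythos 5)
Your proposal is correct and is exactly the argument the paper intends: the paper states this lemma without proof, attributing it to "the analysis of the residual graph in the Ford--Fulkerson algorithm," which is precisely the vertex-split network plus augmenting-path argument you give. The one bookkeeping point worth making explicit is that \emph{every} vertex of $G$ (including those in $X\cup Y$) should be split so that the decoded paths are fully vertex-disjoint, but this is routine and does not affect correctness.
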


\paragraph{Contractions and minors.}
The operation of contracting an edge $uv \in E(G)$ introduces a~new vertex adjacent to all of {$N_G(\{u,v\})$}, after which $u$ and $v$ are deleted. 
\mic{When working with multigraphs, we accumulate multiplicities of edges with a common endpoint.} The result of contracting $uv \in E(G)$ is denoted $G / uv$. For $A \subseteq V(G)$ such that $G[A]$ is connected, we say we contract $A$ if we simultaneously contract all edges in $G[A]$ and introduce a single new vertex.

We say that $H$ is a {minor} of $G$, if we can turn $G$ into $H$ by a (possibly empty) series of edge contractions, edge deletions, and vertex deletions.
The result of such a process is given by a minor-model, i.e., a~mapping $\Pi \colon V(H) \to 2^{V(G)}$, such that
the branch sets $(\Pi(h))_{h\in V(H)}$ are pairwise disjoint, induce connected subgraphs of $G$, 
and $h_1h_2 \in E(H)$ implies that $E_G(\Pi(h_1), \Pi(h_2)) \ne \emptyset$.

\paragraph{Treewidth.}
A tree decomposition of graph $G$ is a pair $(T, \chi)$ where~$T$ is a rooted tree, and~$\chi \colon V(T) \to 2^{V(G)}$ is a function, such that:
\begin{enumerate}
    \item For each~$v \in V(G)$ the nodes~$\{t \mid v \in \chi(t)\}$ form a {non-empty} connected subtree of~$T$. 
    \item For each edge~$uv \in E(G)$ there is a node~$t \in V(G)$ with $\{u,v\} \subseteq \chi(t)$.
\end{enumerate}
The \emph{width} of a tree decomposition is defined as~$\max_{t \in V(T)} |\chi(t)| - 1$. The \emph{treewidth} of a graph~$G$, denoted $\tw(G)$, is the minimum width of a tree decomposition of~$G$.


\paragraph{Planar graphs and multigraphs.} 
\mic{We provide only the necessary background here and refer to the textbook~\cite{Mohar2001GraphsOS} for more details.}
A plane embedding of a multigraph $G$ is given by a mapping from $V(G)$ to $\mathbb{R}^2$ and a mapping that associates with each edge $uv \in E(G)$ a simple curve on the plane connecting the images of $u$ and $v$, such that the curves given by two distinct edges can intersect only in a common endpoint. 
A multigraph is called planar if it admits a plane embedding.
{A plane (multi)graph is a (multi)graph with a fixed planar embedding, in which we identify the set of vertices with the set of their images on the plane.} 
For a vertex $v$ in a plane multigraph $G$ we
denote by $\pi_G(v)$ the clockwise ordering of the set $E_G(v)$.
The family of such orderings is called a {\em rotation system}.
\mic{For a topological disc $D \sub \rr^2$ such that $G[V(G) \cap D]$ is connected, the outcome of contracting $V(G) \cap D$ is the unique (with respect to the rotation system) plane multigraph obtained by contracting $G \cap D$ into a point.}

A {\em face} in a plane embedding of a multigraph $G$ is a {maximal connected subset of the plane minus the image of $G$}.
We say that a vertex or an edge lies on a face $f$ if its images  belongs to the closure of $f$.
In every plane embedding there is exactly one face of infinite area, referred to as the outer face.
For a plane multigraph $G$ we define its dual multigraph $G^*$ with $V(G^*)$ being the set of faces of $G$ and edges given by pairs of distinct faces that are incident to an image of a common edge from $E(G)$.
\mic{Each (a) vertex $v$, (b) edge $e$, and (c) face $f$ of $G$ has a counterpart in $G^*$, respectively: (a) face $v^*$, (b) edge $e^*$, and (c) vertex $f^*$.}

For a plane graph $G$ with a set of faces $F$,
we define its {\em radial graph} $\widehat{G}$ as a bipartite graph with the set of vertices $V(\widehat{G}) = V(G) \cup F$ and edges given by pairs $(v,f)$ where $v\in V(G)$, $f \in F$, and $v$ lies on the face $f$.
For two vertices $u, v \in V(G)$ we define their {\em radial distance} $\rdist_G(u, v)$ 
 to be one less than the minimum length of a sequence of vertices that
starts at $u$, ends in $v$, and in which every two consecutive vertices lie on a common face.
For two sets $X,Y \sub V(G)$ we define $\rdist_G(X,Y) = \min_{x \in X, y \in Y} \rdist_G(x, y)$.
The radial diameter of a plane graph $G$ equals $\max_{u,v\in V(G)} \rdist_G(u,v)$. 

\begin{lemma}[{\cite[Prop. 2.1]{jansen2014near}}]\label{lem:prelim:concentric-radial}
Let $G$ be a plane graph with non-empty disjoint
vertex sets $X$ and $Y$, such that $G[X]$
and $G[Y]$ are connected and $\rdist_G(X,Y) = d \ge 2$.
For any $r$ with $0 < r < d$ there is a cycle $C$ in $G - (X \cup Y)$ such that all
vertices $u \in V(C)$ satisfy $\rdist_G(X, u) = r$, and such
that $V(C)$ is an $(X,Y)$-separator in $G$.
\end{lemma}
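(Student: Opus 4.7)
The plan is to reduce to the case where $X$ and $Y$ are single vertices and then obtain $C$ as the boundary of a planar region grown around $X$ in layers indexed by radial distance. First, I would work in the plane multigraph $G'$ obtained from $G$ by contracting the connected set $X$ into a single vertex $x$ and the connected set $Y$ into a single vertex $y$. Planarity is preserved because both induced subgraphs are connected, radial distances from points outside $X \cup Y$ are unchanged (as every face incident to the new vertex $x$ corresponds to a face of $G$ incident to some vertex of $X$, and similarly for $y$), and any cycle $C$ in $G' - \{x,y\}$ whose vertex set separates $x$ from $y$ lifts to a cycle in $G - (X \cup Y)$ whose vertex set separates $X$ from $Y$. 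So it suffices to find, in $G' - \{x,y\}$, a cycle consisting of vertices at radial distance exactly $r$ from $x$ that separates $x$ from $y$.

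Next, I would define a planar region built from pieces at radial distance below $r$. Classify each face $f$ of $G'$ by $\rdist_{G'}(x, f) := \min\{\rdist_{G'}(x, v) : v \text{ lies on } f\}$ and let $R$ be the union in $\mathbb{R}^2$ of the closures of all faces with $\rdist_{G'}(x, \cdot) \le r - 1$ together with the images of all vertices and edges whose endpoints satisfy $\rdist_{G'}(x, \cdot) \le r - 1$. The point $x$ lies in $R$; the hypothesis $r < d = \rdist_{G'}(x,y)$ forces $y \notin R$, since reaching $y$ from $x$ through a sequence of faces at radial distance $\le r - 1$ would witness $\rdist_{G'}(x,y) \le r$. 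Let $R_0$ denote the connected component of $R$ containing $x$. The topological boundary $\partial R_0$ consists exclusively of images of vertices and edges of $G'$ whose endpoints have $\rdist_{G'}(x, \cdot) = r$.

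I would then argue that one can select a simple closed curve $\gamma \subseteq \partial R_0$ whose interior contains $x$ and whose exterior contains $y$. In the generic case $R_0$ is simply connected (a topological closed disk), so $\partial R_0$ is already a single simple closed curve; then $C$ is obtained by reading off the vertices of $G'$ traversed by $\gamma$ in cyclic order. Even when $R_0$ is topologically more complex (for instance an annulus around a ``hole'' disjoint from $y$), one may take $\gamma$ to be the component of $\partial R_0$ that encloses $y$ in its exterior, yielding a simple cycle in $G'$ whose vertices all have radial distance exactly $r$ from $x$ and which separates $x$ from $y$ in the plane, hence (by planar Jordan separation) in $G'$.

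The main obstacle is the topological bookkeeping of the third step: ensuring that $\partial R_0$ contains a simple closed curve, rather than merely a closed walk with repeated vertices arising from cut-vertices at radial distance $r$, and that this curve genuinely separates $x$ from $y$. One handles this by a slight thickening of $R_0$ (tracing $\gamma$ infinitesimally inside $R_0$, bypassing each cut-vertex along its incident faces that lie in $R_0$) and appealing to the Jordan curve theorem together with the observation that every face of $G'$ incident to $x$ lies in $R_0$ while every face incident to $y$ lies outside $R_0$, so $\gamma$ must separate them. Un-contracting $x \mapsto X$ and $y \mapsto Y$ then produces the desired cycle in $G - (X \cup Y)$.
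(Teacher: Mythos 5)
You should first note that the paper does not prove this statement at all: it is imported verbatim, with citation, from Jansen, Lokshtanov and Saurabh [Prop.~2.1 of \emph{jansen2014near}], so the comparison is really against the standard proof in that source, which your region-growing argument essentially reproduces. Your overall architecture is the right one: grow the set $R$ as the union of the closed faces incident to a vertex at radial distance at most $r-1$ from $X$, observe that $X$ lies in the interior of the component $R_0$, that $Y$ is disjoint from $R_0$ (else $\rdist_G(X,Y)\le r<d$), and that $\partial R_0$ consists only of vertices and edges whose endpoints are at radial distance exactly $r$. Two remarks on the preamble: the contraction is harmless only if performed in the multigraph sense (keeping parallel edges), since simplifying can merge faces and change radial distances; and it is in any case dispensable, because connectivity of $G[X]$ and $G[Y]$ already guarantees that each set ends up entirely inside a single face of whatever cycle you produce.

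The genuine gap is in the last step. The lemma demands a cycle \emph{of the graph} $G$, and your fix for pinch vertices --- tracing $\gamma$ ``infinitesimally inside $R_0$'' and bypassing each cut-vertex through incident faces of $R_0$ --- produces a simple closed curve in the plane, but not a cycle of $G$: the associated sequence of graph vertices still revisits every pinch vertex once per bypass, so what you actually have is a closed walk. To finish, one must discard part of that walk and verify that the remainder is a simple cycle that still separates $X$ from $Y$ and still uses only distance-$r$ vertices; this is precisely the non-trivial content of the lemma. A clean way to do it: let $D$ be the component of $\rr^2\setminus R_0$ containing $Y$, let $H$ be the subgraph of $G$ formed by the edges on $\partial D$ (every vertex of $\partial D$ has degree at least $2$ in $H$, since around such a vertex an included face and a face of $D$ must be consecutive at least twice), and take an edge-minimal subgraph $H'\subseteq H$ that still leaves $X$ and $Y$ in different faces. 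Minimality forces every edge of $H'$ to be incident to both the $X$-face and the $Y$-face of $H'$, whence every vertex of $H'$ has degree exactly $2$ and $H'$ is a single cycle; its vertices lie on $\partial R_0$ and hence at radial distance exactly $r$, and, being a Jordan curve in the plane graph $G$ with $X$ and $Y$ on opposite sides, its vertex set is an $(X,Y)$-separator. Without an argument of this kind the proof does not yet deliver the object the lemma asserts.
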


A {\em noose} is a subset of $\rr^2$ homeomorphic to the circle $\mathbb{S}^1$.
For a plane graph $G$, a $G$-noose is a noose that intersects $G$ only at vertices; the length of a $G$-noose is defined as the number of vertices it intersects.
For a noose $I$ we define $\disc(I)$ as the closure of the bounded region of $\rr^2 \sm I$.
\mic{For a closed set $D \sub \rr^2$ we define its {\em interior} $\mathsf{int}(D)$ as $D$ minus its boundary $\partial D$.} 
For two nooses $I_{in}, I_{out}$, such that  $I_{in}$ lies in the interior of $\disc(I_{out})$,
we define $\ringfull = \disc(I_{out}) \sm \mathsf{int}(\disc(I_{in}))$.
A~plane graph $G$ is {\em properly embedded} in a set $D \sub \rr^2$ if $G \sub D$ and $G \cap \partial D \sub V(G)$.

\begin{lemma}[{\cite[Prop. 8.2.3]{Mohar2001GraphsOS}}]
\label{lem:prelim:noose-uv}
Let $G$ be a plane graph, $X,Y \sub V(G)$ be non-empty, disjoint, and inducing connected subgraphs of $G$, and $S \sub V(G) \sm (X \cup Y)$ be an inclusion-minimal $(X,Y)$-separator.
Then there exists a $G$-noose $\gamma$ such that $S = \gamma \cap V(G)$
and $\gamma$ separates the plane into two regions, one containing $X$ and the second containing $Y$.
\end{lemma}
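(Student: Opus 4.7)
The plan is to first reduce to the case $X=\{x\}$, $Y=\{y\}$ by contracting the connected subgraphs $G[X]$ and $G[Y]$ inside topological discs in the fixed plane embedding. This preserves planarity and $S$ remains an inclusion-minimal $(\{x\},\{y\})$-separator in the contracted plane multigraph $G'$, and any $G'$-noose with the required property lifts to a $G$-noose with the required property. Let $C_X,C_Y \sub V(G')$ be the components of $G'-S$ containing $x$ and $y$ respectively; inclusion-minimality of $S$ forces every $s\in S$ to have a neighbor in both $C_X$ and $C_Y$.

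The crux, which I expect to be the main obstacle, is a local structure claim: at every $s\in S$, the edges of $E_{G'}(s)$ going into $C_X$ form a single contiguous arc in the cyclic order $\pi_{G'}(s)$, and likewise the edges going into $C_Y$ form a single contiguous arc. I would prove this by contradiction via the Jordan curve theorem. If the claim failed at some $s$, I could pick edges $sa_1,sb_1,sa_2,sb_2$ appearing in this cyclic order in $\pi_{G'}(s)$ with $a_1,a_2\in C_X$ and $b_1,b_2\in C_Y$. Connectedness of $C_X$ gives a path $P_X$ from $a_2$ to $a_1$ inside $G'[C_X]$; concatenating the edge $sa_1$, the reverse of $P_X$, and the edge $a_2s$ yields a cycle $C\sub G'[C_X\cup\{s\}]$ passing through $s$ exactly once. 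The chosen cyclic order at $s$ places $b_1$ and $b_2$ in different open components of $\mathbb{R}^2\sm C$. However, connectedness of $C_Y$ furnishes a path $P_Y$ from $b_1$ to $b_2$ inside $G'[C_Y]$, which is vertex-disjoint from $C$ and hence, by planarity of $G'$, cannot cross the simple closed curve traced by $C$ in the plane---a contradiction. (Planarity is essential: in a non-planar embedding interleaving is perfectly compatible with inclusion-minimality.)

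Given the local claim, the noose is assembled by a routine planar gluing argument. At each $s\in S$, draw a short arc $\alpha_s$ passing through the point $s$ inside a small disc around $s$ that separates the $C_X$-arc of incident edges from the $C_Y$-arc. Consider the plane subgraph $G'[C_X\cup S]$ with the inherited embedding, and the face $F$ of this subgraph that contains $y$: by the local claim the boundary walk of $F$ visits only vertices of $S$. Joining the arcs $\alpha_s$ with curves drawn through $F$ along this boundary walk produces a simple closed curve $\gamma$ with $\gamma\cap V(G')=S$ that splits $\mathbb{R}^2$ into two open discs, one containing $C_X\supseteq\{x\}$ and the other containing $C_Y\supseteq\{y\}$. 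Components of $G'-S$ other than $C_X,C_Y$, as well as $S$-internal edges, fall into one of these two discs by the planar layout and cause no issue. Undoing the contractions recovers the desired $G$-noose $\gamma$.
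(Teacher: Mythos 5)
First, note that the paper does not actually prove this lemma: it cites \cite[Prop.~8.2.3]{Mohar2001GraphsOS} and remarks only that the singleton/triangulated version there "can be adapted to our setting by contractions and inserting new vertices inside faces." Your opening reduction (contracting $G[X]$ and $G[Y]$ to single vertices) matches that remark, and your Jordan-curve argument for the local claim is correct and is indeed the right key ingredient --- although what it actually establishes is that the $C_X$-edges and $C_Y$-edges at $s$ do not \emph{interleave}, not that each family forms a contiguous arc (edges to a third component of $G'-S$, or to another vertex of $S$, may sit between two $C_X$-edges). That imprecision is harmless, since non-interleaving is all the gluing needs.

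The genuine gap is in the gluing step, where you assert that "by the local claim the boundary walk of $F$ visits only vertices of $S$." This is false, and it does not follow from the local claim, which constrains only the rotation at vertices of $S$. Already for $G'$ a path $x - v - s - y$ with $S=\{s\}$, the subgraph $G'[C_X\cup S]$ is the path $x - v - s$, its unique face contains $y$, and its boundary walk visits $x$ and $v\in C_X$. What is true (and needs its own short argument, via connectivity of $C_Y$ and the fact that every $s\in S$ has a neighbour in $C_Y$) is that every vertex of $S$ lies on $\partial F$. The noose must therefore be drawn strictly inside $F$ except at the points of $S$, detouring around the portions of $\partial F$ belonging to $C_X$; and one must still argue that the resulting pieces close up into a \emph{single simple} closed curve --- in particular, a vertex of $S$ may occur several times on the boundary walk of $F$, and the non-interleaving claim is exactly what lets you select, for each $s$, the two angular sectors of $F$ at $s$ through which the arc $\alpha_s$ enters and leaves. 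This assembly is the technical heart of the Mohar--Thomassen proposition, and as written your proposal replaces it with an incorrect shortcut, so the proof is not complete.
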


\mic{We remark that the original statement in \cite{Mohar2001GraphsOS} involves singleton sets $X,Y$ and triangulated plane graphs, but it can be adapted to our setting by contractions and inserting new vertices inside faces.} 

\begin{lemma}\label{lem:prelim:noose-cycle}
Let $G$ be a plane graph, $C_1,C_2$ be vertex-disjoint cycles in $G$ such that $C_1$ lies in the interior of $C_2$.
Let $S$ be an inclusion-minimal $(C_1,C_2)$-separator, not necessarily disjoint from $C_1,C_2$.
Then there exists a $G$-noose $\gamma$ such that $S = \gamma \cap V(G)$
and $\disc(C_1) \sub \disc(\gamma) \sub \disc(C_2)$.
\end{lemma}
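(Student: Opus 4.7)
My plan is to reduce to \Cref{lem:prelim:noose-uv} by first confining $S$ to the annulus, then turning $\disc(C_1)$ and the complement of $\disc(C_2)$ into single terminal vertices, and finally rerouting the resulting noose so that it stays in the closed annulus. First, I show that any inclusion-minimal $(C_1,C_2)$-separator $S$ satisfies $S \sub \ring(C_1,C_2) \cap V(G)$---no vertex of $S$ lies strictly inside $\disc(C_1)$ nor strictly outside $\disc(C_2)$. Suppose $v \in S \cap \inter(\disc(C_1))$. By minimality there is a path $P$ in $G - (S\sm\{v\})$ from $V(C_1) \sm S$ to $V(C_2) \sm S$ which must traverse $v$. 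Since $v$ lies strictly inside $\disc(C_1)$ and $V(C_2) \sm S$ lies strictly outside, the suffix of $P$ from $v$ must leave $\disc(C_1)$, and by planarity it does so through some $u_1 \in V(C_1)$ (edges cannot cross $C_1$ except at its vertices). Because $u_1 \in V(P)$, $u_1 \ne v$, and $P$ avoids $S\sm\{v\}$, we have $u_1 \in V(C_1)\sm S$; the part of $P$ from $u_1$ onward then lies in $G - S$ and connects $V(C_1)\sm S$ to $V(C_2)\sm S$, contradicting that $S$ separates. The $\disc(C_2)$-case is symmetric. As a byproduct, applying the same argument to $v \in V(C_1)\cap S$ shows that every such $v$ has at least one neighbor strictly outside $\disc(C_1)$.

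Next I construct a plane multigraph $G^*$: delete from $G$ all vertices lying strictly inside $\disc(C_1)$ or strictly outside $\disc(C_2)$, together with every edge whose image is not contained in $A := \ring(C_1,C_2)$; place new vertices $x \in \inter(\disc(C_1))$ and $y \in \rr^2 \sm \disc(C_2)$, and add spokes $\{xu : u \in V(C_1)\}$ (drawn non-crossingly inside $\disc(C_1)$) and $\{yv : v \in V(C_2)\}$ (drawn analogously outside $\disc(C_2)$). A planarity argument as in the previous paragraph---any $(V(C_1)\sm S, V(C_2)\sm S)$-path in $G-S$ can be truncated between its last $V(C_1)$-vertex and its first subsequent $V(C_2)$-vertex to one staying in $A$---shows that such a path exists in $G-S$ iff an $(x,y)$-path exists in $G^*-S$. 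Hence $S$ is an $(\{x\},\{y\})$-separator in $G^*$, and the same equivalence applied to each $S' \subsetneq S$ shows $S$ is inclusion-minimal in $G^*$; moreover $S \cap \{x,y\} = \emptyset$ by construction. Applying \Cref{lem:prelim:noose-uv} to $G^*$ yields a $G^*$-noose $\gamma_0$ with $\gamma_0 \cap V(G^*) = S$ separating $x$ from $y$.

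The remaining subtlety is to reroute $\gamma_0$ so that the modified noose $\gamma$ lies entirely in $A$. The spokes partition $\inter(\disc(C_1))$ into open triangular sectors, each bounded by two consecutive spokes $xu, xu'$ and the $C_1$-edge $uu'$. Because $\gamma_0$ cannot cross edges and $x\notin S$, every connected component of $\gamma_0 \cap \inter(\disc(C_1))$ is a simple arc through a single sector with endpoints $u, u' \in S\cap V(C_1)$. I~replace each such arc by an arc drawn on the annulus side of the $C_1$-edge $uu'$ through the adjacent annulus face; by the byproduct in the first paragraph both $u$ and $u'$ have annulus-neighbors in $G^*$, so such an annulus face exists. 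The surgery keeps $\gamma_0 \cap V(G^*) = S$ and still separates $x$ from $y$, because the sector in which the removed arc lay now sits entirely on the $x$-side. Repeating this for $\disc(C_2)$ yields $\gamma \sub A$.

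Finally I would verify that $\gamma$ meets the statement. Because $\gamma \sub A$ avoids every edge and vertex of $G$ whose image lies strictly inside $\disc(C_1)$ or outside $\disc(C_2)$, and meets $V(G) \cap A$ exactly at $S$, we get $\gamma \cap V(G) = S$, so $\gamma$ is a $G$-noose. Since $\gamma \sub \disc(C_2)$ and $y$ lies in the unbounded component of $\rr^2 \sm \gamma$, the bounded component satisfies $\disc(\gamma) \sub \disc(C_2)$. Since $\gamma$ is a non-contractible loop in $A$ separating $x \in \inter(\disc(C_1))$ from $y$, the bounded side contains $\disc(C_1)$, giving $\disc(C_1) \sub \disc(\gamma)$. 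I~expect the rerouting in the third paragraph to be the main obstacle: one has to glue the rerouted pieces into a single simple closed curve, which may require the curve to make tangential passes through vertices where the new and old arms of $\gamma_0$ meet in a common angular sector---this is permissible under the definition of a noose but requires careful bookkeeping of the rotation system at each $u \in S \cap V(C_1)$.
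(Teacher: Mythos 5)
Your proposal is correct and follows essentially the same route as the paper: observe by minimality that $S$ lies in the annulus, replace the interior of $\disc(C_1)$ and the exterior of $\disc(C_2)$ by apex vertices adjacent to $V(C_1)$ and $V(C_2)$ respectively, and invoke \cref{lem:prelim:noose-uv}. Your third paragraph merely makes explicit a rerouting of the noose into the annulus (needed for $\disc(C_1)\sub\disc(\gamma)\sub\disc(C_2)$) that the paper leaves implicit.
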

\begin{proof}
    By the minimality of the separator, $S$ does not contain vertices in the  interior of $\disc(C_1)$ and  in the exterior of $\disc(C_2)$.
    Consider a graph $G'$ obtained from $G$ by removing all the vertices in the interior of  $\disc(C_1)$, inserting a vertex $u$ inside $C_1$ adjacent to the entire $V(C_1)$,
    removing all the vertices in the exterior of  $\disc(C_2)$ and inserting a vertex $v$ outside $C_2$ adjacent to the entire $V(C_2)$.
    Then $S$ is an inclusion-minimal $(u,v)$-separator in $G'$, disjoint from $u,v$.
    The claim follows from \cref{lem:prelim:noose-uv}.
\end{proof}

\section{Polynomial kernel for parameter $k + \twsf$}\label{sec:polyKer}

In this section we prove \cref{thm:outline:polyKer}, being a generalization of \cref{thm:polyKer}.
We begin with providing additional preliminaries about linkages.
Next, we present the radial diameter reduction (\cref{sec:tw:radial-reduction}), analyzing the two cases described in the outline, and then combining them into a procedure for the irrelevant edge detection.
Afterwards, we deal with the single-face case (\cref{sec:tw:single-face}) and then apply it to process the low-radial-diameter instances in \cref{sec:tw:cutting}.

\subsection{Preliminaries for processing linkages}
\label{sec:tw:prelim}

We gather several useful facts about linkages that will form our toolbox for proving \cref{thm:outline:polyKer}.
This is mostly a compilation of know facts, adapted to our setting.
We begin with the concept of linkage-equivalency and explain how it helps in compressing subgraphs without terminals.

\begin{definition}
    Two graphs $G_1, G_2$ sharing a set of vertices $X$ are {\em $X$-linkage-equivalent} if for every set of disjoint pairs $\tcal \sub X^2$, $\tcal$ is realizable in $G_1$ if and only if $\tcal$ is realizable in $G_2$.
\end{definition}

\begin{lemma}
\label{lem:prelim:equivalent}
Let $G$ be a graph and $X, Y \sub V(G)$, $U \sub V(G) \sm (X \cup Y)$, $N_G(U) \sub Y$.
Suppose that there is an edge $e \in E(G[U \cup Y])$ such that $G[U \cup Y] \sm e$ is $Y$-linkage-equivalent to $G[U \cup Y]$.
Then $G \sm e$ is $X$-linkage-equivalent to $G$.
\end{lemma}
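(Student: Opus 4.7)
My plan is to verify the two directions of $X$-linkage-equivalence separately. Since $G \sm e$ is a subgraph of $G$, any $\tcal$-linkage in $G \sm e$ is automatically a $\tcal$-linkage in $G$, so only the converse direction requires work.

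For the converse, fix a set of disjoint pairs $\tcal \sub X \times X$ and a $\tcal$-linkage $\pp$ in $G$. The key structural observation exploits $N_G(U) \sub Y$: any edge of $G$ incident to a vertex of $U$ has both endpoints in $U \cup Y$, so every path $P \in \pp$ can enter or exit $U \cup Y$ only through $Y$. Consequently, the maximal sub-paths $Q_1^P, Q_2^P, \dots$ of $P$ contained in $G[U \cup Y]$ each have both endpoints in $Y$: an interior boundary endpoint lies in $Y$ because the adjacent vertex of $P$ lies in $V(G) \sm (U \cup Y)$ and cannot be adjacent to $U$; an endpoint that happens to be an endpoint of $P$ itself lies in $X$, and since $X \cap U = \emptyset$, any such endpoint falling in $U \cup Y$ must be in $Y$. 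Collecting these sub-paths across all $P \in \pp$ yields a linkage $\qq$ in $G[U \cup Y]$, whose endpoint pairs form a set $\tcal_0 \sub Y \times Y$ of vertex-disjoint pairs (disjointness inherited from vertex-disjointness of $\pp$).

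By the hypothesis that $G[U \cup Y] \sm e$ is $Y$-linkage-equivalent to $G[U \cup Y]$, there is a $\tcal_0$-linkage $\qq'$ in $G[U \cup Y] \sm e$. I form $\pp'$ from $\pp$ by splicing out every $Q_i^P$ and splicing in the corresponding path of $\qq'$ with the same endpoints in $Y$. The edges of $\pp'$ then lie either in $\qq'$ (hence in $E(G[U \cup Y]) \sm \{e\}$), or were edges of $\pp$ not contained in any $Q_i^P$; by maximality of the $Q_i^P$ together with $N_G(U) \sub Y$, any such surviving edge has at least one endpoint outside $U \cup Y$, so it is not in $E(G[U \cup Y])$ and in particular is not $e$. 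Each modified walk in $\pp'$ is a genuine path: its $V(G) \sm (U \cup Y)$-vertices are inherited distinctly from $P$, its $(U \cup Y)$-vertices come from vertex-disjoint paths of $\qq'$ glued along the shared endpoints in $Y$, and the two vertex sets are disjoint. Vertex-disjointness across $\pp'$ follows similarly: outside portions remain disjoint because $\pp$ was a linkage, and inside portions remain disjoint because $\qq'$ is a linkage. The only bookkeeping subtlety is to handle degenerate length-zero sub-paths (a single vertex of $Y$ touched by $P$), which contribute trivial ``pairs'' $(v,v)$ realized by $v$ itself without using any edge; beyond this, I do not foresee any deeper obstacle, as the argument is a straightforward localization of linkage-equivalence to the region $G[U \cup Y]$.
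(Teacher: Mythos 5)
Your proof is correct and follows essentially the same route as the paper's: decompose each path of the linkage into maximal subpaths inside $U\cup Y$, observe via $N_G(U)\sub Y$ and $U\cap X=\emptyset$ that these are $(Y,Y)$-paths, invoke the $Y$-linkage-equivalence of $G[U\cup Y]\sm e$ to replace them, and splice the replacements back in. Your extra care with degenerate single-vertex subpaths and with verifying that the spliced objects remain vertex-disjoint paths avoiding $e$ is fine but not a departure from the paper's argument.
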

\begin{proof}
    Let  $\pp$ be an $X$-linkage in $G$.
    For a path $P \in \pp$ let $\Gamma(P)$ be the family of maximal subpaths of $P$ with vertex sets contained in $U \cup Y$.
    Since $U \cap X = \emptyset$ and $N_G(U) \sub Y$ then every path in $\Gamma(P)$ is a $(Y,Y)$-path.
    Therefore, the linkage $\pp_U$ given by the union of $\Gamma(P)$ over $P \in \pp$ is a $Y$-linkage in $G[U \cup Y]$.
    Because $G[U \cup Y] \sm e$ is $Y$-linkage-equivalent to $G[U \cup Y]$, there exists a linkage $\pp'_U$ in $G[U \cup Y] \sm e$ that is aligned with $\pp_U$.
    For $P \in \pp$ let $\widehat P$ be obtained from $P$ by replacing each subpath from $\Gamma(P)$ with its counterpart from $\pp'_U$.
    Then $\{\widehat P \mid P \in \pp\}$ is a linkage in $G \sm e$ that is aligned with $\pp$.
\end{proof}

The following concept has been introduced in the work about treewidth reduction for {\sc Planar Disjoint Paths}~\cite{AdlerKKLST17}.

\begin{definition}[Tight concentric cycles]
Let $G$ be a plane graph, $X_{in},X_{out} \sub V(G)$, and $C_1, \dots , C_m$
be a sequence of cycles in $G$. 
We call $C_1, \dots , C_m$ {\em concentric}, if for all $i \in [m-1]$, the cycle $C_i$ is contained in
the interior of $\disc(C_{i+1})$.
When additionally
$X_{in} \sub \inter(\disc(C_{1}))$ and $X_{out} \cap \disc(C_{m}) = \emptyset$, then we call it an $(X_{in},X_{out})$-sequence of concentric cycles.

A $(X_{in},X_{out})$-sequence of concentric cycles is {\em tight}
if, in addition,
for every $i \in [m-1]$, $\disc(C_{i+1}) \sm \disc(C_{i})$ does not contain a cycle $C$
with $\disc(C_{i}) \sub \disc(C) \subsetneq \disc(C_{i+1})$,
and $\disc(C_{1}) \sm X_{in}$ does not contain a cycle $C$
with $X_{in} \sub \disc(C) \subsetneq \disc(C_{1})$.
\end{definition}

When a~plane graph $G$ properly embedded in \ringfull is clear from the context, we denote $V_{in} = V(G) \cap I_{in}$ and $V_{out} = V(G) \cap I_{out}$.

\begin{lemma}\label{lem:prelim:tight}
    Consider a graph $G$ properly embedded in \ringfull
    with $d = \rdist_G(V_{in}, V_{out})$.
    Let $X_{in} \sub V_{in}$ and $X_{out} \sub V_{out}$.
    Then there exists a tight $(X_{in}, X_{out})$-sequence of concentric cycles $C_1, \dots , C_{d-1}$.
\end{lemma}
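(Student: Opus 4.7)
The plan is to derive the sequence via iterative application of \cref{lem:prelim:concentric-radial} after a small auxiliary construction, and then to tighten it. For $d = 1$ the claim is trivial (empty sequence), so I would assume $d \geq 2$.

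First, I would augment $G$ to satisfy the connectedness hypothesis of \cref{lem:prelim:concentric-radial}. Let $G^+$ be obtained from $G$ by placing a new vertex $v_{in}$ in $\inter(\disc(I_{in}))$ adjacent to every vertex of $V_{in}$, and a new vertex $v_{out}$ outside $\disc(I_{out})$ adjacent to every vertex of $V_{out}$; both operations preserve planarity. Now $\{v_{in}\}$ and $\{v_{out}\}$ are trivially connected. I would then verify that $\rdist_{G^+}(v_{in}, v_{out}) \geq d$: on a shortest radial $(v_{in}, v_{out})$-sequence $v_{in} = u_0, u_1, \ldots, u_k = v_{out}$ in $G^+$, the vertex $u_1$ must lie on the boundary of the face $f_{in}$ of $G$ containing $\inter(\disc(I_{in}))$, since every face of $G^+$ incident to $v_{in}$ is a subface of $f_{in}$; hence $\rdist_G(V_{in}, u_1) \leq 1$ and symmetrically $\rdist_G(u_{k-1}, V_{out}) \leq 1$. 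As $u_1, \ldots, u_{k-1}$ is a valid radial sequence in $G$, the triangle inequality gives $d \leq 1 + (k-2) + 1 = k$.

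Applying \cref{lem:prelim:concentric-radial} to $G^+$ for each $r \in \{1, \ldots, d-1\}$ yields pairwise vertex-disjoint cycles $C'_1, \ldots, C'_{d-1}$ in $G^+ - \{v_{in}, v_{out}\} = G$, each a $(\{v_{in}\}, \{v_{out}\})$-separator whose vertices all lie at radial distance exactly $r$ from $v_{in}$. Viewed as Jordan curves, they separate $v_{in}$ from $v_{out}$ and are hence pairwise nested with $v_{in}$ inside; a shortest radial-path argument shows the cycle at smaller level lies inside the one at larger level, giving $\disc(C'_1) \subsetneq \cdots \subsetneq \disc(C'_{d-1})$. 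As each $V(C'_r)$ has positive radial distance from both $V_{in}$ and $V_{out}$, it is disjoint from these sets (and hence from $X_{in}, X_{out}$), so the above already forms an $(X_{in}, X_{out})$-sequence of concentric cycles.

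Finally, I would tighten by greedy minimization. Set $\tilde C_0$ as a formal ``inner boundary'' enclosing only $X_{in}$. For $i = 1, \ldots, d-1$ in order, pick $\tilde C_i$ satisfying $\disc(\tilde C_i) \sub \disc(C'_i)$, $\disc(\tilde C_i) \supsetneq \disc(\tilde C_{i-1})$, and $X_{in} \sub \inter(\disc(\tilde C_i))$, minimal under set inclusion of $\disc(\tilde C_i)$. The minimum exists because $C'_i$ itself is feasible, and by minimality no cycle fits strictly between $\tilde C_{i-1}$ and $\tilde C_i$, yielding both tightness conditions. The main obstacle I anticipate is ensuring this tightening preserves the count $d-1$, which is handled by the strict inclusion $\disc(\tilde C_i) \supsetneq \disc(\tilde C_{i-1})$: since $\disc(\tilde C_{i-1}) \sub \disc(C'_{i-1}) \subsetneq \disc(C'_i)$, feasibility of $C'_i$ for the $i$-th step is maintained throughout.
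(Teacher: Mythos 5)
Your overall strategy --- add apex vertices, invoke \cref{lem:prelim:concentric-radial}, then refine to tightness --- is the same as the paper's, but there is a genuine gap in how you instantiate the concentric-cycles lemma. You attach $v_{in}$ to all of $V_{in}$ and apply \cref{lem:prelim:concentric-radial} with $X = \{v_{in}\}$, $Y = \{v_{out}\}$. That lemma only guarantees that the cycles live in $G^+ - \{v_{in}, v_{out}\}$; it does not make them avoid $V_{in}$ or $V_{out}$, let alone $X_{in}$ or $X_{out}$. Your justification --- ``each $V(C'_r)$ has positive radial distance from both $V_{in}$ and $V_{out}$'' --- fails at the extremes: every $u \in V_{in}$ satisfies $\rdist_{G^+}(v_{in}, u) = 1$, so the triangle inequality only yields $\rdist_{G^+}(V_{in}, u') \ge r - 1$ for $u' \in V(C'_r)$, which is vacuous for $r = 1$. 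Concretely, suppose $G$ contains a cycle $D$ with $V(D) = V_{in}$ whose image is $I_{in}$ and which bounds the face containing $\inter(\disc(I_{in}))$, and $G[V(D)]$ has no chords. Then the vertices at radial distance exactly $1$ from $v_{in}$ in $G^+$ are precisely $V(D)$, so $C'_1 = D$, and $X_{in} \sub V(D)$ lies on the boundary of $\disc(C'_1)$ rather than in $\inter(\disc(C'_1))$. A symmetric failure occurs for $C'_{d-1}$ and $X_{out}$. Your tightening step cannot repair this: it needs $C'_1$ itself to be feasible (i.e., $X_{in} \sub \inter(\disc(C'_1))$) to seed the minimization, and in the example above no cycle contained in $\disc(C'_1)$ strictly encloses $X_{in}$ at all, so the feasible set is empty.

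The fix is exactly the paper's choice of source and sink sets: attach $v_{in}$ to $X_{in}$ (rather than to $V_{in}$) and apply \cref{lem:prelim:concentric-radial} with $X = X_{in} \cup \{v_{in}\}$ and $Y = X_{out} \cup \{v_{out}\}$. These sets induce connected subgraphs, their radial distance is still at least $d$, and the lemma then explicitly places the cycles in $G - (X_{in} \cup X_{out})$ at positive radial distance from both terminal sets, which is what makes them an $(X_{in}, X_{out})$-sequence before tightening. The rest of your argument (the bound $\rdist_{G^+}(v_{in}, v_{out}) \ge d$, the nesting of the cycles, and the greedy refinement to a tight sequence) is sound and matches the paper's proof.
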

\begin{proof}
    Consider a graph $G'$ obtained from $G$ by inserting a vertex $v_{in}$ inside $I_{in}$ adjacent to entire $X_{in}$ and
    a vertex $v_{out}$ outside $I_{out}$ adjacent to entire $X_{out}$.
    The sets $X_{in} \cup \{v_{in}\}$ and $X_{out} \cup \{v_{out}\}$ induce connected subgraphs of $G'$ and their radial distance is at least $d$.
    By \cref{lem:prelim:concentric-radial} there exists cycles in $C_1, \dots , C_{d-1}$ in $G - (X_{in} \cup X_{out})$ that
    are $(X_{in}, X_{out})$-separators in $G$.
    By \cref{lem:prelim:noose-uv} each $C_i$ has $X_{in}$ in its interior and $X_{out}$ in its exterior, so  $C_1, \dots , C_{d-1}$ are concentric.
    As long as this sequence is not tight, we can find a local refinement pushing some cycle closer to $X_{in}$.
    After a finite number of refinements we obtain a tight $(X_{in}, X_{out})$-sequence of concentric cycles.
\end{proof}

\begin{lemma}
\label{lem:prelim:separator-interval}
Consider a graph $G$ properly embedded in \ringfull.
Let $C_1,\dots, C_m$ be a $(V_{in}, V_{out})$-sequence of concentric cycles and
$S$ be an inclusion-minimal $(V_{in}, V_{out})$-separator.
Then there exists an interval $J \sub [m]$ of size at most $|S|$ so that $S \cap V(C_j) \ne \emptyset$ implies $j \in J$.
\end{lemma}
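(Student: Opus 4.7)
The plan is to represent the separator $S$ by a noose in the plane and exploit the nesting of the concentric cycles.

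First, I would augment $G$ with two virtual vertices $v_{in}$ and $v_{out}$, placed respectively in the open region bounded by $I_{in}$ and in the unbounded region outside $I_{out}$, and made adjacent to all of $V_{in}$ and $V_{out}$. In the resulting plane graph, $S$ is still an inclusion-minimal separator, now between $\{v_{in}\}$ and $\{v_{out}\}$, and it is disjoint from $\{v_{in},v_{out}\}$. Therefore \cref{lem:prelim:noose-uv} will produce a noose $\gamma$ with $S = \gamma \cap V(G)$ that separates $V_{in}$ from $V_{out}$ in the plane.

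Next, I would set $J := \{\,j\in[m] \mid V(C_j)\cap S \neq \emptyset\,\}$. Since a concentric sequence of cycles is pairwise vertex-disjoint, every vertex of $S$ belongs to at most one $C_j$, and so $|J|\leq|S|$ is immediate.

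The only real work is to check that $J$ is an interval. Given $j_1<j_2<j_3$ with $j_1,j_3\in J$, I would pick $v_1\in V(C_{j_1})\cap S$ and $v_3\in V(C_{j_3})\cap S$; both lie on $\gamma$. By the definition of a concentric sequence, $v_1 \in \inter(\disc(C_{j_2}))$ while $v_3 \notin \disc(C_{j_2})$. Removing the two points $v_1,v_3$ from the simple closed curve $\gamma$ leaves two open arcs, each joining a point inside $\disc(C_{j_2})$ to one outside, so each arc must cross $C_{j_2}$ somewhere in the plane. Because $\gamma$ is a $G$-noose and $C_{j_2}\subseteq G$, any such crossing point has to be a vertex of $C_{j_2}$, and therefore belongs to $S$; this gives $j_2 \in J$, as required.

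The step that deserves most care is the last one: the fact that an intersection of $\gamma$ with the cycle $C_{j_2}$ is forced to happen at a vertex rather than at an interior point of an edge. This is precisely what the $G$-noose hypothesis buys us, and combined with the strict containments built into the definition of a concentric sequence it immediately delivers the vertex witnessing $j_2\in J$.
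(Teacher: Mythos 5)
Your proof is correct, and the construction of the noose (adding auxiliary vertices $v_{in}$, $v_{out}$ and invoking \cref{lem:prelim:noose-uv}) matches how the paper obtains it elsewhere; the paper's own proof of this lemma simply asserts the noose's existence from minimality. Where you genuinely diverge is in how the bound is extracted. The paper argues via radial distance: consecutive vertices of $S$ along $\gamma$ share a face, so any two vertices of $S$ have radial distance at most $|S|-1$, while vertices on $C_i$ and $C_j$ have radial distance at least $|i-j|$; hence the hit indices span at most $|S|$ values, and $J$ is taken to be the interval from the smallest to the largest such index. You instead prove the stronger structural fact that the set of hit indices is \emph{itself} an interval, via the Jordan-curve argument that each arc of $\gamma$ between a vertex inside $\disc(C_{j_2})$ and one outside it must meet $C_{j_2}$ at a vertex of $S$, and you bound its cardinality by $|S|$ directly from the pairwise vertex-disjointness of concentric cycles. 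Your route is more self-contained (it does not rely on the lower bound $\rdist_G(C_i,C_j)\ge|i-j|$, which the paper uses without proof) and yields slightly more information, at the cost of a somewhat longer topological verification; the paper's version is shorter because it reuses the radial-distance machinery already set up for the rest of Section 4.
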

\begin{proof}
    The minimality of $S$ implies that there exists a $G$-noose $\gamma$ with $\gamma \cap V(G) = S$. 
    Therefore each pair of consecutive vertices on $\gamma$ shares a face and the maximal radial distance between vertices of $S$ is at most $|S| - 1$.
    On the other hand, for $u \in C_i$, $v \in C_j$,
    we have $\rdist_G(u,v) \ge |i-j|$.
    The lemma follows.
\end{proof}

When working with linkages traversing concentric cycles, it is convenient to assume that each path in a linkage intersects each cycle exactly once.
We can enforce this property as long as the sequence of concentric cycles is tight.
The following proof is an adaptation of \cite[Lemma 6.15]{lossy-arxiv}.

\begin{lemma}
\label{lem:diam:uncross-cycles}
Consider a graph $G$ properly embedded in \ringfull,
$X_{in} \sub V_{in}$, and $X_{out} \sub V_{out}$.
Let $C_1,\dots, C_m$ be a tight $(X_{in}, X_{out})$-sequence of concentric cycles and $\pp$ be a $(X_{in}, X_{out})$-linkage in $G$ such that each vertex in $X_{in}$ is an endpoint of a path in $\pp$.
Then there exists a linkage $\pp'$ aligned with $\pp$ such that the intersection of each $P \in \pp'$ and each $C_i$ \mic{has exactly one connected component.}
\end{lemma}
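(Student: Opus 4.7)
The plan is to perform an exchange argument on a carefully chosen minimizer. I set $\pp'$ to be a linkage aligned with $\pp$ that lexicographically minimizes $\Phi(\pp') = \bigl(\sigma(\pp'),\, |E(\pp')|\bigr)$, where $\sigma(\pp') = \sum_{P \in \pp'} \sum_{i=1}^{m} \mathrm{cc}(P \cap C_i)$ counts the total number of connected components of the path--cycle intersections and $|E(\pp')| = \sum_{P \in \pp'} |E(P)|$. A minimizer exists because $\pp$ is one valid aligned candidate and $\Phi$ takes values in $\nn^2$; I claim it already witnesses the lemma.

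Suppose, toward a contradiction, that some $P^* \in \pp'$ meets some $C_i$ in at least two components. Pick consecutive components $A_j, A_{j+1}$ of $P^* \cap C_i$ along $P^*$, and let $Q \sub P^*$ be the subpath strictly between them, with endpoints $u \in A_j$ and $v \in A_{j+1}$. The interior of $Q$ lies entirely in $\inter(\disc(C_i))$ or entirely outside $\disc(C_i)$; the two cases are symmetric (the second is handled by swapping the roles of $X_{in}$ and $X_{out}$ and using $C_m$ in place of $C_1$), so I focus on the first. The two arcs of $C_i$ from $u$ to $v$ together with $Q$ bound two closed disks whose union is $\disc(C_i)$. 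Since $X_{in}$ is connected and $X_{in} \sub \inter(\disc(C_1)) \sub \inter(\disc(C_i))$, exactly one of these disks does not meet $X_{in}$; call it $R$, and let $\alpha = C_i \cap \partial R$ be its boundary arc on $C_i$. By replacing the triple $(P^*, C_i, Q)$ with an innermost pocket --- one whose disk $R$ does not properly contain any other pocket-disk arising from $\pp'$ --- I may assume that no foreign path $P' \in \pp' \sm \{P^*\}$ has a subpath with both endpoints on $\alpha$ and interior in $\inter(R)$.

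I then reroute $P^*$: delete $Q$ and traverse $\alpha$ from $u$ to $v$ instead, resolving the potential vertex-conflicts on $\alpha$ (with paths that cross $C_i$ at interior vertices of $\alpha$) by applying successive planar uncrossing swaps, which at each shared vertex exchange sub-walks of the two colliding paths. Tightness of the sequence $C_1, \dots, C_m$ rules out intermediate cycles between consecutive $C_j$'s, so the local structure around $\alpha$ admits only finitely many such swaps; the innermost choice of $R$ prevents any foreign path from intruding into $\inter(R)$, so the swaps remain localized along $\alpha$ and terminate with an aligned linkage $\pp''$ in which $P^*$ has been replaced by a simple path $P^*_{\mathrm{new}}$. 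For any $i' \ne i$, concentricity implies $V(C_i) \cap V(C_{i'}) = \emptyset$, so $\alpha \cap C_{i'} = \emptyset$ and the intersection-component count of each path with $C_{i'}$ can only decrease (as $Q$'s possible intersections with $C_{i'}$ for $i' < i$ are removed). On $C_i$ itself, the components $A_j, A_{j+1}$ together with any other components of $P^* \cap C_i$ lying on $\alpha$ are merged into a single component of $P^*_{\mathrm{new}} \cap C_i$, so the count strictly decreases. Hence $\sigma(\pp'') < \sigma(\pp')$, contradicting the minimality of $\Phi(\pp')$.

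The main obstacle I anticipate is the uncrossing step: naively replacing $Q$ by $\alpha$ reuses vertices on $\alpha$ that may already lie on other paths of $\pp'$, violating vertex-disjointness. Handling this cleanly requires combining the innermost choice of $R$ with tightness of the concentric sequence and the standard planar pairwise uncrossing swap, and arguing in particular that the swaps preserve alignment (the correct pairing of endpoints), do not introduce new intersection components with cycles $C_{i'} \ne C_i$, and terminate --- precisely the delicate points at which tightness enters the proof. The secondary potential $|E(\pp')|$ in $\Phi$ serves only as a tie-breaker ensuring the minimization is well-posed while the primary term $\sigma$ drives the contradiction.
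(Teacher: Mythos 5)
There is a genuine gap --- in fact two.

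First, the claimed symmetry between the inward and the outward case is false. Tightness is a \emph{one-sided} condition: it only forbids cycles that could be pushed closer to $X_{in}$ (a cycle in $\disc(C_{i+1}) \sm \disc(C_i)$ enclosing $\disc(C_i)$, or one in $\disc(C_1)\sm X_{in}$ enclosing $X_{in}$). There is no mirror condition at the $C_m$/$X_{out}$ end, and no hypothesis that every vertex of $X_{out}$ is a path endpoint, so ``swapping the roles of $X_{in}$ and $X_{out}$'' is not available. The outward-bulge case therefore needs its own argument, and it is in fact the harder one: the paper handles it by showing that if the arc $Q'$ of $C_i$ is blocked by another path $P'$, then $P'$ itself must contain a subpath with endpoints on $C_i$ and interior strictly inside $\inter(\disc(C_i)\sm\disc(C_{i-1}))$ (using the minimality of the index $i$), and that subpath contradicts tightness.

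Second, the ``planar uncrossing swaps'' are not sound for an \emph{aligned} vertex-disjoint linkage: exchanging sub-walks of two colliding paths at a shared vertex changes which terminal is connected to which, so alignment is destroyed. You flag this as delicate but supply no mechanism, and I do not see one; the paper's proof deliberately performs no swap anywhere. Its inward case needs no rerouting at all: choosing $i$ \emph{minimal} and invoking the hypothesis that every vertex of $X_{in}$ is a path endpoint, the pocket $Q$ must avoid both $C_{i-1}$ and $X_{in}$, so $Q$ together with an arc of $C_i$ is a cycle strictly inside $\disc(C_i)$ enclosing $C_{i-1}$ (resp.\ $X_{in}$), contradicting tightness directly. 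Your proposal uses neither the minimality of $i$ nor the endpoint hypothesis --- a sign that it misses exactly the places where these assumptions do the work. (Minor: you also assert that $X_{in}$ is connected, which is not among the hypotheses.)
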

\begin{proof}
Let $\pp'$ be a linkage aligned with $\pp$ that minimizes the number of edges in $\bigcup_{P \in \pp'} V(P)$ that do not belong to any cycle $C_i$.
Suppose there is $i \in [d]$ and $P \in \pp'$ 
\mic{such that $P \cap C_i$ has at least two connected components.}
Choose the minimal $i \in [d]$ with this property.
Then $P$ contains a subpath $Q$ with endpoints on $C_i$ and internal vertices disjoint from $V(C_i)$.

First suppose that these vertices lie in the interior of $\disc(C_i)$.
By the choice of $i$ either $i=1$ or the path $Q$ does not intersect $C_{i-1}$.
If $i=1$ then by the assumption 
\mic{that each vertex in $X_{in}$ is an endpoint of some path we infer that} $Q$ cannot intersect $X_{in}$.
In both cases we \mic{could use $Q$ to construct a cycle $C$ enclosing $X_{in}$ (resp. $C_{i-1}$) with $\disc(C) \subsetneq \disc(C_i)$,
contradicting the tightness of the sequence $C_1,\dots, C_m$.}

Now suppose that that the internal vertices of $Q$ are disjoint from $\disc(C_i)$.
\mic{Let $D$ be the bounded region of $\rr^2 \sm (C_i \cup Q)$ incident to $Q$ and
$Q' \sub C_i$ be the path within $C_i$ whose image is $C_i \cap \partial D$; then $Q'$ connects the endpoints of $Q$ (see \Cref{fig:tight}).}
If the internal vertices of $Q'$ were disjoint from all the other paths in $\pp'$ we could replace $Q$ with $Q'$ (and remove some redundant vertices if we get a self-crossing of $P$) but this would contradict the choice of $\pp'$.
Therefore there is some path $P' \in \pp'$, $P' \ne P$, using a vertex $v \in V(C_i)$ being internal in $Q'$.
When $i>1$ then the $(X_{in},v)$-subpath of $P'$ must intersect $C_{i-1}$ so by the choice of $i$ the $(v, X_{out})$-subpath of $P'$ cannot intersect $C_{i-1}$.
As $P'$ is disjoint from $Q$ it must contain a subpath with endpoints on $C_i$ and internal vertices in $\inter(\disc(C_i) \sm \disc(C_{i-1}))$.
When $i=1$ we obtain a subpath of $P'$ with endpoints on $C_1$ and internal vertices in $\inter(\disc(C_1) \sm X_{in}$ because $P'$ does not have internal vertices from $X_{in}$.
In both cases we get a contradiction with the tightness of $C_1,\dots, C_m$. 
\end{proof}

\begin{figure}
    \centering
\includegraphics[scale=0.9]{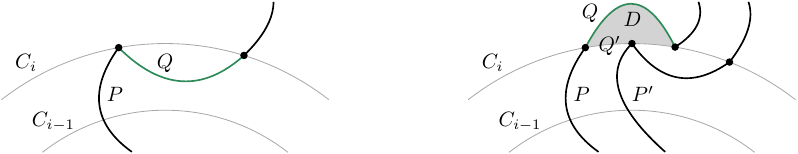}
\caption{An illustration of the two cases in \cref{lem:diam:uncross-cycles}.
Left: The green subpath $Q$ of $P$ is contained in $\disc(C_i) \sm \disc(C_{i-1})$ what contradicts the tightness of the sequence.
Right: $Q$ is internally disjoint from $\disc(C_i)$.
Since the chosen linkage minimizes the number of used edges that do not belong to any cycle,
some other path $P'$ must visit the arc of $C_i$ between the endpoints of $Q$, again leading to a contradiction.
}
\label{fig:tight}
\end{figure}

\mic{We will work with the notion of a cylindrical grid, which can be regarded as an outcome of identifying the opposite sides of a standard grid.} 

\begin{definition}
\label{def:diam:grid}
Let $k \ge 3$, $m \ge 1$.
The {\em $(k,m)$-cylindrical grid} $C_k^m$ is a plane graph constructed as follows.
We draw $m$ concentric cycles referred to as $C^1,\dots, C^m$, counting from the \mic{innermost} one.
Then we draw $k$ pairwise disjoint lines connecting $C^1$ to $C^m$; these lines are called $C_1, \dots, C_k$, counting clockwise.
We turn each intersection of $C_i$ and $C^j$ into a vertex, referred to as $c_i^j$. 
\end{definition}

\mic{We require $k \ge 3$ in order to avoid using parallel edges and restrict our arguments only to simple graphs.
The (4, 8)-cylindrical grid is shown in \Cref{fig:cylindrical}.}
It is known~\cite[Lemma 4.2]{GuT12} that a large linkage between two sets that are far apart in the radial distance entails a minor model of a large cylindrical grid.
We need an intuitive strengthening of this fact relating the endpoints in this linkage to the branch sets in the minor model.

\begin{lemma}
\label{lem:diam:grid-minor}
Consider a graph $G$ properly embedded in \ringfull and $m = \rdist_G(V_{in}, V_{out}) \ge 2$.
Suppose that $\tcal = (s_1,t_1), \dots, (s_k,t_k)$ is realizable in $G$, where \mic{$k \ge 3$, and} 
$s_1, \dots, s_k$ lie in this cyclic order on $I_{in}$ and
$t_1, \dots, t_k$ lie in this cyclic order on $I_{out}$ (both counted clockwise).
Then $G$ contains a minor model of the cylindrical grid $C^{m-1}_k$
such that 
for each $i \in [k]$ the vertex $s_i$ belongs to the branch set of $c^1_{i}$ and 
$t_i$ belongs to the branch set of $c^{m-1}_{i}$.
\end{lemma}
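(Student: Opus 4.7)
The plan is to build the minor model by combining a sequence of tight concentric cycles with an appropriately uncrossed realization of $\mathcal{T}$. First I would apply \cref{lem:prelim:tight} with $X_{in} = \{s_1,\ldots,s_k\}$ and $X_{out} = \{t_1,\ldots,t_k\}$ to obtain a tight $(X_{in},X_{out})$-sequence of concentric cycles $C^1, \ldots, C^{m-1}$. Then I fix any $\mathcal{T}$-linkage $\mathcal{P} = \{P_1, \ldots, P_k\}$ realizing $\mathcal{T}$ and apply \cref{lem:diam:uncross-cycles} to assume without loss of generality that for each $i \in [k]$ and $j \in [m-1]$ the intersection $v_i^j := P_i \cap C^j$ is a single connected subpath of both $P_i$ and $C^j$.

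Next I would establish that for every $j$ the subpaths $v_1^j, v_2^j, \ldots, v_k^j$ appear in this clockwise cyclic order on $C^j$. This is a planarity argument: since $P_1,\ldots,P_k$ are pairwise disjoint and their endpoints $s_1,\ldots,s_k$ lie in this clockwise cyclic order on $I_{in}$, the initial segments of the $P_i$'s inside $\ring(I_{in}, C^j)$ partition the annular region into $k$ sectors appearing in the same cyclic order; hence the first visits of the $P_i$'s to $C^j$ appear in this order, and by connectedness of each $v_i^j$ the entire subpaths do as well. Consequently, for each pair $(i,j)$ (with the $i$-index taken modulo $k$), the clockwise arc $A_{i,i+1}^j$ of $C^j$ strictly between $v_i^j$ and $v_{i+1}^j$ and the subpath $Q_i^{j,j+1}$ of $P_i$ strictly between $v_i^j$ and $v_i^{j+1}$ are well defined, and their vertex sets are pairwise disjoint across different choices of $(i,j)$.

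I would then define
\[
\Pi(c_i^j) \;=\; v_i^j \;\cup\; Q_i^{j,j+1} \;\cup\; A_{i,i+1}^j,
\]
with the conventions that for $j=1$ this set additionally includes the initial segment of $P_i$ from $s_i$ up to the first vertex of $v_i^1$ (so that $s_i \in \Pi(c_i^1)$), and for $j=m-1$ we interpret $Q_i^{m-1,m}$ as the tail of $P_i$ from the last vertex of $v_i^{m-1}$ to $t_i$ (so that $t_i \in \Pi(c_i^{m-1})$). Each branch set is connected because $v_i^j$ is a common subpath of $P_i$ and $C^j$ to which both $Q_i^{j,j+1}$ and $A_{i,i+1}^j$ attach at a shared endpoint. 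Pairwise disjointness of the branch sets follows from the disjointness of the $P_i$'s, the disjointness of the $C^j$'s, and the strict-between convention for $Q$ and $A$. The required adjacencies of $C_k^{m-1}$ are witnessed by actual edges of $G$: the edge $c_i^j c_{i+1}^j$ is realized by the last edge of $A_{i,i+1}^j$ entering $v_{i+1}^j \subset \Pi(c_{i+1}^j)$, and the edge $c_i^j c_i^{j+1}$ by the last edge of $Q_i^{j,j+1}$ entering $v_i^{j+1} \subset \Pi(c_i^{j+1})$.

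The main obstacle I anticipate is the cyclic-order claim: carefully arguing that the clockwise order of the terminals at $I_{in}$ propagates to every concentric cycle $C^j$, despite the freedom of the paths to wander within the annulus. The uncrossing lemma guarantees that each $v_i^j$ is connected, and a Jordan-curve-style argument in $\ring(I_{in}, C^j)$, together with the matching cyclic orders of $s_i$'s on $I_{in}$ and $t_i$'s on $I_{out}$, forces the order on $C^j$ to agree. All remaining steps — connectedness, disjointness, and verification of the adjacency edges — then reduce to routine bookkeeping.
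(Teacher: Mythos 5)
Your proposal is correct and follows essentially the same route as the paper: both invoke \cref{lem:prelim:tight} to obtain a tight concentric sequence and \cref{lem:diam:uncross-cycles} to make each path meet each cycle in a single segment, and then form the branch sets from these segments together with the connecting arcs of the cycles and the connecting pieces of the paths. The paper phrases this last step as contracting $V(P_i)\cap V(C_j)$ and then absorbing degree-$2$ and degree-$1$ vertices, which yields exactly the explicit branch sets $\Pi(c_i^j)$ you describe.
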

\begin{proof}
Let $X_{in} = V_\tcal \cap V_{in}$ and $X_{out} = V_\tcal \cap V_{out}$; then  $\rdist_G(X_{in}, X_{out}) \ge m$.
By \cref{lem:prelim:tight} there exists a tight $(X_{in}, X_{out})$-sequence of concentric cycles $C_1,\dots, C_{m-1}$ in $G$. 
We can apply \cref{lem:diam:uncross-cycles} to obtain that there exists
a $\tcal$-linkage $\pp$ such that the intersection of each $P \in \pp$ and each $C_i$ is a single segment of $C$.
Let $G'$ be the union of all paths in $\pp$ and the cycles $C_1,\dots, C_{m-1}$.
Let $P_i$ denote the $(s_i,t_i)$-path in $\pp$;
then $P_1, \dots, P_k$ are ordered clockwise.
The set $V(P_i) \cap V(C_j)$ induces a connected subgraph of $G'$; we contract it into a single vertex referred to as $c_i^j$.
Let $G''$ be the graph obtained by these contractions;
then $G''$ is a minor of $G$ and each vertex  $c_i^j$ has exactly four neighbors in $G''$.
Next, we contract every degree-2 vertex with one of its neighbors.
Finally, we contract each degree-1 vertex with its only neighbor so $s_i$ gets contracted with $c_i^1$ and  $t_i$ gets contracted with $c_i^{m-1}$.
The claim follows.
\end{proof}

A linkage in  a cylindrical grid can be transformed into a linkage in graph $G$ using the following observation.

\begin{observation}
\label{lem:diam:minor-realization}
Let $G, H$ be graphs and $(V_h)_{h\in V(H)}$ be a minor model of $H$ in $G$.
Consider a set of pairs  $(s_1,t_1), \dots, (s_k,t_k)$ from $V(G) \times V(G)$ with all the vertices distinct.
Suppose there exists an injection $\pi: \bigcup_{i=1}^k\{s_i\} \cup \{t_i\} \to V(H)$ such that $s_i \in V_{\pi(s_i)}$ and 
 $t_i \in V_{\pi(t_i)}$.
If $(\pi(s_1), \pi(t_1)), \dots, (\pi(s_k), \pi(t_k))$ is realizable in $H$ then  $(s_1,t_1), \dots, (s_k,t_k)$ is realizable in $G$.
\end{observation}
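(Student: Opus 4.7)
The plan is to lift a linkage in $H$ to a linkage in $G$ by routing through the branch sets of the minor model. Fix a $(\pi(s_1), \pi(t_1)), \dots, (\pi(s_k), \pi(t_k))$-linkage $\{Q_1, \dots, Q_k\}$ in $H$, where each $Q_i$ is a path $h_0^i, h_1^i, \dots, h_{\ell_i}^i$ with $h_0^i = \pi(s_i)$ and $h_{\ell_i}^i = \pi(t_i)$. First, for every edge $h_j^i h_{j+1}^i$ of $Q_i$, the definition of a minor model guarantees an edge of $G$ with one endpoint $u_j^i \in V_{h_j^i}$ and the other endpoint $v_{j+1}^i \in V_{h_{j+1}^i}$; I would pick one such edge arbitrarily.

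Next, for each branch set $V_h$ traversed by $Q_i$ I would use its connectedness in $G$ to stitch the chosen crossing edges together. Concretely, for every internal index $0 < j < \ell_i$, inside $G[V_{h_j^i}]$ I pick any path joining $v_j^i$ to $u_j^i$. For the endpoint branch sets I pick a path inside $G[V_{\pi(s_i)}]$ from $s_i$ to $u_0^i$, and a path inside $G[V_{\pi(t_i)}]$ from $v_{\ell_i}^i$ to $t_i$; both $s_i$ and $u_0^i$ (resp.\ $v_{\ell_i}^i$ and $t_i$) lie in the same branch set by hypothesis, so connectedness suffices. Concatenating these pieces with the chosen crossing edges yields an $(s_i, t_i)$-walk $W_i$ in $G$, whose vertex set is contained in $\bigcup_{j=0}^{\ell_i} V_{h_j^i}$. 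Extracting an $(s_i, t_i)$-path $P_i$ from $W_i$ gives a path with $V(P_i) \subseteq \bigcup_{j=0}^{\ell_i} V_{h_j^i}$.

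Finally, I would verify pairwise vertex-disjointness of $P_1, \dots, P_k$. If $P_i$ and $P_{i'}$ shared a vertex $v$, then $v$ would lie in some branch set $V_h$ with $h \in V(Q_i) \cap V(Q_{i'})$, contradicting the fact that $Q_i$ and $Q_{i'}$ are vertex-disjoint in $H$ (here I use that $\pi$ is an injection, so the prescribed endpoints $\pi(s_i), \pi(t_i)$ of $Q_i$ in $H$ are all distinct, making the lifted linkage well-defined). This step is essentially bookkeeping rather than an obstacle; the only subtlety is to handle the terminal branch sets carefully so that the lifted path actually begins at $s_i$ and ends at $t_i$ (not at some arbitrary vertex of the branch set), which is why the injectivity of $\pi$ together with the assumption $s_i \in V_{\pi(s_i)}$, $t_i \in V_{\pi(t_i)}$ is needed.
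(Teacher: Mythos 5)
Your proof is correct, and it is exactly the standard lifting argument the paper has in mind: the statement is given as an \emph{Observation} without proof precisely because this routine construction (pick crossing edges for the edges of each $Q_i$, stitch them through the connected branch sets, anchor the ends at $s_i$ and $t_i$, extract paths, and use disjointness of branch sets plus vertex-disjointness of the $Q_i$) is considered immediate. No gaps.
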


Because $\tw(C_t^t) \ge t$~\cite[\S 7.7.1]{cygan2015parameterized} and treewidth is a monotone measure  with respect to taking minors, we also obtain the following corollary from \cref{lem:diam:grid-minor}.
\mic{Note that $|i-j| \ge t + 1$ implies 
$\rdist(C_i,C_j) \ge t + 1$.}

\begin{corollary}
\label{lem:diam:sep-exists}
Let $G$ be a plane graph and $C_1, \dots, C_m$ be a concentric sequence of cycles.
Suppose there exists $i,j \in [m]$ such that $|i-j| \ge t + 1$ and  $\mu_G(C_i,C_j) \ge t$.
Then $\tw(G) \ge t$.
\end{corollary}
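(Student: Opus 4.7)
The plan is to derive a large cylindrical grid as a minor of $G$ and conclude via minor-monotonicity of treewidth, combined with the already-quoted bound $\tw(C_t^t) \ge t$. Without loss of generality take $i < j$ and set $m = j - i \ge t + 1$. Let $G'$ be the subgraph of $G$ consisting of all vertices and edges that lie in $\ring(C_i,C_j)$; then $G'$ is properly embedded in this ring with $V_{in} = V(C_i)$ on the inner boundary and $V_{out} = V(C_j)$ on the outer boundary. I would first argue that a $(C_i,C_j)$-linkage of size $t$ exists inside $G'$: starting from any $(C_i,C_j)$-linkage of size $t$ in $G$, each path can be truncated to its minimal subpath going from its last vertex in $V(C_i)$ to its first subsequent vertex in $V(C_j)$, which necessarily lives in $\ring(C_i,C_j)$. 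Since these paths are pairwise vertex-disjoint in $G$, so are their truncations in $G'$.

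Next, I would verify the radial-distance bound $\rdist_{G'}(V_{in},V_{out}) \ge m$. The concentric cycles $C_{i+1},\ldots,C_{j-1}$ all lie in $G'$. Each face of $G'$ is a connected region of $\mathbb{R}^2$ minus a plane graph, so for any concentric cycle $C_\ell$ (with $i < \ell < j$) no face can touch both a vertex strictly interior to $\disc(C_\ell)$ and a vertex strictly exterior to it. Hence any radial walk from $V(C_i)$ to $V(C_j)$ must take at least $m$ steps to traverse the $m-1$ in-between cycles, yielding $\rdist_{G'}(V_{in},V_{out}) \ge m \ge t+1$.

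Now I would invoke Lemma~\ref{lem:diam:grid-minor} on $G'$. Pick any $(V(C_i),V(C_j))$-linkage $\pp$ in $G'$ of size $t$; by planarity of $G'$ and the fact that $C_i,C_j$ are simple closed curves in $\rr^2$, the cyclic orderings of the endpoints of $\pp$ on $I_{in}=C_i$ and on $I_{out}=C_j$ must agree (otherwise two paths in $\pp$ would be forced to cross). Label the endpoints $s_1,\dots,s_t$ clockwise on $C_i$ and $t_1,\dots,t_t$ clockwise on $C_j$ so that $(s_l,t_l)$ is realized by a path in $\pp$. The lemma then produces a minor model of the cylindrical grid $C_t^{\,m-1}$ in $G'$; since $m-1 \ge t$, this minor model contains $C_t^{\,t}$ as a minor.

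Finally, treewidth is monotone under the minor relation and $G'$ is a subgraph of $G$, so
\[
\tw(G) \;\ge\; \tw(G') \;\ge\; \tw(C_t^{\,t}) \;\ge\; t,
\]
as required. The only mildly non-trivial step is the radial-distance lower bound, but it follows routinely from the structural observation that a single face cannot straddle any of the concentric cycles $C_{i+1},\ldots,C_{j-1}$; everything else reduces the statement cleanly to the already-established grid-minor lemma.
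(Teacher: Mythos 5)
Your proof is correct and follows exactly the route the paper intends: the statement is derived as an immediate consequence of Lemma~\ref{lem:diam:grid-minor} (a size-$t$ linkage across radial distance $\ge t+1$ yields a $C_t^{t}$ minor) together with minor-monotonicity of treewidth and the bound $\tw(C_t^t)\ge t$. You have merely filled in the routine details the paper leaves implicit — restricting to the ring subgraph, truncating the linkage, the radial-distance lower bound $\rdist\ge|i-j|$, and the agreement of the cyclic orderings of endpoints — all of which are handled correctly.
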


A $(V_{in} \cup V_{out})$-linkage may contain paths (or subpaths) with both endpoints in $V_{in}$ or both in $V_{out}$.
The next lemma assures that we can always assume that such paths  do not go ``too deep'' inside the graph.
The proof uses similar ideas as~\cite[Lemma 3]{AdlerKKLST17}.

\begin{lemma}
\label{lem:diam:visitors}
Let $G$ be a plane graph properly embedded in \ringfull and 
$t = \max(|V_{in}|, |V_{out}|)$. \meir{Use a diff letter? So $k$ is reserved for the number of terminal pairs. \mic{changed $k \to t$ and propagated}}
Let $C_1,\dots, C_m$ be a $(V_{in}, V_{out})$-sequence of concentric cycles.
Then for every $(V_{in} \cup V_{out})$-linkage $\pp$ 
there exists a linkage $\pp'$ aligned with $\pp$ such that
\begin{enumerate}
    \item every inclusion-minimal  $(V_{in} \cup V_{out})$-subpath of a path in $\pp'$, that is a $(V_{in}, V_{in})$-path, intersects at most $t$ first cycles in  $C_1,\dots, C_m$, and
    \item every inclusion-minimal  $(V_{in} \cup V_{out})$-subpath of a path in $\pp'$, that is a $(V_{out}, V_{out})$-path, intersects at most $t$ last cycles in  $C_1,\dots, C_m$.
\end{enumerate}
\end{lemma}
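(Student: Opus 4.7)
The plan is to take $\pp'$ to be a linkage aligned with $\pp$ that minimizes $\sum_{P \in \pp'} |E(P)|$, and argue that both conditions must then hold. Because the roles of $V_{in}$ and $V_{out}$ are interchangeable under reversal of the cycle sequence $C_1,\dots,C_m$, it suffices to establish condition~(1); condition~(2) follows by applying the same argument to the reversed configuration.

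Assume, toward a contradiction, that some $P \in \pp'$ contains an inclusion-minimal $(V_{in},V_{in})$-subpath $Q$ with endpoints $a, b \in V_{in}$ that intersects $V(C_{t+1})$. Since the internal vertices of $Q$ lie outside $V_{in} \cup V_{out}$ and both $a,b$ sit in the interior of $\disc(C_1)$ while $Q$ reaches a vertex on $C_{t+1}$, topology forces $Q$ to cross each of $C_1,\dots,C_{t+1}$ at least twice. Let $u$ be the first and $v$ the last vertex of $Q$ lying on $C_{t+1}$ (in the traversal from $a$ to $b$), and split $Q$ into a prefix $Q_1$ ending at $u$, a middle piece $Q_2$ from $u$ to $v$, and a suffix $Q_3$ starting at $v$. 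The two arcs of $C_{t+1}$ joining $u$ and $v$ are denoted $\alpha_1, \alpha_2$. I claim that one of $\alpha_1, \alpha_2$ is vertex-disjoint from $\pp' \setminus \{P\}$; given this, I replace $Q_2$ by that arc (trimming any resulting self-intersection of $P$ inside $\disc(C_{t+1})$), producing an aligned linkage with strictly smaller total edge count and contradicting the choice of $\pp'$.

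The main work is in verifying the arc-disjointness claim, and this is the step I expect to be the chief obstacle. The counting ingredient is that $2|\pp'| \le |V_{in}| + |V_{out}| \le 2t$, so $|\pp'| \le t$ and hence at most $t-1$ paths of $\pp' \setminus \{P\}$ can obstruct the rerouting. If both arcs $\alpha_1, \alpha_2$ were blocked by paths of $\pp'\setminus\{P\}$, these blockers together with $Q$ and the cycles $C_1,\dots,C_{t+1}$ could be assembled, via the uncrossing provided by \cref{lem:diam:uncross-cycles} applied to a tight refinement of the sequence, into a minor model of a cylindrical grid $C_{t+1}^{t+1}$ anchored at $V_{in}$; invoking \cref{lem:diam:grid-minor}/\cref{lem:diam:sep-exists} then exhibits $t+1$ pairwise vertex-disjoint paths emerging from $V_{in}$, contradicting $|V_{in}|\le t$. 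I expect the clean way to package this topological extraction is to argue that if both arcs are blocked, then the union $Q_1 \cup Q_3$ together with segments of the blocking paths and the concentric cycles contains a subdivision of the cylindrical grid.

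Once the claim is established, the rerouting step produces the desired contradiction with the minimality of $\pp'$, completing the proof of~(1). Applying the argument to the reversed sequence and with the roles of $V_{in}$ and $V_{out}$ exchanged gives~(2), and the lemma follows.
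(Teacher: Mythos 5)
There is a genuine gap here --- in fact two. First, your exchange step does not prove what you need. You replace the middle piece $Q_2$ of $Q$ (between its first and last visits to $C_{t+1}$) by an arc of $C_{t+1}$, but the resulting path still meets $C_{t+1}$ along that arc, so even a successful exchange would not contradict the assumption that $Q$ intersects $C_{t+1}$. Worse, in the degenerate cases where $Q$ touches $C_{t+1}$ at a single vertex (so $u=v$ and $Q_2$ is trivial) or where $Q_2$ is already an arc of $C_{t+1}$, the exchange changes nothing, no strict decrease occurs, and no contradiction with minimality arises --- yet the path does intersect $C_{t+1}$. Relatedly, your potential $\sum_{P\in\pp'}|E(P)|$ need not decrease under the exchange, since the replacing arc of $C_{t+1}$ may be longer than $Q_2$. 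The paper instead minimizes the number of edges of the linkage lying on \emph{no} cycle $C_i$, a quantity that strictly drops whenever an excursion beyond a cycle is replaced by an arc of that cycle.

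Second, the claim that one of the two arcs $\alpha_1,\alpha_2$ is free is both unproved and false in general: the obstructing paths need not be $(V_{in},V_{in})$-subpaths, and two $(V_{in},V_{out})$-subpaths of $\pp'$ (or a single $(V_{out},V_{out})$-subpath dipping inward) can legitimately meet $C_{t+1}$ on both arcs, so your intended contradiction with $|V_{in}|\le t$ does not apply to them, and the extraction of a cylindrical-grid minor from ``blockers plus cycles'' is not justified. The paper's proof resolves exactly this difficulty by an induction on nesting depth: for a $(V_{in},V_{in})$-subpath $P$ it sets $h(P)$ to be the number of subpaths of the linkage contained in the region $R(P)$ bounded by $P$ and $I_{in}$, observes $h(P)\le t-1$, and proves by induction on $\ell=h(P)$ that $P$ meets at most the first $\ell+1$ cycles. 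The replacing arc is an arc of $C_{\ell+1}$ taken \emph{inside} $R(P)$, where only subpaths of strictly smaller depth live; by induction these never reach $C_{\ell+1}$, so that arc is guaranteed to be free. This nesting argument is the missing ingredient that simultaneously produces a free arc and yields the quantitative bound $t$; without something playing its role your proof does not go through.
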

\begin{proof}
Let $\pp'$ be a linkage aligned with $\pp$ that minimizes the number of edges in $\bigcup_{P \in \pp'} V(P)$ that do not belong to any cycle $C_i$.
Let $\widehat \pp$ be the family of the inclusion-minimal
 $(V_{in} \cup V_{out})$-subpaths of paths in $\pp'$.
 Then $\widehat \pp$ is a family of internally disjoint paths and each endpoint can be shared by at most two paths.
 For a  $(V_{in}, V_{in})$-path $P \in \widehat \pp$ we define $R(P)$ as the bounded region of $\rr^2 \sm (P \cup I_{in})$ incident to $P$.
 Let $h(P)$ be the number of paths from $\widehat \pp$, different from $P$, which are contained in $R(P)$.
We have $h(P) \le t - 1$ and for every path $P' \ne P$ contained in $R(P)$ it holds that $h(P') < h(P)$.

 We show inductively that when $h(P) = \ell$ then $P$ intersects at most $\ell + 1$ first cycles in  $C_1,\dots, C_m$.
 First consider $\ell = 0$ and suppose that $P$ intersect $C_2$.
 Let $P'$ be a $(C_1,C_1)$-subpath of $P$ with internal vertices disjoint from $\disc(C_1)$.
 Then there exists a path $P'' \subset C_1 \cap R(P)$ with the same endpoints as $P'$.
 We can replace $P'$ with $P''$ in $P$ and, as a result, obtain a linkage aligned with $\pp'$ which uses less edges not belonging to any cycle $C_i$.
 This gives a contradiction.

 Assume now that the claim holds for $h(P) < \ell$ and consider $P \in \widehat \pp$ with $h(P) = \ell$
 with intersects $C_{\ell + 2}$.
  Let $P'$ be a $(C_{\ell + 1},C_{\ell + 1})$-subpath of $P$ with internal vertices disjoint from $\disc(C_{\ell + 1})$.
Then there exists a path $P'' \subset C_{\ell+1} \cap R(P)$ with the same endpoints as $P'$.
By the assumption, this path is disjoint from all paths in $R(P)$ different than $P$.
Again, by a replacement argument we obtain a linkage aligned with $\pp'$ with a smaller cost. 

This concludes the proof of the first part.
The second part, concerning  $(V_{out}, V_{out})$-paths in $\widehat \pp$, is symmetric.
\end{proof}

\subsection{Radial diameter reduction}
\label{sec:tw:radial-reduction}

As outlined in \cref{sec:outline}, we will reduce the radial diameter of the graph $G$ by repeatedly removing irrelevant edges.
We focus on the scenario where a subgraph of $G$, devoid of terminals, can be properly embedded in \ringfull and $\rdist(V_{in}, V_{out})$ is large.
We inspect two cases, first analyzing {\em non-maximal} linkages, in which the number of $(V_{in}, V_{out})$-paths is less than $\mu(V_{in}, V_{out})$. 
\mic{Later on we will be armed with two strategies for detecting an irrelevant edge, each applicable in a different setting.}

\subsubsection{Rerouting a non-maximal linkage}

We are going to show that when $\tcal \sub V_{in} \times V_{out}$, $|\tcal| < \mu(V_{in}, V_{out})$, and $\rdist(V_{in}, V_{out})$ is large, then the cut-condition $|\tcal| \le \mu(V_\tcal \cap V_{in}, V_\tcal \cap V_{out})$ is sufficient for a $\tcal$-linkage to exist.
We begin with an argument for cylindrical grids.

\begin{lemma}
\label{lem:diam:grid-shift}
Consider the cylindrical grid $C_k^m$ with $m \ge k^2$, $k\ge3$.
Let vertices $s_1, \dots, s_{k-1}$ lie in this cyclic order on $C^1$
and vertices $t_1, \dots, t_{k-1}$ lie in this cyclic order on $C^m$
(both counted clockwise).
Then $\{(s_1,t_1), \dots, (s_{k-1},t_{k-1})\}$ is realizable in $C_k^m$.
\end{lemma}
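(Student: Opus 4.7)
I plan to prove the lemma by an explicit rotation-cascade construction on the cylindrical grid. Write $s_i = c^1_{\sigma(i)}$ and $t_i = c^m_{\tau(i)}$ for injections $\sigma, \tau \colon [k-1] \to [k]$, and let $a^*, b^* \in [k]$ be the unique columns of $C^1, C^m$ carrying no terminal. Extending $\sigma$ and $\tau$ to bijections $\hat\sigma, \hat\tau \colon [k] \to [k]$ by inserting $a^*$ and $b^*$ at their natural cyclic positions produces two cyclic orderings of $[k]$, which must differ only by a rotation; hence $\hat\tau(j) \equiv \hat\sigma(j) + d \pmod k$ for some $d \in \{0, \dots, k-1\}$. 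A short case analysis on the positions of $a^*$ and $b^*$ in these extensions will then show that the actual shifts $d_i \equiv \tau(i) - \sigma(i) \pmod k$ all lie in a two-element set $\{\delta, \delta+1\}$ for some $\delta \in \{0, \dots, k-1\}$, and that the columns $\sigma(i)$ with $d_i = \delta + 1$ (the \emph{bump columns}) form a single contiguous cyclic interval in $[k]$. By the left-right reflection symmetry of $C_k^m$ I may further assume $\delta + 1 \le \lceil k/2 \rceil$.

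To build the linkage, I view the $k-1$ paths as tokens sitting on $k-1$ of the $k$ columns, with the remaining column acting as a sliding ``15-puzzle'' gap. A \emph{round} shifts every token one column clockwise in $k-1$ substeps: in each substep, the token immediately counterclockwise of the current gap slides clockwise into the gap using a single horizontal edge, after which the gap itself has migrated one column counterclockwise. I will perform $\delta$ full rounds followed by one \emph{partial} round of $r$ substeps that shifts only the $r$ bump tokens by one additional column; this works because, after the $\delta$ basic rounds, the clockwise end of the (shifted) bump interval lies immediately counterclockwise of the gap, so the cascade rule continues to apply without interruption. Each substep is assigned a dedicated level of $C_k^m$, consecutive substeps occupy consecutive levels, and outside the cascade every path runs purely vertically in its current column.

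Disjointness will be verified level by level. At the level of any substep, the shifting token occupies exactly two adjacent columns (source and destination of its slide), the remaining $k-2$ tokens each occupy their current single column, and the single gap column fills in the last remaining position; together these partition $[k]$, so the $k$ vertices used at that level are pairwise distinct. Between substep levels, every path uses a single vertical edge in a column distinct from those of all other paths. The cascade consumes at most $(\delta + 1)(k-1) + 1 \le \lceil k/2 \rceil(k-1) + 1 \le k^2 \le m$ levels, after which every path continues vertically until it reaches $c^m_{\tau(i)}$.

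The main obstacle I anticipate is the structural claim that the shifts $d_i$ take only the two consecutive values $\delta, \delta + 1$ and that the bump columns form a single cyclic interval. Without these two properties the final partial round would need to sweep several disjoint intervals simultaneously, and the scheduling would become noticeably more delicate. Once the structural claim is in place, the cascade itself and the verification of disjointness are an essentially mechanical fifteen-puzzle argument, and the level count follows directly from the bound $\delta + 1 \le \lceil k/2 \rceil$.
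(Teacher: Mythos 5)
Your proposal is correct, and it is essentially the paper's argument in different clothing: the paper also proceeds by repeated one-column clockwise shifts (induction on $\mathsf{Shift}(s_1,t_1)$, each full round consuming $k$ consecutive cycles), and its base case, in which path $P_i$ uses its single horizontal edge at cycle $C^{k-i+1}$, is exactly your cascading partial round where consecutive tokens slide at consecutive levels. Your key structural claim does hold — writing $d_{i+1}-d_i$ as the difference of the consecutive target-gap and source-gap (each of which is $1$ except for a single value $2$ straddling $b^*$ resp.\ $a^*$) shows the shifts take two consecutive values with the bump interval contiguous and ending immediately counterclockwise of $a^*$ — after which the level count and disjointness check go through as you describe.
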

\begin{proof}
For two vertices $u,v \in V(C_k^m)$ we define $\mathsf{Shift}(u,v)$ as follows.
Let $i,j \in [k]$ be such that $u \in V(C_i)$ and $v \in V(C_j)$.
Then $\mathsf{Shift}(u,v)$ is the minimum non-negative integer $\ell$ satisfying $(i + \ell) \mod k = j \mod k$.
In other words, it is the number of clockwise jumps needed to reach $C_j$ from $C_i$.
We will show the following claim by induction on $j$.

\begin{claim}\label{claim:diam:grid-shift}
Consider the cylindrical grid $C_k^{m}$ where $k\ge3$, $m = (j+1)k$, for some $j \ge 0$.
Let vertices $s_1, \dots, s_{k-1}$ lie in this cyclic order on $C^1$
and vertices $t_1, \dots, t_{k-1}$ lie in this cyclic order on $C^{m}$
(both counted clockwise).
If $\mathsf{Shift}(s_1, t_1) = j$ then
$\{(s_1,t_1), \dots, (s_k,t_k)\}$ is realizable in~$C_k^{m}$.
\end{claim}
\begin{innerproof}
First consider the basic case $j=0$ in which $m=k$.
Since $\mathsf{Shift}(s_1, t_1) = 0$ we can assume w.l.o.g. that $s_1, t_1$ both lie on $C_1$, that is, $s_1 = c^1_1$, $t_1 = c^k_1$.
Let $s^*$ be the unique vertex in $C^1 \sm \{s_1, \dots, s_{k-1}\}$
and $t^*$ be the unique vertex in $C^k \sm \{t_1, \dots, t_{k-1}\}$.
For the clarity of presentation we examine only the extremal case $s^* = c^1_k$, $t^* = c^k_2$ in detail; \mic{the other cases are analogous}.
The pair $(s_1,t_1)$ can be connected directly via the path $P_1 = C_1$.
By the assumption on the cyclic order we have that for each $i \in [2,k-1]$ the vertex $s_i$ lies on $C_i$ and $t_i$ lies on $C_{i+1}$.
For $i \in [2,k-1]$ we define the path $P_i$ as a concatenation of 
\begin{enumerate}
    \item the subpath of $C_i$ from $s_i = c^1_i$ to $c^{k-i+1}_i$,
    \item the edge $c^{k-i+1}_ic^{k-i+1}_{i+1}$,
    \item the subpath of $C_{i+1}$ from $c^{k-i+1}_{i+1}$ to $c^{k}_{i+1} = t_i$. 
\end{enumerate}
See \Cref{fig:cylindrical}.
Note that none of these paths intersect $P_1 = C_1$.
Furthermore, $P_{i+1} \cap C_{i+1}$ is contained in the disc enclosed by $C^{k-i-1}$ (inclusively) while
$P_i \cap C_{i+1}$ is disjoint from the interior of the disc enclosed by $C^{k-i}$.
Hence, the paths $P_1,\dots,P_{k-1}$ are vertex-disjoint.
The construction for different $s^*, t^*$ is analogous.
This concludes the analysis for the case $j=0$.

\begin{figure}
    \centering
\includegraphics[scale=0.7]{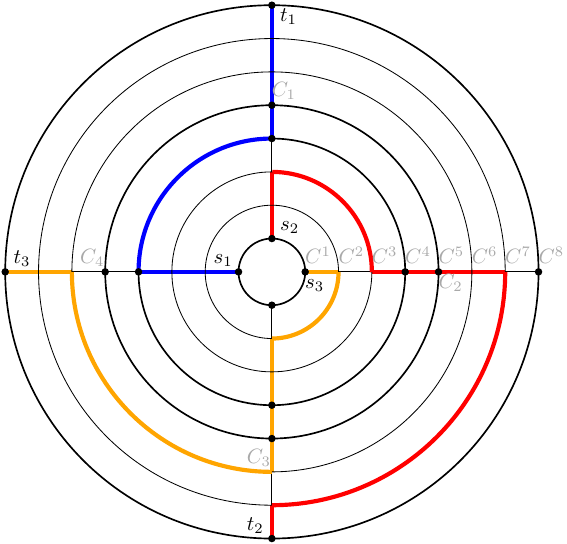}
\caption{A visualization of the cylindrical grid $C_4^8$ and the proof of  \cref{lem:diam:grid-shift}. 
Here $\mathsf{Shift}(s_1, t_1) = j = 1$ so 
we need at least $2\cdot 4 = 8$ concentric cycles in \cref{claim:diam:grid-shift}.
The three $(s_i,t_i)$-paths are drawn in colors.
The four inner cycles illustrate the inductive argument for $j>0$ where each path gets shifted once clockwise. 
The four outer cycles show the argument for $j=0$ where the origin of the path ending at $t_1$ is already correctly positioned and we might only need to shift the remaining ones. 
}
\label{fig:cylindrical}
\end{figure}

Suppose that that $j>0$.
Let $\pi(i) \in [k]$ be that $s_i = c^1_{\pi(i)}$ for $i\in [k]$.
Let $s'_i$ be the vertex $c^k_j$ where $j = \pi(i) + 1$ modulo $k$.
Then $\mathsf{Shift}(s'_1, t_1) = j - 1$.
By the inductive assumption, the $(k, jk)$-cylindrical grid induced by the cycles $C^{k+1}, C^{k+2}, \dots, C^m$
contains a linkage connecting pairs $(s'_1,t_1), \dots, (s'_{k-1}, t_{k-1})$.
Therefore, it suffices to construct a linkage $(P_1,\dots,P_{k-1})$ in $C_k^k$
connecting pairs $(s_1,s'_1), \dots, (s_{k-1}, s'_{k-1})$.
Due to the assumption on the cyclic order
we can assume w.l.o.g. that $\pi(i) = i$ for each $i \in [k-1]$.
Then the path $P_i$ for $i \in [k-1]$ is given by the same concatenation formula as in the case $j=0$.
Again, these paths are vertex-disjoint, which yields the claim.
\end{innerproof}
The lemma follows from the observation that $\mathsf{Shift}(s_1,t_1) < k$ so the claim can be applied.
\end{proof}

Now we generalize the argument from a cylindrical grid to the general case.

\begin{lemma}
\label{lem:diam:non-maximal-criterion}
Let $G$ be a graph properly embedded in \ringfull and $r < p$ be integers.
Suppose that $\mu_G(V_{in}, V_{out}) \ge p$ and $\rdist_G(V_{in}, V_{out}) \ge p^2 + 1$.
Let vertices $s_1, \dots, s_{r}$ lie in this cyclic order on $I_{in}$
and vertices $t_1, \dots, t_{r}$ lie in this cyclic order on $I_{out}$
(both counted clockwise).
Then $\tcal = \{(s_1,t_1), \dots, (s_{r},t_{r})\}$ 
is realizable in $G$ if and only if 
$\mu_G(\{s_1, \dots, s_r\}, \{t_1, \dots, t_r\}) \ge r$.
\end{lemma}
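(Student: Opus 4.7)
The forward direction is immediate via Menger's theorem: any $\tcal$-linkage of size $r$ certifies $\mu_G(\{s_1,\dots,s_r\},\{t_1,\dots,t_r\})\ge r$. The converse for $r\le 1$ is trivial: the empty linkage handles $r=0$, and for $r=1$ the hypothesis $\mu_G(\{s_1\},\{t_1\})\ge 1$ directly yields an $(s_1,t_1)$-path. I henceforth focus on $r\ge 2$.

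The plan is to extract a minor model of the cylindrical grid $C_{r+1}^{m-1}$ inside $G$, where $m=\rdist_G(V_{in},V_{out})$, and then realise the $r$ pairs by invoking \cref{lem:diam:grid-shift} in that grid. As a first step I would apply \cref{lem:prelim:augmenting} with $X=V_{in}$, $Y=V_{out}$, $X'=\{s_1,\dots,s_r\}$, $Y'=\{t_1,\dots,t_r\}$; this is legitimate because $|X'|=|Y'|=r=\mu_G(X',Y')<p\le\mu_G(V_{in},V_{out})$ and the bound $|V_{in}|,|V_{out}|\ge p>r$ guarantees the strict inclusions $X'\subsetneq X$, $Y'\subsetneq Y$. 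The result is a $(V_{in},V_{out})$-linkage $\pp'$ of size $r+1$ with endpoints $\{s_1,\dots,s_r,x\}$ and $\{t_1,\dots,t_r,y\}$ for some additional $x\in V_{in}$, $y\in V_{out}$. Since $G$ is properly embedded in \ringfull, the standard planarity argument forces $\pp'$ to match these endpoints in a cyclic-order-preserving bijection, so enumerating them clockwise as $a_1,\dots,a_{r+1}$ on $I_{in}$ and $b_1,\dots,b_{r+1}$ on $I_{out}$ gives $\pp'=\{(a_j,b_j):j\in[r+1]\}$. Because $r+1\ge 3$ and $m-1\ge p^2\ge(r+1)^2$, applying \cref{lem:diam:grid-minor} with $k=r+1$ then supplies a minor model of $C_{r+1}^{m-1}$ in which $a_j$ lies in the branch set of $c^1_j$ and $b_j$ in that of $c^{m-1}_j$.

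Writing $s_i=a_{\alpha(i)}$ and $t_i=b_{\beta(i)}$, the injections $\alpha,\beta:[r]\to[r+1]$ preserve cyclic order, since they select clockwise subsequences of $(a_j)$ and $(b_j)$. Consequently $c^1_{\alpha(1)},\dots,c^1_{\alpha(r)}$ lie in clockwise cyclic order on $C^1$ and $c^{m-1}_{\beta(1)},\dots,c^{m-1}_{\beta(r)}$ on $C^{m-1}$, so \cref{lem:diam:grid-shift} with $k=r+1$ (which needs exactly $k-1=r$ pairs and at least $(r+1)^2$ rings, both satisfied) realises $\{(c^1_{\alpha(i)},c^{m-1}_{\beta(i)}):i\in[r]\}$ inside $C_{r+1}^{m-1}$. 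Finally, \cref{lem:diam:minor-realization} applied with $\pi(s_i)=c^1_{\alpha(i)}$ and $\pi(t_i)=c^{m-1}_{\beta(i)}$ lifts this to a $\tcal$-linkage in $G$.

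The delicate point is the decision to augment by exactly one path rather than all the way to a $p$-linkage. Augmenting to size $p$ would yield a grid $C_p^{m-1}$, and then \cref{lem:diam:grid-shift} with $k-1=p-1$ would force us to pad the $r$ real pairs to $p-1$ pairs whose induced matching is cyclic-order-preserving on the $p-1$ remaining positions, i.e., a cyclic rotation. The shifts $\beta(i)-\alpha(i)\bmod p$ can genuinely differ across $i$, and no choice of one dropped column on each side can reconcile a non-constant shift pattern, so this padding strategy is doomed in general. Using a $C_{r+1}^{m-1}$-minor instead matches the number of pairs that \cref{lem:diam:grid-shift} handles natively, sidestepping padding altogether.
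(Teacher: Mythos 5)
Your proposal is correct and follows essentially the same route as the paper's proof: augment the size-$r$ linkage by one path via \cref{lem:prelim:augmenting}, extract a $C_{r+1}^{m-1}$-minor with \cref{lem:diam:grid-minor}, realize the $r$ pairs there with \cref{lem:diam:grid-shift}, and lift back with \cref{lem:diam:minor-realization}. Your closing remark on why one augments by exactly one path (rather than to a full $p$-linkage) is a sensible observation not spelled out in the paper, but the core argument is identical.
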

\begin{proof}
\mic{The lemma is trivial for $r=1$ so we will assume $r\ge 2$.}
The condition $\mu_G(V_\tcal \cap V_{in}, V_\tcal \cap V_{out}) \ge r$ is clearly necessary for a $\tcal$-linkage to exist.
Suppose that this condition holds and let $\pp$ be some $(V_\tcal \cap V_{in}, V_\tcal \cap V_{out})$-linkage of size $r$.
Since $r < \mu_G(V_{in}, V_{out})$, by \cref{lem:prelim:augmenting} there exist vertices $s^* \in V_{in} \sm V_\tcal$, $t^* \in V_{out} \sm V_\tcal$, and a
linkage $\pp'$ connecting sets $S = \{s_1, \dots, s_r, s^*\}$ and $T = \{t_1, \dots, t_r, t^*\}$. 
Note that $3 \le r+1 = |S| = |T| \le p$.
By \cref{lem:diam:grid-minor} the graph $G$ contains a minor model of $C_{r+1}^m$, where $m = p^2 \ge (r+1)^2$, and there are bijections $\pi_S \colon S \to C^1$, $\pi_T \colon T \to C^m$ that preserve the cyclic ordering, such that $s \in S$ belongs to the branch set of $\pi_S(s)$ and  $t \in T$ belongs to the branch set of $\pi_T(t)$.
We can thus assume w.l.o.g. that $\pi_S(s_1), \dots, \pi_S(s_r)$ lie in this cyclic order on $C^1$ and $\pi_T(t_1), \dots, \pi_T(t_r)$ lie in this cyclic order on $C^m$, counted clockwise.
By \cref{lem:diam:grid-shift} the set of pairs $\{(\pi_S(s_1), \pi_T(t_1)), \dots, (\pi_S(s_r), \pi_T(t_r)\}$ is realizable in $C_{r+1}^m$.
Then the lemma follows from \cref{lem:diam:minor-realization}.
\end{proof}

\mic{Our goal is to detect an edge that can be safely removed without modifying the family of possible non-maximal linkages.
We can assume that the $(V_{in}, V_{in})$-paths and the $(V_{out}, V_{out})$-paths intersect only few cycles in the concentric family, so the main challenge is to preserve the non-maximal $(V_{in}, V_{out})$-linkages.
As we know that the cut-condition is sufficient for a such a linkage to exist, it remains to find an edge $e$ whose removal does not affect any cut-condition.
We show that this can be guaranteed by two requirements: (a) removing $e$ does not decrease $\mu(V_{in}, V_{out})$ and, (b) $e$ has sufficiently large radial distance from both $V_{in}$ and $V_{out}$.
}

\begin{proposition}
\label{lem:diam:non-maximal-irrelevant}
Let $G$ be a graph properly embedded in \ringfull, 
$t = \max(|V_{in}|, |V_{out}|)$, and  $s = \mu_G(V_{in}, V_{out})$.
Let $C_1,\dots, C_m$ be a $(V_{in}, V_{out})$-sequence of concentric cycles in $G$ and
$m \ge (t+2)^2$. 

Consider $i \in [2t + 1, m-2t]$
and edge $e \in E(C_i)$ such that 
$\mu_{G \sm e}(V_{in} V_{out}) = \mu_G(V_{in}, V_{out})= s$.
Let $\tcal \sub (V_{in} \cup V_{out})^2$. Suppose that $\tcal$  contains less than $s$ pairs
with one element in $V_{in}$ and one in $V_{out}$.
Then $\tcal$ is realizable in $G$ if and only if $\tcal$ is realizable in $G \sm e$.

Furthermore, there exists at least one edge $e$ satisfying the requirements above and it can be found in polynomial time. 
\end{proposition}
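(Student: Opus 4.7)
I begin with the easy direction: if $\tcal$ is realizable in $G \sm e$, the same linkage realizes it in $G$, since $G \sm e$ is a subgraph of $G$. For the converse, suppose $\pp$ realizes $\tcal$ in $G$. My first step is to invoke \Cref{lem:diam:visitors} with the concentric sequence to get an aligned linkage whose inclusion-minimal $(V_{in},V_{in})$-subpaths stay within $\disc(C_t)$ and whose inclusion-minimal $(V_{out},V_{out})$-subpaths stay outside $\disc(C_{m-t+1})$. Because $i \in [2t+1,m-2t]$, the edge $e \in E(C_i)$ lies strictly between these shallow regions, so no such subpath can use it. Consequently, only the middle-crossing subpaths---inclusion-minimal $(V_{in},V_{out})$-subpaths of $\pp$---might use $e$. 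A parity argument (each non-crossing pair contributes an even number of such subpaths, each crossing pair an odd number), combined with a rerouting argument that selects $\pp$ to minimize this count, lets me assume the total number of middle crossings equals exactly $r := |\tcal_{cross}| < s$.

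I next propose to reroute these $r$ middle crossings inside $G \sm e$ using \Cref{lem:diam:non-maximal-criterion}. The graph $G \sm e$ still contains all concentric cycles $C_j$ for $j \ne i$, so $\rdist_{G \sm e}(V_{in},V_{out}) \ge m-1 \ge s^2+1$, using $m \ge (t+2)^2$ and $s \le t$; and $\mu_{G \sm e}(V_{in},V_{out}) = s$ by hypothesis. Letting $A \sub V_{in}$ and $B \sub V_{out}$ be the endpoints of the middle crossings---their cyclic orders match by planarity of $\pp$---the criterion reduces realizability of these $r$ pairs in $G \sm e$ to the cut condition $\mu_{G \sm e}(A,B) \ge r$. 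To verify this, I argue by contradiction: a hypothetical $(A,B)$-separator $S'$ of size less than $r$ in $G \sm e$ corresponds via \Cref{lem:prelim:noose-uv} to a noose $\gamma'$. Either $\gamma'$ wraps around $I_{in}$, in which case $S'$ separates $V_{in}$ from $V_{out}$ and has size $\ge \mu_{G \sm e}(V_{in},V_{out}) = s > r$, a contradiction; or $\gamma'$ does not wrap, and an extension of $\gamma'$ across the tight concentric belt surrounding $e$ yields a wrapping $(V_{in},V_{out})$-separator still of size below $s$, again contradicting the preservation of $\mu(V_{in},V_{out})$. Combining the rerouted middle crossings with the shallow subpaths of the modified $\pp$ then produces a $\tcal$-linkage in $G \sm e$.

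For existence of $e$, I compute any max $(V_{in},V_{out})$-flow $F$ of value $s$ in $G$ in polynomial time and decompose it into $s$ vertex-disjoint paths. Fix any $C_i$ with $i \in [2t+1,m-2t]$ (this range is non-empty because $m \ge (t+2)^2 \ge 4t+1$). Since $V(C_i)$ is a $(V_{in},V_{out})$-separator of size at least $s$, every one of the $s$ paths crosses it in at least one arc-segment; if a path uses $|V_j|$ vertices of $C_i$ spread across $A_j \ge 1$ arc-segments, it uses $|V_j|-A_j$ edges of $C_i$. Summing, $F$ uses at most $|V(C_i)|-s$ edges of $C_i$, leaving at least $s \ge 1$ unused. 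For any such unused edge $e$, $F$ remains a max flow in $G \sm e$, so $\mu_{G \sm e}(V_{in},V_{out}) = s$.

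The main obstacle I anticipate is the noose-extension step establishing $\mu_{G \sm e}(A,B) \ge r$ in the non-wrapping case: one must carefully use the fact that $e$ sits deep within a tight concentric well to turn a non-wrapping chord of low length into a wrapping separator of length still below $s$. The rest of the argument combines existing tools (\Cref{lem:diam:visitors}, \Cref{lem:diam:non-maximal-criterion}, and standard max-flow decomposition) in a straightforward way.
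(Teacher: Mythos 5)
Your overall strategy tracks the paper's: confine the non-crossing subpaths to shallow regions via \cref{lem:diam:visitors}, reroute the $(V_{in},V_{out})$-crossing parts via the criterion of \cref{lem:diam:non-maximal-criterion}, verify the cut condition by a noose argument exploiting that $e$ is deep, and choose $e$ on $C_i$ off a maximum linkage for the existence claim. There is, however, a genuine gap in the combination step. You apply \cref{lem:diam:non-maximal-criterion} to $G \sm e$ itself, with terminal sets $A \sub V_{in}$ and $B \sub V_{out}$. That lemma returns \emph{some} linkage realizing the crossing pairs in $G \sm e$, with no control over which vertices it visits; in particular it may run through vertices used by the shallow $(V_{in},V_{in})$- and $(V_{out},V_{out})$-subpaths you intend to keep, and through terminals of other requests. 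So ``combining the rerouted middle crossings with the shallow subpaths'' need not produce vertex-disjoint paths. The paper avoids this by first deleting the paths of $\pp \sm \pp_{\textrm{long}}$ from $G$, passing to the connected component $G'$ containing the crossing paths, and re-establishing inside $G'$ (whose boundary sets $V'_{in}, V'_{out}$ are no longer $V_{in}, V_{out}$) that $\mu_{G'}(V'_{in},V'_{out}) \ge s$, that this survives the deletion of $e$, and that the cut condition for the crossing endpoints holds in $G' \sm e$; only then is the criterion invoked, which guarantees the rerouted linkage avoids everything that was removed.

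Two further points. Your ``parity plus minimization'' step, asserting that the number of middle crossings can be taken to equal the number of crossing pairs, is claimed but not proved; the paper does not need it, since each path of $\pp_{\textrm{long}}$ is rerouted as a whole inside $G'$. And the non-wrapping case of your cut-condition check, which you yourself flag as the main obstacle, is resolved in the paper via \cref{lem:prelim:separator-interval}: any inclusion-minimal separator present in $G' \sm e$ but not in $G'$ must contain a vertex sharing a face with $e$, so its noose is confined to cycles with indices within roughly $s \le t$ of $i$; since $i \in [2t+1, m-2t]$, the noose cannot reach $C_{t+1}$ or $C_{m-t}$ and is forced to wrap, and then $S \cup \{v_1\}$ (with $v_1$ an endpoint of $e$) is a $(V_{in},V_{out})$-separator in $G$ of size at most $|\pp_{\textrm{long}}| < s$, a contradiction. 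Your existence argument for $e$ is correct and matches the paper's.
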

\begin{proof}
When $\tcal$ is realizable in $G$ then, by \cref{lem:diam:visitors}, there exists
a $\tcal$-linkage $\pp$ in $G$ such that
every $(V_{in}, V_{in})$-path in $\pp$ intersects at most $t$ of the first cycles in the sequence $C_1, \dots, C_m$, while every $(V_{out}, V_{out})$-path in $\pp$ intersects
at most $t$ of the last cycles in $C_1, \dots, C_m$.
Let $\pp_\textrm{long} \sub \pp$ be the subfamily of paths from $\pp$ with one endpoint in $V_{in}$ and one in $V_{out}$.
By the assumption $|\pp_\textrm{long}| < s$.
If $\pp_\textrm{long} = \emptyset$ then we are done; suppose that this is not the case.

The graph obtained from $G$ by removing the paths from $\pp \sm \pp_\textrm{long}$ has exactly one connected component containing the paths from $\pp_\textrm{long}$; let $G'$ indicate this component.
Let $V'_{in} \sub V(G')$ be the set of vertices lying on the inner face of $G'$ containing $I_{in}$ and $V'_{out} \sub V(G')$ be the set of vertices lying on the outer face of $G'$. \meir{Figure. \mic{done}}

\begin{figure}[t]
    \centering
\includegraphics[scale=0.8]{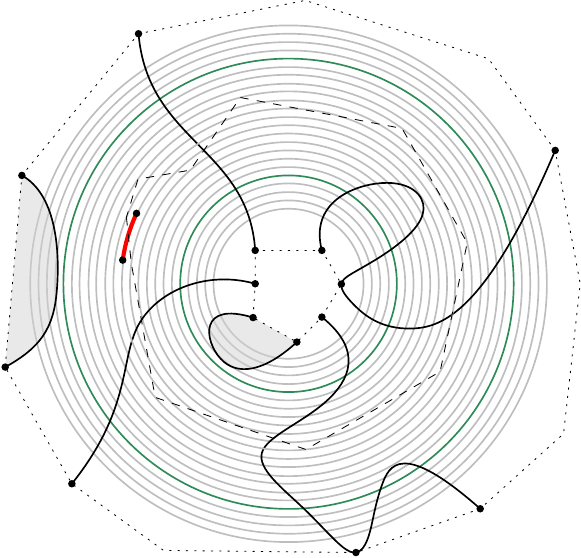}
\caption{An illustration for \cref{lem:diam:non-maximal-irrelevant}. 
The nooses $I_{in}, I_{out}$ are dotted and the $(V_{in}, V_{out})$-sequence of concentric cycles $C_1,\dots, C_m$ is gray.
The cycles $C_{t+1}, C_{m-t}$ are highlighted.
To obtain graph $G'$, we remove from $G$
the paths in $\pp \sm \pp_\textrm{long}$ (which do not intersect $C_{t+1}, C_{m-t}$) together with the shaded area.
The edge $e$ is drawn solid red.
The proposition relies on the observation that any inclusion-minimal separator $S$ in $G' \sm e$ between a vertex inside $C_{t+1}$ and a vertex outside $C_{m-t}$, such that $S$ is not present in $G'$, is represented by a noose (dashed) intersecting $e$.
Since $e \in C_i$, where $i$ is sufficiently far for $t+1$ and $m-t$, this noose cannot intersect $C_{t+1}, C_{m-t}$ so it must be a $(C_{t+1}, C_{m-t})$-separator as well.
}
\label{fig:non-maximal}
\end{figure}

\begin{claim}
\label{claim:non-maximal:max-linkage}
It holds that $\mu_{G'}(V'_{in}, V'_{out}) \ge s$.
\end{claim}
\begin{innerproof}
    Let $\pp_{max}$ be a $(V_{in}, V_{out})$-linkage of size $s$ in $G$.
    Then each path $P$ from $\pp_{max}$ has a non-empty intersection with $V(G')$.
    In particular, $P$ contains a subpath being a $(V'_{in}, V'_{out})$-path in $G'$.
    This gives a $(V'_{in}, V'_{out})$-linkage in $G'$ of size $s$.
\end{innerproof}

All the cycles $C_{t+1}, \dots, C_{m-t}$ are contained in $G'$ and so is $e$.
Hence $\rdist_{G'}(V'_{in}, V'_{out}) \ge 
m -2t \ge t^2 + 2$
and $\rdist_{G' \sm e}(V'_{in}, V'_{out}) \ge t^2 + 1 \ge s^2 + 1$.
Let $e = v_1v_2$.
Then each of $v_1,v_2$ is separated from each of $V'_{in}, V'_{out}$ with at least $t$ cycles from $C_{t+1}, \dots, C_{m-t}$.

\begin{claim}
\label{claim:non-maximal:remove-edge-global}
It holds that $\mu_{G' \sm e}(V'_{in}, V'_{out}) \ge s$.
\end{claim}
\begin{innerproof}
Suppose otherwise.
Then there exists a $(G'\sm e)$-noose $\gamma$ in the plane which separates $V'_{in}$ from $V'_{out}$ and intersects $G' \sm e$ on at most $s-1$ vertices;
let $S = \gamma \cap V(G' \sm e)$.
By \cref{claim:non-maximal:max-linkage} there are no such small $(V'_{in}, V'_{out})$-separators in $G'$. Therefore $\gamma$ must intersect the image of $e$ and so $S$ contains a vertex $u$ that lies on a common face with $e$ in $G'$.
Suppose that $\gamma$ intersects $C_{t+1}$;
then it must also intersect $C_{i-1}$.
\cref{lem:prelim:separator-interval}
implies that $|(i-1) - (t-1)| = |i-t-2| \le s-2 \le t - 2$ but this contradicts the assumption that $i \ge 2t + 1$.
By the symmetric argument, $\gamma$ cannot intersect $C_{m-t}$.
Therefore, the curve $\gamma$ must be contained in the interior of $\ring(C_{t+1}, C_{m-t})$ and so $S$ is a $(C_{t+1}, C_{m-t})$-separator in $G' \sm e$: \mic{see \Cref{fig:non-maximal}}.
As a consequence, $S$ is also a $(V_{in}, V_{out})$-separator in $G \sm e$.
This implies that $\mu_{G \sm e}(V_{in}, V_{out}) \le s - 1$ and contradicts the assumption that 
$\mu_{G \sm e}(V_{in}, V_{out}) = \mu_{G}(V_{in}, V_{out})$.
\end{innerproof}

Note that $\pp_\textrm{long}$ is a $(V'_{in}, V'_{out})$-linkage in $G'$.
Let $T_{in} \sub V'_{in}, T_{out} \sub V'_{out}$ be the sets of endpoints of paths from $\pp_\textrm{long}$.

\begin{claim}
It holds that $\mu_{G' \sm e}(T_{in}, T_{out}) \ge |\pp_\textrm{long}|$.
\end{claim}
\begin{innerproof}
    Clearly, $\mu_{G'}(T_{in}, T_{out}) \ge |\pp_\textrm{long}|$.
    Suppose that such an inequality does not hold in $G' \sm e$.
    Similarly as in \cref{claim:non-maximal:remove-edge-global}, there exists
    a $(G'\sm e)$-noose $\gamma$ in the plane which separates $T_{in}$ from $T_{out}$ and intersects $G' \sm e$ on at most $|\pp_\textrm{long}|-1$ vertices; let $S = \gamma \cap V(G' \sm e)$.
    By the same argument as before we obtain that 
    $\gamma$ is contained in the  interior of $\ring(C_{t+1}, C_{m-t})$ and  $S$ is a $(V_{in}, V_{out})$-separator in $G \sm e$.
    \mic{Recall that $e = v_1v_2$.}
    Observe that any $(V_{in}, V_{out})$-path in $G-S$ must go through $e$ so 
    $S \cup \{v_1\}$ is a $(V_{in}, V_{out})$-separator in $G$. 
    Hence, $\mu_G(V_{in}, V_{out}) \le |S| + 1 \le |\pp_\textrm{long}|$.
    This contradicts the assumption that $|\pp_\textrm{long}| < s = \mu_G(V_{in}, V_{out})$. 
\end{innerproof}

The two claims above allows us to apply the criterion from \cref{lem:diam:non-maximal-criterion} to $G'\sm e, V'_{in}, V'_{out}$, $\pp_\textrm{long}$ with $p = s$, $r = |\pp_\textrm{long}| < p$, and $\rdist_{G' \sm e}(V'_{in}, V'_{out}) \ge p^2 + 1$.
We derive that
there exists a linkage in $G' \sm e$ aligned with $\pp_\textrm{long}$.
By the construction of $G'$ this implies that there exists a linkage in $G \sm e$ aligned with $\pp$.

It remains to justify that $e$ can be efficiently found.
First, $(t+2)^2 - 4t > 0$ so the interval $[2t+1, m-2t]$ is non-empty.
A $(V_{in}, V_{out})$-linkage $\pp_{max}$ of size $s$ can be found in polynomial time.
Then $e$ can be chosen as any edge on $C_i$ that is not used by $\pp_{max}$.
\end{proof}

\subsubsection{Rerouting a maximal linkage}

\mic{We move on to the scenario in which the number of $(V_{in}, V_{out})$-paths in a linkage equals $\mu(V_{in}, V_{out})$.
The crucial special case occurs when $|V_{in}| =  |V_{out}| = \mu(V_{in}, V_{out})$.
This is the same setting that has been studied by
Robertson and Seymour~\cite{GM6} as a subroutine in their FPT algorithm for {\sc Planar Disjoint Paths}.
We shall adopt the same perspective for analyzing this case, based on the following convenient plane embedding.
}

\begin{definition}
    A plane graph $G$ is called {\em $k$-cylindrical} if:
\begin{enumerate}
    \item It is properly  embedded in \ringfull where $I_{in} = \{(x,y) \in \rr^2 \mid x^2 + y^2 = 1\}$ and 
$I_{out} = \{(x,y) \in \rr^2 \mid x^2 + y^2 = 4\}$;
    \item The sets $V_{in} = V(G) \cap  I_{in}$ and $V_{out} =  V(G) \cap I_{out}$ have size $k$ each;
    \item $V_{in} = \{(1,\frac{2j}{k}\pi)\}_{0 \le j < k}$ and
    $V_{out} = \{(2,\frac{2j}{k}\pi)\}_{0 \le j < k}$ in polar coordinates, and
    \item $\mu_G(V_{in}, V_{out}) = k$.
\end{enumerate}   
\end{definition}

We refer to the elements of $V_{in}$ as $s_0, s_1, \dots, s_{k-1}$ so that $s_j$ has polar coordinates $(1,\frac{-2\pi}{k} j)$.
Similarly, $t_0, t_1, \dots, t_{k-1}$ are the elements of $V_{out}$ and $t_j = 
(2,\frac{-2\pi}{k}j)$.

\begin{definition}
For a path $P$ connecting $s \in V_{in}$ and $t \in V_{out}$ we define its {\em winding number} 
$\theta(P) \in \mathbb{Z}$ as $\frac{k}{2\pi}$ times the total angle traversed by the curve corresponding to $P$ (measured clockwise).
\end{definition}

\mic{See \Cref{fig:irrelevant-edge-outline} on page \pageref{fig:irrelevant-edge-outline} for an example.
Intuitively, the winding number measures how many times a path winds around the ring (and in which direction) and what is the difference in the angles of its endpoints.}

\begin{definition}
A {\em cylindrical linkage} in $G$ is a
$(V_{in}, V_{out})$-linkage of size $k$.
When $\pp$ is cylindrical
then every path $P \in \pp$ has the same winding number and we refer to it as $\theta(\pp)$.
We say that $\theta$ is {\em feasible} in $G$ if there is a cylindrical linkage in $G$ with the winding number $\theta$.
\end{definition}

We remark that Robertson and Seymour~\cite{GM6} \mic{defined} the winding number of $P$ as $-\theta(P) / k$ but we choose this convention so we could work with integers and the more intuitive clockwise ordering.

\begin{lemma}[{\cite[Lem. 5.9]{GM6}}]
\label{lem:diam:3thetas}
Let $G$ be a $k$-cylindrical graph.
If $\theta_1 < \theta_2 < \theta_3$ and  $\theta_1, \theta_3$ are feasible in $G$, then so is $\theta_2$.
\end{lemma}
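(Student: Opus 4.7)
The plan is to reduce the statement, by induction on $\theta_3 - \theta_1$, to the special case $\theta_2 = \theta_1 + 1$ (assuming $\theta_1 + 1 \le \theta_3$); iterating this elementary step gives feasibility of every integer in the interval $[\theta_1,\theta_3]$. So from now on fix cylindrical linkages $\pp_1,\pp_3$ with $\theta(\pp_1) = \theta_1 < \theta_3 = \theta(\pp_3)$, and aim to produce a cylindrical linkage $\pp_2$ of winding number exactly $\theta_1 + 1$.

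The first step is to consider the multigraph $H$ obtained as the edge-disjoint union of (the edges of) $\pp_1$ and $\pp_3$, counted with multiplicity, together with the plane embedding inherited from $G$. Every terminal in $V_{in} \cup V_{out}$ has degree $2$ in $H$ and every internal vertex has even degree; moreover, the edges at each vertex $v$ come equipped with the cyclic ordering $\pi_G(v)$. I would then perform a standard non-crossing pairing of the edges at every vertex: pair consecutive edges in the cyclic order so that no two chosen pairs interleave. This decomposes $H$ into $k$ walks from $V_{in}$ to $V_{out}$ together with a collection of closed walks, and, by planarity of the pairing, after discarding the closed walks one obtains pairwise vertex-disjoint paths forming a cylindrical linkage $\qq$. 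The winding number of $\qq$ will be some integer in $[\theta_1,\theta_3]$, determined by how the pairing routes each terminal.

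To hit the precise value $\theta_1 + 1$, I would refine the construction into a single \emph{elementary exchange}. Lift $G$ to the universal cover of the annulus $\ring(I_{in},I_{out})$, which is an infinite horizontal strip, and lift each $P_1^j \in \pp_1$ and $P_3^j \in \pp_3$ starting at the same copy of $s_j$. Because $\theta_3 > \theta_1$, at least one lifted path of $\pp_3$ ends strictly to the right (in the direction of positive winding) of the corresponding lifted path of $\pp_1$. Travelling along $\pp_1$ and $\pp_3$ in their lifts, pick the \emph{first} point where some lifted $P_3^j$ meets some lifted $P_1^i$ after their lifts diverge; let $R$ be the subwalk of $P_3^j$ between this point and its first return to the union of the lifts of $\pp_1$. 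Splicing $R$ into $\pp_1$ (replacing the corresponding subpath of $P_1^i$) produces a new linkage $\pp_2$. Projecting back to $G$ and performing a last uncrossing of any self-intersections (which preserves the winding number on each path), one obtains a cylindrical linkage of winding number exactly $\theta_1+1$.

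The main obstacle will be the careful bookkeeping in the universal cover: I need to argue that the specific elementary exchange above increases the winding number of the relevant path by exactly $1$ (and not more), and that the resulting family is still a cylindrical linkage and not merely a set of walks. Choosing ``first'' intersections with respect to a suitable linear order on the lifts, together with the non-crossing pairing argument from the second step, ensures both of these properties and yields an increment of $1$ in the winding number, completing the inductive step.
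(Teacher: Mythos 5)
First, a point of reference: the paper does not prove this lemma at all --- it is imported verbatim from Robertson and Seymour \cite[Lem.~5.9]{GM6} --- so your argument has to stand entirely on its own. As written, it has a genuine gap at its central step, the ``elementary exchange''. In a cylindrical linkage all $k$ paths necessarily have the same winding number (this is precisely why $\theta(\pp)$ is well defined): lifting to the universal cover, the non-interleaving of endpoints of disjoint lifts forces $\theta_{j+1}-\theta_j\ge 0$ for consecutive paths, and these increments telescope to $0$ around the cylinder, so all are equal. Consequently, for $k\ge 2$ it is \emph{impossible} to splice a subwalk of $\pp_3$ into a single path $P_1^i$ so that it acquires winding number $\theta_1+1$ while the other $k-1$ paths keep winding number $\theta_1$ and the family stays vertex-disjoint: the modified path must cross its neighbours. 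The ``uncrossing of self-intersections'' you invoke does not touch this; the intersections to be resolved are with the other paths of $\pp_1$, and resolving them means rerouting all $k$ paths simultaneously and re-proving pairwise disjointness --- which is the entire content of the lemma and is exactly what is deferred to ``careful bookkeeping''. The non-crossing pairing in your second step has a related problem: it yields edge-disjoint non-crossing walks, not a vertex-disjoint linkage, and even granting the conversion you only get \emph{some} winding number in $[\theta_1,\theta_3]$, with no control pinning it to $\theta_1+1$.

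The overall strategy (induction down to an increment of $1$, union of the two linkages, lifting to the strip) is the right one, and the paper itself contains the machinery a complete version needs: in \cref{lem:diam:uncross-theta} one first makes $\pp_1$ clockwise-tight, lifts both linkages, and then defines, \emph{for every} $i$ simultaneously, a new path that follows $P'_i$ up to the first jump $\widehat{Q}^1_i$ onto $P'_{i+1}$ and then follows $P'_{i+1}$ to its endpoint $(2,\,i+1+\theta_1)$; pairwise disjointness of the resulting family is then proved via the ordering argument of \cref{obs:diam:lifting-noncross} together with clockwise-tightness. Adapting that construction (truncated after one jump) would give a correct proof of the $\theta_2=\theta_1+1$ case; without the simultaneous rerouting and its disjointness proof, your inductive step is not established.
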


For a $k$-cylindrical graph $G$ let $\Theta^G$ be the set of all feasible values of $\theta$.
By \cref{lem:diam:3thetas} the set $\Theta^G$ forms an interval of integers and it is non-empty because $\mu_G(V_{in}, V_{out}) = k$.
The set $\Theta^G$ is always finite and it can enumerated efficiently.

\begin{lemma}[{\cite[Lem. 5.11]{GM6}}]
\label{lem:diam:enumerate}
There is a polynomial-time algorithm that, given a $k$-cylindrical graph~$G$,
enumerates the set $\Theta^G$.
\end{lemma}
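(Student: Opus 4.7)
The plan is to exploit the interval structure of $\Theta^G$ guaranteed by \cref{lem:diam:3thetas}: to enumerate the set it suffices to find its minimum and maximum elements $\theta_{\min}, \theta_{\max}$, together with the fact that the interval is finite. First I would establish a polynomial bound $|\theta|\le |V(G)|/k$ for any feasible $\theta$. Intuitively, in a cylindrical linkage $\pp$, every path that winds once around the ring must meet any fixed $(V_{in},V_{out})$-cut at least once extra per winding; since the $k$ paths of $\pp$ are vertex-disjoint, the total number of vertices used forces $k\cdot|\theta|\le |V(G)|$. This reduces the enumeration problem to a binary (or linear) search over $\{-\lceil n/k\rceil,\dots,\lceil n/k\rceil\}$, provided that we can decide feasibility of a single value $\theta$ in polynomial time.

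The heart of the argument is thus a polynomial-time feasibility test for a fixed winding number $\theta$. My approach is to cut the annulus $\ring(I_{in},I_{out})$ along a simple arc $\alpha$ from $I_{in}$ to $I_{out}$ chosen to avoid all vertices of $V_{in}\cup V_{out}$ (for concreteness, along the positive $x$-axis, say between $s_{k-1}$ and $s_0$, and between $t_{k-1}$ and $t_0$). Any edge of $G$ whose drawing crosses $\alpha$ is subdivided at each crossing. Then I take $|\theta|+1$ identical copies $G^{(0)},G^{(1)},\dots,G^{(|\theta|)}$ of the cut graph, glued side-by-side along the cut boundaries: the right boundary of $G^{(i)}$ is identified with the left boundary of $G^{(i+1)}$ via the canonical correspondence. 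The resulting plane graph $\widetilde G$ is embedded in a topological disc with all ``copied'' terminals on the outer face. A cylindrical linkage with winding number $\theta$ (say $\theta\ge 0$) lifts uniquely to a family of $k$ vertex-disjoint paths in $\widetilde G$, where the path starting at $s_j$ in $G^{(0)}$ terminates at the appropriately shifted copy of some $t_{j'}$ in $G^{(\theta)}$, with the shift determined by $\theta$ modulo $k$; conversely every such family descends to a cylindrical linkage with the prescribed winding number.

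Having reduced the feasibility test to an instance of \textsc{Disjoint Paths} in a plane graph with all $2k$ terminals on a single face, I invoke the polynomial-time algorithm for the single-face disjoint paths problem~\cite{GM6,EricksonMV87} (which the excerpt already relies on in the single-face case of the kernelization). Together with the polynomial bound on $|\theta|$, this yields a polynomial-time enumeration of $\Theta^G$: compute the bound $N=\lceil|V(G)|/k\rceil$, run the feasibility test on $\{-N,\dots,N\}$, and return the (necessarily contiguous) set of positive answers.

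The main obstacle I expect is the unrolling construction: one must verify that vertex-disjoint families in $\widetilde G$ correspond \emph{exactly} to cylindrical linkages of the chosen winding number, with no spurious solutions and no lost solutions. The potential pitfalls are (a) paths that cross the cut $\alpha$ multiple times, which the subdivisions and multi-copy construction are designed to handle, and (b) ensuring that two paths which wind with different parities around the ring cannot be realised as disjoint paths in $\widetilde G$ unless they were already disjoint in $G$ — this follows from planarity, because shared internal vertices in the cover would descend to shared vertices in $G$. Once this correspondence is nailed down, the rest is bookkeeping.
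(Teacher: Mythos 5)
First, a point of reference: the paper does not prove this lemma at all --- it is imported verbatim from Robertson and Seymour \cite{GM6}, so there is no in-paper argument to compare against, and your proposal has to be judged on its own. Your overall strategy (bound $|\theta|$ polynomially, then decide feasibility of each fixed $\theta$ by unrolling the annulus into a disc and invoking the polynomial single-face algorithm) is a sensible one, and the bound on $|\theta|$ does go through once you count crossings of a \emph{radial} arc from $I_{in}$ to $I_{out}$ rather than of a ``$(V_{in},V_{out})$-cut'': a concentric separator is crossed only once by a path no matter how many times it winds, so the cut you name does not certify the bound, but the arc $\alpha$ you actually use in the construction does.

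The genuine gap is the backward direction of your covering correspondence, which you flag as obstacle (b) and then dismiss with an argument that proves the wrong implication. Vertex-disjoint paths in the finite cover $\widetilde G$ need not project to vertex-disjoint paths in $G$: two of them (or a single one at two different moments) may pass through \emph{distinct copies} of the same vertex $v$, in which case the projections collide at $v$ and are in fact walks rather than paths. Your sentence ``shared internal vertices in the cover would descend to shared vertices in $G$'' establishes only that lifts of disjoint paths are disjoint --- the easy forward direction --- not the converse your reduction needs. Repairing this requires a real disentangling argument: one must exhibit a canonical (e.g.\ leftmost/taut) solution in the cover whose translates under the deck transformation are pairwise disjoint, so that it descends to a genuine cylindrical linkage of winding number $\theta$. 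That this takes substantial work is illustrated by the paper's own \cref{lem:diam:uncross-theta}, whose proof of a closely related descent statement needs clockwise-tight linkages, liftings, and \cref{obs:diam:lifting-noncross}; it is not bookkeeping. Two smaller, fixable issues: $|\theta|+1$ copies need not contain the lift of a linkage that winds back and forth before settling on net winding $\theta$ (you need $O(|E(G)|)$ copies on both sides of copy $0$), and the lift starting at $s_j$ terminates in copy $\lfloor (j+\theta)/k\rfloor$, not in copy $\theta$.
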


We define $\theta^G_1, \theta^G_2$ as follows.
If $|\Theta^G| < k$ then $\theta^G_1 = \min \Theta^G$ and $\theta^G_2 = \max \Theta^G$.
Otherwise, we set $\theta^G_1 = \min \Theta^G$ and $\theta^G_2 = \theta^G_1 + k - 1$. 

\begin{observation}
\label{obs:diam:theta-interval}
Let $G$ be a $k$-cylindrical graph.
Then $[\theta^G_1, \theta^G_2] \sub \Theta^G$.
Furthermore, if $\theta \in \Theta^G$ then there exists $\theta' \in [\theta^G_1, \theta^G_2]$ such that $\theta' \equiv \theta \mod k$.
\end{observation}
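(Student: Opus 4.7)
The plan is to unfold both claims directly from the defining case split for $\theta^G_1, \theta^G_2$, using only the two facts established just before: that $\Theta^G$ is a non-empty interval of integers (a consequence of \cref{lem:diam:3thetas} together with feasibility of some winding number, guaranteed by $\mu_G(V_{in},V_{out})=k$), and the explicit choice of $\theta^G_1,\theta^G_2$.

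First I would prove the containment $[\theta^G_1,\theta^G_2]\subseteq \Theta^G$. In the case $|\Theta^G|<k$, by definition $\theta^G_1=\min\Theta^G$ and $\theta^G_2=\max\Theta^G$, so $[\theta^G_1,\theta^G_2]$ is exactly $\Theta^G$ since $\Theta^G$ is an interval of integers. In the case $|\Theta^G|\geq k$, the interval $\Theta^G$ contains at least $k$ consecutive integers starting from its minimum $\theta^G_1$, so in particular $\theta^G_1,\theta^G_1+1,\dots,\theta^G_1+k-1\in\Theta^G$; this is exactly $[\theta^G_1,\theta^G_2]$ by the definition $\theta^G_2=\theta^G_1+k-1$.

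Next I would handle the residue statement. In the case $|\Theta^G|<k$ there is nothing to do: every $\theta\in\Theta^G$ already lies in $[\theta^G_1,\theta^G_2]=\Theta^G$, so we can take $\theta'=\theta$. In the case $|\Theta^G|\geq k$, the interval $[\theta^G_1,\theta^G_1+k-1]$ is a set of $k$ consecutive integers and therefore contains exactly one representative from each residue class modulo $k$. In particular it contains the unique $\theta'\in[\theta^G_1,\theta^G_2]$ with $\theta'\equiv\theta\pmod{k}$.

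I do not expect any real obstacle here; the observation is a bookkeeping lemma whose only purpose is to record that we may restrict attention to at most $k$ representative feasible winding numbers, one per residue class. The only point that needs a word of care is justifying non-emptiness of $\Theta^G$ (so that $\min\Theta^G$ is well-defined), which follows because $\mu_G(V_{in},V_{out})=k$ by the definition of a $k$-cylindrical graph and hence a $(V_{in},V_{out})$-linkage of size $k$ exists, yielding at least one feasible winding number.
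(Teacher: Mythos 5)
Your proof is correct and matches the paper's intent: the paper states this as an \emph{observation} with no written proof, relying on exactly the facts you cite (that $\Theta^G$ is a non-empty interval of integers, by \cref{lem:diam:3thetas} and $\mu_G(V_{in},V_{out})=k$) and the two-case definition of $\theta^G_1,\theta^G_2$. Your case analysis is precisely the bookkeeping the paper leaves implicit.
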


For $j \in [0,k-1]$ let $\tcal_j \sub V_{in} \times V_{out}$ be the set of pairs $(s_i, t_{i + j \mod k})_{i \in [0,k-1]}$.
Clearly, $\tcal_j$ is realizable in $G$ if and only if there exists $\theta \in \Theta^G$ such that $(\theta \mod k) = j$.
This is equivalent to the existence of $\theta \in [\theta^G_1, \theta^G_2]$ with $(\theta \mod k) = j$.
\mic{Combining all 
these observations with the strategy for coping with non-maximal linkages yields a criterion for an edge to be irrelevant in a  $k$-cylindrical~graph.}

\begin{lemma}
\label{lem:diam:criterion}
Let $G$ be a $k$-cylindrical graph
and $C_1,\dots, C_m$ be a $(V_{in}, V_{out})$-sequence of concentric cycles in $G$ with
$m \ge (k+2)^2$.
Consider $i \in [2k + 1, m-2k]$
and edge $e \in E(C_i)$ such that $\theta^G_1, \theta^G_2$ are feasible in $G\sm e$.
Then $G \sm e$ is $(V_{in} \cup V_{out})$-linkage-equivalent to $G$.
\end{lemma}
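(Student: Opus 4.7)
The direction that $(V_{in}\cup V_{out})$-realizability in $G\sm e$ implies it in $G$ is immediate since $G\sm e$ is a subgraph of $G$. For the converse, fix a set of disjoint pairs $\tcal \sub (V_{in}\cup V_{out})^2$ realizable in $G$, and split into two cases based on $\ell = |\{(a,b)\in\tcal : a\in V_{in}, b\in V_{out}\}|$. Since $|V_{in}|=|V_{out}|=k$ and pairs are disjoint, $\ell \in [0,k]$.

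\textbf{Case A: $\ell < k$.} I will invoke \cref{lem:diam:non-maximal-irrelevant} with $t = \max(|V_{in}|,|V_{out}|) = k$ and $s = \mu_G(V_{in},V_{out}) = k$. The hypotheses $m \ge (t+2)^2$ and $i \in [2t+1, m-2t]$ transcribe directly to the current assumptions. The only nontrivial precondition is $\mu_{G\sm e}(V_{in},V_{out}) = \mu_G(V_{in},V_{out}) = k$; this holds because feasibility of $\theta^G_1$ (or $\theta^G_2$) in $G\sm e$ produces a cylindrical linkage of size $k$ in $G\sm e$, giving $\mu_{G\sm e}(V_{in},V_{out}) \ge k$, while the reverse inequality is trivial. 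Since $\tcal$ has fewer than $s=k$ pairs crossing from $V_{in}$ to $V_{out}$, the proposition yields realizability of $\tcal$ in $G\sm e$.

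\textbf{Case B: $\ell = k$.} Here $\tcal$ is necessarily of the form $\tcal_j$ for some $j\in[0,k-1]$, since a realization of $\tcal$ in $G$ must be a cylindrical linkage, and any such linkage has a winding number $\theta$, forcing the matching to be $s_a \mapsto t_{a+\theta \bmod k}$; thus $j = \theta\bmod k$. The plan is to produce $\theta' \in \Theta^{G\sm e}$ with $\theta' \equiv \theta \pmod{k}$; this gives a cylindrical linkage in $G\sm e$ realizing $\tcal_j = \tcal$. By \cref{obs:diam:theta-interval} applied to $G$, there exists $\theta' \in [\theta^G_1, \theta^G_2]$ with $\theta' \equiv \theta \pmod{k}$. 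Since $\theta^G_1$ and $\theta^G_2$ are feasible in $G\sm e$ by hypothesis, \cref{lem:diam:3thetas} (applied inside $G\sm e$) implies that every integer in $[\theta^G_1,\theta^G_2]$ is feasible in $G\sm e$; in particular $\theta' \in \Theta^{G\sm e}$, completing the argument.

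\textbf{Expected obstacle.} No case really fights back: Case A is a direct reduction to \cref{lem:diam:non-maximal-irrelevant} once the max-flow equality is observed, and Case B is almost a bookkeeping exercise with winding numbers. The only spot that deserves care is translating the combinatorial statement ``$\tcal$ has $k$ cross-pairs and is realizable'' into the structural statement ``$\tcal = \tcal_j$'': I need to verify that a realization cannot contain $(V_{in},V_{in})$- or $(V_{out},V_{out})$-paths when $\ell = k$, which follows from disjointness of the $2k$ endpoints across $V_{in}\cup V_{out}$, and that every cylindrical linkage indeed has a single well-defined winding number so $j$ is well-defined.
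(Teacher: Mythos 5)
Your proposal is correct and follows essentially the same route as the paper's proof: the same case split on whether the number of $(V_{in},V_{out})$-pairs $\ell$ is less than $k$ (handled by \cref{lem:diam:non-maximal-irrelevant} after observing $\mu_{G\sm e}(V_{in},V_{out})=k$ from feasibility of $\theta^G_1$ in $G\sm e$) or equal to $k$ (handled by identifying $\tcal$ with some $\tcal_j$ and applying \cref{obs:diam:theta-interval} together with \cref{lem:diam:3thetas}). Your write-up is in fact slightly more explicit than the paper's on why the max-flow value is preserved and why $\tcal=\tcal_j$ in the maximal case.
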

\begin{proof}
The assumptions imply that $\mu_{G\sm e}(V_{in}, V_{out}) = \mu_{G}(V_{in}, V_{out}) = k$.
Consider some $\tcal \sub (V_{in} \cup V_{out})^2$ that is realizable in $G$.
Let $\ell$ be the number of pairs in $\tcal$ with one endpoint in $V_{in}$ and one in $V_{out}$.
If $\ell < k$ then $\tcal$ is realizable in $G \sm e$ due to
\cref{lem:diam:non-maximal-irrelevant}.
Suppose that $\ell = k$. 
Then $\tcal = \tcal_j$ for some $j \in [0,k-1]$
for which there exists $\theta \in \Theta^G$ such that $(\theta \mod k) = j$.
By \cref{obs:diam:theta-interval} we can assume that $\theta \in [\theta^G_1, \theta^G_2]$.
As both $\theta^G_1, \theta^G_2$ are feasible in $G\sm e$, it follows from \cref{lem:diam:3thetas} that so is $\theta$.
\end{proof}

\paragraph{Disentangling cylindrical linkages.}
\mic{While finding an edge not required by a single linkage is simple, finding a single edge that is not needed by two linkages is more challenging.
A priori, it could be the case then the union of any linkages $\pp_1$ with $\theta(\pp_1) = \theta_1^G$ and $\pp_2$ with $\theta(\pp_2) = \theta_2^G$ is the entire graph.
We show that this is not the case by constructing linkages $\pp_1, \pp_2$ whose intersection pattern is relatively simple.

We need two additional tools to achieve this goal.}
We begin with ordering $(u,v)$-paths in a $k$-cylindrical graph in a clockwise fashion.

\begin{definition}
    Let $G$ be a $k$-cylindrical graph, $u \in V_{in}$, and $v \in V_{out}$. 
    Consider two distinct $(u,v)$ paths $P_1,P_2$ oriented from $u$ to $v$; let $w$ be the last vertex on their longest common prefix.
    
    When $w = u$, let $e_1, \dots, e_d$ be the clockwise ordering of $E_G(u)$ such that $e_1, e_d$ are incident with the face containing $I_{in}$.
    We write $P_1 \sqsubset P_2$ when the first edge of $P_1$ appears earlier in  $e_1, \dots, e_d$ than the first edge of $P_2$.
    
    When $w \ne u$, let $e$ be the edge preceding $w$ in both $P_1,P_2$ and $e_1, \dots, e_d$ be the clockwise ordering of $E(w) \sm e$ such that $e$ lies between $e_d, e_1$.
    We write $P_1 \sqsubset P_2$ when the edge following $w$ in $P_1$ appears earlier in  $e_1, \dots, e_d$ than the edge following $w$ in  $P_2$. \meir{Figure. \mic{done}}
\end{definition}

See \Cref{fig:cylindrical-order} for an example.
The relation $\sqsubset$ is transitive and it \mic{yields} a linear order on the family of $(u,v)$-paths in $G$.

\begin{figure}
    \centering
\includegraphics[scale=0.9]{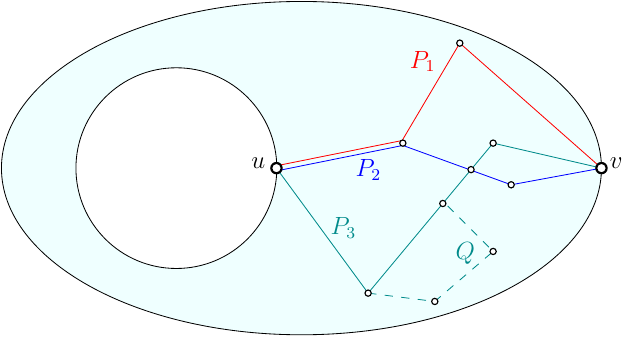}
\caption{Three $(u,v)$-paths satisfying $P_1 \sqsubset P_2 \sqsubset P_3$.
The ring has been deformed for a~better presentation.
The dashed path $Q$ is a clockwise handle of $P_3$.
}
\label{fig:cylindrical-order}
\end{figure}

\begin{definition}[Handle, clockwise-tightness]
    Let $G$ be a $k$-cylindrical graph, $u \in V_{in}$, and $v \in V_{out}$. 
    A path $Q$ is called a {\em handle} of $P$ is the endpoints of $Q$ lie on $P$ and $Q$ is internally disjoint from $P$.
    Let $P^Q$ be the $(u,v)$-path obtained from $P$ by replacing the subpath between the endpoints of $Q$ with the path $Q$.
    We say that $Q$ is a {\em clockwise handle} when $P \sqsubset P^Q$.

    We say that a cylindrical linkage $\pp$ is {\em clockwise-tight}
    if no $P \in \pp$ contains a clockwise handle internally disjoint from all the paths in $\pp$. 
\end{definition}

\mic{Intuitively, being clockwise-tight means that internal points of the paths in the linkage are maximally ``bent'' in the clockwise direction while maintaining disjointedness.
We show that every cylindrical linkage can be modified to be clockwise-tight.}

\begin{lemma}\label{lem:diam:clockwise-tight}
    Let $k \ge 2$ and $G$ be a $k$-cylindrical graph and $\theta \in \Theta^G$.
    There exists a clockwise-tight cylindrical linkage $\pp$ in $G$ with the winding number $\theta$. 
\end{lemma}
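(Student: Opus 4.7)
The plan is to pick a lexicographically extremal cylindrical linkage of winding number $\theta$ and argue that it must already be clockwise-tight. I would order the paths of any cylindrical linkage $\pp$ as $(P_0,\dots,P_{k-1})$, where $P_i$ is the unique path starting at $s_i$; the other endpoint $\sigma(s_i)\in V_{out}$ is then determined by $\theta$ (since all paths in a cylindrical linkage share the same winding number, the pairing of $V_{in}$ with $V_{out}$ is forced). Extend $\sqsubset$ to a lex order on such $k$-tuples: for $\pp,\pp'$ both of winding number $\theta$, compare them at the first index $i$ where $P_i\ne P_i'$; at that index the two paths share endpoints, so $\sqsubset$ totally orders them. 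Since $\theta\in\Theta^G$ and $G$ is finite, this lex order admits a maximum $\pp^*=(P_0^*,\dots,P_{k-1}^*)$.

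I would then show $\pp^*$ is clockwise-tight by contradiction. Assume some $P_i^*$ has a clockwise handle $Q$ internally disjoint from $\pp^*$, and form $\pp':=(P_0^*,\dots,P_{i-1}^*,(P_i^*)^Q,P_{i+1}^*,\dots,P_{k-1}^*)$. Vertex-disjointness of $\pp'$ is immediate: $V((P_i^*)^Q)\sub V(P_i^*)\cup V(Q)$, while $P_i^*$ avoids $\bigcup_{j\ne i}V(P_j^*)$ by disjointness of $\pp^*$ and $Q$ avoids it by the hypothesis on $Q$ (its endpoints sit on $P_i^*$). By the definition of clockwise handle $P_i^*\sqsubset(P_i^*)^Q$, so once $\pp'$ is known to be a cylindrical linkage with winding number $\theta$, it strictly exceeds $\pp^*$ in the lex order, contradicting maximality.

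The main step, and the principal obstacle, is verifying winding-number preservation — this is where planarity and the cylindrical structure are essential. Because $\pp^*$ is cylindrical, the closed annulus \ringfull minus $\bigcup_j V(P_j^*)$ consists of exactly $k$ open topological disks $R_0,\dots,R_{k-1}$, each bounded by two consecutive paths of $\pp^*$ and two arcs of $I_{in}\cup I_{out}$. In a plane graph the curve tracing $Q$ meets $\pp^*$ only at the two endpoints of $Q$ on $P_i^*$ (edges are simple curves that can intersect only at shared vertices), so the interior of $Q$ lies inside a single one of these open disks. By the definition of clockwise handle, that disk is the region $R$ adjacent to $P_i^*$ on its clockwise side at the divergence vertex. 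Then $Q$ together with the subpath $P''$ of $P_i^*$ it replaces jointly bounds a topological disk inside $R\cup P''$, so $Q$ and $P''$ are homotopic rel endpoints in the annulus. Homotopic curves share the same winding number, hence $(P_i^*)^Q$ still has winding number $\theta$, and consequently $\pp'$ is a cylindrical linkage with $\theta(\pp')=\theta$, closing the contradiction.
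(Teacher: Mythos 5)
Your proof is correct and is essentially the paper's argument in extremal rather than iterative form: the paper repeatedly applies clockwise-handle replacements and derives termination from the finiteness of $(u,v)$-paths together with $\sqsubset$ being a linear order, which is the same mechanism as taking your lex-maximal linkage and deriving a contradiction. Your homotopy argument for why a handle replacement preserves the winding number (the handle's interior lies in a single disk region between consecutive paths, so it is homotopic rel endpoints to the subpath it replaces) is a correct and welcome elaboration of the step the paper only asserts with the parenthetical ``here we use the assumption that $k\ge 2$''.
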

\begin{proof}
    Let $\pp$ be any cylindrical linkage in $G$ with the winding number $\theta$.
    We exhaustively apply the following modification to $\pp$:
    while there exists a path $P \in \pp$ with a clockwise handle $Q$  internally disjoint from  $\pp$, replace $P$ with $P^Q$.
    After such a replacement, we obtain a new cylindrical linkage in $G$ with the same winding number $\theta$ (here we use the assumption that $k \ge 2)$.
    We claim that this process must terminate in a finite number of steps.
    If not, an infinite number of replacements happens to a single path $P \in \pp$.
    Hence, there exists an infinite sequence of $(u,v)$-paths $P^1 \sqsubset P^2 \sqsubset \dots$. 
This is impossible because the relation $\sqsubset$ is a linear order and there are only finitely many different $(u,v)$-paths in~$G$. 
\end{proof}

\mic{The second tool is based on the following concept from topology,
used in the analysis of topological spaces with ``holes'', like a torus.
We only provide simple definitions, tailored for our applications. 
}

\begin{definition}[Covering]
    The {\em covering} of \ringfull is a function $\tau  \colon [1,2] \times \rr \to \mathbb{C}$  defined as $\tau((x,y)) = y \cdot \exp\br{\frac{-2i\pi}{k} \cdot x}$.
    We identify the image of $\tau$ with \ringfull. 
\end{definition}

\begin{observation}[Lifting]
    Let $G$ be a $k$-cylindrical graph, $u = (1, \frac{-2\pi}{k}\cdot p)$, $v = (2, \frac{-2\pi}{k}\cdot q)$ (in polar coordinates), and $P$ be a $(u,v)$-path.
    Then for every $\ell \in \mathbb{Z}$ there is a unique curve $P'$ in $[1,2] \times \rr$, called a {\em lifting} of $P$, that starts at $(1,\, \ell\cdot k + p)$, ends at $(2,\, \ell\cdot k + p + \theta(P))$, and $\tau(P') = P$.
    \mic{It holds that $p + \theta(P) \equiv q \mod k$.}
    Moreover, any liftings of two vertex-disjoint paths are disjoint.
\end{observation}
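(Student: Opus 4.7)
The plan is to recognize $\tau$ as a covering map of $\ringfull$ and derive the statement from the standard path-lifting lemma; the only bookkeeping is to translate the definition of the winding number into a displacement in the cover.

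First, I would check that $\tau\colon [1,2]\times \rr \to \ringfull$ is a covering. From the formula $\tau(x,y) = y\cdot \exp(-2\pi i x / k)$, the preimage of any $z = re^{i\phi} \in \ringfull$ equals $\{(x_0 + \ell k,\, r) : \ell \in \mathbb{Z}\}$ for any $x_0$ with $-2\pi x_0/k \equiv \phi \pmod{2\pi}$, and a short horizontal slab around each such preimage is mapped homeomorphically onto a small neighborhood of $z$. In particular, $\tau(1,\, \ell k + p) = \exp(-2\pi i p/k) = u$, so the prescribed basepoint indeed lies above $u$.

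Second, I would apply the standard path-lifting lemma. Viewing $P$ as a continuous curve $\gamma\colon [0,1]\to \ringfull$ from $u$ to $v$, this yields a unique continuous lift $P'\colon [0,1]\to [1,2]\times \rr$ with $P'(0) = (1,\, \ell k + p)$ and $\tau \circ P' = \gamma$. Writing $P'(t) = (x(t), r(t))$, the identity $r(t)\exp(-2\pi i x(t)/k) = \gamma(t)$ forces $r(t)$ to equal the radial component of $\gamma(t)$, while $-2\pi x(t)/k$ becomes a continuous choice of argument along $\gamma$. Since the paper measures $\theta(P)$ as $k/(2\pi)$ times the total clockwise angle swept by $\gamma$, and the minus sign in $\tau$ makes a clockwise swing on the annulus correspond to an increase of $x$ in the cover, I would obtain $x(1) - x(0) = \theta(P)$, hence the endpoint is $(2,\, \ell k + p + \theta(P))$. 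Applying $\tau$ to this endpoint recovers $v = 2\exp(-2\pi i q/k)$, which in turn forces $p + \theta(P) \equiv q \pmod k$.

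For the disjointness claim, I would use that two vertex-disjoint paths in the plane graph $G$ have disjoint images in $\ringfull$: they share no vertex, and the edges of a plane embedding cannot cross, so no interior point can be shared either. Consequently $\tau^{-1}(P_1) \cap \tau^{-1}(P_2) = \tau^{-1}(P_1 \cap P_2) = \emptyset$, and since every lift of $P_i$ is a subset of $\tau^{-1}(P_i)$, any two such lifts are disjoint. The only real nuisance I foresee is keeping the sign conventions straight (clockwise versus counterclockwise), and in particular making the direction of increase in $x$ consistent with how $\theta(P)$ is measured in the paper; beyond this sign bookkeeping, the statement is a routine consequence of covering-space theory and I do not anticipate further technical obstacles.
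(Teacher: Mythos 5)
Your proof is correct; the paper states this as an Observation without proof, and your covering-space argument (verifying that $\tau$ is a covering, applying the standard path-lifting lemma, identifying the displacement in the unwrapped angle coordinate with the clockwise winding number $\theta(P)$, and deducing disjointness of lifts from the fact that vertex-disjoint paths in a plane embedding have disjoint images) is exactly the standard justification the authors leave implicit. The one thing to tidy is the coordinate convention: the paper's formula $\tau((x,y)) = y\cdot\exp\br{\frac{-2i\pi}{k}x}$ has the roles of radius and angle swapped relative to the domain $[1,2]\times\rr$ and to the points $(1,\ell k+p)$, $(2,\ell k+p+\theta(P))$ in the statement, and your write-up inherits this mismatch (your preimage description treats the first coordinate as the angle while your evaluation $\tau(1,\ell k+p)=u$ treats it as the radius), so you should fix a single convention, e.g.\ $\tau((x,y)) = x\cdot\exp\br{\frac{-2i\pi}{k}y}$, and use it throughout.
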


These notions are depicted in \Cref{fig:covering}.
\mic{
Now we can  analyze two linkages with different winding numbers through their liftings in $[1,2] \times \rr$.
Here, we can take advantage of the fact that when
two disjoint curves connect points on a boundary of a topological disc, then these points cannot be intrinsically crossing.}

\begin{figure}
    \centering
\includegraphics[scale=0.9]{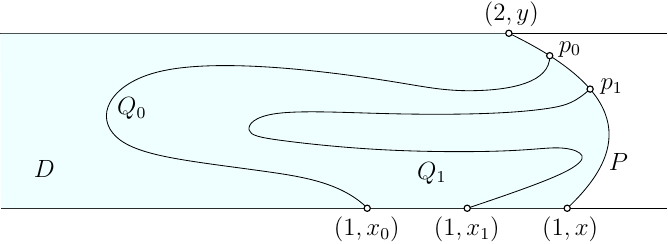}
\caption{An illustration of \cref{obs:diam:lifting-noncross}. The interior of the set $D$ is highlighted. The curves $Q_0, Q_1$ are disjoint so their endpoints on the boundary of $D$ cannot cross. 
}
\label{fig:lifting-noncross}
\end{figure}

\begin{observation}\label{obs:diam:lifting-noncross}
    Consider a curve $P$ in $[1,2] \times \rr$ which starts at $(1,x)$, ends at $(2,y)$, and is internally contained in $(1,2) \times \rr$.
    Let $D$ be the closure of the connected component of $([1,2] \times \rr) \sm P$ containing the point $(1,x-1)$. 
    Next, let $x_0 < x_1 \le x$ and $p_0, p_1 \in P$. 
    Suppose that there exist disjoint curves $Q_0,Q_1$ in $D$ such that
    $Q_0$ connects $(1,x_0)$ to $p_0$ and $Q_1$ connects $(1,x_1)$ to $p_1$.
    Then $p_0$ occurs later than $p_1$ on $P$ when considered oriented from $(1,x)$ to $(2,y)$. 
\end{observation}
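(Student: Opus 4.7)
The plan is to identify $D$ as a topological half-plane and invoke the standard non-crossing principle for disjoint arcs in a topological disc.

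First, I would observe that $P$, being a simple arc from $(1,x)$ to $(2,y)$ with interior in $(1,2)\times\rr$, separates the strip $[1,2]\times\rr$ into exactly two unbounded connected components; the component containing $(1,x-1)$ has closure $D$. By the Jordan--Schoenflies theorem, $D$ is homeomorphic to a closed half-plane, and its boundary $\partial D$ is the concatenation, traversed in order along the boundary line, of the ray $\{1\}\times(-\infty,x]$ going upward to $(1,x)$, the curve $P$ oriented from $(1,x)$ to $(2,y)$, and the ray $\{2\}\times(-\infty,y]$ going downward.

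Next, I would locate the four relevant points along $\partial D$. Since $x_0<x_1\le x$, the points $(1,x_0)$ and $(1,x_1)$ lie on the initial left ray and appear in this order on $\partial D$; note that if $x_1=x$, then $(1,x_1)$ coincides with the corner where the left ray meets $P$, which causes no problem. The points $p_0,p_1$ lie on $P$ and therefore come strictly later on $\partial D$ than both left-ray points. Moreover $p_0\ne p_1$, since $Q_0$ and $Q_1$ are disjoint curves and both contain these points as endpoints. Their relative order on $\partial D$ matches their relative order on $P$ with the orientation from $(1,x)$ to $(2,y)$.

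Finally, I would apply the classical non-crossing principle for a disc: two disjoint arcs embedded in a closed half-plane, each with both endpoints on the boundary, cannot have interleaved endpoints along that boundary. The arc $Q_0$ has endpoints $(1,x_0)$ and $p_0$, while $Q_1$ has endpoints $(1,x_1)$ and $p_1$. If $p_0$ occurred earlier than $p_1$ on $P$, the order along $\partial D$ would be $(1,x_0),(1,x_1),p_0,p_1$, and the endpoint pairs of $Q_0$ and $Q_1$ would interleave, contradicting disjointness. Hence $p_1$ precedes $p_0$ on $P$, which is exactly the claim. The main (and essentially only) obstacle is bookkeeping: writing down the boundary parameterization of $D$ cleanly and verifying the order of the four distinguished points, including the degenerate subcase $x_1=x$. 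No computation is needed beyond this topological setup.
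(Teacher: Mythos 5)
Your proof is correct and follows exactly the route the paper intends: the statement is given as an unproved observation, justified only by a figure and the preceding remark that disjoint curves in a topological disc cannot connect interleaved boundary points, which is precisely the principle you identify $D$ as a half-plane to apply. Your elaboration (boundary parameterization, ordering of the four points, the degenerate case $x_1=x$, and $p_0\ne p_1$ from disjointness) is a sound filling-in of that argument.
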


\mic{This observation is illustrated in \Cref{fig:lifting-noncross}.
We are ready to prove the main technical lemma about cylindrical graphs, showing that any two cylindrical linkages can be ``disentagled''.
}


\begin{figure}
    \centering
\includegraphics[scale=0.45]{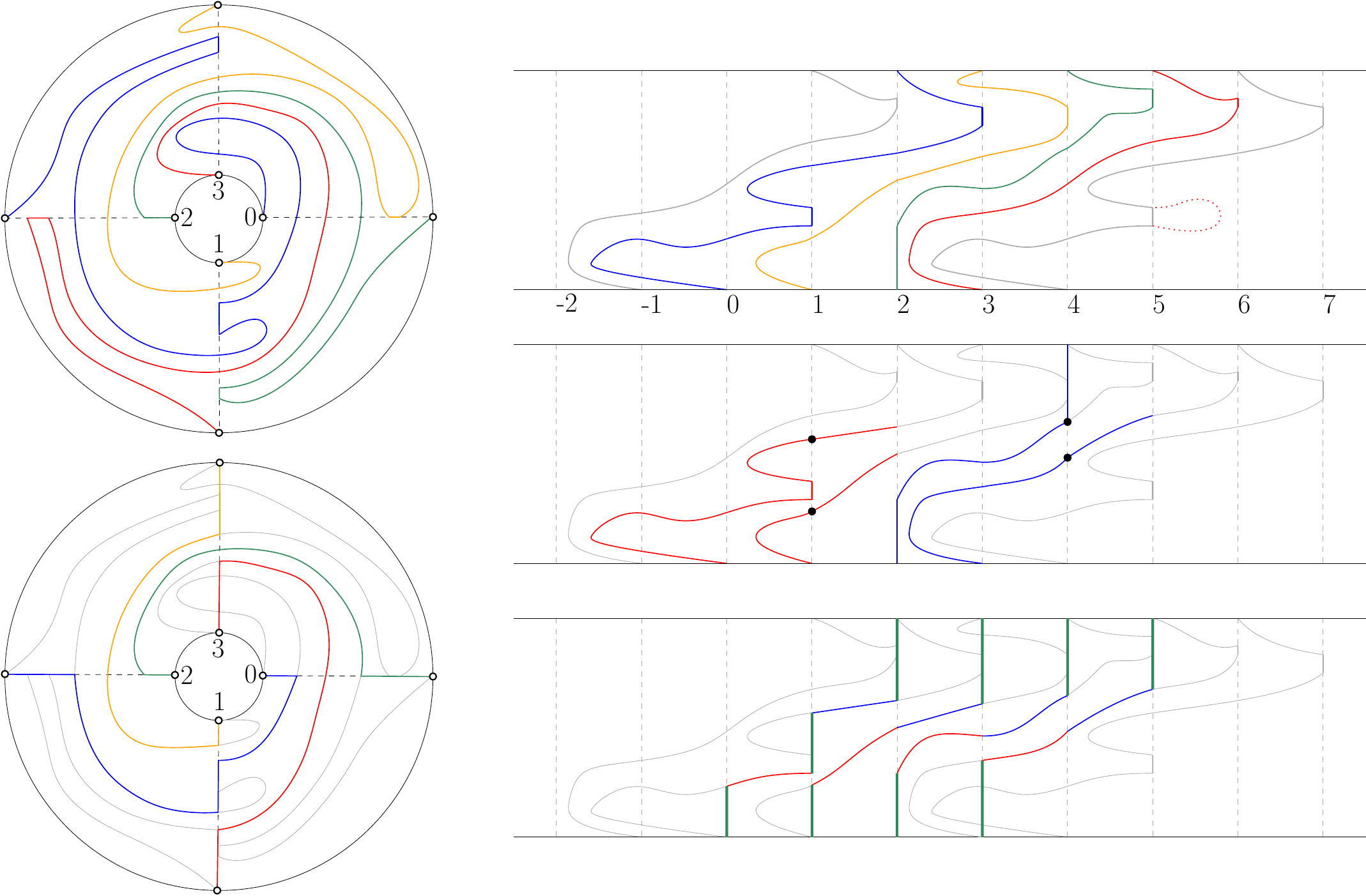}
\caption{An illustration for the proof of \cref{lem:diam:uncross-theta} with $k =4$, $\ell = 2$.
\newline \textcolor{white}{----} Top left: Two cylindrical linkages in a 4-cylindrical graph. For simplicity the linkage $\pp$ is drawn as straight dashed lines and the linkage $\qq$ is drawn in colors.
\newline \textcolor{white}{----} Top right: The covering of \ringfull and the liftings of the paths.
The curves $Q'_0, Q'_1, Q'_2, Q'_3$ are drawn in colors matching their images on the left.
Note that the curve $Q'_{-1}$ coincides with $Q'_3$ modulo a shift.
The same applies to $Q'_0$ and $Q'_4$.
Since $\pp$ is clockwise-tight, none path $P \in \pp$ can have a clockwise handle; hence there cannot be any curve like the red dotted one. 
\newline \textcolor{white}{----} Middle right: The curves $Q_0^2$, $Q_1^1$ (red), and $Q_2^3$, $Q_3^2$ (blue).
The third one is an example of a curve $Q_i^{\ell+1}$ which is not necessarily contained in $Q'_i$ due to the last straight segment.
This forms a special case for property (P5)
but this choice of definition guarantees property (P3).
The relative position of the black disks located on $P'_1$ and $P'_3$ is the subject of
\cref{claim:uncross-theta:later}.
\newline \textcolor{white}{----} Bottom right: The curves $\widehat Q_i^1$ are drawn in red, while the curves $\widehat Q_i^2$ are blue.
Together with the green segments they form the paths
$R'_0, R'_1, R'_2, R'_3$.
\newline \textcolor{white}{----} Bottom left: The images of paths $R'_i$ form the sought family $\mathcal{R}$.
}
\label{fig:covering}
\end{figure}

\begin{lemma}
\label{lem:diam:uncross-theta}
Let $G$ be a $k$-cylindrical graph and
$\theta_1 \le \theta_2 = \theta_1 + \ell$, where $\ell < k$.
If $\theta_1, \theta_2$ are feasible in $G$ then 
there exist cylindrical linkages $\pp, \mathcal{R}$ in $G$ such that 
$\theta(\pp) = \theta_1$,  $\theta(\mathcal{R}) = \theta_2$, and
 and for each $P \in \pp, R \in \mathcal{R}$ the intersection of $P$ and $R$ comprises at most one path.
\end{lemma}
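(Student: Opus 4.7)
}
The strategy is to work in the covering space $[1,2]\times\rr$, where the lifts of both linkages become properly embedded curves, and to construct $\mathcal{R}$ by splicing tails between the lifted $\qq$-paths and the lifted $\pp$-paths at their first crossings. By \cref{lem:diam:clockwise-tight}, I take $\pp$ to be a clockwise-tight cylindrical linkage with winding number $\theta_1$ and $\qq$ any cylindrical linkage with winding number $\theta_2=\theta_1+\ell$. Lifting through the covering $\tau$, each $P \in \pp$ produces disjoint curves $\{P'_i\}_{i\in\mathbb{Z}}$ (indexed so that $P'_i$ joins $(1,i)$ to $(2,i+\theta_1)$), and likewise $\qq$ produces $\{Q'_i\}_{i\in\mathbb{Z}}$ with $Q'_i$ joining $(1,i)$ to $(2,i+\theta_1+\ell)$. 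The curves inside each family are pairwise disjoint and internally contained in $(1,2)\times\rr$.

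Next, I analyze how the two families intersect. Since $Q'_i$ ends at height $i+\theta_1+\ell$ while $P'_{i+j}$ ends at height $i+j+\theta_1$, for every $j \in \{1,\dots,\ell\}$ a repeated application of \cref{obs:diam:lifting-noncross} (with the disc $D$ bounded by $P'_{i+j}$) forces $Q'_i$ to meet $P'_{i+j}$, and moreover forces the first intersection points $q_i^{\,j}$ along $Q'_i$ to appear in increasing order of $j$. I then define $Q_i^{\,j}$ as the initial segment of $Q'_i$ from $(1,i)$ up to $q_i^{\,j}$, and let $\widehat Q_i^{\,j}$ be the concatenation of $Q_i^{\,j}$ with the portion of $P'_{i+j}$ from $q_i^{\,j}$ down to $(2,i+j+\theta_1)$. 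The special case $j=\ell+1$ (as sketched in the figure caption) is handled by appending a short vertical segment along the outer boundary when $Q'_i$ does not actually cross $P'_{i+\ell+1}$; this is the only situation where a tail outside the original $Q'_i$ is needed, and it will be used to guarantee that the constructed curves end at the prescribed points.

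For each $i$, I define $R'_i$ by splicing $\widehat Q_i^{\,\ell}$ (the ``blue'' component) with a prefix taken from $\widehat Q_i^{\,\ell-1}$ or a neighbouring construction (the ``red'' component), closing off with the short boundary segments (the ``green'' pieces) so that $R'_i$ starts at $(1,i)$ and ends at $(2,i+\theta_1+\ell)$, i.e.\ has winding number $\theta_2$. The key property to verify is that the projection $\mathcal{R} = \{\tau(R'_i)\}$ is a cylindrical linkage whose members each meet every $P \in \pp$ in at most one connected component. Pairwise disjointness of $\{R'_i\}$ follows from the disjointness of liftings inside $\pp$ and inside $\qq$, together with the ordering of first-intersection points obtained above. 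The intersection count with $\pp$ comes from the \emph{clockwise-tightness} of $\pp$: if some $R'_i$ left $P'_{i+j}$ along $Q'_i$ and then returned to $P'_{i+j}$, the re-entering sub-arc of $Q'_i$ would project to a clockwise handle of $\tau(P'_{i+j})\in\pp$ internally disjoint from $\pp$, contradicting tightness.

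The main obstacle is the consistency of the splicing across different $i$'s: one must verify that the red and green pieces introduced for different base indices $i$ do not cross each other (so that the resulting $R'_i$'s are pairwise disjoint) and that every re-entry into an already-visited $P'_{j}$ can be ruled out by clockwise-tightness. This is the delicate step and will require combining \cref{obs:diam:lifting-noncross} for both the $P'_\bullet$ and the $Q'_\bullet$ families with a case analysis of the relative order of the $q_i^{\,j}$'s along $Q'_i$ and along $P'_{i+j}$. Once this combinatorial bookkeeping is in place, projecting the $R'_i$'s back through $\tau$ yields the desired linkage $\mathcal{R}$.
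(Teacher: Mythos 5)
Your high-level strategy is the same as the paper's: take $\pp$ clockwise-tight via \cref{lem:diam:clockwise-tight}, lift both linkages to $[1,2]\times\rr$, and build $\mathcal{R}$ by splicing pieces of the $Q'_i$ onto the $P'_{i+j}$. However, the construction you actually write down is wrong at the decisive point, and the step you defer is the heart of the proof.

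First, your $\widehat Q_i^{\,j}$ is defined as the \emph{entire} prefix of $Q'_i$ up to its first hit of $P'_{i+j}$, concatenated with a tail of $P'_{i+j}$; in particular your $\widehat Q_i^{\,\ell}$ already runs from $(1,i)$ to $(2,i+\theta_2)$, so the subsequent "splicing with a prefix of $\widehat Q_i^{\,\ell-1}$ and green boundary segments" is both redundant and undefined. More importantly, a curve containing the full prefix of $Q'_i$ does not satisfy the conclusion: clockwise-tightness only forbids excursions of $Q'_i$ off $P'_{i+j}$ into the region between $P'_{i+j}$ and $P'_{i+j+1}$ (those project to clockwise handles), but excursions on the other side are counterclockwise handles and are perfectly possible, so $Q'_i\cap P'_{i+j}$ may have several components for intermediate $j<\ell$. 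Your $R'_i$ would inherit all of them, and $\tau(R'_i)\cap P_{(i+j)\bmod k}$ would then have several components. The paper avoids this by taking $\widehat Q_i^{\,j}$ to be only the \emph{minimal suffix} of the minimal prefix $Q_i^{\,j}$ that starts on $P'_{i+j-1}$ — i.e.\ the short connector between the last departure from $P'_{i+j-1}$ and the first arrival at $P'_{i+j}$ — and by letting $R'_i$ follow $P'_{i+j}$ itself between consecutive connectors, which is exactly what makes each intersection $R'_i\cap P'_{i+j}$ a single subpath.

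Second, pairwise disjointness of the $R'_i$ does not "follow from the disjointness of liftings inside $\pp$ and inside $\qq$": the danger is that the arc of $P'_{i+j}$ used by $R'_i$ overlaps the arc of $P'_{i+j}$ used by $R'_{i+1}$. Ruling this out requires the quantitative ordering claim that the point where $Q_i^{\,j+1}$ meets $P'_{i+j}$ lies later on $P'_{i+j}$ than every point where $Q_{i+1}^{\,j}$ meets it; this is proved by combining clockwise-tightness (to confine the relevant prefixes to the disc below $P'_{i+j}$) with \cref{obs:diam:lifting-noncross}, plus a separate treatment of $j=\ell$ where $Q_i^{\,\ell+1}$ is no longer a subpath of $Q'_i$. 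You name this as "the delicate step" and leave it entirely open, so the proposal is a plan rather than a proof. Finally, you should also record the projection argument: since $\ell<k$, no $R'_i$ meets both $P'_j$ and $P'_{j+k}$, and each $R'_j$ with $i<j<i+k$ lies strictly between $R'_i$ and $R'_{i+k}$, which is what guarantees that the projected $R_0,\dots,R_{k-1}$ are vertex-disjoint and meet each $P\in\pp$ in at most one path.
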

\begin{proof}
    If $k=1$ or $\theta_1 = \theta_2$, then the claim is trivial so we can assume $k \ge 2, \ell \ge 1$.
    Let $\pp, \qq$ be cylindrical linkages with the winding numbers $\theta_1, \theta_2$.
    By \cref{lem:diam:clockwise-tight} we can assume that $\pp$ is clockwise-tight.
    We order the linkages in a clockwise manner: $\pp = P_0, \dots P_{k-1}, \qq = Q_0, \dots Q_{k-1}$, so that $P_i$ and $Q_i$ start at $(1, \frac{-2\pi}{k}i)$.
   Consider the covering $\tau \colon [1,2] \times \rr \to \ringfull$
   and the liftings of $\pp, \qq$.
   More precisely, we consider a unique infinite family $(P'_i)_{i \in \mathbb{Z}}$ of disjoint curves in $[1,2] \times \rr$  
    such that
    $P'_i$ is a path from $(1,i)$ to $(2,i + \theta_1)$ and $\tau(P'_i) = P_{(i \mod k)}$.
    Similarly we define the lifting $(Q'_i)_{i \in \mathbb{Z}}$ of $\qq$.
    Note that each curve $P'_i, Q'_i$ is internally contained in $(1,2) \times \rr$.

    Each curve $Q'_i$ must intersect $P'_i, \dots, P'_{i + \ell}$.
   For $i \in \mathbb{Z}$ and $j \in [1, \ell]$ we define $Q_i^j$ as the minimal prefix of $Q'_i$
    which ends at $P'_{i+j}$.
    Furthermore, let $\widehat Q_i^j$ be the minimal suffix of $Q_i^j$ which starts at $P'_{i+j-1}$. 
    In order to cover the corner cases, we define both $Q^0_i$ and $\widehat Q^0_i$ to be the trivial path from $(1,i)$ to $(1,i)$, $\widehat Q^{\ell+1}_i$ as the trivial path from $(2,i+\theta_2)$ to $(2,i+\theta_2)$ (the last point on $Q'_i$),
    and $Q^{\ell+1}_i$ as the concatenation of $Q^\ell_i$ with the subpath of $P'_{i+\ell}$ from the endpoint of $Q^\ell_i$ to $(2,i+\theta_2)$.
    We make note of the following properties that hold for each $i \in \mathbb{Z}$ and $j \in [0, \ell + 1]$:
    \begin{enumerate}
        \item [(P1)] $\widehat Q_i^j \sub  Q_i^j$, 
        \item [(P2)]$\widehat Q_i^j \sub Q'_i$, 
        \item [(P3)] 
    $Q_i^j$ is internally disjoint from $P'_{i+j}$,
        \item [(P4)] $\tau(Q_i^j)$ is a walk in $G$, 
        \item [(P5)] if $j \le \ell$ then $Q_i^j \sub Q'_i$.
     \end{enumerate}
    
    For $i \in \mathbb{Z}$ we define $R'_i$ as the unique path from $(1,i)$ to $(2,i + \theta_2)$ which is contained  
     in $P'_i \cup \widehat Q^1_i \cup P'_{i+1} \cup \widehat Q^{2}_i \cup \dots \cup \widehat Q^{\ell}_i \cup P'_{i+\ell}$
     \mic{(see \Cref{fig:covering}, bottom right).}
    The intersection of $R'_i$ with $P'_{i+j}$, for $j \in [0,\ell]$,
    is then a subpath of  $P'_{i+j}$ between the endpoints of $\widehat {Q}_{i}^{j}$ and $\widehat {Q}_{i}^{j+1}$.
    \mic{It holds that $\theta(\tau(R'_i)) = \theta_2$.}
     
\begin{claim}\label{claim:uncross-theta:later}
Let $i \in \mathbb{Z}$ and  $j \in [1, \ell]$.
Consider points $p_0, p_{1} \in  [1,2] \times \rr$ such that $p_0\in Q^{j+1}_i \cap P'_{i+j}$ and $p_1 \in Q^j_{i+1} \cap P'_{i+j}$.
Then $p_0$ occurs later than $p_1$ on $P'_{i+j}$, when considered oriented from $(1,i+j)$ to $(2,i+j+\theta_1)$. 
\end{claim}
\begin{innerproof} 
    Here we exploit the fact that $\pp$ is clockwise-tight.
    First consider the case $j < \ell$ as then, by property (P5), the paths $Q^{j+1}_i$ and $Q^{j}_{i+1}$ are disjoint as subpaths of $Q'_i, Q'_{i+1}$.
    By property (P3) both $Q^{j+1}_i$ and $Q^{j}_{i+1}$ are internally disjoint from $P'_{i+j+1}$.
    Let $\tilde Q'_i$ be the prefix of $Q^{j+1}_i$ ending at $p_0$ and
    $\tilde Q'_{i+1}$ be the prefix of $Q^{j}_{i+1}$ ending at $p_1$.   
    Next, let $D$ be the closure of the connected component of $([1,2] \times \rr) \sm P'_{i+j}$ containing the point $(1,i+j-1)$.
    Suppose that $\tilde Q'_i$ or $\tilde Q'_{i+1}$ contains a point $y \not\in D$.
    Then $\tilde Q'_i$ or $\tilde Q'_{i+1}$ 
    has a subpath $Q''$ with both endpoints on $P'_{i+j}$ and internally contained in the region of $[1,2] \times \rr$ between $P'_{i+j}$ and $P'_{i+j+1}$.
    By property (P4) the image $\tau(Q'')$ is a walk  in $G$ and it contains a clockwise handle of $\tau(P'_{i+j})$ which is disjoint from $\pp$; this contradicts $\pp$ being clockwise-tight.
    We obtain that $\tilde Q'_i$, $\tilde Q'_{i+1}$ lie entirely within $D$. 
    The claim follows from \cref{obs:diam:lifting-noncross}.

    Finally, consider the case $j = \ell$ where $Q^{\ell+1}_i$ is not necessarily a subpath of of $Q'_i$.
    However, for $p_0 \in Q^{\ell+1}_i$ chosen as the unique point on $Q^\ell_i \cap P'_{i+\ell}$ the path $\tilde Q'_i$, defined as above, is a subpath of $Q'_i$
    due to property (P5).
    \mic{See \Cref{fig:covering}, middle right.}
    In this case $Q'_i, Q'_{i+1}$ are again disjoint and the same argument applies.
    The general claim follows from the observation that any other point $p \in Q^{\ell+1}_i \cap P'_{i+\ell}$ occurs later than $p_0$ on $P'_{i+j}$.
\end{innerproof}

\begin{claim}\label{claim:uncross-theta:disjoint-covering}
The paths $(R'_i)_{i\in\mathbb{Z}}$ are pairwise disjoint.
\end{claim}
\begin{innerproof}
    It suffices to show that for every $i\in\mathbb{Z}$ the paths $R'_i, R'_{i+1}$ are disjoint.
    By property (P2) the paths of the form $\widehat Q^j_i$, $\widehat Q^{j'}_{i+1}$ belong to the disjoint paths $Q'_i, Q'_{i+1}$ so they cannot intersect each other.
    If  $R'_i, R'_{i+1}$ intersect then there must be  $j \in [1, \ell]$ so that 
    $R'_i \cap P'_{i+j}$ and $R'_{i+1}  \cap P'_{i+j}$ intersect.
    This may happen only if the subpath of $P'_{i+j}$ between the endpoints of $\widehat {Q}_{i}^{j}, \widehat {Q}_{i}^{j+1}$
    and the subpath of $P'_{i+j}$ between the endpoints of $\widehat{Q}_{i+1}^{j-1}, \widehat{Q}_{i+1}^{j}$ have a non-empty intersection.
    This is impossible due to property (P1) and \cref{claim:uncross-theta:later}.
\end{innerproof}

\mic{It follows from the construction that  $\tau(R'_i) = \tau(R'_j)$ whenever $i \equiv j \mod k$.  
We can thus define $R_0, \dots, R_{k-1}$ as the path family in $G$ such that $\tau(R'_j) = R_{(j \mod k)}$ for each $j \in \mathbb{Z}$.
Since $\ell < k$, no path $R'_i$ intersects both $P'_j$ and $P'_{j+k}$ for any $j \in \mathbb{Z}$.
As each intersection $R'_i \cap P'_j$ is either empty or a single path, we infer that also each intersection $R_i \cap P_j$ is either empty or a single path.

Consider  $0 \le i < j < k$: from \cref{claim:uncross-theta:disjoint-covering} we know that $R'_i$, $R'_{j}$ are disjoint.
Moreover, 
the path $R'_j$ is contained in the region $D$ of $[1,2] \times \rr$ between $R'_{i}$ and $R'_{i+k}$, exclusively.
Because $\tau(R'_i) = \tau(R'_{i+k}) = R_i$ and
$\tau(D) \cap R_i = \emptyset$,
we obtain that $R_i$ and $R_j$ form disjoint subsets of \ringfull; hence they are vertex-disoint paths. }
We conclude that $\mathcal{R} = R_0 \dots, R_{k-1}$ is the desired linkage with the winding number $\theta_2$ and single-path intersections with $\pp$.
\end{proof}

\mic{Since the union of two disentangled linkages cannot contain too many concentric cycles, we can now find an edge to which the criterion from \cref{lem:diam:criterion} applies.}

\begin{proposition}
\label{lem:diam:maximal}
Let $G$ be a $k$-cylindrical graph with $\rdist_G(V_{in},V_{out}) \ge (k+2)^2$.
Then there exists an edge $e \in E(G)$ such that
$G \sm e$ is $(V_{in} \cup V_{out})$-linkage-equivalent to $G$.
Furthermore, such an edge can be found in polynomial time.
\end{proposition}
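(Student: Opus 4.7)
The plan is to apply \cref{lem:diam:criterion} with linkages realizing the extremal winding numbers $\theta_1^G$ and $\theta_2^G$, together with an edge on a middle concentric cycle avoided by both. First, I would enumerate $\Theta^G$ using \cref{lem:diam:enumerate} to obtain $\theta_1^G$ and $\theta_2^G$; by their definition, $\theta_2^G - \theta_1^G \le k - 1 < k$, so \cref{lem:diam:uncross-theta} yields cylindrical linkages $\pp, \mathcal{R}$ with $\theta(\pp) = \theta_1^G$, $\theta(\mathcal{R}) = \theta_2^G$, and with every pairwise intersection $P \cap R$ (for $P \in \pp$, $R \in \mathcal{R}$) consisting of at most one connected path. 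Since $\rdist_G(V_{in},V_{out}) \ge (k+2)^2$, \cref{lem:prelim:tight} produces a tight $(V_{in}, V_{out})$-sequence of concentric cycles $C_1, \ldots, C_m$ with $m \ge (k+2)^2 - 1$, so the middle range $[2k+1, m-2k]$ contains $m - 4k \ge k^2 + 3$ cycles.

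The key step, and the main obstacle, is to exhibit $i \in [2k+1, m-2k]$ and $e \in E(C_i) \sm (\pp \cup \mathcal{R})$; I would obtain this by an Euler-type count on the planar-embedded subgraph $H := \pp \cup \mathcal{R}$. Since $\pp$ and $\mathcal{R}$ are each linkages of $k$ vertex-disjoint simple paths, $V(\pp) - E(\pp) = V(\mathcal{R}) - E(\mathcal{R}) = k$. Moreover, for distinct pairs $(P,R) \ne (P',R')$ the intersections $P \cap R$ and $P' \cap R'$ are vertex-disjoint (any common vertex would force $P = P'$ and $R = R'$ by the linkage property), and each of the $I \le k^2$ nonempty pairwise intersections $P \cap R$ is a path contributing $|V(P \cap R)| - |E(P \cap R)| = 1$. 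Hence
\[
V(H) - E(H) = (V(\pp) - E(\pp)) + (V(\mathcal{R}) - E(\mathcal{R})) - (V(\pp \cap \mathcal{R}) - E(\pp \cap \mathcal{R})) = 2k - I.
\]
Each nonempty intersection merges at most one pair of components, so $C(H) \le 2k$, and Euler's formula gives
\[
F(H) = E(H) - V(H) + 1 + C(H) \le (I - 2k) + 1 + 2k = I + 1 \le k^2 + 1.
\]
On the other hand, if every middle cycle were contained in $H$, then $H$ would contain $N \ge k^2 + 3$ pairwise nested simple closed curves, which partition the plane into $N + 1 \ge k^2 + 4$ open regions; each such region has positive $2$-dimensional area while $H$, being a planar graph, has empty $2$-dimensional interior, so each region contains at least one face of $H$. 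Thus $F(H) \ge k^2 + 4$, contradicting $F(H) \le k^2 + 1$, and some middle $C_i$ must contain an edge $e \notin \pp \cup \mathcal{R}$.

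For such an edge both $\pp$ and $\mathcal{R}$ persist in $G \sm e$, so $\theta_1^G$ and $\theta_2^G$ are still feasible there, and \cref{lem:diam:criterion} yields the $(V_{in} \cup V_{out})$-linkage-equivalence of $G \sm e$ with $G$. All the required operations, namely enumerating $\Theta^G$ via \cref{lem:diam:enumerate}, extracting $\pp, \mathcal{R}$ from the constructive covering-based argument in the proof of \cref{lem:diam:uncross-theta}, building the tight concentric sequence via \cref{lem:prelim:tight}, and scanning the $O(|E(G)|)$ edges on the middle cycles until one outside $\pp \cup \mathcal{R}$ is encountered, run in polynomial time, which gives the claimed algorithmic guarantee.
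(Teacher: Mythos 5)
Your proposal is correct and follows the paper's overall strategy: both proofs take the disentangled linkages $\pp,\mathcal{R}$ realizing $\theta_1^G,\theta_2^G$ from \cref{lem:diam:uncross-theta}, find an edge on a middle concentric cycle used by neither, and conclude via \cref{lem:diam:criterion}. The one step where you genuinely diverge is the existence of that edge. The paper argues by radial distance: if every cycle $C_i$ with $i\in[2k+1,m-2k]$ were contained in $H=\pp\cup\mathcal{R}$, then $\rdist_H(V_{in},V_{out})\ge m-4k\ge k+3$, whereas the single-path intersection property forces each region between consecutive paths of $\pp$ to meet at most $k$ subpaths of $\mathcal{R}$, giving $\rdist_H(V_{in},V_{out})\le k+1$. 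You instead bound the number of faces of $H$ via Euler's formula, using the same single-path intersection property to get $V(H)-E(H)=2k-I$ with $I\le k^2$ and hence $F(H)\le k^2+1$, while $k^2+3$ nested cycles inside $H$ would force $F(H)\ge k^2+4$. Both counts are correct and of comparable difficulty; your inclusion--exclusion bookkeeping (pairwise vertex-disjointness of the nonempty intersections $P\cap R$, each a path contributing $1$ to $V-E$) is sound.

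Two small remarks. First, you inherit an off-by-one already present in the paper: $\rdist_G(V_{in},V_{out})\ge(k+2)^2$ yields only $m\ge(k+2)^2-1$ cycles via \cref{lem:prelim:tight}, whereas \cref{lem:diam:criterion} is stated for $m\ge(k+2)^2$; this is immaterial since the interval $[2k+1,m-2k]$ remains nonempty and the downstream lemmas still apply, but strictly speaking the hypothesis of \cref{lem:diam:criterion} is not met as written. Second, for the polynomial-time claim you propose to explicitly extract $\pp,\mathcal{R}$ from the proof of \cref{lem:diam:uncross-theta}; that proof relies on the clockwise-tightening procedure of \cref{lem:diam:clockwise-tight}, for which the paper only establishes finite (not obviously polynomial) termination. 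The paper sidesteps this by using the linkages only in the nonconstructive existence argument and, algorithmically, simply testing every edge $e$ for whether $\theta_1^G$ and $\theta_2^G$ remain feasible in $G\sm e$ via \cref{lem:diam:enumerate}; you should adopt that route for the algorithmic part.
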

\begin{proof}
Let $C_1,\dots, C_m$ be a $(V_{in}, V_{out})$-sequence of concentric cycles in $G$ with
$m = (k+2)^2 - 1$.
We need to show that there exists an edge satisfying the requirements of \cref{lem:diam:criterion}.
Note that $\theta^G_2 <\theta^G_1 + k$. 
Let $\pp_1, \pp_2$ be the linkages from \cref{lem:diam:uncross-theta} such that $\theta(\pp_1) = \theta^G_1$, $\theta(\pp_2) = \theta^G_2$, and $G'$ be the union of $\pp_1, \pp_2$.
We claim that there exists $i \in [2k+1, m-2k]$ and $e \in E(C_i)$
such that $e \not\in E(G')$.
Suppose otherwise.
Then there is a $(V_{in}, V_{out})$-sequence of $m - 4k$ concentric cycles in $G'$ and so  $\rdist_{G'}(V_{in},V_{out}) \ge m - 4k \ge k + 3$.
Due to \cref{lem:diam:uncross-theta}, for each $P_1 \in \pp_1, P_2 \in \pp_2$ the intersection of $P_1$ and $P_2$ has at most one path.
Consider a region $R$ of \ringfull between two consecutive paths from $\pp_1$. 
Each path from $\pp_2$ has at most one subpath intersecting $R$, which gives at most $k$ subpaths in total.
As a consequence, $\rdist_{G'}(V_{in},V_{out}) \le k+1$ and we get a contradiction with the assumption that there is no edge $e$ obeying the specification above.
Therefore, there exists $i \in [2k+1, m-2k]$ and $e \in E(C_i)$ such that both linkages $\pp_1, \pp_2$ are present in $G \sm e$.
As both $\theta^G_1, \theta^G_2$ are feasible in $G \sm e$, the criterion from \cref{lem:diam:criterion} applies.

In order to detect such an edge, we simply enumerate all edges in $G$ and check for each $e$ whether $e$ satisfies the requirements of \cref{lem:diam:criterion}.
This can be done in polynomial time with \cref{lem:diam:enumerate}.
\end{proof}

\subsubsection{Finding an irrelevant edge}

\mic{We can now combine the two strategies for detecting an irrelevant edge to process any graph properly embedded in \ringfull with sufficiently many concentric cycles.

Note that the notation in the following lemma differs slightly from that in the outline.
The separators $V_{in}, V_{out}$ therein become here $S^1_{in}, S^1_{out}$, the separators $S_{in}, S_{out}$ become  $S^2_{in}, S^2_{out}$, and $S$ corresponds to $S^3_{in}$.}

\begin{figure}[t]
    \centering
\includegraphics[scale=1.3]{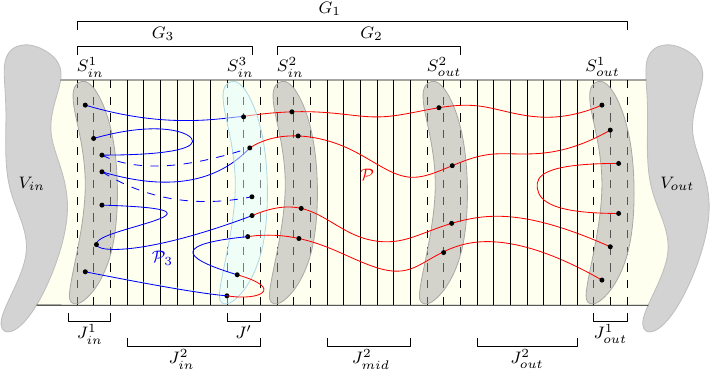} 
\caption{An illustration for the proof of \cref{lem:diam:ring-irrelevant}.
The planar structure is discarded here (cf. \Cref{fig:irrelevant-edge-outline}) and the innermost layers of the graph $G$ are portrayed to the left.
The separators $S^1_{in}, S^2_{in}, S^2_{out}, S^1_{out}$ are sketched gray while the separator $S^3_{in}$ is light blue.
The vertical lines represent the family of cycles $C_1,\dots,C_m$, where the dashed lines are the cycles intersecting one of the separators above.
An $(S^1_{in}, S^1_{out})$-linkage $\pp$ illustrates the argument for the case where $J^2_{in}$ is large.
The blue linkage $\pp_3$ is given by the intersections of $(S^1_{in}, S^1_{out})$-paths from $\pp$ with the subgraph $G_3$.
Since $S^2_{in}$ is the minimal $(S^1_{in}, S^1_{out})$-separator closest to $S^1_{in}$, there exists a $(S^1_{in}, S^3_{in})$-linkage larger than $\pp_3$: this is shown with the dashed paths. 
This observation allows us to use \cref{lem:diam:non-maximal-irrelevant} to find an edge in $G_3$ that is not needed for any choice of $\pp$.
}
\label{fig:irrelevant-edge}
\end{figure}

\begin{lemma}
\label{lem:diam:ring-irrelevant}
Let $G$ be a plane graph properly embedded in \ringfull and 
$C_1,\dots, C_m$ be a $(V_{in}, V_{out})$-sequence of concentric cycles.
Suppose that $m \ge 3(t+4)^2$ where $t = \tw(G) + 1$. 
Then there exists an edge $e \in E(G)$ such that
$G \sm e$ is $(V_{in} \cup V_{out})$-linkage-equivalent to $G$.
Furthermore, such an edge can be found in polynomial time.
\end{lemma}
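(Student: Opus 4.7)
The plan is to combine the two rerouting strategies outlined in \cref{sec:outline} — the non-maximal argument of \cref{lem:diam:non-maximal-irrelevant} and the cylindrical argument of \cref{lem:diam:maximal} — through a three-region decomposition of $G$. First, compute $p := \mu_G(V_{in}, V_{out})$; by \cref{lem:diam:sep-exists} applied to $C_1, C_m$ we have $p \le t - 1$. Using \cref{thm:prelim:closets-cut}, construct the minimum $(V_{in}, V_{out})$-separators $S^2_{in}$ closest to $V_{in}$ and $S^2_{out}$ closest to $V_{out}$ (both of size $p$); by \cref{lem:prelim:noose-uv} these admit representing $G$-nooses $\gamma_{in}, \gamma_{out}$, and the closestness property forces $\gamma_{in}$ to lie inside $\gamma_{out}$. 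Together with $I_{in}, I_{out}$ they carve \ringfull into three ring-regions $R_1, R_2, R_3$. By \cref{lem:prelim:separator-interval} each noose meets at most $p \le t$ consecutive cycles of $C_1, \dots, C_m$, so at least $m - 2t$ cycles lie entirely inside one of the $R_j$. Since $m \ge 3(t+4)^2$, pigeonhole yields a region holding at least $(t+2)^2$ of those cycles, and the irrelevant edge will be produced inside this ``deep'' region.

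If $R_2$ is the deep region, I will apply \cref{lem:diam:maximal}. After a homeomorphism of the annulus that places $S^2_{in}, S^2_{out}$ at equally spaced angles on circles of radii $1$ and $2$, the induced subgraph $G_2 := G[V(R_2)]$ becomes $p$-cylindrical because $\mu_{G_2}(S^2_{in}, S^2_{out}) = p = |S^2_{in}| = |S^2_{out}|$, and the $(t+2)^2 \ge (p+2)^2$ concentric cycles give $\rdist_{G_2}(V(G_2) \cap I_{in}, V(G_2) \cap I_{out}) \ge (p+2)^2$. Hence \cref{lem:diam:maximal} supplies an edge $e \in E(G_2)$ whose removal preserves $(S^2_{in} \cup S^2_{out})$-linkage-equivalence of $G_2$. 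Invoking \cref{lem:prelim:equivalent} with $U := V(R_2) \sm (S^2_{in} \cup S^2_{out})$, $Y := S^2_{in} \cup S^2_{out}$, and $X := V_{in} \cup V_{out}$ (noting $N_G(U) \sub Y$ because internal vertices of $R_2$ cannot see beyond its boundary) lifts the conclusion to $G \sm e$.

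If instead one of the extreme regions is deep — say $R_1$, the case of $R_3$ being symmetric — I will shift the outer boundary inward to create flow-slack. Select an inclusion-minimal $(V_{in}, V_{out})$-separator $S^3_{in}$ strictly on the $V_{in}$-side of $S^2_{in}$, e.g.\ by extracting one from a cycle lying deep in $R_1$. Because $S^2_{in}$ is the minimum separator closest to $V_{in}$, any $(V_{in}, S^3_{in})$-separator is itself a $(V_{in}, V_{out})$-separator strictly closer to $V_{in}$ than $S^2_{in}$, so it cannot have size $p$; hence $\mu_G(V_{in}, S^3_{in}) \ge p+1$. Let $R'_1$ be the ring between $I_{in}$ and the noose of $S^3_{in}$, which still contains on the order of $(t+2)^2$ concentric cycles. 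Apply \cref{lem:diam:non-maximal-irrelevant} to $G[V(R'_1)]$ with inner terminals $V_{in}$ and outer terminals $S^3_{in}$: using \cref{lem:diam:visitors}, I can assume that in any $\tcal$-linkage in $G$ the $(V_{in}, V_{in})$-sub-paths stay within the first $t$ cycles and therefore do not reach $S^3_{in}$ (provided the latter is placed below those cycles), so the restriction to $R'_1$ has only its $(V_{in}, V_{out})$-sub-paths crossing from $V_{in}$ to $S^3_{in}$ — at most $p < p+1 \le \mu_{G[V(R'_1)]}(V_{in}, S^3_{in})$ of them. This fulfils the ``non-maximal'' hypothesis, the proposition yields an edge $e$, and \cref{lem:prelim:equivalent} (applied with $U := V(R'_1) \sm (V_{in} \cup S^3_{in})$ and $Y := V_{in} \cup S^3_{in}$) lifts the result to $G$.

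Every ingredient runs in polynomial time, so the overall procedure is polynomial. The main obstacle I anticipate is placing the auxiliary separator $S^3_{in}$ carefully: it must be of size at most $t$ (available because $\tw(G) \le t-1$ bounds minimum separators between distant parts of $G$), strictly closer to $V_{in}$ than $S^2_{in}$, and deep enough that the $(V_{in}, V_{in})$- and $(V_{out}, V_{out})$-sub-paths provided by \cref{lem:diam:visitors} cannot reach it; all three properties can be satisfied by extracting $S^3_{in}$ as a min $(V_{in}, V(C_j))$-separator for a cycle $C_j$ positioned $\Theta(t)$ layers inside $R_1$, which is why the bound in the hypothesis has the form $3(t+4)^2$ rather than $(t+2)^2$.
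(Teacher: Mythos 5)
Your overall strategy matches the paper's: split the ring into three regions using the minimum separators closest to each boundary, handle a deep middle region with the cylindrical argument (\cref{lem:diam:maximal}) and a deep extreme region with the non-maximal argument (\cref{lem:diam:non-maximal-irrelevant}), then lift via \cref{lem:prelim:equivalent}. However, there is a genuine gap: you never bound the sizes of $V_{in}$ and $V_{out}$, and the lemma's hypotheses do not bound them either --- only $p=\mu_G(V_{in},V_{out})$ is at most $t-1$, while $|V_{in}|$ itself can be arbitrarily large (e.g.\ a long cycle on $I_{in}$ of which only $p$ vertices connect onward). This breaks the deep-interior/exterior case. Both \cref{lem:diam:visitors} and \cref{lem:diam:non-maximal-irrelevant} are quantified in terms of $\max(|V_{in}|,|V_{out}|)$: the former only guarantees that $(V_{in},V_{in})$-subpaths stay within the first $\max(|V_{in}|,|V_{out}|)$ cycles (not the first $t$, as you assert), and the latter requires $m\ge(\max(|V_{in}|,|V_{out}|)+2)^2$ concentric cycles and an edge at index in $[2\max(|V_{in}|,|V_{out}|)+1,\,m-2\max(|V_{in}|,|V_{out}|)]$. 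With only $3(t+4)^2$ cycles available and $|V_{in}|$ unbounded in $t$, these preconditions cannot be met.

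The paper fixes this with a preliminary step you omit: using \cref{lem:diam:sep-exists} and \cref{lem:prelim:noose-cycle} it first extracts separators $S^1_{in},S^1_{out}$ of size less than $t$ whose nooses sit within the first and last $t+2$ cycles, restricts attention to the subgraph $G_1$ between them, and invokes \cref{lem:prelim:equivalent} to reduce the task to finding an edge preserving $(S^1_{in}\cup S^1_{out})$-linkage-equivalence. Only after the boundary sets are thus replaced by sets of size $<t$ do the closest-separator decomposition and the two case analyses go through. Two smaller points: the two closest minimum separators need not be nested (their nooses may intersect, making the middle interval empty), so your claim that ``closestness forces $\gamma_{in}$ to lie inside $\gamma_{out}$'' should be weakened accordingly; and \cref{lem:prelim:noose-uv} requires the separator to be disjoint from the two sides, which a minimum $(V_{in},V_{out})$-separator need not be --- the paper circumvents this with \cref{lem:prelim:noose-cycle}.
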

\begin{proof}
Let $J^1_{in} = [1,t + 2]$ and $J^1_{out} = [m-t-1,m]$.  
By \cref{lem:diam:sep-exists} there exists a $(C_1,C_{t+2})$-separator $S^1_{in}$ of size less than $t$ and a 
$(C_{m-t-1},C_{m})$-separator $S^1_{out}$ of size less than $t$.
By \cref{lem:prelim:noose-cycle} there exist $G$-nooses $N^1_{in}, N^1_{out}$ contained respectively in $\ring(C_{1},C_{t+2})$, $\ring(C_{m-t-1},C_{m})$, so that $N^1_{in} \cap V(G) = S^1_{in}$ and likewise for $N^1_{out}$.
Let $G_1$ be the subgraph of $G$ induced by the vertices located in $\ring(N^1_{in}, N^1_{out})$.
By \cref{lem:prelim:equivalent} it suffices to find an edge $e \in  E(G_1)$ so that $G_1 \sm e$ is $(S^1_{in} \cup S^1_{out})$-linkage-equivalent to $G_1$.
This simplifies our task because $|S^1_{in}|, |S^1_{out}| < t$.

There are at least $m - 2(t+1) \ge 3(t+3)^2 + 2t$ cycles from $C_1,\dots, C_m$ lying entirely between $N^1_{in}$ and  $N^1_{out}$.
Let $S^2_{in}, S^2_{out}$ be the minimum-size $(S^1_{in}, S^1_{out})$-separators in $G_1$ that are closest to $S^1_{in}, S^1_{out}$, respectively.
They can be found in polynomial time (\cref{thm:prelim:closets-cut}).
Let $p$ denote $|S^2_{in}| = |S^2_{out}|$.
By \cref{lem:diam:sep-exists} we have $p < t$.
By \cref{lem:prelim:separator-interval}, each of $S^2_{in}, S^2_{out}$ intersects at most $p$ consecutive cycles from $C_1, \dots, C_m$.
Let $J^2_{in}, J^2_{mid}, J^2_{out}$ denote the intervals representing indices of cycles in $C_1, \dots, C_m$ which lie respectively: between $S^1_{in}$ and  $S^2_{in}$, between $S^2_{in}$ and $S^2_{out}$, between $S^2_{out}$ and  $S^1_{out}$. 
\mic{Note that the separators $S^2_{in}, S^2_{out}$ may intersect; in this case $J^2_{mid} = \emptyset$.}
We have $|J^2_{in}| +  |J^2_{mid}| + |J^2_{out}| \ge 3(t+3)^2$.
One of these intervals must contain at least $(t+3)^2$ elements.
We distinguish two scenarios.

\paragraph{Deep cylindrical subgraph in the middle.} First suppose that $|J^2_{mid}| \ge (t+3)^2$.
Let $N^2_{in}, N^2_{out}$ be the $G_1$-nooses corresponding to the separators $S^2_{in}, S^2_{out}$ and
$G_2$ be the subgraph of $G_1$ induced by the vertices lying in $\ring(N^2_{in}, N^2_{out})$. 
Note that $\mu_{G_2}(S^2_{in}, S^2_{out}) = p$ due to the choice of $S^2_{in}, S^2_{out}$.
We can thus transform $G_2$ in a homotopic way to a $p$-cylindrical graph $H$.
There may be many ways to obtain $H$ which differ by a relative cyclic shift between $S^2_{in}$, $S^2_{out}$ and give different sets $\Theta^H$ but we can choose an arbitrary one.
\cref{lem:diam:maximal} works regardless of the chosen embedding of $H$ and it gives a polynomial-time algorithm to find an edge $e \in E(H) = E(G_2)$ such that $G_2 \sm e$ is $(S^2_{in} \cup S^2_{out})$-linkage-equivalent to $G_2$.
\mic{Since $S^2_{in} \cup S^2_{out}$ separates the endpoints of $e$ from $S^1_{in} \cup S^1_{out}$, we obtain from \cref{lem:prelim:equivalent} that 
$G_1 \sm e$ is $(S^1_{in} \cup S^1_{out})$-linkage-equivalent to $G_1$.}
\meir{Add another sentence why it follows. \mic{done}}

\paragraph{Deep subgraph with a non-maximal linkage.} 
Suppose now that $|J^2_{in}| \ge (t+3)^2$ or $|J^2_{out}| \ge (t+3)^2$.
These cases are symmetric so we only examine the first one.
Let $J'$ be the subinterval of $J^2_{in}$ comprising its last $t+2$ elements
(representing the cycles closest to $S^2_{in}$).
By the same argument as before, there exists an $(S^1_{in}, S^2_{in})$-separator $S^3_{in}$ of size at most $t$ given by a $G_1$-noose $N^3_{in}$ which may intersect only these cycles from  $C_1, \dots, C_m$ with indices within $J'$. 
Let $G_3$ be the subgraph of $G_1$ induced by the vertices lying in $\ring(N^1_{in}, N^3_{in})$. 
There are at least $(t+3)^2 - (t+2) \ge (t+2)^2$ cycles from $C_1, \dots, C_m$ lying in  the interior of  $\ring(N^1_{in}, N^3_{in})$.
Since $S^2_{in}$ was chosen as the minimum $(S^1_{in}, S^1_{out})$-separator in $G_1$ closest to $S^1_{in}$, 
there are no 
 such separators within $\ring(N^1_{in}, N^3_{in})$ of size $p$ or smaller. 
This implies that $\mu_{G_3}(S^1_{in}, S^3_{in}) \ge p + 1$. 
As a result, $G_3$ satisfies the preconditions of \cref{lem:diam:non-maximal-irrelevant} with $\max(|S^1_{in}|, |S^3_{in}|) \le t$ and $s \ge p+1$; let $e \in E(G_3)$ be an edge provided by that proposition. 
We will show that $G_1 \sm e$ is $(S^1_{in} \cup S^1_{out})$-linkage-equivalent to~$G_1$.

Let $\pp$ be an $(S^1_{in} \cup S^1_{out})$-linkage in $G_1$.
Recall that $|S^1_{in}|, |S^1_{out}| \le t$.
Because the interval $J^2_{in} \sm J'$ is sufficiently long and due to \cref{lem:diam:visitors}, there exists a linkage $\pp'$ in $G_1$ that is aligned with $\pp$ and every inclusion-minimal $(S^1_{in} \cup S^1_{out})$-subpath of $P \in \pp'$ which is an $(S^1_{in}, S^1_{in})$-path does not intersect $S^3_{in}$. 
Let $\pp'_\textrm{long}$ be the family of inclusion-minimal $(S^1_{in} \cup S^1_{out})$-subpaths of paths in $\pp'$ which are $(S^1_{in}, S^1_{out})$-paths.
There are at most $p$ paths in $\pp'_\textrm{long}$ because every such path must intersect the separator $S^2_{in}$ of size $p$.
\mic{Observe that
every minimal $(S^1_{in} \cup S^1_{out})$-subpath in $\pp'$ that visits both $S^1_{in}, S^3_{in}$ belongs to $\pp'_\textrm{long}$.}
Let $\pp_3$ be a linkage in $G_3$ given by the maximal intersections of paths from $\pp'$ with $V(G_3)$; then $\pp_3$ is a $(S^1_{in} \cup S^3_{in})$-linkage (see \Cref{fig:irrelevant-edge}).
By the observation above, each $(S^1_{in}, S^3_{in})$-path $Q \in \pp_3$
\mic{intersects} some path $Q' \in \pp'_\textrm{long}$ and the mapping $Q \to Q'$ is injective.
We infer that $\pp_3$ contains at most $p$ many  $(S^1_{in}, S^3_{in})$-paths.
We apply \cref{lem:diam:non-maximal-irrelevant} 
with $s \ge p+1$ to derive that there exists a linkage in $G_3 \sm e$ aligned with $\pp_3$. 
As a result, $G_1 \sm e$ contains a linkage aligned with $\pp$.
This concludes the proof.
\end{proof}

\mic{Finally, we show that when a plane graph $G$ has sufficiently large radial diameter, then we can find a subgraph of $G$ to which \cref{lem:diam:ring-irrelevant} applies.
This allows us to detect an irrelevant edge in $G$.}

\begin{proposition}
\label{prop:diam:final}
Let $G$ be a plane graph, $X \sub V(G)$ be of size $k$, and $t = \tw(G) + 1$.
Suppose that the radial diameter of $G$ is at least $7(k+1)(t+5)^2$.
Then we can find, in polynomial time, an edge $e \in E(G)$ such that $G\sm e$ is $X$-linkage-equivalent to $G$.
\end{proposition}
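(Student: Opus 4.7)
The plan is to apply \cref{lem:diam:ring-irrelevant} to a subgraph of $G$ properly embedded in a ring that avoids all terminals in its interior, and then lift the resulting irrelevant edge back to $G$ via \cref{lem:prelim:equivalent}. Since the radial diameter of $G$ is large while $|X|=k$, there should be enough ``room'' to find such a ring disjoint from $X$.

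First, I would pick $u,v\in V(G)$ realizing the radial diameter $D\ge 7(k+1)(t+5)^2$ and invoke \cref{lem:prelim:concentric-radial} with singleton sets $\{u\}$ and $\{v\}$ to extract $D-1$ pairwise disjoint concentric cycles $C_1,\dots,C_{D-1}$, where each $V(C_i)$ is a $(u,v)$-separator whose vertices all have radial distance exactly $i$ from $u$.

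Next, I would classify each terminal $x\in X$ by its position in this cycle family: $x$ either lies on exactly one $C_i$ or strictly inside an open annulus between consecutive cycles. A terminal of the first type makes the index $i$ unusable as a boundary cycle, and each terminal (of either type) induces at most one ``cut'' that separates the usable indices into different zones. Altogether, the at most $k$ terminals produce at most $k$ forbidden indices and at most $k$ cuts, partitioning $\{1,\dots,D-1\}$ into at most $k+1$ contiguous intervals of allowed indices. A pigeonhole argument then yields one such interval of length at least $(D-1-k)/(k+1)\ge 7(t+5)^2-2 \ge 3(t+4)^2+2$, so I may select $a<b$ inside this interval with $b-a-1\ge 3(t+4)^2$ and with the property that no terminal of $X$ lies in $\disc(C_b)\sm\inter(\disc(C_a))$.

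Letting $G'$ be the subgraph of $G$ induced by the vertices in $\ring(C_a,C_b)$, the cycles $C_{a+1},\dots,C_{b-1}$ form a $(V(C_a),V(C_b))$-sequence of concentric cycles in $G'$ of length $\ge 3(t+4)^2$. Since $\tw(G')\le\tw(G)$, \cref{lem:diam:ring-irrelevant} applied to $G'$ with $I_{in}=C_a$ and $I_{out}=C_b$ produces in polynomial time an edge $e\in E(G')$ such that $G'\sm e$ is $(V(C_a)\cup V(C_b))$-linkage-equivalent to $G'$. Setting $Y=V(C_a)\cup V(C_b)$ and $U=V(G')\sm Y$ (which is disjoint from $X$ by the choice of $a,b$, and satisfies $N_G(U)\sub Y$ because $G'$ is properly embedded in the ring), \cref{lem:prelim:equivalent} lifts $e$ to an irrelevant edge of $G$. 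The main technical hurdle is the pigeonhole bookkeeping; the factor $7(k+1)$ in the hypothesis is calibrated precisely to leave enough slack after accounting for both forbidden indices and cuts to meet the $3(t+4)^2$ bound demanded by \cref{lem:diam:ring-irrelevant}.
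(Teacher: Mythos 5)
Your proposal is correct and follows essentially the same route as the paper's proof: extract a long sequence of concentric cycles from a radially distant pair via \cref{lem:prelim:concentric-radial}, pigeonhole over the at most $k$ terminal-occupied annuli to find a terminal-free sub-annulus with at least $3(t+4)^2$ concentric cycles, apply \cref{lem:diam:ring-irrelevant} there, and lift the edge back with \cref{lem:prelim:equivalent}. The only cosmetic difference is that the paper anchors one endpoint on the outer face (so the cycles are automatically nested around the other endpoint) and uses the triangle inequality to get depth $>3(k+1)(t+5)^2$, whereas you use a diametral pair directly, which works just as well after noting that disjoint $(u,v)$-separating cycles are linearly ordered by nesting.
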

\begin{proof}
Let $v$ be a vertex on the outer face of $G$.
By the triangle inequality we obtain that there must exist a vertex $u$ with $\rdist_G(u,v) > 3(k+1)(t+5)^2$.
By \cref{lem:prelim:concentric-radial} there exists a $(\{u\}, \{v\})$-sequence of concentric cycles  $C_1, \dots, C_m$, where $m = 3(k+1)(t+5)^2$.
For $i\in [m-1]$ let $V_i = V(G) \cap( \disc(C_{i+1}) \sm \disc(C_{i}))$.
These sets are disjoint and there may be at most $k$ of them containing a vertex from $X$.
Hence, there is an interval $J \sub [m-1]$ of length $(m - 1 - k) / (k+1) \ge 3(t+4)^2 + 1$ where $V_j \cap X = \emptyset$ for $j \in J$.
Let $i = \min(J)$, $j = \max(J)$, and $U$ be the set of vertices lying between $C_i$ and $C_{j+1}$.
Then $U \cap X = \emptyset$.
Let $G' = G[U \cup V(C_i) \cup V(C_{j+1})]$;
then $\tw(G') \le \tw(G) = t-1$ and $G'$ is properly embedded in \ringfull for some curves $I_{in}, I_{out}$
such that $I_{in} \cap G' = C_i$ and $I_{out} \cap G' = C_{j+1}$.
Furthermore, $C_{i+1}, \dots, C_{j}$ forms a $(V(C_i), V(C_{j+1}))$-sequence of concentric cycles in $G'$ of length $3(t+4)^2$.
We apply \cref{lem:diam:ring-irrelevant} to find an edge $e \in E(G')$ such that $G' \sm e$ is $(V(C_{i}) \cup V(C_{j+1}))$-linkage equivalent to $G'$.
It follows from \cref{lem:prelim:equivalent} that $G \sm e$ is $X$-linkage-equivalent to $G$.
\end{proof}

\subsection{Single-face case}
\label{sec:tw:single-face}

Let $G$ be a plane graph properly embedded in $\disc(I)$ and $X = V(G) \cap I$.
We say the a set of disjoint pairs $\tcal \sub X^2$ is {\em cross-free} if it does not contain pairs $(a,c), (b,d)$ so that $a,b,c,d$ lie on $I$ in this order.
A division $X = X_1 \cup X_2$ is called {\em canonical} if both $X_1,X_2$ are non-empty and there are points $y_1, y_2 \in I$
so that $X_1, X_2$ belong to different connected components of $I \sm \{y_1,y_2\}$.
We define $\mu_\tcal(X_1,X_2)$ to be the number of pairs in $\tcal$ with one element in $X_1$ and the other one in $X_2$.

\mic{It turns out that when a graph is properly embedded in a disc and all the terminals occur at the boundary of the disc, then again the cut-condition is sufficient for a linkage to exist.}

\begin{lemma}[{\cite[Lem. 3.6]{GM6}}]
\label{lem:single:criterion}
Let $G$ be properly embedded in $\disc(I)$, $V(G) \cap I = X$, and  $\tcal \sub X^2$.
Then $\tcal$ is realizable in $G$ if and only if $\tcal$ is cross-free and for every canonical division $(X_1,X_2)$ of $X$ it holds that $\mu_G(X_1,X_2) \ge \mu_\tcal(X_1,X_2)$.
\end{lemma}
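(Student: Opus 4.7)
The forward (necessity) direction is direct. If $\tcal$ contained pairs $(a,c), (b,d)$ with $a,b,c,d$ appearing in this cyclic order on $I$, then an $(a,c)$-path $P_1 \sub \disc(I)$ together with the arc of $I$ from $a$ to $c$ passing through $b$ bounds a topological disc containing $b$ but not $d$; hence any $(b,d)$-path must cross $P_1$, forcing a shared vertex. The cut condition is necessary because a $\tcal$-linkage contains $\mu_\tcal(X_1,X_2)$ vertex-disjoint $(X_1,X_2)$-paths, each using a distinct vertex of every $(X_1,X_2)$-separator.

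For sufficiency I would induct on $|V(G)| + |E(G)|$, with the trivial base case $\tcal = \emptyset$. Assuming $\tcal \ne \emptyset$, the first step is to use cross-freeness to locate an \emph{outermost} pair $(s,t) \in \tcal$, i.e., one for which one of the two arcs of $I \sm \{s,t\}$, call it $J$, contains no element of $V_\tcal \sm \{s,t\}$. Viewing $\tcal$ as a non-crossing chord diagram on the circle $I$, such an outermost chord always exists.

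The second step is to realize this single pair by a canonical path and peel it off. Applying the cut condition to the canonical division $(\{s\}, X \sm \{s\})$ gives $\mu_G(\{s\}, X \sm \{s\}) \ge 1$, so some $(s,t)$-path exists. Choose $P$ to be the $(s,t)$-path that is \emph{outermost on the $J$ side}, i.e., such that the simple closed curve $P \cup J$ bounds a region $D_J \sub \disc(I)$ which is inclusion-minimal among such choices. Let $X_J = X \cap J$ and define $G' = G - (V(P) \cup X_J)$; by construction $G'$ is properly embedded in a new disc $\disc(I')$, where $I'$ is obtained from $I$ by replacing $J$ (together with $s,t$) with the ``inner'' side of $P$, and the new boundary terminals are $X' = X \sm (\{s,t\} \cup X_J)$.

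The main obstacle is verifying that the pair $(G', \tcal')$ with $\tcal' = \tcal \sm \{(s,t)\}$ still satisfies both conditions, so that the induction hypothesis applies and the resulting linkage can be concatenated with $P$. Cross-freeness transfers immediately. For the cut condition, given any canonical division $(X'_1, X'_2)$ of $X'$, one lifts it to a canonical division $(X_1, X_2)$ of $X$ by assigning $s,t$ and the terminals of $X_J$ to the two sides determined by extending the split-points of $I'$ along $I$. Then $\mu_\tcal(X_1,X_2) = \mu_{\tcal'}(X'_1,X'_2) + \mathbf{1}_{[(s,t)\text{ separated by }(X_1,X_2)]}$, while the cut condition in $G$ gives $\mu_G(X_1,X_2) \ge \mu_\tcal(X_1,X_2)$. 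The crux is then to show $\mu_{G'}(X'_1,X'_2) \ge \mu_G(X_1,X_2) - \mathbf{1}_{[(s,t)\text{ separated}]}$: one argues, using the outermost choice of $P$, that any minimum $(X_1,X_2)$-separator in $G$ can be rerouted so as to intersect $V(P)$ in at most one vertex, and only when $(s,t)$ is split by $(X_1,X_2)$. This rerouting/bookkeeping argument—essentially a planar uncrossing between the separator nooses from \cref{lem:prelim:noose-uv} and the path $P$—is the technical heart of the proof; once secured, induction yields a $\tcal'$-linkage in $G'$ which, together with $P$, constitutes a $\tcal$-linkage in $G$.
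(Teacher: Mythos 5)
The paper does not prove this lemma at all --- it is imported verbatim from Robertson and Seymour \cite[Lem.~3.6]{GM6} --- so there is no in-paper argument to compare against; what matters is whether your sketch stands on its own. Your overall plan (peel off an outermost pair of the non-crossing chord diagram along an extremal path and induct) is the standard route to such disc-linkage criteria, but as written it has two genuine gaps. The smaller one: the existence of an $(s,t)$-path does \emph{not} follow from $\mu_G(\{s\}, X \sm \{s\}) \ge 1$; that only says $s$ reaches \emph{some} vertex of $X$, and $s$ and $t$ could still lie in different components. One can recover the conclusion from the cut condition, but it needs a different division: letting $C_s$ be the component of $s$, take an inclusion-minimal arc of $I$ with endpoints in $C_s \cap I$ whose interior contains $t$, split there, and use planarity to argue that no component of $G$ meets both sides, so $\mu_G = 0 < \mu_\tcal$. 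A related technicality: with $G' = G - (V(P) \cup X_J)$, vertices lying strictly inside $D_J$ but off $P$ survive yet fall outside the new disc, so the ``properly embedded in $\disc(I')$'' claim fails; you should delete everything of $G$ in $D_J$.

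The larger gap is precisely the step you flag as the technical heart and then do not carry out: that a minimum $(X'_1,X'_2)$-separator of $G'$ extends to an $(X_1,X_2)$-separator of $G$ by adding at most one vertex of $P$, and only when $(s,t)$ is split by the division. This is essentially the entire content of the sufficiency direction, and it is not routine: even when $s$ and $t$ lie on the same side, an $(X_1,X_2)$-path of $G$ may bypass the separator through $V(P)$ or through the deleted region $D_J$, so one must represent the separator by a curve with endpoints on the boundary, case-analyze where those endpoints land relative to $P$ and $J$, and invoke the outermost choice of $P$ (via an exchange argument) to show the curve need not meet $P$ more than once; the bookkeeping identity for $\mu_\tcal$ also depends on how a split point landing on the $P$-portion of $I'$ is assigned to $X_1$ or $X_2$. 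Until that lemma is actually proved, the proposal is a plausible plan rather than a proof.
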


\mic{Our goal now is to compress a given graph with terminal set $X$ located on the boundary of the disc, to an $X$-linkage-equivalent graph of size $|X|^{\Oh(1)}$.
By the lemma above, it is sufficient to preserve the sizes of minimum $(X_1,X_2)$-separators for all canonical divisions  $(X_1,X_2)$.
Our strategy is to mark $|X|^{\Oh(1)}$ vertices covering all the relevant separators and then replace the remaining parts of the graph with gadgets that are ``at least as good''.}

\begin{lemma}
\label{lem:single:gadget}
Let $I$ be a noose and $X \sub I$ be a finite set of size $k$.
There exists a plane graph $H$ 
properly embedded in $\disc(I)$
on at most $k^2$ vertices such that $H \cap I = X$ and every cross-free $\tcal \sub X^2$ is realizable in $H$.
This graph can be constructed in time polynomial in~$k$.
\end{lemma}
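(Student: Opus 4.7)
The plan is to apply \cref{lem:single:criterion}: since any cross-free $\tcal\subseteq X^2$ satisfies $\mu_\tcal(X_1,X_2)\le\min(|X_1|,|X_2|)$ for every canonical division $(X_1,X_2)$, it suffices to build a plane graph $H$ properly embedded in $\disc(I)$ with $H\cap I=X$, $|V(H)|\le k^2$, and the uniform cut bound
\[
\mu_H(X_1,X_2)\ \ge\ \min(|X_1|,|X_2|)
\]
for every canonical division of $X$. Any cross-free $\tcal$ is then realizable in $H$ by \cref{lem:single:criterion}.

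I would take $H$ to be the $k\times k$ square grid. Formally, set $V(H)=\{y_{i,j}\colon 1\le i,j\le k\}$ with edges $y_{i,j}y_{i+1,j}$ and $y_{i,j}y_{i,j+1}$ wherever both endpoints exist, and identify $x_i:=y_{i,1}$ in the cyclic order given by $I$. Embed $H$ in $\disc(I)$ so that the first row $\{y_{i,1}\}_{i=1}^{k}$ lies on $I$ and the remaining $k-1$ rows lie strictly inside $\disc(I)$. This is a plane graph on exactly $k^2$ vertices with $H\cap I=X$, constructible in polynomial time.

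To verify the cut bound, fix a canonical division $(X_1,X_2)$ and put $a=\min(|X_1|,|X_2|)\le k/2$. The arc of $X_1$ on $I$ corresponds either to a consecutive block of columns of the first row (the non-wrap-around case) or to a prefix together with a suffix of columns (the wrap-around case). I would then exhibit $a$ vertex-disjoint $(X_1,X_2)$-paths by pairing the terminals nearest each boundary point of the two arcs and routing the $r$-th pair along the $r$-th row in a ``down the column, across the row, up the other column'' pattern. Pairs assigned to different rows are column-disjoint on their horizontal stretch and live in disjoint column segments on their vertical stretches, so the family is vertex-disjoint. In the wrap-around case I would split the $a$ available rows into two disjoint blocks $\{1,\dots,a_1\}$ and $\{a_1+1,\dots,a\}$, one per boundary point, which is feasible because $a\le k/2\le k$. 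Menger's theorem then gives $\mu_H(X_1,X_2)\ge a$, and the lemma follows.

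The main obstacle is the careful bookkeeping of the routing in the wrap-around case, where the two ends of the arc of $X_1$ lie at opposite sides of the first row of the grid and the corresponding nested pairings must be realized by vertex-disjoint paths sharing a common grid; the inequality $a\le k/2$ is precisely what guarantees that the two row-blocks fit inside the $k$ rows of the grid.
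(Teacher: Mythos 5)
Your proposal is correct and follows essentially the same route as the paper: reduce to the cut condition via \cref{lem:single:criterion} and exhibit a universal $k^2$-vertex gadget in which every canonical division $(X_1,X_2)$ admits $\min(|X_1|,|X_2|)$ disjoint paths. The only difference is the choice of gadget — the paper uses the $(k,k)$-cylindrical grid $C_k^k$, whose rotational symmetry makes every canonical division look the same and thus avoids the wrap-around case that your square grid forces you to handle separately.
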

\begin{proof}
\mic{
    The lemma is trivial for $k \le 2$, so we will assume $k \ge 3$.
    We use the $(k,k)$-cylindrical grid $C_{k}^k$ (\cref{def:diam:grid}), which has $k^2$ vertices, and identify the vertices on the outer cycle $C^k$ with the points from $X$.
    It is easy to see that $C_{k}^k$ can be properly embedded in $\disc(I)$ in such a way that $C^k_k \cap I = C^k$.
    We argue that $C^k_k$ satisfies the lemma using the criterion from \cref{lem:single:criterion}.
    We claim that for any canonical division $(X_1,X_2)$ of $X = C^k$ it holds that  $\mu_{C^k_k}(X_1,X_2) = \min(|X_1|, |X_2|)$.
    Let $p =  \min(|X_1|, |X_2|)$ and
    assume w.l.o.g. that $X_1 = \{c^k_i \mid i \in [p]\}$.
    For $i \in [p]$ let $P_i$ be the unique $(c^k_i,c^k_{k+1-i})$-path contained in the paths $C_i$, $C_{k+1-i}$ and the subpath of $C^{k+1-i}$ between $C_i$, $C_{k+1-i}$ that contains the vertex $c^{k+1-i}_1$.
    Then $P_1, \dots, P_\ell$ form an $(X_1,X_2)$-linkage implying that $\mu_{C^k_k}(X_1,X_2) \ge p$.
    In fact, we get equality because each of the sets $X_1,X_2$ is an $(X_1,X_2)$-separator.
    For any $\tcal \sub X^2$ it holds that $\mu_\tcal(X_1,X_2) \le  \min(|X_1|, |X_2|)$.
    Hence any cross-free $\tcal$ is realizable in $C^k_k$ due to \cref{lem:single:criterion}.
    }
\end{proof}

\mic{The following fact will come in useful for estimating the number of necessary gadgets.}

\begin{lemma}[{\cite[Lem. 13.3]{fomin2019kernelization}}]
\label{lem:single:bipartite}
Let $G$ be a planar graph, $X \subseteq V(G)$, and let $N_3$ be a set of vertices from $V(G) \setminus X$ such that every vertex from $N_3$ has at least three neighbors in {$X$}. Then $|N_3| \le 2\cdot |X|$.
\end{lemma}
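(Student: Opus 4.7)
The plan is a direct Euler-formula argument on a bipartite auxiliary graph. First, build $H$ as the bipartite planar graph with parts $X$ and $N_3$ and edge set
\[
E(H) \;=\; \{\,xv : x \in X,\; v \in N_3,\; xv \in E(G)\,\},
\]
ignoring all edges of $G$ inside $X$ (or elsewhere). Since $H \subseteq G$ and $G$ is planar, $H$ is planar; since $E(H) \subseteq X \times N_3$, $H$ is bipartite; and since $G$ is simple, so is $H$.

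The hypothesis that every $v \in N_3$ has at least three neighbors in $X$ translates into the lower bound $|E(H)| \ge 3|N_3|$, because each vertex of $N_3$ contributes at least three distinct edges. For the upper bound, I will use the standard consequence of Euler's formula for simple bipartite planar graphs with at least three vertices: the absence of triangles (girth $\ge 4$) gives $|E(H)| \le 2|V(H)| - 4 = 2|X| + 2|N_3| - 4$.

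Combining the two inequalities,
\[
3|N_3| \;\le\; |E(H)| \;\le\; 2|X| + 2|N_3| - 4,
\]
which rearranges to $|N_3| \le 2|X| - 4 \le 2|X|$, as claimed. The only corner case is $|V(H)| < 3$: if $|X| \le 2$, then no vertex of $V(G) \setminus X$ can have three neighbors in $X$, so $N_3 = \emptyset$ and the statement is trivially true; this lets us assume $|V(H)| \ge 3$ whenever the bipartite planar edge bound is invoked.

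I do not anticipate any real obstacle here, as the whole argument is two applications of Euler's formula. The only subtlety worth flagging explicitly is that one must restrict $E(H)$ to cross-edges between $X$ and $N_3$ to keep $H$ bipartite; discarding the intra-$X$ edges of $G$ does not affect the degree lower bound on $N_3$-vertices, so both bounds remain valid and the computation closes.
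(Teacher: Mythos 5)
Your proof is correct. The paper does not prove this lemma itself — it imports it from the kernelization textbook — and your Euler-formula argument on the bipartite subgraph between $X$ and $N_3$ (degree lower bound $3|N_3| \le |E(H)|$ versus the bipartite planar upper bound $|E(H)| \le 2|V(H)|-4$) is exactly the standard proof of that cited result, with the degenerate case $|X|\le 2$ handled appropriately.
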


\begin{proposition}
\label{prop:single:final}
Let $G$ be properly embedded in $\disc(I)$, $V(G) \cap I = X$,
and $k = |X|$.
One can construct, in polynomial time, a graph $\widehat G$ on $\Oh(k^6)$ vertices, properly embedded in $\disc(I)$, that is $X$-linkage-equivalent to $G$ and with $V(\widehat G) \cap I = X$.
\end{proposition}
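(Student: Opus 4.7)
The plan is to construct a mimicking network preserving the vertex-cuts that matter according to Lemma~\ref{lem:single:criterion}. That criterion reduces $X$-linkage-equivalency of $G$ and $\widehat G$ to preserving $\mu(X_1, X_2)$ for every canonical division $(X_1, X_2)$ of $X$. Since a canonical division is determined by two points of $I$ splitting the cyclic sequence $X$ into two non-empty arcs, there are only $\binom{k}{2} = \Oh(k^2)$ of them. The main task is therefore to cover all the relevant separators by a small marked set and then realize the required interface cuts via small local gadgets.

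For each canonical division $(X_1, X_2)$ I would compute the minimum $(X_1, X_2)$-separator $S_{X_1, X_2}$ closest to $X_1$ via Theorem~\ref{thm:prelim:closets-cut}, and mark all of its vertices; since $X_1$ itself is a valid separator, $|S_{X_1, X_2}| \le k$. Setting $M = X \cup \bigcup_{(X_1,X_2)} S_{X_1,X_2}$ yields $|M| = \Oh(k^3)$. By Lemma~\ref{lem:prelim:noose-uv}, each $S_{X_1,X_2}$ is represented by a $G$-noose $\gamma_{X_1, X_2} \sub \disc(I)$ passing exactly through $S_{X_1,X_2}$. The arrangement of these $\Oh(k^2)$ nooses, together with $I$, subdivides $\disc(I)$ into regions $\mathcal{R}$ whose interior vertices lie outside $M$ while whose boundary vertices of $V(G)$ lie in $M$. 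For each region $R \in \mathcal{R}$ I would discard the interior vertices of $V(G) \cap R$ and insert a fresh gadget from Lemma~\ref{lem:single:gadget} properly embedded inside $R$, whose boundary coincides with the marked vertices $M \cap \partial R$.

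Correctness is verified via Lemma~\ref{lem:single:criterion}. The inequality $\mu_{\widehat G}(X_1, X_2) \le \mu_G(X_1, X_2)$ is automatic because $S_{X_1, X_2} \sub V(\widehat G)$ is still an $(X_1, X_2)$-separator in $\widehat G$. For the reverse direction, given a realizable cross-free $\tcal$, I would proceed region by region: the restriction of a $\tcal$-linkage to a region's interface is itself cross-free with respect to the cyclic order inherited from $I$, hence realizable inside the gadget by Lemma~\ref{lem:single:gadget}; these local realizations then stitch into a global linkage. For the size bound, any two nooses only cross at vertices of $M$, so a standard planar-arrangement count yields $|\mathcal R| = \Oh(k^4)$, and each gadget has size quadratic in its interface, producing $\Oh(k^6)$ vertices in total.

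The main obstacle I anticipate is the size analysis of the interfaces. A~naive bound (number of nooses times noose length) is too weak, and one must show that only few regions have large interfaces while most have small ones; I expect that invoking Lemma~\ref{lem:single:bipartite} on the bipartite incidence between regions and marked vertices—treating regions with at least three marked boundary vertices as the "heavy" side—gives the required $\Oh(k^6)$ bound on $\sum_{R} |M \cap \partial R|^2$. A secondary subtlety is ensuring that the cyclic order of interface vertices around each region is consistently inherited from $I$ (so that cross-freeness of a global linkage implies cross-freeness of every local restriction), which I expect to follow directly from the topological nature of the noose arrangement inside $\disc(I)$.
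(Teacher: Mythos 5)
Your proposal follows essentially the same route as the paper's proof: mark the minimum separators of all $\Oh(k^2)$ canonical divisions (giving $\Oh(k^3)$ marked vertices), represent them by curves whose arrangement partitions $\disc(I)$ into regions, replace each region's interior by the gadget of \cref{lem:single:gadget}, verify both cut inequalities via \cref{lem:single:criterion}, and bound $\sum_F |X_F|^2$ through the planar bipartite incidence bound of \cref{lem:single:bipartite} after disposing of regions with interface of size at most two. The only cosmetic differences are that the paper takes arbitrary minimum separators rather than closest ones and handles the small-interface regions by explicit reduction rules, but the argument is the same.
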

\begin{proof}
For each canonical division $(X_1,X_2)$ of $X$ we compute a minimum-size $(X_1,X_2)$-separator $S(X_1,X_2)$.
Clearly, $|S(X_1,X_2)| \le k$.
When one of the sets $X_1,X_2$ is a singleton $\{v\}$ then we can assume that $S(X_1,X_2) = \{v\}$.
Otherwise the separator $S(X_1,X_2)$ can be represented by a simple curve $N(X_1,X_2) \sub \disc(I)$ connecting two points on $I$ so that 
$N(X_1,X_2) \cap G = S(X_1,X_2)$
and every curve within  $\disc(I)$ connecting points from $X_1$ and $X_2$ must intersect $N(X_1,X_2)$.
Let $N$ be the union of all the curves $N(X_1,X_2)$ and $S$ be the union of all the sets $S(X_1,X_2)$.
By the argument above, we have $X \sub S$.
There are at most $k^2$ canonical divisions
so $|S| \le k^3$.

We say that $F$ is a face of $(I,N)$ if it is \mic{a closure of} an inclusion-maximal subset of $\disc(I) \sm N$.
\mic{Every face $F$ is of the from $F = \disc(\partial F)$.}
For a face $F$ let $V_F = V(G) \cap \inter(F)$, and $X_F = V(G) \cap \partial F$; clearly $X_F \sub S$. 
\mic{Note that $G\cap F = G[V_F \cup X_F]$ is properly embedded in $F$.}
If $V_F \ne \emptyset$, we replace the subgraph 
$G \cap F$ with
a graph $H_F$ given by \cref{lem:single:gadget} applied to $\partial F$ and $X_F$.
Then  $H_F \cap \partial F = X_F$ and $|V(H_F)| \le |X_F|^2$.
Observe that this modification does not affect $G \cap N$.
Let $G'$ be obtained from $G$ by applying this modification to every face $F$ of $(I,N)$.

\begin{claim}
For every canonical division $(X_1,X_2)$ of $X$ it holds that $\mu_{G'}(X_1,X_2) \le \mu_{G}(X_1,X_2)$.
\end{claim}
\begin{innerproof}
By construction $G' \cap N(X_1,X_2) = G \cap N(X_1,X_2) = S(X_1,X_2)$. Therefore $\mu_{G'}(X_1,X_2) \le |S(X_1,X_2)| = \mu_{G}(X_1,X_2)$.
\end{innerproof}

\begin{claim}
For every canonical division $(X_1,X_2)$ of $X$ it holds that $\mu_{G'}(X_1,X_2) \ge \mu_{G}(X_1,X_2)$.
\end{claim}
\begin{innerproof}
Let $\pp$ be an $(X_1,X_2)$-linkage in $G$ of size  $\mu_{G}(X_1,X_2)$.
We claim that there exists an $(X_1,X_2)$-linkage $\pp'$ in $G'$ of the same size.
Let $\pp = \{P_1, \dots, P_s\}$.
Let $F$ be a face of $(I,N)$ and $\pp_F$ be the family of maximal subpaths of $P_1, \dots, P_s$ that traverse $F$.
This is an $X_F$-linkage in $G \cap F$.
Let $\tcal_F$ be the set of pairs representing the endpoints of $\pp_F$; then $\tcal_F$ is cross-free with respect to $\partial F$.
As a consequence of \cref{lem:single:gadget}, $\tcal_F$ is realizable in $H_F$.
By applying this argument to every $F$, we turn $\pp$ into a linkage $\pp'$ in $G'$ connecting the same pairs of vertices in $X$.
Hence, $\mu_{G'}(X_1,X_2) \ge |\pp'| = |\pp| = \mu_{G}(X_1,X_2)$.
\end{innerproof}

With the two claims above we apply \cref{lem:single:criterion} to infer that $G'$ is $X$-linkage-equivalent to $G$.
Finally, we apply two reduction rules to bound the size of $G'$.
When there exists a vertex set $C \sub V(G) \sm S$ such that 
$N_{G'}(C) \sub  S$ and (a) $|N_{G'}(C)| = 1$, then remove $C$; or
(b) $N_{G'}(C) = \{u,v\}$, then replace $C$ with the edge $uv$ (when such an edge already exists, do nothing).
Let $\widehat G$ be the result of applying these rules to $G'$.
These modifications preserve $X$-linkages and $\widehat G$ remains properly embedded in $\disc(I)$.
\mic{Moreover, for every face $F$ of $(I,N)$ with a non-empty set $V(\widehat G) \cap \inter(F)$ it holds that $|X_F| \ge 3$.}

\begin{claim}
The graph $\widehat G$ has $\Oh(k^6)$ vertices.
\end{claim}
\begin{innerproof}
Consider a graph $\widehat G^c$ obtained from $\widehat G$ by contracting each connected component of $\widehat G - S$ into a single vertex. 
 Let $B$ be the set of the vertices created due to contractions.
Each vertex from $B$ corresponds to some face $F$ of $(I,N)$
with $V(\widehat G) \cap \inter(F) \ne \emptyset$ and $|X_F| \ge 3$,
so the minimum degree in $B$ is at least 3.
By \cref{lem:single:bipartite} the size of $B$ is at most $2|S| \le 2k^3$.
\mic{Due to planarity,} the number of edges in $\widehat G^c$ is at most $3\cdot (|B| + |S|) = \Oh(k^3)$.
Let $\ff$ be the set of faces $F$ of $(I,N)$ with $V(\widehat G) \cap \inter(F) \ne \emptyset$. 
We have $\sum_{F \in \ff} |X_F| \le |E(\widehat G^c)|$.
In turn, $\sum_{F \in \ff} |V(H_F)| \le  \sum_{F \in \ff} |X_F|^2 \le  (\sum_{F \in \ff} |X_F|)^2 = \Oh(k^6)$.
This entails the claimed bound on the size of $\widehat G$.
\end{innerproof}

The construction of $\widehat G$ can be easily performed in polynomial time.
The proposition follows.
\end{proof}

\subsection{Cutting the graph open}
\label{sec:tw:cutting}

In this section, we finalize the construction of a polynomial kernel.
After reducing the radial diameter, we can find a tree of moderate size spanning the set of terminals $X$ in the radial graph.
We shall {\em cut the graph open} alongside this tree to reduce the problem to the case where all the terminals lie on a single face.
The following transformation has been used in the algorithms for \textsc{Steiner Tree} \cite{BorradaileKM09, PilipczukPSvL18}
and \textsc{Vertex Multiway Cut} \cite{JansenPvL19} on planar graphs.

\begin{definition}[Cut alongside a tree]
\label{def:open:cut}
    Let $G$ be a plane graph 
    and $T$ be a tree in the radial graph of $G$. 
    The plane graph $G^T$ is obtained from $G$ as follows.
    Consider an Euler tour of $T$ that traverses each edge twice in different directions, and respects the plane embedding of $T$.
    We replace each vertex $v \in V(T) \cap V(G)$ with $\deg_T(v)$ many copies, reflecting its occurrences on the Euler tour, and distribute the copies in the plane, creating a new face 
    incident to all the created copy-vertices (we refer to the set of these vertices as $V_T \sub V(G^T))$. 
    For $v \in V(T) \cap V(G)$ let $\Gamma_T(v) \sub V_T$ be the set of copies of $v$ created during this process.
\end{definition}

This construction is depicted in \Cref{fig:split-outline} on page \pageref{fig:split-outline}.
Since the sum of vertex degrees in a tree is at most twice the number of its vertices, we obtain the following observation.

\begin{observation}\label{lem:open:size}
    For a plane graph $G$ and a tree $T$ in the radial graph of $G$, we have $|V_T| \le 2 \cdot |V(T)|$.
\end{observation}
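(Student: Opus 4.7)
The plan is to unfold the definition of $V_T$ and reduce the bound to a classical handshake computation on the tree $T$. By \cref{def:open:cut}, for each vertex $v \in V(T) \cap V(G)$, the number of copies created in $\Gamma_T(v) \sub V_T$ equals $\deg_T(v)$, and the sets $\Gamma_T(v)$ partition $V_T$. Hence
\[
|V_T| = \sum_{v \in V(T) \cap V(G)} \deg_T(v).
\]

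Next, I would bound this sum by extending it to all of $V(T)$, using that $\deg_T(v) \ge 0$ for vertices of $T$ not in $V(G)$ (these are the face-vertices of the radial graph). The handshake lemma applied to the tree $T$ then yields
\[
\sum_{v \in V(T) \cap V(G)} \deg_T(v) \;\le\; \sum_{v \in V(T)} \deg_T(v) \;=\; 2\,|E(T)| \;=\; 2\br{|V(T)| - 1} \;\le\; 2\,|V(T)|,
\]
which gives the claimed inequality.

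There is no real obstacle here; the only thing to verify is the bookkeeping that the copies introduced in \cref{def:open:cut} are indeed in bijection with the occurrences of $v$ on the Euler tour of $T$, which traverses each edge of $T$ twice, so each $v \in V(T) \cap V(G)$ is visited exactly $\deg_T(v)$ times. This is immediate from the definition of an Euler tour, so the observation follows directly.
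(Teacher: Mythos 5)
Your proof is correct and follows exactly the argument the paper uses: the paper justifies this observation in one line ("the sum of vertex degrees in a tree is at most twice the number of its vertices"), which is precisely your handshake-lemma computation after noting that $|V_T| = \sum_{v \in V(T) \cap V(G)} \deg_T(v)$ by \cref{def:open:cut}.
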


We show that for the sake of obtaining an equivalent instance of {\sc Planar Disjoint Paths}, we can focus on the new instance obtained via the cutting operation.

\begin{lemma}\label{lem:open:equivalent}
    Let $G_1, G_2$ be plane graphs sharing a vertex set $Y$.
    Next, let $T_1, T_2$ be trees in the radial graphs of $G_1, G_2$, respectively, such that $Y = V(T_1) \cap V(G_1) = V(T_2) \cap V(G_2)$, $V_{T_1} = V_{T_2}$ (we refer to this set as $Y'$) and for each $v \in Y$ it holds that $\Gamma_{T_1}(v) = \Gamma_{T_2}(v)$.
    If $G_1^{T_1}$ and $G_2^{T_2}$ are $Y'$-linkage-equivalent, 
    then $G_1$ and $G_2$ are $Y$-linkage-equivalent.
\end{lemma}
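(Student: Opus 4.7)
The plan is to transfer a $\tcal$-linkage in $G_1$ to a $\tcal$-linkage in $G_2$ by lifting to the cut graph, applying the $Y'$-linkage-equivalence hypothesis, and stitching back. By symmetry it suffices to prove one direction. Fix $\tcal = \{(s_j,t_j)\}_{j=1}^k \sub Y^2$ realized by a linkage $\pp = \{P_1,\dots,P_k\}$ in $G_1$. First I would lift $\pp$ to $G_1^{T_1}$ by tracing each $P_j$ and, at every vertex $v \in V(P_j) \cap Y$, checking whether the incoming and outgoing edges of $P_j$ at $v$ lie in the same angular sector determined by the $T_1$-edges around $v$. If not, $P_j$ breaks at $v$, the two incident edges being carried by two distinct copies in $\Gamma_{T_1}(v)$; otherwise a single copy of $v$ is used internally. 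The resulting collection $\pp^{T_1}$ of subpaths is a vertex-disjoint linkage in $G_1^{T_1}$, and its set of endpoint pairs defines $\widetilde\tcal \sub Y'^2$. Since $G_1^{T_1}$ and $G_2^{T_2}$ are $Y'$-linkage-equivalent, $\widetilde\tcal$ is realizable in $G_2^{T_2}$ by some linkage $\qq$.

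The next step is to stitch $\qq$ back into a $\tcal$-linkage in $G_2$ by identifying the copies in $\Gamma_{T_2}(v) = \Gamma_{T_1}(v)$ of each $v \in Y$. The crucial structural observation is that, because $\pp$ is vertex-disjoint, at most one path $P_j$ visits any given $v \in Y$. Consequently, the number of copies of $v$ appearing as endpoints in $\widetilde\tcal$ is at most two (exactly two when $P_j$ breaks at $v$, one when $v \in \{s_j,t_j\}$ and $v$ is not internally visited, and zero otherwise), and all such copies come from the same $P_j$. This forces the pairing that matches two break-copies of $v$ (making $v$ a pass-through) or leaves a single copy unpaired (marking $v$ as a walk-endpoint) to be uniquely determined by $\widetilde\tcal$ alone, independently of which graph was used to realize $\widetilde\tcal$ and independently of the particular linkage $\pp$ from which we derived it. Following this canonical pairing along the chains of $\widetilde\tcal$ produces, for each $j$, a walk $W_j$ in $G_2$ from $s_j$ to $t_j$.

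The final step is to verify that $\{W_j\}_{j=1}^k$ is a vertex-disjoint family of simple paths. Any non-$Y$ vertex lies in at most one $Q_l \in \qq$ by vertex-disjointness of $\qq$ in $G_2^{T_2}$, hence in at most one $W_j$ and at most once there. Any $y \in Y$ appears in the stitched collection only via its copies used as endpoints in $\widetilde\tcal$; since all such copies belong to the single chain of one $P_j$, $y$ is visited by at most one $W_j$, and inside that $W_j$ it occurs exactly once (as an endpoint when one copy is used, as a single pass-through at a break-point when two copies are used and paired). Thus $\{W_j\}$ is a $\tcal$-linkage in $G_2$. I expect the main delicate point of the write-up to be making the chain and pairing construction rigorous: concretely, arguing that the partition of $\widetilde\tcal$ into chains matches the endpoint pairs $(s_j,t_j)$ of $\tcal$ purely as a function of $\widetilde\tcal$ and the sets $\Gamma(v)$, and thus survives the passage from $G_1^{T_1}$ to $G_2^{T_2}$ guaranteed by the $Y'$-linkage-equivalence hypothesis.
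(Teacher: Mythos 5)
Your overall architecture (lift the linkage to $G_1^{T_1}$ by breaking paths where they cross $T_1$, apply the $Y'$-linkage-equivalence, stitch back by identifying the copies in $\Gamma_{T_2}(v)$) is the same as the paper's, but there is a genuine gap in the verification step, and it is exactly the point the paper's proof is built to handle. You assert that ``any $y \in Y$ appears in the stitched collection only via its copies used as endpoints in $\widetilde\tcal$.'' This is false: the linkage $\qq$ realizing $\widetilde\tcal$ in $G_2^{T_2}$ consists of arbitrary paths between the prescribed endpoints, and nothing prevents such a path from passing \emph{through} a copy $y'' \in \Gamma_{T_2}(y)$ that is not an endpoint of any pair in $\widetilde\tcal$ (for instance a copy of a vertex $y$ that $\pp$ never visited, or a copy different from the one $P_j$ used when passing $y$ without crossing $T_1$). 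If two paths of $\qq$ belonging to different chains traverse two distinct copies $y', y''$ of the same $y$, then after identification the walks $W_j$ and $W_{j'}$ both visit $y$, and the stitched family is not vertex-disjoint. Your canonical-pairing argument only controls the copies that occur as endpoints in $\widetilde\tcal$; it says nothing about internal visits to $Y'$.

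The paper closes this hole by adding \emph{blocking requests}: for every copy $v' \in Y'$ that is not used by the lifted linkage, the trivial pair $(v',v')$ is inserted into $\tcal'$. Any linkage realizing $\tcal'$ in $G_2^{T_2}$ must then contain the one-vertex path at each such $v'$, so by vertex-disjointness no other path of the linkage can touch these copies. Consequently, for each $y \in Y$ at most one path of the realizing linkage meets $\Gamma_{T_2}(y)$ beyond the prescribed endpoint copies, and the identification step goes through. Without some device of this kind (blocking requests, or deleting the unused copies before invoking the equivalence), the final disjointness claim does not follow. Your remark that the delicate point is recovering the chain structure of $\widetilde\tcal$ is comparatively harmless --- one simply remembers which subpaths came from which $P_j$ and matches the realizing paths to them by their endpoints --- whereas the missing blocking mechanism is essential.
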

\begin{proof}
    By symmetry, it suffices to prove that when $\tcal \sub Y \times Y$ is realizable in $G_1$, then it is also realizable in $G_2$.
    Let $\pp_1$ be a $\tcal$-linkage in $G_1$.
    We say that a path $Q$ in $G_1$ does not cross $T_1$ if for each pair of consecutive edges $e_1, e_2 \in E_{G_1}(V)$ on $Q$ there is a single $v$-copy $v' \in \Gamma_{T_1}(v)$ such that $e_1, e_2 \in E_{G_1^{T_1}}(v')$.
    Note that when a subpath $Q$ of a $(Y,Y)$-path does not cross $T_1$ then $Q$ corresponds to a unique path in $G_1^{T_1}$.
    We partition each path $P \in \pp_1$ into maximal subpaths that do not cross $T_1$; let $\Gamma_1(P)$ denote the family of corresponding paths in $G_1^{T_1}$.
    We define $\tcal'  \sub Y' \times Y'$ as follows. 
    First, we insert to $\tcal'$ the endpoints of each path from  $\Gamma_1(P)$ for every  $P \in \pp_1$.
    Next, when $v' \in V_{T_1}$ does not belong to any of the paths above, we insert the pair $(v',v')$ to $\tcal'$; we then say that $v'$ is {\em blocked}.
    Clearly, $\tcal'$ is realizable in $G_1^{T_1}$ and, by the assumption, it is realizable in $G_2^{T_2}$ as well.

    Let $\pp'_2$ be a $\tcal'$-linkage in $G_2^{T_2}$.
    We need to show that it can be merged back to a $\tcal$-linkage in $G_2$.
    For $P \in \pp_1$ let $\Gamma_2(P) \sub \pp'_2$ be the linkage aligned with $\Gamma_1(P)$; these families are disjoint for distinct $P \in \pp_1$.
    The paths from  $\Gamma_2(P)$ can be merged into a path $\widehat P$ in $G_2$ with the same endpoints as $P$; let $\pp_2$ be the union of such paths.
    We argue that different paths from $\pp_2$ are vertex-disjoint.
    There is a 1-1 mapping between the vertices from $V(G_2^{T_2}) \sm Y'$ and $V(G_2) \sm Y$ so we only need to check that no vertices from $Y$ collide.
    Consider $v \in Y$; 
    if $v$ is not being visited by any path from $\pp_1$, then for each  $v' \in \Gamma_{T_2}(v)$ the pair $(v',v')$ belongs to $\tcal'$ (i.e., $v'$ is blocked) and so $v$ cannot belong to any path from~$\pp_2$.
    Suppose that $v \in V(P)$ for some $P \in \pp_1$; we consider two cases.
    If $v$ is an endpoint of a maximal subpath of $P$ that does not cross $T_1$, then either $v$ is an endpoint of $P$ or there are two vertices $v', v'' \in \Gamma_{T_2}(v)$ which are endpoints of paths in $\Gamma_2(P)$ whereas all the remaining vertices from $\Gamma_{T_2}(v)$ are blocked.
    Therefore $v$ is being visited only by $\widehat P$.
    In the last case, $v$ is being visited by $P$ but no vertex from $\Gamma_{T_2}(v)$ is an endpoint of a path in  $\Gamma_2(P)$.
    Then there is exactly one vertex $v' \in \Gamma_{T_2}(v)$ and one path $P' \in \Gamma_1(P)$ that visits $v'$ while  the remaining vertices from $\Gamma_{T_2}(v)$ are blocked.
    So also in this case $v$ is being visited only by $\widehat P$.
    We infer that $\pp_2$ is a $\tcal$-linkage in $G_2$ aligned with $\pp_1$; this concludes the proof. 
\end{proof}

We are ready to prove \cref{thm:outline:polyKer} which, in turn, implies \cref{thm:polyKer}.

\polyKer*
\begin{proof}
    Consider an arbitrary plane embedding of $G$.
    While the radial diameter of $G$ is larger than $7(k+1)(\twsf+6)^2$, we apply \cref{prop:diam:final}
    to find an irrelevant edge and reduce the size of $G$ while maintaining $X$-linkage-equivalency.
    By applying this reduction exhaustively, we can
    assume that $G$ has radial diameter $d = \Oh(k\cdot \twsf^2)$.
    We greedily construct a Steiner tree $T$ of $X$ in the radial graph of $G$: we order the vertices $x_1, x_2, \dots, x_k$ of $X$ arbitrarily and for $i = 1, 2,\dots, k$ we construct a tree $T_i$ by finding a shortest radial path between $x_i$ and a vertex of $T_{i-1}$.
    In each step we augment the tree with a path of length at most  $2d = \Oh(k\cdot \twsf^2)$, so, eventually, $|V(T)| = \Oh(k^2\cdot \twsf^2)$.
    
    We cut $G$ open alongside $T$ obtaining a graph $G^T$.
    It can be properly embedded in $\disc(I)$ for some noose $I$ in such a way that $G^T \cap I = V_T$ (this requires flipping the embedding to turn the newly created face into the outer face). 
     \cref{lem:open:size} implies that $|V_T| = \Oh(k^2\cdot \twsf^2)$.
    We apply  \cref{prop:single:final} to replace $G^T$ with a graph $H$ on $\Oh(|V_T|^6) = \Oh(k^{12}\cdot \twsf^{12})$ vertices,
     properly embedded in $\disc(I)$, such that  $V(H) \cap I = V_T$ and $H$ is $V_T$-linkage-equivalent to $G^T$.
    Then we merge the split vertices back together: for each $v \in V(T) \cap V(G)$
    we identify the vertices from $\Gamma_T(v)$ in $H$.
    The graph $G'$ obtained this way remains planar.
    Since $X \sub V(T) \cap V(G)$, \cref{lem:open:equivalent} implies that $G'$ is $X$-linkage-equivalent to $G$. 
     The theorem follows.
\end{proof}

\section{Kernelization hardness for parameter $k$}\label{sec:hardness}

In this section, we prove Theorems~\ref{thm:noPolyKer}, \ref{thm:MK2}, and \ref{thm:noPolyKerEdge}.
First, we introduce the intermediate problem \nullnoncross and  present a reduction to it from {\sc Set Cover}.
In \cref{sec:hard:vector} we construct the most internal gadget, allowing us to encode a large family of sets using homotopy classes. 
We use it as a building block to design a subset gadget in \cref{sec:hard:subset}.
In order to introduce the ideas gradually, we first give a simplified construction working under an overly-optimistic assumption (\cref{sec:subset-simple}), followed by a proper one  (\cref{sec:subset-full}), reflecting the exposition in \cref{sec:outline}.
The reduction to \nullnoncross is finalized in \cref{sec:hard:setcover}.
Afterwards, we explain how to get rid of the weighted requests (\cref{sec:hard:weights}) and to translate the hardness to {\sc Planar (Edge-)Disjoint Paths} (\cref{sec:hard:final}).
Note that several definitions in this section differ from their simplified versions in  \cref{sec:outline}.

\subsection{Non-crossing multicommodity flow}

\micr{reformulated stuff in this subsection}

We introduce the concept of a non-crossing flow and establish some notation to work with it.

\begin{definition}
Let $G$ be a plane  multigraph.
Consider a vertex $v \in V(G)$ and
two pairs of edges $(e_1,f_1)$ and $(e_2,f_2)$, such that $\{e_1,f_1,e_2,f_2\} \sub E_G(v)$.
We say that these pairs
{\em cross} if $e_1,e_2,f_1,f_2$ appear in this, or opposite, order in the cyclic ordering $\pi_G(v)$ of $E_G(v)$.
Two edge-disjoint walks $W_1, W_2$ in $G$ are {\em non-crossing} if there are no
pairs of consecutive edges $(e_1,f_1)$ in $W_1$ and $(e_2,f_2)$ in $W_2$ that cross.
\end{definition}

\begin{figure}
    \centering
\includegraphics[scale=.7]{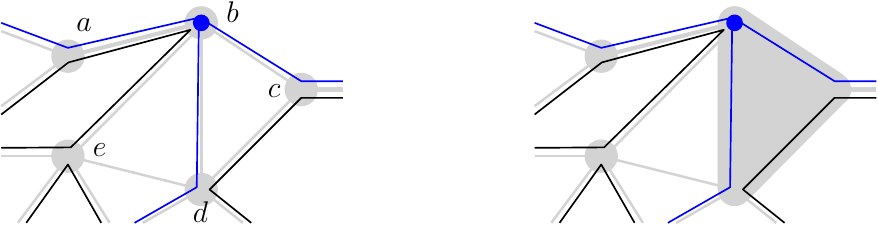} 
\caption{Left: An example of a non-crossing flow.
The vertex $b$ is a terminal for the three blue walks.
Because of the condition (b) in \cref{def:reduction:flow}, these walks can be connected within the image of $b$ without crossing the other walks.
The edges $ab$ and the one incident to $c$ (solid gray lines) have multiplicities 2 so the walks are pairwise edge-disjoint.
Right: After contracting $b,c,d$ into a single vertex, we still obtain a non-crossing flow.
Note that the edges $eb$ and $ed$ now become parallel.
}
\label{fig:non-crossing-contract}
\end{figure}


For  two walks $W_1 = (v_1, e_1, \dots, e_p, v_{p+1})$, $W_2 = (u_1, f_1, \dots, f_q, u_{q+1})$ with $ v_{p+1} = u_1$, we define their concatenation $W_1+W_2$ as a walk  given by $(v_1, e_1, \dots, e_p, u_1, f_1, \dots, f_q, u_{q+1})$.

\begin{definition}\label{def:reduction:flow}
Let $G$ be a plane multigraph 
and $\tcal$ be a multiset of triples from $V(G) \times V(G) \times \nn$.
A family $\pp$ of edge-disjoint {walks} in $G$ is a {\em $\tcal$-flow} if for every triple $(s_i, t_i, d_i) \in \tcal$ there exists a subfamily $\pp_i \sub \pp$ of $d_i$ many $(s_i,t_i)$-walks, and these subfamilies are disjoint for distinct triples from $\tcal$. 

A $\tcal$-flow $\pp$ is called {\em non-crossing} if (a) each pair of walks in $\pp$ is non-crossing, and (b) for any $(s_i, t_i, d_i) \in \tcal$, any two walks $W_1,W_2 \in \pp_i$ and a walk $W' \in \pp \sm \pp_i$, the walks $W_1 + W_2$ and $W'$ are non-crossing.
\end{definition}

The last condition enforces that all the walks from $\pp_i$ can ``touch'' each other at vertex $s_i$ (or $t_i$) without the need to cross the other walks.
One can imagine 
each vertex to have a positive area so a non-crossing flow can be depicted as a family of disjoint curves on the plane.
By connecting the images of the endpoints of paths in $\pp_i$ we can draw
 $E(\pp_1), \dots, E(\pp_k)$
as connected pairwise-disjoint subsets of the plane (see \Cref{fig:non-crossing-contract}).
This interpretation 
leads to the following observation.

\begin{observation}\label{obs:reduction:contract}
    Let $G$ be a plane multigraph,  $\tcal$ be a multiset of triples from $V(G) \times V(G) \times \nn$, 
   and \mic{$D \sub \rr^2$ be a topological disc such that $G[V(G) \cap D]$ is connected.}
   Next, let $G'$ be obtained from $G$ by contracting the set $V(G) \cap D$ to a single vertex $s$
   and $\tcal'$ be obtained from $\tcal$ by replacing each occurrence of a vertex from $V(G) \cap D$ with $s$.
   Suppose that there exists a non-crossing $\tcal$-flow  in $G$.
   Then there exists a non-crossing $\tcal'$-flow  in $G'$.
\end{observation}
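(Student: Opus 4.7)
The plan is to push the given non-crossing $\tcal$-flow $\pp$ through the contraction map. Let $c\colon V(G)\to V(G')$ send every vertex of $V(G)\cap D$ to $s$ and act as identity otherwise. For each walk $W = (v_1,e_1,v_2,\dots,e_p,v_{p+1})\in\pp$, define its projection $c(W)$ by applying $c$ to the vertex sequence and discarding every edge $e_j$ with both endpoints in $V(G)\cap D$; these are exactly the edges that vanish when $G\cap D$ collapses to a point, so the remaining alternating sequence is a walk in $G'$ whose endpoints are $c(v_1)$ and $c(v_{p+1})$. Taking $\pp' = \{c(W)\mid W\in\pp\}$ and $\pp'_i = \{c(W)\mid W\in \pp_i\}$ yields the natural candidate non-crossing $\tcal'$-flow, and it remains to verify the two conditions of \cref{def:reduction:flow}.

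Edge-disjointness of $\pp'$ is immediate: the edges of $G'$ that are not lost to contraction correspond bijectively, with multiplicities, to the edges of $G$ outside $E(G[V(G)\cap D])$, and every edge of $c(W)$ descends from a distinct edge of $W$ lying outside $G[V(G)\cap D]$. The non-crossing condition at every $v\in V(G')\sm\{s\}$ is inherited from $G$, because $\pi_{G'}(v)=\pi_G(v)$ and every pair of consecutive edges of $c(W)$ meeting at $v$ is the same pair as in $W$.

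The main step is verifying the non-crossing condition at $s$. Here I would invoke the topological picture sketched right above the observation: each walk $W\in\pp$ may be drawn in the plane as a simple curve (thickening each vertex slightly), and by both conditions of \cref{def:reduction:flow} the sets $E(\pp_1),\dots,E(\pp_k)$ can be realised as pairwise disjoint connected subsets of $\rr^2$. Since $G[V(G)\cap D]$ is connected, the contraction collapses the fragments of these drawings lying inside $D$ into a single arbitrarily small neighbourhood of $s$, while the resulting subsets remain pairwise disjoint. In $G'$, the rotation $\pi_{G'}(s)$ is precisely the cyclic order in which these disjoint curves emerge from an infinitesimal neighbourhood of $s$, so by the Jordan curve theorem no two pairs of consecutive edges at $s$ can interleave in $\pi_{G'}(s)$, yielding condition (a). Condition (b) at $s$ for a triple $(c(s_i),c(t_i),d_i)\in\tcal'$ whose original endpoint $s_i$ lies in $V(G)\cap D$ follows analogously: the contiguous interval of $E_G(s_i)$ used by $\pp_i$ around $s_i$ maps to a contiguous arc on $\partial D$, hence to a contiguous interval of $\pi_{G'}(s)$.

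The principal obstacle is formalising the passage from the purely combinatorial non-crossing conditions in \cref{def:reduction:flow} to genuine pairwise disjoint curve drawings inside $D$; once this is in place, planarity of the contraction together with Jordan-curve considerations delivers the non-crossing conditions at $s$ for free. Condition (b) of \cref{def:reduction:flow} is essential for this step, since it is what permits the walks of a single subfamily $\pp_i$ to be merged into one connected planar region without forcing crossings with the remaining walks, which in turn is what guarantees a consistent cyclic order of subfamilies around the new vertex $s$ after contraction.
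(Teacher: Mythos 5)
Your argument is correct and follows essentially the same route as the paper: the observation is stated there without a formal proof, justified only by the preceding remark that a non-crossing flow can be depicted as a family of pairwise disjoint curves (with the sets $E(\pp_1),\dots,E(\pp_k)$ drawn as connected disjoint regions, which is exactly where condition (b) of \cref{def:reduction:flow} enters), after which collapsing the connected piece $G\cap D$ to a point clearly preserves disjointness. Your write-up adds the explicit combinatorial bookkeeping for the projected walks and honestly flags the curve-realisation step as the only part requiring formalisation, which matches the level of rigor the paper itself adopts for this observation.
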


Observe that this property would not hold if we replaced ``walks'' in the definition of a non-crossing flow with ``paths'' because a path might enter and exit $V(G) \cap D$ multiple times and all these visits may be necessary even after contraction to avoid crossings.
This is the main reason why we prefer to work with walks.
Also note that the opposite implication in \cref{obs:reduction:contract} does not necessarily hold even if $D$ contains no vertices from $\tcal$.

\defparproblem{\nullnoncross}
{Plane multigraph $G$, 
set $\mathcal{T}$ of $k$ vertex-disjoint requests $(s_1,t_1,d_i), \dots ,(s_k,t_k,d_k) \in V(G) \times V(G) \times \nn$. }
{$k$}
{Determine whether there exists a non-crossing $\tcal$-flow in $G$.}

We will refer to each triple $(s_i, t_i, d_i) \in \tcal$ as a {\em request} and to the integer $d_i$ as the {\em demand} of this request.
All vertices occurring in $\tcal$ are referred to as {\em terminals}.

An instance of \nullnoncross is called 
{\em unitary} if each demand $d_i$
equals 1 and every terminal 
has degree 1.

\subsection{Vector-containment gadget}
\label{sec:hard:vector}

{We begin describing the reduction from the innermost gadget.}
\mic{In order to provide an interface that will be consistent with the other gadgets, we need to impose a technical condition on the desired flow.
Whereas in the problem definition we require that each vertex may occur in at most one request, here we will consider families $\tcal$ that violate this condition.
Therefore, we need to specify in which order the walks enter a terminal.}

\begin{definition}\label{def:homo:seeing-order}
    Let $\pp$ be a \mic{non-crossing} flow in a plane multigraph $G$ whose outer face is confined by a simple cycle, $U \sub V(G)$, and $v \in V(G) \sm U$ lie on the outer face of $G$.
    Next, let $\pp_{v,U} \sub \pp$ be the family of walks in $\pp$ with one endpoint at $v$ and the other one at a vertex in $U$.
    We define $\pi$ as the clockwise order on $\pp_{v,U}$ given by the ordering of edges incident to $v$, starting from the one being next to the outer face.
    
    
    We say that $v$ {\em sees} vertices from $U$ in the order $u_1, u_2, \dots, u_k$ (with respect to flow $\pp$) if (a) for each $i \in [k]$ the occurrences of $(v,u_i)$-walks in $\pp_{v,U}$ form a continuous interval with respect to $\pi$, and (b) the order of these intervals matches the order $u_1, u_2, \dots, u_k$.
\end{definition}

{A visualization of this property is given in \Cref{fig:element-gadget}.}

\begin{definition}\label{def:homo:elemenet-gadget}
Consider $k \in \nn$, $\func \colon \{0,1\}^k \to \nn$, and $Z \sub  \{0,1\}^k$.
A plane multigraph $G$ is a $(k,\func,Z)${-$\mathsf{Vector\, Containment\, Gadget}$} if the following conditions hold.
\begin{enumerate}
    \item $G$ has distinguished vertices $z_1,z_2,\dots,z_k$ and $w_0,w_1$, where the last two lie on the outer face.
    \item Let $\bb \in \{0,1\}^k$, $d \in \nn$, and $\tcal_{\bb,d}$ be the family of following requests:
     \begin{enumerate}
        \item $(w_0, z_i, 2^k)$ for each $i \in [k]$ with $\bb_i = 0$,
        \item $(w_1, z_i, 2^k)$ for each $i \in [k]$ with $\bb_i = 1$,
        \item $d$ copies of the request $(w_0, w_1, 1)$. 
    \end{enumerate}
    Then the following conditions are equivalent:
    \begin{enumerate}
        \item $d \le \func(\bb) + 1_{[\bb \in Z]}$,
        \item there exists a  $\mathcal{T}_{\bb,d}$-flow in $G$,
        \item there exists a non-crossing $\mathcal{T}_{\bb,d}$-flow in $G$,
        \mic{in which $w_0$ sees $\{z_i \mid b_i = 0\}$ in the order of decreasing $i$ and $w_1$ sees $\{z_i \mid b_i = 1\}$ in the order of increasing $i$.}
    \end{enumerate}
\end{enumerate}
\end{definition}

\begin{figure}
    \centering
\includegraphics[scale=1]{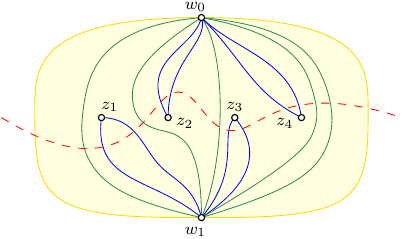} 
\caption{A conceptual sketch of a $(4,\func,Z)${-$\mathsf{Vector\, Containment\, Gadget}$}.
The flow on the picture corresponds to $\bb = (1010)$: the vertex $z_i$ sends the blue flow to $w_0$ when $\bb_i = 0$ or to $w_1$ when $\bb_i = 1$.
\mic{The vertex $w_0$ sees vertices $z_4, z_2$ in this order while $w_1$ sees vertices $z_1, z_3$ in this order.}
The amount of the available $(w_0,w_1)$-flow (green) depends on $\func(\bb)$ and on whether $\bb \in Z$.
Observe that each path in the flow must cross the red dashed curve whose homotopy class (with respect to $z_1,z_2,z_3,z_4$) agrees with the vector $\bb$.
We will rely on this observation when constructing the gadget.
}
\label{fig:element-gadget}
\end{figure}

For the existence of a  $\mathcal{T}_{\bb,d}$-flow, using $d$ copies of the request $(w_0, w_1, 1)$ is equivalent to using a single request  $(w_0, w_1, d)$.
This however does matter for the existence of a non-crossing $\mathcal{T}_{\bb,d}$-flow.
The reason for considering $d$ copies of $(w_0, w_1, 1)$ instead of just $(w_0, w_1, d)$ comes from condition (b) in \cref{def:reduction:flow}: we want to allow the $(w_0,w_1)$-walks 
to be arbitrarily intertwined with the other walks at $w_0$ or $w_1$ (see Figure~\ref{fig:element-gadget}).
\mic{On the other hand, we require those other walks to be well-structured.}

\mic{Basically, the vector $\bb$ specifies for each terminal $z_i$ whether it should send the flow to the left $(w_0)$ or to the right $(w_1)$.
In turn, the condition $\bb \in Z$ governs how many $(w_0,w_1)$-walks can be allocated on top of the walks above.
}

The rest of \cref{sec:hard:vector} is devoted to a construction of 
a $(k,\func,Z)${-$\mathsf{Vector\, Containment\, Gadget}$} of size $2^{\Oh(k)}$ for a certain function $\func$. 

\subsubsection{Homotopy classes and shortest paths}

\mic{Instead of constructing a vector-containment gadget directly, we begin from a prototype of its dual.
Its most important property is the uniqueness of a shortest $(s,t)$-path in each homotopy class (to be defined later).}

\paragraph*{Operations on bit vectors.}
We number the coordinates in a size-$k$ vector  from 1 to $k$.
Consider a~binary vector $\bb = (b_1,b_2, \dots, b_k)$.
When referring to indices or performing arithmetic, we implicitly use the big-endian binary decoding $\{0,1\}^k \to [0, 2^k)$ given as $\sum_{i=1}^{k} b_{i}\cdot 2^{k-i}$.
For $i \in [k]$ let $\bb^{[i]}$ denote a binary string obtained from $\bb$ by reversing its prefix of length $i$.
Note that for every $i \in [k]$ the mapping $\bb \to \bb^{[i]}$ is a bijection; for $i=1$ it is identity.

\paragraph*{Construction of the graph $H_k$.}
We define a plane graph $H_k$ as follows.
We draw $2k$ vertical lines $Q_1, Q'_1, Q_2, Q'_2, \dots, Q_k, Q'_k$, in this order from left to right,
and mark $2^k$ vertices on each of them.
The vertices on $Q_i$ are referred to as $v_{i,j}$, where $j \in [0, 2^k)$, counting from the top to the bottom.
Similarly, vertices on $Q'_i$ are referred to as $v'_{i,j}$.
We add two additional vertices: $s$ to the left of $Q_1$, and $t$ to the right of $Q'_k$.

For each $\bb \in  \{0,1\}^k$ we draw a curve $P_\bb$ which starts at $s$, crosses all the lines $Q_1, Q'_1, Q_2, Q'_2, \dots,$ $Q_k, Q'_k$, in this order, and ends at $t$.
The curve $P_\bb$ crosses the line $Q_i$ (resp. $Q'_i$) at the vertex $v_{i, \bb^{[i]}}$ (resp. $v'_{i, \bb^{[i]}}$).
Each segment of $P_\bb$ between crossing consecutive vertical lines is a straight line.
The plane graph $H_k$ is obtained from this drawing by turning every crossing of curves $P_{\bb}$, $P_{\bb'}$ into a vertex. 
See Figure \ref{fig:homotopy2} for an illustration.
We retain the names $P_{\bb}, Q_i, Q'_i$ to denote the corresponding paths in~$H_k$.

\begin{figure}
    \centering
\includegraphics[scale=1.3]{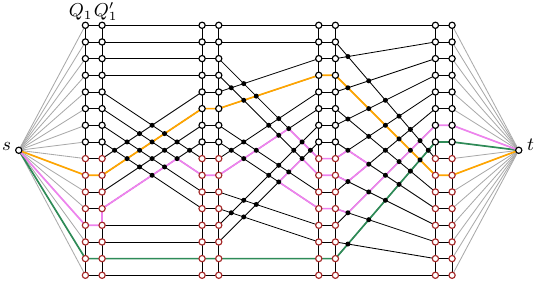}
\caption{The graph $H_4$. The vertical lines are  $Q_1, Q'_1, \dots, Q_4, Q'_4$, counting from left to right.
For each $i \in [4]$ the vertices $v_{i,j}$ and  $v'_{i,j}$ are drawn with black boundary when 
$j < 2^3$ and with brown boundary when $j \ge 2^3$.
For $\bb = (1001)$ the path $P_\bb$ is highlighted in orange and
for $\vv = (1110)$ the path $P_\vv$ is highlighted in green.
Observe that the positions of the first vertices on $P_\bb, P_\vv$ correspond to the numbers encoded by $\bb, \vv$ in binary. 
\mic{The violet path is an example of a different $\vv$-homotopic path.}
} 
\label{fig:homotopy2}
\end{figure}


\begin{observation}\label{obs:homo:signs}
Let $\bb, \vv \in  \{0,1\}^k$ be distinct and $j \in [k-1]$. The paths $P_{\bb}$ and $P_{\vv}$ in $H_k$ intersect between $Q'_j$ and $Q_{j+1}$ if and only if 
$\bb^{[j]} - \vv^{[j]}$ and $\bb^{[j+1]} - \vv^{[j+1]}$ have different signs.
\end{observation}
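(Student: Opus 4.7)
The plan is to unpack the construction of $H_k$ in the open vertical strip bounded by $Q'_j$ and $Q_{j+1}$ and reduce the statement to an elementary observation about straight line segments with endpoints on two parallel lines. By construction, in this strip the portion of $P_\bb$ is a single straight segment from $v'_{j,\bb^{[j]}}$ on $Q'_j$ to $v_{j+1,\bb^{[j+1]}}$ on $Q_{j+1}$, and likewise the portion of $P_\vv$ is a straight segment from $v'_{j,\vv^{[j]}}$ to $v_{j+1,\vv^{[j+1]}}$.

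First I would record that for every $i \in [k]$ the map $\bb \mapsto \bb^{[i]}$ is a bijection on $\{0,1\}^k$, since it is an involution (reversing the length-$i$ prefix twice is the identity). Together with the assumption $\bb \ne \vv$ this yields $\bb^{[j]} \ne \vv^{[j]}$ and $\bb^{[j+1]} \ne \vv^{[j+1]}$. Consequently the four endpoints of the two segments under consideration are pairwise distinct, and both differences $\bb^{[j]} - \vv^{[j]}$ and $\bb^{[j+1]} - \vv^{[j+1]}$ are nonzero, so the phrase ``different signs'' in the statement is unambiguous.

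The main step is then the standard fact that two straight segments whose endpoints lie on two distinct parallel lines intersect in the open strip between these lines if and only if their endpoints appear in reversed vertical orders on the two lines. Because vertices on each $Q_i$ and $Q'_i$ are enumerated from top to bottom starting at index $0$, the endpoint of $P_\bb$ on $Q'_j$ lies strictly above that of $P_\vv$ precisely when $\bb^{[j]} < \vv^{[j]}$, and analogously on $Q_{j+1}$ with $\bb^{[j+1]}$ versus $\vv^{[j+1]}$. The two segments therefore cross in the open strip iff exactly one of the inequalities $\bb^{[j]} < \vv^{[j]}$ and $\bb^{[j+1]} < \vv^{[j+1]}$ holds, which is exactly the condition that $\bb^{[j]} - \vv^{[j]}$ and $\bb^{[j+1]} - \vv^{[j+1]}$ have different signs.

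I do not expect any real obstacle; the only mild point I would double-check is that the planarization step, which turns every crossing of the curves $P_{\bb'}$ into a vertex of $H_k$, does not affect where $P_\bb$ and $P_\vv$ meet as subsets of the plane. This is immediate from the fact that that step merely inserts degree-$4$ vertices at the existing crossings without altering the underlying curves, so intersections inside the open strip between $Q'_j$ and $Q_{j+1}$ correspond bijectively before and after planarization.
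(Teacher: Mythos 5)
Your proof is correct and matches the argument the paper intends: the paper states this as an unproven observation, and the justification is exactly the elementary fact about two straight segments spanning a vertical strip crossing iff their endpoint orders reverse, together with the remark that $\bb \mapsto \bb^{[i]}$ is an involution (hence the four endpoints are distinct, so "different signs" is well-defined). Your check that planarization does not alter the intersection pattern is a sensible, if minor, point of care.
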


\mic{We will study the geometric relations between $P_{\bb}$ and $P_{\vv}$ through bit substrings in $\bb$ and $\vv$.}

\begin{lemma}\label{obs:homo:bits}
Let $\bb, \vv \in  \{0,1\}^k$ be distinct and $j \in [k-1]$.
If $\bb^{[j]} < \vv^{[j]}$ and $\bb^{[j+1]} > \vv^{[j+1]}$ then $\bb_{j+1} = 1$ and $\vv_{j+1} = 0$.
Furthermore, there exists $i \in [j]$ for which $\bb_{i} = 0$, $\vv_{i} = 1$, and
$\bb_{h} = \vv_{h}$ for $i < h < j + 1$.
\end{lemma}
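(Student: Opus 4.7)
The plan is to compare $\bb^{[j]}$ with $\vv^{[j]}$ in big-endian order using the explicit coordinate formula $\bb^{[j]}_m = \bb_{j-m+1}$ for $m \in [j]$ and $\bb^{[j]}_m = \bb_m$ for $m > j$. The inequality $\bb^{[j]} < \vv^{[j]}$ says that at the leftmost position where these two strings differ, $\bb^{[j]}$ has $0$ and $\vv^{[j]}$ has $1$. The whole argument is just a careful case split on where this leftmost differing position sits.

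First I would rule out the case that $\bb^{[j]}$ and $\vv^{[j]}$ agree on the entire reversed prefix, i.e. on coordinates $1,\dots,j$. This would mean $\bb_h = \vv_h$ for $1 \le h \le j$, and then $\bb^{[j+1]}$ and $\vv^{[j+1]}$ would also agree on coordinates $2,\dots,j+1$; comparing them reduces to comparing position $1$ (i.e. $\bb_{j+1}$ with $\vv_{j+1}$) and, in case of a tie there, to comparing the common suffix $\bb_{j+2}\ldots\bb_k$ with $\vv_{j+2}\ldots\vv_k$, which is the same comparison that governs $\bb^{[j]}$ versus $\vv^{[j]}$. In every subcase one obtains $\bb^{[j+1]} \le \vv^{[j+1]}$, contradicting the hypothesis $\bb^{[j+1]} > \vv^{[j+1]}$.

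Hence the leftmost differing position $m$ lies in $[j]$. Setting $i := j - m + 1 \in [j]$, the translation $\bb^{[j]}_m = \bb_i$, $\vv^{[j]}_m = \vv_i$ gives $\bb_i = 0$ and $\vv_i = 1$, while agreement on the strictly earlier positions $1,\dots,m-1$ of $\bb^{[j]}$ and $\vv^{[j]}$ corresponds exactly to $\bb_h = \vv_h$ for $i < h \le j$, which is the ``sandwiched'' agreement the lemma demands. It remains to extract the values of $\bb_{j+1}$ and $\vv_{j+1}$: writing $\bb^{[j+1]} = (\bb_{j+1}, \bb_j, \bb_{j-1}, \dots, \bb_1, \bb_{j+2}, \dots, \bb_k)$ and using $\bb_h = \vv_h$ for $i < h \le j$, the two strings $\bb^{[j+1]}$ and $\vv^{[j+1]}$ coincide on positions $2, 3, \dots, j-i+1$ and differ at position $j-i+2$, where $\bb$ carries $\bb_i = 0$ and $\vv$ carries $\vv_i = 1$. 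If $\bb_{j+1} = \vv_{j+1}$ then this forces $\bb^{[j+1]} < \vv^{[j+1]}$, contradicting the assumption; so $\bb_{j+1} \ne \vv_{j+1}$, and $\bb^{[j+1]} > \vv^{[j+1]}$ is compatible only with $\bb_{j+1} = 1$ and $\vv_{j+1} = 0$, finishing the proof. The only nuisance is purely bookkeeping---keeping straight that coordinate $i$ of $\bb$ lives at position $j-i+1$ of $\bb^{[j]}$ and at position $j-i+2$ of $\bb^{[j+1]}$---so I would sanity-check with a small example such as $k=5$, $j=3$ before writing the final version.
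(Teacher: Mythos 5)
Your proof is correct and follows essentially the same elementary approach as the paper: a direct comparison of the leftmost differing position in the reversed-prefix strings under the big-endian order. The only difference is organizational — the paper first deduces $\bb_{j+1}=1,\vv_{j+1}=0$ via an invariance argument (moving the equal first bit of $\bb^{[j+1]}$ to position $j+1$ preserves the comparison) and then locates $i$, whereas you locate $i$ first and read off the values of $\bb_{j+1},\vv_{j+1}$ from the explicit position of the leftmost difference; both routes are sound.
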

\begin{proof}
First, targeting a contradiction, suppose that $\bb_{j+1} = \vv_{j+1}$.
Observe that $\bb^{[j]}$ can be obtained from $\bb^{[j+1]}$ by moving the first bit to the position $j+1$, and likewise for $\vv^{[j]}, \vv^{[j+1]}$.
When $\bb_{j+1} = \vv_{j+1}$ then this operation does not affect the ``$<$'' relation between the encoded integers, so this implies $\bb^{[j]} > \vv^{[j]}$, contrary to the assumption.
Hence $\bb_{j+1} \ne \vv_{j+1}$ and it must be $\bb_{j+1} > \vv_{j+1}$.

Now, suppose that $\bb, \vv$ coincide on first $j$ coordinates.
Since $\bb_{j+1} > \vv_{j+1}$, this implies that $\bb^{[j]} > \vv^{[j]}$, a contradiction.
As a consequence, there is an index in $[j]$ at which $\bb, \vv$ differ; let $i$ denote the last such index. Then, the choice of $i$ implies $\bb_{h} = \vv_{h}$ for $i < h < j + 1$.
Consider the first coordinate at which $\bb^{[j]}, \vv^{[j]}$ differ: for the first vector this bit equals $\bb_i$ and for the second one it is $\vv_i$.
Since $\vv^{[j]}>\bb^{[j]}$, this implies $\vv_i > \bb_i$.
\end{proof}

We will refer to the structure observed in the last lemma as a {\em crossing pair}.

\begin{definition}
Consider two vectors $\bb, \vv \in \{0,1\}^k$.
We say that $(i,j) \in [k]^2$ is a {\em crossing pair} for $(\bb, \vv)$ if (a) $i < j$, (b) the pair $((\bb_i, \vv_i), (\bb_j, \vv_j))$ equals either $((0,1),(1,0))$ or $((1,0),(0,1))$
 and (c) $\bb_h = \vv_h$ for each $i < h < j$.
We define $C(\bb, \vv)$ to be the set of crossing pairs for $(\bb, \vv)$.
\end{definition}

\mic{As an example, consider $\bb = (1001101)$ and $\vv = (0101000)$.
They differ at positions $1, 2, 5, 7$ and $C(\bb, \vv) = \{(1,2), (2,5)\}$.}

\begin{observation}\label{obs:homo:pairs-disjoint}
    When $(i_1, j_1)$ and $(i_2,j_2)$ are different crossing pairs for some $(\bb, \vv)$ then $j_1 \le i_2$ or $j_2 \le i_1$.
\end{observation}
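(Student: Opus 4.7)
The plan is to prove the observation by a short case analysis, exploiting the definitional requirement that between the two indices of a crossing pair the vectors $\bb$ and $\vv$ agree on every coordinate. The contrapositive of the statement to prove is: if $j_1 > i_2$ and $j_2 > i_1$, then $(i_1,j_1) = (i_2,j_2)$. I would work directly with this contrapositive after imposing $i_1 \le i_2$ without loss of generality (the roles are symmetric), so that the remaining assumption reduces to $j_1 > i_2$.

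First I would dispose of the degenerate case $i_1 = i_2$. If the two pairs shared their first index but differed in the second, say $j_1 < j_2$, then $j_1$ would be a coordinate strictly between $i_2$ and $j_2$, so the crossing-pair property of $(i_2,j_2)$ would force $\bb_{j_1} = \vv_{j_1}$. But the crossing-pair property of $(i_1,j_1)$ dictates $(\bb_{j_1},\vv_{j_1}) \in \{(0,1),(1,0)\}$, a contradiction.

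Next I would handle the main case $i_1 < i_2 < j_1$. Here $i_2$ is a coordinate lying strictly between $i_1$ and $j_1$, so by condition (c) applied to the crossing pair $(i_1,j_1)$ we get $\bb_{i_2} = \vv_{i_2}$. Again this contradicts $(\bb_{i_2},\vv_{i_2}) \in \{(0,1),(1,0)\}$ coming from condition (b) for $(i_2,j_2)$. Hence $j_1 \le i_2$, and the observation is established.

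I do not anticipate any real obstacle — the entire argument is a two-line unfolding of conditions (b) and (c) of \cref{def:reduction:flow}'s crossing-pair definition. The only mild subtlety worth noting is the WLOG step: since the conclusion $j_1 \le i_2 \text{ or } j_2 \le i_1$ is symmetric in the two pairs, one may freely assume $i_1 \le i_2$ before starting the case split.
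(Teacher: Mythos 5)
Your proof is correct and is exactly the intended (one-line) argument behind this observation, which the paper states without proof: if the pairs overlapped non-trivially, condition (c) of one crossing pair would force agreement of $\bb$ and $\vv$ at an index where condition (b) of the other forces disagreement. One cosmetic slip: the crossing-pair conditions you invoke belong to the definition of a crossing pair, not to \cref{def:reduction:flow} (the non-crossing flow definition), so that cross-reference should be fixed.
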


We would like to employ some notion of a homotopy class for the $(s,t)$-paths.
However, instead of working with the topological notion of homotopy, we introduce a simpler definition, tailored just for our analysis of the graph $H_k$.
Let $E^i \subseteq E(H_k)$ denote the set of edges between $V(Q_i)$ and $V(Q'_i)$.
Each edge from $E^i$ is of the form $v_{i,j}v'_{i,j}$ for $j \in [0,2^k)$.
For $b \in \{0,1\}$ we define $\mathsf{Half}(k, b) \sub [0, 2^k)$ to be $[0, 2^{k-1})$ when $b=0$ and $[2^{k-1}, 2^{k})$ when $b=1$.
Next, we define $E^i_b \subseteq E^i$ as $\{v_{i,j}v'_{i,j} \mid j \in \mathsf{Half}(k, b)\}$. 
The edges from $E^i_0$ have black circles as endpoints on Figure~\ref{fig:homotopy2} whereas the ones from $E^i_1$ have brown circles as endpoints.

\begin{definition}\label{def:homo:homotopic}
For $\bb \in \{0,1\}^k$
we say that an $(s,t)$-path $P$ in $H_k$ is $\bb$-homotopic if $E(P) \cap E^i \sub E^i_{\bb_i}$ for each $i \in [k]$.
\end{definition}

A canonical example of a $\bb$-homotopic path is $P_\bb$.
Note that it might be the case that some $(s,t)$-path is not $\bb$-homotopic for any $\bb \in \{0,1\}^k$ according to our definition.
\mic{We can now express some geometric properties of $(s,t)$-paths in terms of crossing pairs.}

\begin{lemma}\label{lem:homo:two-paths-cross}
Consider two vectors $\bb, \vv \in \{0,1\}^k$.
Let $R$ be a $\vv$-homotopic $(s,t)$-path in $H_k$.
The graph given by the intersection  $R \cap P_\bb$ contains at least $|C(\bb, \vv)|$ connected components disjoint from $s$ and $t$.
\end{lemma}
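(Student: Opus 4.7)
The plan is to prove the lemma via a topological ``side-counting'' argument inside the planar embedding of $H_k$. Since $s$ and $t$ lie on the outer face, the simple curve $P_\bb$ from $s$ to $t$ together with an arc in the outer face from $t$ back to $s$ forms a Jordan curve. Let $A$ and $B$ denote the two resulting regions of $\rr^2$. Every vertex of $H_k$ not on $P_\bb$, and every edge of $H_k$ not used by $P_\bb$, lies entirely in exactly one of $A, B$. In particular, for the edges of $E^i \sm E(P_\bb)$, the edge $v_{i,x}v'_{i,x}$ is on a specific side determined by $\operatorname{sign}(x-\bb^{[i]})$. Since the first bit of $\bb^{[i]}$ equals $\bb_i$, all edges in $E^i_{1-\bb_i}$ lie on one single side of $P_\bb$, while those of $E^i_{\bb_i}\sm\{v_{i,\bb^{[i]}}v'_{i,\bb^{[i]}}\}$ can lie on either side.

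Next, I will reduce the counting of components to the counting of ``side changes'' of $R$. A connected component of $R\cap P_\bb$ disjoint from $\{s,t\}$ is a subpath of $R$ coinciding with a subpath of $P_\bb$; the maximal subpaths of $R$ in between lie entirely in $A$ or entirely in $B$. Each such component induces at most one change of the side on which $R$ travels (the entry side vs the exit side). Hence the number of components disjoint from $\{s,t\}$ is at least the number of side-transitions that $R$ makes from $s$ to $t$. In particular, it suffices to lower-bound the number of side-transitions of $R$ by $|C(\bb,\vv)|$.

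For each $i$ with $\vv_i\ne \bb_i$, every $E^i$-edge used by $R$ is in $E^i_{\vv_i}$, hence (by the setup above) lies on the unique side opposite to $\bb^{[i]}$, which I will label $\sigma_i\in\{A,B\}$. For a crossing pair $(i,j)\in C(\bb,\vv)$, the patterns $(\bb_i,\vv_i)$ and $(\bb_j,\vv_j)$ are opposite, so $\sigma_i\ne \sigma_j$. Moreover $\bb_h=\vv_h$ for $i<h<j$, so no forced side is prescribed in between. Since $R$ visits $E^i$ on side $\sigma_i$ and later visits $E^j$ on side $\sigma_j \ne \sigma_i$, at least one side-transition occurs in the portion of $R$ strictly between these two $E$-edges, and this transition yields a connected component of $R\cap P_\bb$ in the strip of columns between $i$ and $j$. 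Applying \cref{obs:homo:pairs-disjoint}, the column intervals $[i,j]$ of the different crossing pairs have disjoint interiors, so the side-transition attributed to each pair lives in a disjoint portion of $R$ and produces a distinct component.

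The main obstacle I anticipate is that $R$ may traverse the same column $i$ multiple times, possibly using several different $E^i$-edges, and when $\bb_i=\vv_i$ the side is not forced by the homotopy alone. I plan to handle this by fixing, for each column $i$ with $\vv_i \ne \bb_i$, a canonical $E^i$-edge of $R$ (for concreteness, the first one encountered as $R$ is traversed from $s$ to $t$) so that the side $\sigma_i$ is well-defined, and to argue that any fluctuations in intermediate columns (where $\bb_h=\vv_h$) either contribute extra side-transitions (only strengthening the bound) or cancel out without erasing the forced transitions associated with crossing pairs. With this bookkeeping, the disjointness provided by \cref{obs:homo:pairs-disjoint} will yield the required injection from $C(\bb,\vv)$ into the set of connected components of $R\cap P_\bb$ disjoint from $\{s,t\}$, completing the proof.
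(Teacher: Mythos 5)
Your argument is correct and is essentially the paper's proof in different topological clothing: both isolate, for each crossing pair $(i,j)$, a forced intersection of $R$ with $P_\bb$ strictly between columns $i$ and $j$ (your forced sides $\sigma_i\ne\sigma_j$ are exactly the "opposite halves" used there), and both invoke \cref{obs:homo:pairs-disjoint} to make these intersections land in pairwise disjoint portions of $R$ and hence in distinct components. The paper derives the forced intersection purely locally — the subpath of $R$ from $Q'_i$ to $Q_j$ and the corresponding subpath of $P_\bb$ have endpoints that interleave on those two vertical lines, so they must cross inside the strip — which sidesteps your global side-of-$P_\bb$ bookkeeping and renders the worry about intermediate columns moot; your canonical-first-edge device does close that point, since the first $E^i$-edge precedes the first $E^j$-edge along $R$ whenever $i<j$.
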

\begin{proof}
Let $(i,j) \in C(\bb, \vv)$.
The path $R$ must contain a subpath $R^{i,j}$ that starts at $Q'_i$, ends at $Q_j$, and is internally contained between  $Q'_i$ and $Q_j$.
Similarly, let $P^{i,j}_{\bb}$ be the subpath of $P_\bb$ between $Q'_i$ and $Q_j$.
By the definition of a crossing pair, the endpoints of $R^{i,j}$, $P^{i,j}_{\bb}$ are all distinct and they lie in different orders on $Q'_i$ and $Q_j$; hence these paths must intersect between $Q'_i$ and $Q_j$, exclusively.
From \cref{obs:homo:pairs-disjoint} we obtain that the paths $R^{i,j}$ constructed for distinct crossing pairs are disjoint.
Therefore, each $(i,j) \in C(\bb, \vv)$ contributes at least one connected component of  $R \cap P_\bb$ that is disjoint from $s,t$.
\end{proof}

\mic{For the special case $R = P_\vv$ we can make a stronger observation.}

\begin{lemma}\label{lem:homo:two-paths-cross-shortest}
Consider distinct vectors $\bb, \vv \in \{0,1\}^k$.
The number of \mic{internal} vertices shared by $P_{\bb}$ and $P_{\vv}$ equals $|C(\bb, \vv)|$. \meir{Why not split this into 2 lemmas? \mic{done}}
\end{lemma}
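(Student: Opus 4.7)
The plan is to translate the geometric intersection count into a purely combinatorial count of sign changes in the sequence $d_j := \bb^{[j]} - \vv^{[j]}$, and then show that sign changes are in bijection with crossing pairs.

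First, I would note that the map $\bb \mapsto \bb^{[j]}$ is a bijection on $\{0,1\}^k$, so $\bb \neq \vv$ implies $\bb^{[j]} \neq \vv^{[j]}$ and thus $d_j \neq 0$ for every $j \in [k]$. This means $P_\bb$ and $P_\vv$ visit distinct vertices on every vertical line $Q_i$ and $Q'_i$, so the shared internal vertices of $P_\bb$ and $P_\vv$ all lie strictly between some $Q'_j$ and $Q_{j+1}$ (the initial segments from $s$ and the final segments into $t$ meet only at the endpoints $s,t$, which are not internal). Between $Q'_j$ and $Q_{j+1}$ the two paths are single straight-line segments, which cross in at most one point; by \cref{obs:homo:signs} this crossing exists precisely when $d_j$ and $d_{j+1}$ have opposite signs. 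Thus the number of internal vertices in $P_\bb \cap P_\vv$ equals the number of sign changes in the sequence $d_1,\ldots,d_k$.

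Next I would set up a bijection between sign changes and crossing pairs. For each sign change between indices $j$ and $j+1$, \cref{obs:homo:bits} (applied in whichever direction the sign flips, using symmetry in $\bb,\vv$) produces an index $i \le j$ with $(\bb_i,\vv_i)\in\{(0,1),(1,0)\}$, $(\bb_{j+1},\vv_{j+1})$ the opposite pair, and $\bb_h = \vv_h$ for $i < h < j+1$; hence $(i,j+1) \in C(\bb,\vv)$. This gives a map from sign changes to crossing pairs, and it is injective because distinct sign changes occur at distinct values of $j+1$, while by \cref{obs:homo:pairs-disjoint} no two distinct crossing pairs can share the larger coordinate.

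For surjectivity, given a crossing pair $(i,j) \in C(\bb,\vv)$ with say $\bb_i = 0$, $\vv_i = 1$, $\bb_j = 1$, $\vv_j = 0$, I would compute directly the high-order bits of $\bb^{[j-1]},\vv^{[j-1]}$ and of $\bb^{[j]},\vv^{[j]}$. In $\bb^{[j]}$ and $\vv^{[j]}$ the first coordinate is $\bb_j = 1$ versus $\vv_j = 0$, so $d_j > 0$. In $\bb^{[j-1]}$ and $\vv^{[j-1]}$ the first $j-1-i$ coordinates read $\bb_{j-1},\ldots,\bb_{i+1}$ and $\vv_{j-1},\ldots,\vv_{i+1}$, which coincide by the defining property of a crossing pair; the first differing coordinate is position $j-i$, carrying $\bb_i = 0$ versus $\vv_i = 1$, whence $d_{j-1} < 0$. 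Thus the pair $(i,j)$ yields a sign change between $d_{j-1}$ and $d_j$, and the symmetric case is identical. The main step to verify carefully is this direct bit-level computation, but it is routine once the indices are laid out. Combining both directions gives $|C(\bb,\vv)|$ = number of sign changes = number of shared internal vertices, as claimed.
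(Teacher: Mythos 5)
Your proof is correct. The first two steps coincide with the paper's argument: both reduce the count of shared internal vertices to the number of sign changes in $j \mapsto \bb^{[j]} - \vv^{[j]}$ via \cref{obs:homo:signs}, and both use \cref{obs:homo:bits} to map each sign change to a crossing pair, with injectivity following because distinct sign changes produce crossing pairs with distinct larger coordinates. Where you diverge is in establishing the reverse inequality $|J| \ge |C(\bb,\vv)|$: the paper does not prove surjectivity combinatorially, but instead invokes \cref{lem:homo:two-paths-cross} with $R = P_\vv$, so that the topological lower bound of $|C(\bb,\vv)|$ components disjoint from $s,t$ forces equality. You instead verify the converse directly, checking at the bit level that a crossing pair $(i,j)$ forces $\bb^{[j-1]} - \vv^{[j-1]}$ and $\bb^{[j]} - \vv^{[j]}$ to have opposite signs; your computation is correct (the most significant bit of $\bb^{[j]}$ is $\bb_j$, and the first disagreement between $\bb^{[j-1]}$ and $\vv^{[j-1]}$ occurs at position $j-i$ with values $\bb_i < \vv_i$). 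Your route is more self-contained and yields an explicit bijection between sign changes and crossing pairs; the paper's route is shorter here because \cref{lem:homo:two-paths-cross} is needed anyway for the later uniqueness argument (\cref{lem:homo:unique}), so reusing it costs nothing. One cosmetic remark: your appeal to \cref{obs:homo:pairs-disjoint} for injectivity is unnecessary, since two crossing pairs sharing the larger coordinate already contradict condition (c) of the definition directly.
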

\begin{proof}
Clearly,  $P_{\bb}$ and $P_{\vv}$ cannot intersect between $Q_i$ and $Q'_i$ for any $i \in [k]$.
Let $J \sub [k-1]$ be the set of indices $j$ for which $\bb^{[j]} - \vv^{[j]}$ and $\bb^{[j+1]} - \vv^{[j+1]}$ have different signs.
By \cref{obs:homo:signs}, the number of common internal vertices in $P_{\bb}$ and $P_{\vv}$ equals $|J|$. 
Let $j \in J$ and assume w.l.o.g. that $\bb^{[j+1]} > \vv^{[j+1]}$ and $\bb^{[j]} < \vv^{[j]}$.
By \cref{obs:homo:bits} we have $\bb_{j+1} = 1$ and $\vv_{j+1} = 0$.
Furthermore, $\bb_i = 0$ and  $\vv_i = 1$, where $i$ is the last index in $[j]$ at which $\bb, \vv$ differ.
Hence $(i,j+1)$ forms a crossing pair for $(\bb,\vv)$.
The crossing pairs obtained for different $j_1,j_2 \in J$ must be different, what implies $|J| \le |C(\bb,\vv)|$.
\mic{The equality follows from the \cref{lem:homo:two-paths-cross}, as the number of shared internal vertices is no less than the number of components in $P_\bb \cap P_\vv$ disjoint from $s$ and $t$.}
\end{proof}

\mic{As a next step, we compute the length of the path $P_\bb$.
It can be expressed with a very convenient formula which will come in useful later.
}

\begin{definition}
We define function $\func_k \colon \{0,1\}^k \to \nn$ as follows.
\[
\func_k(b_1b_2\dots b_k) = \sum_{1 \le j < i \le k} 1_{[b_i \ne b_j]} \cdot 2^{k-i+j-1}.
\]
\end{definition}

When the parameter $k$ is clear from the context, we abbreviate $\func = \func_k$.

\begin{lemma}\label{lem:homo:length-Pb}
For each $\bb \in \{0,1\}^k$ the length of the path $P_{\bb}$ in $H_k$ equals $2k+1 + \func(\bb)$.
\end{lemma}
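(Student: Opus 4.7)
The plan is to count the edges of $P_\bb$ directly, by first enumerating the vertices lying on it and then applying a double-counting identity to connect the crossing count to $\func(\bb)$.

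First, I would identify the ``base'' vertices of $P_\bb$, namely $s$, $t$, and the vertices $v_{i,\bb^{[i]}}, v'_{i,\bb^{[i]}}$ for each $i \in [k]$, giving $2k+2$ vertices. All remaining internal vertices of $P_\bb$ arise from points at which $P_\bb$ meets another curve $P_\vv$ for some $\vv \ne \bb$. By the construction of $H_k$ (in which each inter-column segment is a straight line, placed in general position so that no three curves meet at a common point), the set of internal vertices shared by $P_\bb$ and $P_\vv$ is in bijection with the connected components of $P_\bb \cap P_\vv$ disjoint from $s$ and $t$. By \cref{lem:homo:two-paths-cross-shortest}, this number equals $|C(\bb,\vv)|$. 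Thus the number of edges of $P_\bb$ equals
\[
(2k+2) + \sum_{\vv \ne \bb} |C(\bb,\vv)| - 1 = 2k+1 + \sum_{\vv \ne \bb} |C(\bb,\vv)|.
\]

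It remains to verify that $\sum_{\vv \ne \bb} |C(\bb,\vv)| = \func(\bb)$. I would do this by exchanging the order of summation: for each pair $1 \le p < q \le k$, I would count the number of vectors $\vv \in \{0,1\}^k$ for which $(p,q) \in C(\bb,\vv)$. By definition of a crossing pair, this forces $\bb_p \ne \bb_q$, $\vv_p = 1-\bb_p$, $\vv_q = 1-\bb_q$, and $\vv_h = \bb_h$ for all $p < h < q$. The remaining coordinates of $\vv$ (namely those with index in $[1,p-1] \cup [q+1,k]$) are entirely free, yielding $2^{p-1} \cdot 2^{k-q} = 2^{k-q+p-1}$ choices when $\bb_p \ne \bb_q$, and zero otherwise. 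Summing over all pairs $(p,q)$ gives exactly $\func(\bb)$ (modulo the relabeling $j \leftrightarrow p$, $i \leftrightarrow q$ in the defining formula).

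The main obstacle is really just being careful about the general-position assumption implicit in the construction of $H_k$; once that is accepted (and it is what makes each crossing a single degree-$4$ vertex), the counting argument is straightforward. The double-counting identity gives a clean and short computation, and together with \cref{lem:homo:two-paths-cross-shortest} closes the proof.
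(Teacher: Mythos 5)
Your proof is correct and follows essentially the same route as the paper's: both reduce the edge count to $2k+1$ plus the number of crossings with the other curves $P_\vv$, invoke \cref{lem:homo:two-paths-cross-shortest} to identify that count with $\sum_{\vv\ne\bb}|C(\bb,\vv)|$, and evaluate the sum by the same exchange of summation order. The only cosmetic difference is that you count vertices and subtract one rather than counting crossings directly, which is the same computation.
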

\begin{proof}
The length of $P_b$ equals the total number of its crossings with $P_\vv$ for $\vv \ne \bb$ plus the number of crossings with $Q_i, Q'_i$ ($2k$ in total) plus one.
By \cref{lem:homo:two-paths-cross-shortest} it suffices to show that the sum of $|C(\bb,\vv)|$ over $\vv \ne \bb$ equals $\func(\bb)$.
To this end, we change the order of summation and, for each pair $1 \le i < j \le k$, we count the number of vectors $\vv \ne \bb$ for which $(i,j) \in C(\bb,\vv)$. So, consider some pair $1 \le i < j \le k$.
First, note that if $(i,j)\in C(\bb,\vv)$ is non-empty for any $\vv$, then $\bb_i \ne \bb_j$.
In this case, $(i,j) \in C(\bb,\vv)$ if and only if $\vv_i \ne \bb_i$, $\vv_j \ne \bb_j$, and $\bb, \vv$ coincide between $i$ and $j$.
These conditions fix exactly $(j-i+1)$ coordinates of $\vv$ and on the remaining coordinates $\vv$ may be arbitrary.
Therefore there are exactly $2^{k-i+j-1}$ vectors $\vv$ for which $(i,j) \in C(\bb,\vv)$.
This agrees with the definition of function $\func$.
\end{proof}

\mic{Finally, we prove the most crucial property of the graph $H_k$: the uniqueness of a shortest $(s,t)$-path in each homotopy class.}

\begin{lemma}\label{lem:homo:unique}
For each $\bb \in \{0,1\}^k$ the path $P_\bb$ is the unique shortest $\bb$-homotopic $(s,t)$-path in~$H_k$.
\end{lemma}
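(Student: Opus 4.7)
The plan is to prove the lemma in two steps: (a) show that $|R|\geq 2k+1+\gamma(\bb)=|P_\bb|$ for any $\bb$-homotopic $(s,t)$-path $R$ in $H_k$, and (b) show that equality forces $R=P_\bb$.

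For (a), I would decompose the edges of $R$ into horizontal edges in the layers $E^i$, sloped edges in the $k+1$ slabs between consecutive vertical lines (including the boundary slabs incident to $s$ and $t$), and possibly vertical edges along the lines $Q_i, Q'_i$. Since $R$ is $\bb$-homotopic, it must use at least one edge from each $E^i$; for a shortest $R$ we may assume exactly one, at some row $j_i\in\mathsf{Half}(k,\bb_i)$, with no vertical traversal along $Q_i, Q'_i$. Applying \cref{lem:homo:two-paths-cross} with the roles of $\bb$ and $\vv$ swapped (valid since $C(\bb,\vv)=C(\vv,\bb)$ by definition), every $\bb$-homotopic $R$ meets each $P_\vv$ (for $\vv\neq\bb$) in at least $|C(\bb,\vv)|$ connected components disjoint from $s,t$. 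A careful case analysis translating these components into transversal crossings within the slabs, summed over $\vv$, yields $\sum_i(\text{slab-}i\text{ crossings})\geq \sum_\vv|C(\bb,\vv)|=\gamma(\bb)$, where the last identity is the same combinatorial computation underlying the proof of \cref{lem:homo:length-Pb}. Combined with $k$ horizontal edges and the baseline of one edge per slab, this yields $|R|\geq 2k+1+\gamma(\bb)$.

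For (b), suppose $|R|=|P_\bb|$. Then every inequality above must be tight: $R$ uses exactly one $E^i$-edge per $i$, takes a shortest path within each slab, and meets each $P_\vv$ in exactly $|C(\bb,\vv)|$ components, each a single transversal crossing vertex. The critical remaining claim is that tightness forces $j_i=\bb^{[i]}$ for every $i$; once established, the entry and exit points of $R$ in every slab coincide with those of $P_\bb$, and since the sloped segments in each slab cross in general position, the unique shortest slab-path between these endpoints is the straight-line segment of $P_\bb$ itself, so $R=P_\bb$. The main obstacle is this constraint-propagation step: if $j_i\neq\bb^{[i]}$ for some $i$, one must construct a witness $\vv$ for which $R\cap P_\vv$ has strictly more than $|C(\bb,\vv)|$ components (or a component larger than a single vertex due to an unwanted shared horizontal edge in some $E^h$), violating tightness. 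Since $j_i\in\mathsf{Half}(k,\bb_i)$ pins down only the first bit of $j_i$ (to $\bb_i$), any deviation from $\bb^{[i]}$ occurs in subsequent bits; the witness $\vv$ must be constructed---exploiting the reversal operation $\bb\mapsto\bb^{[i]}$ and the structural characterization of crossing pairs from \cref{obs:homo:bits}---to agree with $\bb$ on coordinates where $j_i$ matches $\bb^{[i]}$ and to differ on a coordinate exposing the first mismatch, thereby producing the extra intersection component with $R$.
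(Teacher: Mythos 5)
Your step (a) is essentially the same count the paper performs, just organized by edges-in-slabs rather than by the paper's cleaner device of charging each internal vertex of $R$ to the unique member of the family $\{P_\vv\}_{\vv}\cup\{Q_i,Q_i'\}_{i}$ that $R$ ``enters'' at that vertex; with that charging, $|R|=1+\sum_{P}|\Gamma_R(P)|\ge 1+2k+\sum_{\vv}|C(\bb,\vv)|=|P_\bb|$ follows from \cref{lem:homo:two-paths-cross} and the computation in \cref{lem:homo:length-Pb} with no slab-by-slab case analysis. So the lower bound is fine in spirit, though ``a careful case analysis translating these components into transversal crossings within the slabs'' is doing real work that you have not actually done (in particular you must also account for paths that re-enter a layer $E^i$ or travel vertically along some $Q_i$, which your ``we may assume exactly one edge per layer'' quietly sets aside).

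The genuine gap is in step (b). You correctly identify the crux --- that tightness should force $j_i=\bb^{[i]}$ for every $i$ --- but you do not prove it: you only describe the properties a witness vector $\vv$ ``must'' have (agree with $\bb$ where $j_i$ matches $\bb^{[i]}$, differ at the first mismatch, produce an extra component of $R\cap P_\vv$). That construction is exactly the hard content of the uniqueness claim, and as stated it is not clear it even goes through coordinatewise, since a single deviation $j_i\neq\bb^{[i]}$ constrains $R$ only through the reversal map and the interaction with \emph{all} $2^k-1$ paths $P_\vv$ simultaneously. The paper sidesteps this entirely with a short argument you are missing: take $\vv$ to be the vector whose path $P_\vv$ contains the \emph{last} edge of $R$ (possibly $\vv=\bb$); since $R\neq P_\vv$, the component of $R\cap P_\vv$ containing $t$ has at least two vertices, its first vertex (along $R$) lies in $\Gamma_R(P_\vv)$, and this component was never counted among the $|C(\bb,\vv)|$ components disjoint from $s,t$, so $|\Gamma_R(P_\vv)|>|C(\bb,\vv)|$ and hence $|R|>|P_\bb|$. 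Without either that observation or a completed witness construction, your proof of uniqueness is incomplete; moreover your final sub-step (that matching entry/exit points in each slab force $R$ to follow the straight segment of $P_\bb$) is itself another unproved shortest-path-uniqueness claim of the same flavor.
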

\begin{proof}
Let $R \ne P_\bb$ be a $\bb$-homotopic $(s,t)$-path in $H_k$.
We are going to show that $|R| > |P_\bb|$.
First observe that $R$ cannot be the path $P_\vv$ for any $\vv \ne \bb$ as then it would not be $\bb$-homotopic.

Let $\pp$ be the family of all paths of the form $P_\vv$, $Q_i$, $Q'_i$.
Every vertex $v \in V(H_k) \sm \{s,t\}$ is an intersection of some two paths
$P_1,P_2 \in \pp$. 
For $P \in \pp$ let $\Gamma_R(P)$ be the set of internal vertices $v$ in $R$ such that $v \in V(P)$ but the predecessor of $v$ on $R$ does not belong to $V(P)$.
\mic{
Observe that when $v$ is an internal vertex of $R$ and $v$ is an intersection of paths $P_1,P_2 \in \pp$ then the predecessor of $v$ is either $s$ (which belongs to exactly one of $V(P_1)$, $V(P_2)$) or it is an intersection of paths $P'_1,P'_2 \in \pp$ with $|\{P_1,P_2\} \cap \{P'_1,P'_2\}| = 1$.
Consequently, the sets $\Gamma_R(P)$ are pairwise disjoint and
 every internal vertex of $R$ belongs to some set $\Gamma_R(P)$.} \meir{Elaborate a bit on the last sentence. \mic{done}}
We infer that the length of $R$ equals $\sum_{P \in \pp} |\Gamma_R(P)|$ + 1.
\mic{When $P = P_\vv$ for some $v \in \{0,1\}^k$ then $|\Gamma_R(P_\vv)|$ is lower bounded by the number of connected components of $R \cap P_\vv$ disjoint from $s,t$, which in turn is at least $|C(\bb,\vv)|$ due to 
\cref{lem:homo:two-paths-cross}.
So $|\Gamma_R(P_\vv)| \ge |C(\bb,\vv)|$.} \meir{Elaborate a bit on the last sentence.\mic{done}}
When $P = Q_i$ or $P = Q'_i$ then $|\Gamma_R(P)| \ge 1$.
If in both cases we always had equalities then the length of $R$ would be the same as $P_\bb$ (by \cref{lem:homo:two-paths-cross-shortest}).
Therefore, it is sufficient to show that for some $\vv \in \{0,1\}^k$ we have strict inequality $|\Gamma_R(P_\vv)| > |C(\bb,\vv)|$.

Let $\vv$ be the vector for which the last edge on $R$ (the one incident to $t$) belongs to $P_\vv$ (possibly $\vv = \bb$).
\mic{Since $R \ne P_\vv$, we obtain that 
$R \cap P_\vv$ has a connected component of size at least $2$ containing $t$ but not $s$.
The first vertex (with respect to $R$) in this component belongs to $\Gamma_R(P_\vv)$ but this component 
has not been taken into account in the bound $|\Gamma_R(P_\vv)| \ge |C(\bb,\vv)|$ above (because it contains $t$).
Therefore, $|\Gamma_R(P_\vv)| > |C(\bb,\vv)|$, what implies $|R| > |P_\bb|$.}
This concludes the proof.\meir{Why do you need 2 cases? Where do you use $C(\bb,\vv) \neq \emptyset$? \mic{reformulated}}

\end{proof}

\subsubsection{Dual flows}

Before we 
\mic{are ready to finish the construction of a} vector-containment gadget, we need to establish a method for constructing non-crossing flows with certain properties in a dual graph. 

\begin{lemma}[{\cite[Prop.~2.6.4]{Mohar2001GraphsOS}}]
\label{lem:homo:dual-cut-cycle}
Let $G$ be a 2-connected plane multigraph and $G^*$ be its dual.
Suppose that $S \sub E(G)$ is an inclusion-minimal $(u,v)$-edge-separator for some $u,v \in V(G)$.
Then $S^*$ is an edge set of a cycle in $G^*$ 
such that 
the vertices $u,v$ belong to different connected components of $\rr^2 \sm S^*$.
\end{lemma}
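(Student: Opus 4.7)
The plan is to prove this via classical cycle-cut duality in planar multigraphs, combined with a Jordan curve argument. First I would exploit inclusion-minimality to analyze the structure of $G - S$: it must have exactly two connected components $A \ni u$ and $B \ni v$, because a third component would render some $S$-edge incident to it removable without reconnecting $u$ and $v$, contradicting minimality; and every edge of $S$ must connect $A$ to $B$ for the same reason.

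Next, I would show that $S^* \sub E(G^*)$ is a simple cycle. Since $G$ is 2-connected, the boundary of every face $f$ is a simple cycle in $G$; walking around $\partial f$ we alternate between vertices of $A$ and of $B$ only via $S$-edges, so the number of $S$-edges on $\partial f$ is even. Equivalently, every vertex $f^*$ of $G^*[S^*]$ has even degree, so $G^*[S^*]$ decomposes into edge-disjoint cycles $C_1,\dots,C_m$ of $G^*$. Each cycle $C_i$, being a simple closed curve in the plane, encloses a bounded region $D_i \sub \rr^2$, and the corresponding edge set $C_i^* \sub S$ is precisely the set of edges of $G$ with one endpoint in $V(G) \cap D_i$ and the other outside, i.e., an edge-cut of $G$. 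If no single $C_i^*$ separated $u$ from $v$, then $u$ and $v$ would lie on the same side of every $C_i$, hence on the same side of $\bigcup_i C_i^* = S$, contradicting that $S$ is a $(u,v)$-separator. Therefore some $C_i^*$ alone separates $u$ from $v$, and by inclusion-minimality $C_i^* = S$, so $S^*$ is the single cycle $C_i$.

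Finally, applying the Jordan curve theorem to $S^*$ yields two connected components of $\rr^2 \sm S^*$. Picking any edge $ab \in S$ with $a \in A$, $b \in B$, its planar drawing crosses $(ab)^* \in S^*$ exactly once, so $a$ and $b$ lie in opposite components. Since $A$ is connected in $G - S$ and the image of $G - S$ is disjoint from the curve $S^*$, all of $A$, and in particular $u$, lies in one component; symmetrically $v \in B$ lies in the other. The main obstacle is the step isolating $S^*$ as a single cycle rather than a disjoint union of cycles; this is where the 2-connectedness of $G$ (which ensures simple face boundaries and a loop-free dual) and the inclusion-minimality of $S$ (which selects from a union of cuts the one actually separating $u$ from $v$) jointly do the work.
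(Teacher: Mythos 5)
The paper does not prove this lemma; it is imported verbatim from Mohar and Thomassen (Prop.~2.6.4), so there is no in-paper argument to compare against. Your proof is essentially the standard cycle--bond duality argument from that textbook: minimality forces $G\setminus S$ to have exactly two components $A\ni u$, $B\ni v$ with $S=E_G(A,B)$, the parity-of-face-boundaries argument shows $S^*$ is an even subgraph of $G^*$ and hence a disjoint union of cycles, and minimality again collapses this union to a single cycle. Steps 1--3 are sound (2-connectedness is used correctly to rule out bridges, i.e.\ loops in $G^*$, and to make face boundaries closed walks on which the $A$/$B$-alternation parity argument works).

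The one soft spot is the deduction ``if $u$ and $v$ lie on the same side of every curve $C_i$, then they lie on the same side of $\bigcup_i C_i^*=S$, contradicting that $S$ separates them.'' Being on the same side of each Jordan curve $C_i$ is a purely topological statement and does not by itself yield a $u$--$v$ path in $G\setminus S$ (nor does ``$u,v$ on the same side of $C_i$'' even imply that $C_i^*$ fails to separate them, since deleting the cut $\delta(V_i)$ may disconnect $G[V_i]$ further). The gap is easily closed with the facts you already established: take any cycle $C_1$ in the decomposition and any edge $ab\in C_1^*\subseteq S$ with $a\in A$, $b\in B$. In $G\setminus C_1^*$ the vertex $u$ reaches $a$ inside the connected set $A$ (as $E(G[A])\cap S=\emptyset$) and $v$ reaches $b$ inside $B$, while $a$ and $b$ lie in different components because $C_1^*=\delta(V(G)\cap D_1)$ and $a,b$ are on opposite sides of the curve $C_1$. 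Hence $C_1^*$ alone is a $(u,v)$-edge-separator contained in $S$, and inclusion-minimality gives $C_1^*=S$ directly, i.e.\ $m=1$. With that substitution the proof is complete and correct.
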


\begin{figure}
    \centering
\includegraphics[width=.5\linewidth]{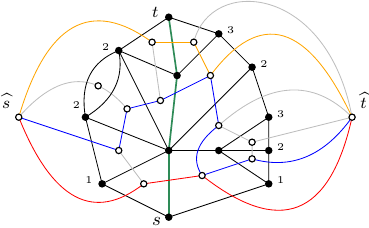} 
\caption{An $(s,t)$-dual $(G^\circ, {{\widehat s}}, {{\widehat t}})$ of a multigraph $G$. The vertices of $G$ are black whereas the vertices of $G^\circ$ are hollow. A shortest $(s,t)$-path $P$ in $G$ is drawn with solid green lines.
A non-crossing family of three edge-disjoint $({{\widehat s}}, {{\widehat t}})$-paths in $G^\circ$ is highlighted with colors.
Each of these paths must cross some edge of $P$.
They illustrate the construction from \cref{lem:homo:dual-flow}.
We have indicated the distances from $s$ in $G$ for the vertices on the paths $P^s$ and $P^t$. 
}
\label{fig:dual-flow}
\end{figure}

\begin{definition}\label{def:homo:dual}
Let $G$ be a connected plane multigraph and $s,t \in V(G)$ lie on the outer face.
Let $G_{st}$ be obtained from $G$ by inserting the edge $st$ within the outer face and $G_{st}^*$ be the dual of $G_{st}$.
Let ${{\widehat s}}, {{\widehat t}}$ denote the endpoints of the edge $(st)^*$ in $G_{st}^*$ so that ${{\widehat s}}$ corresponds to the face incident to the edge $st$ on the right when considering the orientation of $st$ from $s$ to $t$.
The {\em $(s,t)$-dual of $G$} is the triple $(G^*_{st} \sm (st)^*, {{\widehat s}}, {{\widehat t}})$.
\end{definition}

See Figure \ref{fig:dual-flow} for an example of an $(s,t)$-dual. 
We use notation $G^\circ$ to refer to the graph $G^*_{st} \sm (st)^*$ (when $s,t$ are clear from context). 
For an edge $e \in E(G)$ we refer to its counterpart in $G^\circ$ as $e^\circ$.
Similarly, for an internal face $f$ in $G$ we refer to the corresponding vertex in $G^\circ$ as $f^\circ$.

\mic{We will utilize the correspondence between the length of the shortest $(s,t)$-path in $G$ and the maximal size of an $({{\widehat s}}, {{\widehat t}})$-flow in the $(s,t)$-dual of $G$.
The following lemma also reveals which of the edges incident to ${{\widehat s}}, {{\widehat t}}$ are used in this flow.}

\begin{lemma}\label{lem:homo:dual-flow}
Let $G$ be a 2-connected plane multigraph whose outer face is confined by a simple cycle $C$, $s,t \in V(C)$, and $d = \dist_G(s,t)$.
Let $(G^\circ, {{\widehat s}}, {{\widehat t}})$ be the $(s,t)$-dual of $G$.
Next, let $P^s, P^t$ be the $(s,t)$-paths in $C$ such that the edges of $P^s$ (resp. $P^t$) are incident to ${{\widehat s}}$ (resp. ${{\widehat t}}$). 

For $i \in [d]$ let $v^s_i$  be the last vertex on $P^s$ with $\dist_G(s,v^s_i) < i$ and $u^s_i$ be its successor on~$P^s$.
Analogously we define vertices $v^t_i, u^t_i \in V(P^t)$.
Then there exists a non-crossing family of $d$ edge-disjoint $({{\widehat s}},{{\widehat t}})$-paths $P^\circ_1, P^\circ_2, \dots, P^\circ_d$ in $G^\circ$, such that $(v^s_iu^s_i)^\circ\in P^\circ_i$ and $(v^t_iu^t_i)^\circ \in P^\circ_i$ 
for each $i \in [d]$.
\end{lemma}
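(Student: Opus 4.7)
The plan is to construct each $P_i^\circ$ from the BFS layers of $G$ rooted at $s$, thereby deriving edge-disjointness and non-crossing from the nested structure of these layers. For $i \in \{0,1,\ldots,d\}$ set $L_i = \{v \in V(G) : \dist_G(s,v) \leq i\}$, so $s \in L_0$ and $t \in L_d \setminus L_{d-1}$, and for $i \in [d]$ set the level cut $F_i = E_G(L_{i-1}, V(G) \setminus L_{i-1})$. By the triangle inequality, every edge of $F_i$ has one endpoint at distance $i-1$ and the other at distance $i$ from $s$; in particular $F_1,\ldots,F_d$ are pairwise disjoint, and $v_i^s u_i^s, v_i^t u_i^t \in F_i$ directly from the definitions. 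The plan is to construct each $P_i^\circ$ as a simple $(\widehat s,\widehat t)$-path in $G^\circ$ whose edge set lies in $F_i^\circ := \{e^\circ : e \in F_i\}$, so that edge-disjointness across $i$ comes for free.

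Next I would check that $F_i^\circ$, viewed as a subgraph of $G^\circ$, has odd degree exactly at $\widehat s$ and $\widehat t$. For an internal face $f$ of $G$ (bounded by a simple cycle since $G$ is 2-connected), walking along $\partial f$ returns to the starting BFS layer, so an even number of level crossings occurs, giving $f^\circ$ even degree in $F_i^\circ$. For $\widehat s$, whose boundary in $G_{st}$ consists of $P^s$ together with the added edge $st \notin F_i$, the walk along $P^s$ from $s \in L_{i-1}$ to $t \notin L_{i-1}$ makes an odd number of level crossings, so $\widehat s$ has odd degree; symmetrically for $\widehat t$. By the handshake lemma, $\widehat s$ and $\widehat t$ lie in a single connected component $K_i$ of $F_i^\circ$, and $K_i$ contains an Eulerian trail from $\widehat s$ to $\widehat t$.

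To isolate the simple path whose first and last edges are $(v_i^s u_i^s)^\circ$ and $(v_i^t u_i^t)^\circ$, I would use the geometric meaning of $K_i$. Let $R_i$ be the connected component of the subgraph of $G$ induced by $V(G) \setminus L_{i-1}$ that contains $t$. By the choice of $v_i^s$ as the last vertex on $P^s$ with distance $< i$, the subpath of $P^s$ from $u_i^s$ to $t$ stays outside $L_{i-1}$ and is connected, so $u_i^s \in R_i$; analogously $u_i^t \in R_i$. Hence both $v_i^s u_i^s$ and $v_i^t u_i^t$ lie in the boundary $\partial R_i \subseteq F_i$, and among the connected boundary curves of $R_i$ in the plane, exactly one -- the ``outer'' one -- traverses the portion of $C$ between $u_i^s$ and $u_i^t$ containing $t$. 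This outer boundary curve crosses $C$ precisely at $v_i^s u_i^s$ and $v_i^t u_i^t$, since every other $F_i$-crossing on $P^s$ (resp.\ $P^t$) is paired with a subsequent $L_{i-1}$-reentry on the same side, closing off an inner arc. The outer curve then corresponds to a simple $(\widehat s,\widehat t)$-path $P_i^\circ$ in $G^\circ$ with the required endpoints and $E(P_i^\circ) \subseteq F_i^\circ$.

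Non-crossing falls out of the nested chain $L_0 \subsetneq L_1 \subsetneq \cdots \subsetneq L_{d-1}$, which gives nested regions $R_1 \supseteq \cdots \supseteq R_d$ and hence nested outer boundary curves in the plane: at any face-vertex $f^\circ$ shared by $P_i^\circ$ and $P_j^\circ$ with $i<j$, the two consecutive-edge pairs do not interleave in the cyclic order $\pi_{G^\circ}(f^\circ)$, matching the non-crossing condition from \cref{def:reduction:flow}. The main obstacle will be to make the ``outer boundary curve'' argument rigorous, namely to show via planarity and the Jordan curve theorem (applied to $C$) that the intermediate $F_i$-crossings on $P^s$ and $P^t$ match up into nested arcs of $\partial R_i$, leaving $v_i^s u_i^s$ and $v_i^t u_i^t$ as the sole outermost endpoints, and that the corresponding walk in $G^\circ$ can indeed be taken as a simple path.
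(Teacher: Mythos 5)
Your construction is, at its core, the paper's: your $R_i$ (the component of $G-L_{i-1}$ containing $t$) is exactly the set $V_i$ used in the paper, your $\partial R_i$ is its cut $S_i$, and the non-crossing argument via the nested chain $R_1\supseteq\cdots\supseteq R_d$ is the same as the paper's nested regions $D_1\supset\cdots\supset D_d$. The parity/Eulerian-trail detour through the full level cut $F_i$ is not needed and does not help: odd degree at $\widehat s$ only tells you that an odd number of edges of $P^s$ lie in $F_i^\circ$, not that exactly one does, and an Eulerian trail is neither simple nor non-crossing.

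The genuine gap is the one you flag yourself: showing that the ``outer boundary curve'' of $R_i$ crosses $C$ \emph{only} at $v^s_iu^s_i$ and $v^t_iu^t_i$, i.e.\ that $(\partial R_i)^\circ$ contains a simple $(\widehat s,\widehat t)$-path with exactly those end-edges. A priori an earlier block of $P^s$ lying outside $L_{i-1}$ could belong to $R_i$ (connected to $t$ through the interior), in which case $\partial R_i$ would contribute three or more edges incident to $\widehat s$ and your ``outer curve'' would not be a simple arc; your remark about crossings ``pairing up into nested arcs'' asserts this cannot happen but does not prove it. The paper closes exactly this gap by a minimality argument: for every $vu\in S_i$ with $u\in V_i$ one shows $v\in L_{i-1}$ (else $v$ would lie in the component of $t$), so a shortest $(s,v)$-path avoids $F_i\supseteq S_i$ and, together with $uv$ and the connectivity of $G[V_i]$, witnesses that $S_i\setminus\{vu\}$ is not an $(s,t)$-separator. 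Hence $S_i$ is an inclusion-minimal edge cut, $S_i\cup\{st\}$ is inclusion-minimal in $G_{st}$, and \cref{lem:homo:dual-cut-cycle} yields that $S_i^*\cup\{(st)^*\}$ is a \emph{simple} cycle through $\widehat s$ and $\widehat t$; since a cycle meets each vertex in exactly two edges, exactly one edge of $S_i$ lies on $P^s$ (and one on $P^t$), which must be the designated one. You could alternatively finish with a planarity/interleaving argument (if an earlier outer-cycle vertex $a\notin L_{i-1}$ were in $R_i$, the connected sets $G[L_{i-1}]\ni s,b$ and $G[R_i]\ni a,t$ would interleave on $C$), but some such argument must be supplied; as written the proof is incomplete at its central step.
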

\begin{proof}
 For $i \in [d]$ let $\widehat {V}_i = \{v \in V(G) \mid \dist_G(s,v) \ge i\}$ and let $V_i \sub \widehat V_i$ induce the connected component of $G[\widehat V_i]$ that contains $t$.
 We set $S_i = E(V_i, V(G) \sm V_i)$.
 The vertices $u^s_i, u^t_i$ belong to $V_i$ because they can be connected to $t$ with subpaths of $P^s, P^t$ contained in  $G[\widehat V_i]$.
Therefore, $v^s_iu^s_i \in S_i$ and $v^t_iu^t_i \in S_i$.

We claim that $S_i$ is an inclusion-minimal $(s,t)$-separator.
Let $vu \in S_i$ and $u \in V_i$, $v \not\in V_i$.
\mic{Observe that $v \not\in \widehat V_i$ because otherwise it would belong to the connected component of $G[\widehat V_i]$ containing $t$, which would imply $v \in V_i$.}
\meir{Not immediate from construction. Add a short explanation (or show in a fig). \mic{added}} Hence $\dist_G(s,v) = i - 1$ and there is an $(s,v)$-path in $G \sm S_i$. 
As $G[V_i]$ is connected, this implies that $S_i \sm vu$ is not an $(s,t)$-separator, hence $S_i$ is minimal.

Recall that $G_{st}$ is obtained from $G$ by inserting the edge $st$ and $G_{st}^*$ is the dual of $G_{st}$.
 Then $ {S}_i \cup \{st\}$ is an inclusion-minimal $(s,t)$-separator in $G_{st}$.
 By \cref{lem:homo:dual-cut-cycle} the set ${S}^*_i \cup \{(st)^*\} \sub E(G_{st}^*)$ forms a cycle in $G^*_{st}$ separating the vertices $s$ and $t$ on the plane.
 This cycle goes through vertices  ${{\widehat s}}$,  ${{\widehat t}}$, and the edge ${{\widehat s}}{{\widehat t}} = (st)^*$.
 Therefore ${S}^*_i$ forms an edge set of an $({{\widehat s}},  {{\widehat t}})$-path in $G^\circ = G^*_{st} \sm (st)^*$; this shall be the path  $P^\circ_i$.
 We have  $(v^s_iu^s_i)^\circ\in  P^\circ_i$, $(v^t_iu^t_i)^\circ \in P^\circ_i$, and
these paths are edge-disjoint because the sets $S_1,\dots,S_d$ are disjoint.

Finally we argue that  $P^\circ_1, P^\circ_2, \dots, P^\circ_d$ are non-crossing.
Let $D_i \subset \rr^2$ be the connected component of $\rr^2 \sm ({S}^*_i \cup \{(st)^*\})$ containing $t$ (it may be unbounded).
We have $D_i \cap V(G_{st}) = V_i$ for each $i \in [d]$.
Therefore, $D_i$ is a union of the faces in the dual $G_{st}^*$ corresponding to vertices from $V_i$. 
Observe that for $i < d$ the set $V_{i+1}$ is contained in $V_i$.
Consequently, we have $D_1 \supset D_2 \supset \dots \supset D_d$.
Since the path $P_i$ is an arc of $\partial D_i$, these paths cannot cross.
 See Figure~\ref{fig:dual-flow} for an illustration.

\end{proof}

\subsubsection{Construction of a non-crossing flow}

A direct approach to construct a vector-containment gadget would be to consider the $(s,t)$-dual $(H^\circ, {{\widehat s}},  {{\widehat t}})$ of the graph $H_k$ and set $z_i$ to be the vertex corresponding to the face between the last edge from $E^i_0$ and the first edge from $E^i_1$.
Consider some $\bb \in \{0,1\}^k$ and a flow $\pp$ in $H^\circ$ that consists of (a) $({{\widehat s}}, {{\widehat t}})$-paths, (b) $({{\widehat s}},  z_i)$-paths for $\bb_i = 0$, and (c) $({{\widehat t}},  z_i)$-paths for $\bb_i = 1$.
Then every path in $\pp$ must cross any $\bb$-homotopic $(s,t)$-path in $H_k$ (see Figure~\ref{fig:element-gadget}) and so the length of the shortest such path upper bounds the size of $\pp$.

Since $P_\bb$ is the unique shortest $\bb$-homotopic $(s,t)$-path in $H_k$, subdividing its first edge (the one incident to $s$) increases the length of the shortest $\bb$-homotopic $(s,t)$-path (see \Cref{fig:homotopy3}).
This allows $\pp$ to have one more element. 
We could thus encode the set $Z$ by subdivisions of the edges incident to $s$, increasing the upper bound for the size of $\pp$ exactly when $\bb \in Z$.
It is more complicated though to obtain the implication in the other direction: when $\bb \in Z$ we want to construct a non-crossing flow $\pp$ satisfying certain requests of the three mentioned types.
Performing such a construction ``by hand'' would be very tedious and instead we will take advantage of \cref{lem:homo:dual-flow}.
To this end, we need to first subdivide more edges in $H_k$ to make it amenable to this lemma. 

\begin{definition}\label{def:homo:subdivision}
Let $Z \sub \{0,1\}^k$.
The graph $H'_{k,Z}$ is obtained from $H_k$ as follows.
\begin{enumerate}
    \item For each $i \in [k]$ and $j \in [0,2^k)$, each of the two edges incident to $v'_{i,j}$ but not contained in $Q'_i$ gets subdivided $2^k-1$ times.
    \item For each $j \in Z$ the edge $sv_{1,j}$ gets subdivided once.
\end{enumerate}
\end{definition}

\mic{An example is given in \Cref{fig:homotopy3}.}
There is a 1-1 correspondence between $(s,t)$-paths in $H_k$ and $H'_{k,Z}$ therefore by a slight abuse of notation we can consider $\bb$-homotopic $(s,t)$-paths in  $H'_{k,Z}$.

\begin{figure}
    \centering
\includegraphics[scale=1.3]{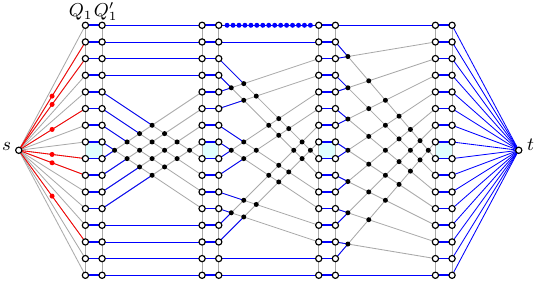}
\caption{Graph $H'_{4,Z}$ for $Z = \{1,2, 5, 8, 9, 13\}$.
The red edges are subdivided once and the blue edges are subdivides $2^4 - 1 = 15$ times. 
For legibility only one blue edge on the top is drawn subdivided. The gray edges are not being subdivided.
The faces $f_1,f_2,f_3,f_4$ are highlighted in light blue.
} 
\label{fig:homotopy3}
\end{figure}

\begin{lemma}\label{def:homo:subdivision:length}
Let $Z \subseteq \{0,1\}^k$ and $\bb \in \{0,1\}^k$.
The length of the shortest $\bb$-homotopic $(s,t)$-path in $H'_{k,Z}$ equals $k\cdot 2^{k+1} + \func(\bb)+2$ if $\bb \in Z$ and $k\cdot 2^{k+1} + \func(\bb) + 1$ otherwise.
\end{lemma}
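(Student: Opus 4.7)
The plan is to prove the claim by two matching bounds: compute $|P_\bb|_{H'_{k,Z}}$ explicitly, and then show that every other $\bb$-homotopic $(s,t)$-path is at least as long as the claimed minimum. The decisive tool is Lemma~\ref{lem:homo:unique}, which asserts that $P_\bb$ is the unique shortest $\bb$-homotopic path in $H_k$.

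For the upper bound, I would identify which edges of $P_\bb$ are affected by the two steps in Definition~\ref{def:homo:subdivision}. Since distinct $P_\vv$ curves do not cross in the region left of $Q_1$ or between any $Q_i$ and $Q'_i$, the path $P_\bb$ visits each vertex $v'_{i,\bb^{[i]}}$ exactly once (for $i \in [k]$) and uses precisely the two incident edges not lying on $Q'_i$: the $E^i$-edge arriving from $v_{i,\bb^{[i]}}$ and the first edge of $P_\bb$ leaving $v'_{i,\bb^{[i]}}$ to the right of $Q'_i$. All remaining edges of $P_\bb$ are incident only to $s$, $t$, vertices on some $Q_i$, or interior crossings with other $P_\vv$'s, and are therefore untouched by Step~1. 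Hence exactly $2k$ edges of $P_\bb$ are subdivided by Step~1, each lengthening $P_\bb$ by $2^k-1$. Because $\bb^{[1]}=\bb$, the first edge of $P_\bb$ is $s v_{1,\bb}$, which is subdivided by Step~2 iff $\bb \in Z$. Combined with $|P_\bb|_{H_k} = 2k+1+\func(\bb)$ from Lemma~\ref{lem:homo:length-Pb}, this yields $|P_\bb|_{H'_{k,Z}} = k\cdot 2^{k+1} + \func(\bb) + 1 + [\bb \in Z]$, matching the claimed value in both cases.

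For the lower bound, let $R$ be an arbitrary $\bb$-homotopic $(s,t)$-path in $H_k$; I would argue that $R$ uses at least $2k$ edges that are subdivided by Step~1. First, each $E^i$ is an $(s,t)$-edge cut of $H_k$, so $R$ uses at least one edge of each $E^i$, yielding $k$ distinct subdivided edges. Second, since $V(Q'_i)$ is an $(s,t)$-separator, $R$ visits it; let $v'_{i,j_i}$ denote the last such vertex along $R$. The next edge of $R$ after $v'_{i,j_i}$ cannot stay on $Q'_i$ (by the choice of $j_i$) and cannot be the $E^i$-edge to $v_{i,j_i}$ (otherwise $R$ would have to re-enter $V(Q'_i)$ in order to reach $t$, contradicting the choice of $j_i$); the only remaining option is the unique edge leaving $v'_{i,j_i}$ to the right of $Q'_i$, which is subdivided. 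These $k$ ``rightmost-exit'' edges lie to the right of distinct $Q'_i$'s and are disjoint from $\bigcup_i E^i$, producing $2k$ distinct subdivided edges in $R$. If $R \ne P_\bb$, Lemma~\ref{lem:homo:unique} yields $|R|_{H_k} \ge |P_\bb|_{H_k}+1$, so $|R|_{H'_{k,Z}} \ge (2k+2+\func(\bb)) + 2k(2^k-1) = k\cdot 2^{k+1} + \func(\bb) + 2$. Hence the shortest $\bb$-homotopic path in $H'_{k,Z}$ has length $\min\!\left(|P_\bb|_{H'_{k,Z}},\, k\cdot 2^{k+1}+\func(\bb)+2\right)$, which evaluates to $k\cdot 2^{k+1}+\func(\bb)+2$ when $\bb\in Z$ and $k\cdot 2^{k+1}+\func(\bb)+1$ otherwise, as claimed. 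The most delicate point will be the rightmost-exit argument, which requires a small case analysis of the only edges $R$ can take after its last visit to $V(Q'_i)$.
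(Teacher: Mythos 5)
Your proposal is correct and follows essentially the same route as the paper: the paper likewise computes that $P_\bb$ gains exactly $2k\cdot(2^k-1)$ from Step~1 (plus one more edge iff $\bb\in Z$), shows every $(s,t)$-path must use at least two Step-1-subdivided edges per index $i$ because it must enter $V(Q'_i)$ from the left and exit to the right, and then transfers the uniqueness of $P_\bb$ from Lemma~\ref{lem:homo:unique} to conclude. The only cosmetic difference is that the paper factors the argument through the intermediate graph $H'_k$ (Step~1 only) before handling Step~2, whereas you combine both modifications into a single upper/lower-bound computation; your ``rightmost-exit'' case analysis is a spelled-out version of the paper's separator argument.
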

\begin{proof}
    First consider an intermediate graph $H'_k$ obtained from $H_k$ by the first modification from \cref{def:homo:subdivision}.
    Every $(s,t)$-path $P$ in $H_k$ must cross $Q'_i$ for each $i \in [k]$ and so it must contain two edges incident to $Q'_i$.
    Due to the subdivisions, the length of $P$ in $H'$ increases by at least $2k \cdot (2^k - 1)$. 
    The length of $P_\bb$ increases by exactly $2k \cdot (2^k - 1)$ and so its length in $H'_k$ becomes $k\cdot 2^{k+1} + \func(\bb)+1$.
    Since $P_\bb$ is the unique shortest $\bb$-homotopic $(s,t)$-path in $H_k$, it is also the unique shortest $\bb$-homotopic $(s,t)$-path in $H'_k$.

    Now consider the second modification from \cref{def:homo:subdivision}.
    Clearly, it cannot decrease the length of any path.
    If $\bb \not\in Z$ then this modification does not affect any edge on $P_\bb$ so its length in $H'_{k,Z}$ is again $k\cdot 2^{k+1} + \func(\bb)+1$.
    However, if $\bb \in Z$ then we have subdivided an edge on the unique shortest $\bb$-homotopic $(s,t)$-path so now the length of the shortest $\bb$-homotopic $(s,t)$-path becomes $k\cdot 2^{k+1} + \func(\bb)+2$.
\end{proof}

\paragraph*{Carving off the cavities.}
We introduce some additional notation for the two following lemmas.
For $i \in [k]$ we distinguish face $f_i$ as the face between $Q_i$ and $Q'_i$ incident to the vertices $v_{i,2^{k-1}-1}$ and $v_{i,2^{k-1}}$
(the four highlighted faces in \Cref{fig:homotopy3}).
Recall that $\mathsf{Half}(k, b)$ 
stands for $[0, 2^{k-1})$ when $b=0$ and for $[2^{k-1}, 2^{k})$ when $b=1$.  
For $\bb \in \{0,1\}^k$ we define $H^\bb_{k,Z}$ as the plane graph obtained from $H'_{k,Z}$
by removing all the internal vertices in the subdivided 
$(v_{i,j}, v'_{i,j})$-edge
for each $i \in [k]$ and $j \in \mathsf{Half}(k, 1 - {\bb}_i)$.
See Figure \ref{fig:homotopy4} for a visualization.

An edge $e \in E(H^\bb_{k,Z})$ is called {\em exposed} \meir{Better name? \mic{special $\to$ exposed} that's great} if 
$e$ belongs to the subdivided $(v_{i,j},v'_{i,j})$-edge
where $j = 2^{k-1}$ if $b_i = 0$ and $j = 2^{k-1} + 1$ if $b_i = 1$.
Note that every exposed edge is incident to the outer face of  $H^\bb_{k,Z}$.
An edge $e^\circ$ in an $(s,t)$-dual of  $H^\bb_{k,Z}$ is called exposed if $e$ is exposed in  $H^\bb_{k,Z}$.

\vspace{3mm}
\mic{In order to construct a non-crossing $\tcal_{\bb,d}$-flow in the $(s,t)$-dual of $H'_{k,Z}$, we first construct a flow in the $(s,t)$-dual of $H^\bb_{k,Z}$ and then translate it to the dual above.
We will work with flows consisting of paths instead of walks, what obviously meets our definition of a non-crossing~flow.
}

\begin{figure}[t]%
    \centering
    \subfloat{
    {\includegraphics[width=11.2cm,valign=c]{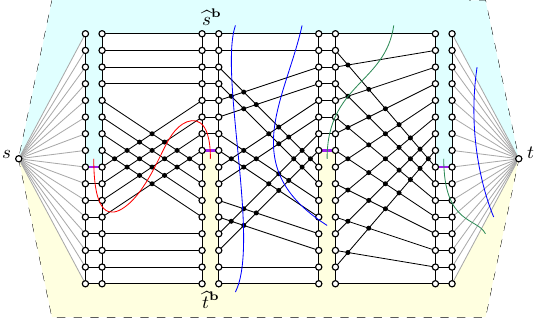}}
    }%
    \qquad\qquad
    \subfloat 
    {{\includegraphics[width=3.2cm,valign=c]{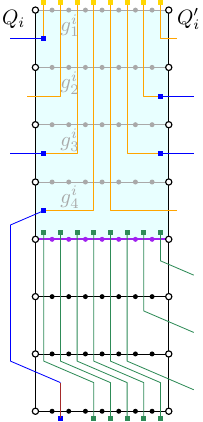}}}%
\caption{Left: The graph $H^\bb_{4,Z}$ for $\bb = (1001)$ and arbitrary $Z$ (the edge subdivisions are omitted here).
The edges on the purple paths are {\em exposed}.
The areas highlighted in color correspond to vertices
${{\widehat s}}^\bb,{{\widehat t}}^\bb$ in the $(s,t)$-dual of $H^\bb_{4,Z}$.
The blue and green curvy lines are examples of the paths from the family constructed in \cref{lem:homo:special-edges}.
The crux of the lemma is that this family does not contain paths like the red one.
The green paths belong to subfamilies $\pp_3, \pp_4$ from \cref{lem:homo:flow-exists} while the blue ones belong to $\pp_{long}$.
\newline \textcolor{white}{----} Right: An illustration for \cref{lem:homo:flow-exists} showing a fragment of the graph $H^\bb_{3,Z}$ with $\bb_i = 1$.
The gray paths are present in  $H'_{3,Z}$ but not in  $H^\bb_{3,Z}$ while the highlighted faces of $H'_{3,Z}$ become merged with $\widehat s^\bb$ in $H^\bb_{3,Z}$.
The purple path consists of the exposed edges.
The face $g^i_4$ is the same as $f_i$ (cf. Figure \ref{fig:homotopy3}).
The paths in the $(s,t)$-dual of $H^\bb_{3,Z}$ are drawn as paths going through the faces of $H^\bb_{3,Z}$.
\newline \textcolor{white}{----} The condition of edge-disjointedness means that each drawn edge may be crossed by only a single path.
The paths from $\pp_i \sub \pp$ are sketched in green; their common upper endpoint is $\widehat s^\bb$ but when we consider them in the $(s,t)$-dual of $H'_{3,Z}$ this endpoint becomes $(f_i)^\circ = (g^i_4)^\circ$.
The paths from $\pp_{long} \sub \pp$ which enter $\widehat s^\bb$ through $Q_i$ or $Q'_i$ are sketched in blue; they can be extended to reach $\widehat s$ in $H'_{3,Z}$ using the orange paths. 
}
\label{fig:homotopy4}
\end{figure}

\begin{lemma}\label{lem:homo:special-edges}
Let $Z \subseteq \{0,1\}^k$, $\bb \in \{0,1\}^k$, and $d = k\cdot 2^{k+1} + \func(\bb) + 1 + 1_{[\bb \in Z]}$.
Furthermore, let $(H^{\circ\bb}, {{\widehat s}^\bb}, {{\widehat t}}^\bb)$ be the $(s,t)$-dual of $H^\bb_{k,Z}$.

Then there exists a non-crossing family of $d$ edge-disjoint $({{\widehat s}}^\bb,{{\widehat t}}^\bb)$-paths $P^\circ_1, P^\circ_2, \dots, P^\circ_d$ in $H^{\circ\bb}$ such that (1)
every exposed edge belongs to some path $P^\circ_i$, and (2)
every path $P^\circ_i$ contains at most one exposed edge.
\end{lemma}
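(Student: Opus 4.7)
My plan is to derive this lemma as a direct application of the dual-flow construction from \cref{lem:homo:dual-flow}. The crux is to show that $\dist_{H^\bb_{k,Z}}(s,t) = d$, and then to carefully track which dual edges the lemma forces into each path~$P^\circ_i$.

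First, I would observe that any $(s,t)$-path in $H^\bb_{k,Z}$ must be $\bb$-homotopic: the only surviving horizontal $(v_{i,j},v'_{i,j})$-subdivisions are those with $j \in \mathsf{Half}(k,b_i) = E^i_{b_i}$, and every $(s,t)$-path must cross each $Q_i$-to-$Q'_i$ strip. Combined with \cref{def:homo:subdivision:length}, which gives length $d$ to the shortest $\bb$-homotopic $(s,t)$-path in $H'_{k,Z}$ (and a fortiori in $H^\bb_{k,Z}$), we obtain $\dist_{H^\bb_{k,Z}}(s,t) = d$, attained by $P_\bb$ itself.

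Next, I would verify that $H^\bb_{k,Z}$ is $2$-connected and its outer face is bounded by a simple cycle. The cavities produced by removing internal vertices of subdivisions in $\mathsf{Half}(k,1-b_i)$ for each column $i$ merge the corresponding internal faces with either $\widehat s^\bb$ or $\widehat t^\bb$: those in the half that points to the outer bottom join $\widehat s^\bb$, and those pointing to the outer top join $\widehat t^\bb$. Because the surviving subdivisions provide ample ``thickness'' between adjacent cavities and the remaining vertical paths $Q_i$, $Q'_i$ remain intact, no cut vertex is created and the outer boundary traces a single simple cycle. This lets me invoke \cref{lem:homo:dual-flow}, producing the desired $d$ non-crossing edge-disjoint paths $P^\circ_1,\dots,P^\circ_d$ in $H^{\circ\bb}$, whose first and last edges are prescribed to be the distance-breaking edges $(v^s_i u^s_i)^\circ$ along $P^s$ and $(v^t_i u^t_i)^\circ$ along $P^t$.

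Finally, I would analyze how exposed edges appear among the $v^s_i u^s_i$ and $v^t_i u^t_i$. By construction an exposed primal edge lies on the boundary of a cavity, hence on $P^s$ when its column has $b_i=0$ and on $P^t$ when $b_i=1$. Traversing $P^s$ from $s$ to $t$, the exposed chain of $2^k$ subdivision edges in a $b_i=0$ column is crossed monotonically in $\dist(s,\cdot)$ (otherwise concatenating a shortcut with the remainder of $P_\bb$ would contradict its minimality), so each of its $2^k$ edges realizes some $v^s_\ell u^s_\ell$, yielding property (1). For property (2), I would argue that the index set of $\ell$'s for which $v^s_\ell u^s_\ell$ is exposed and the one for which $v^t_\ell u^t_\ell$ is exposed are disjoint: a cavity traversal on $P^s$ shifts $\dist(s,\cdot)$ upward by $2^k$, while one on $P^t$ shifts $\dist(\cdot,t)$ upward by $2^k$, and the two levellings must be compatible at $t$, which (using the precise length formula $d=k\cdot 2^{k+1}+\func(\bb)+1+1_{[\bb\in Z]}$) forces the two sets of $\ell$-values to partition the exposed indices.

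The main obstacle, as I see it, is the bookkeeping in the last step: matching the interval of indices $\ell$ on $P^s$ produced by each $b_i=0$ cavity with a disjoint interval produced on $P^t$ by the $b_i=1$ cavities, relying on a careful accounting of how $P_\bb$ enters and exits each column and of the extra ``$+1$'' slack provided by subdividing a first-edge of $P_\bb$ exactly when $\bb\in Z$.
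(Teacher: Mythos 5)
Your overall route is the same as the paper's: show $\dist_{H^\bb_{k,Z}}(s,t)=d$ (every $(s,t)$-path in $H^\bb_{k,Z}$ is forced through the surviving half of each column and is therefore $\bb$-homotopic, so \cref{def:homo:subdivision:length} gives the value $d$, attained by $P_\bb$), apply \cref{lem:homo:dual-flow}, and then match the exposed edges with the prescribed boundary edges $v^s_\ell u^s_\ell$, $v^t_\ell u^t_\ell$. Your identification of which cavities merge into $\widehat s^{\,\bb}$ versus $\widehat t^{\,\bb}$, and your reduction of property (1) to a "levelling" of $\dist(s,\cdot)$ along the exposed chains, coincide with the paper's two internal claims: that $\dist(s,\cdot)$ strictly increases from $Q_i$ to $Q'_i$ and from $Q'_i$ to $Q_{i+1}$, and that $\dist(s,v'_{i,j})=\dist(s,v_{i,j})+2^k$ along each surviving subdivided edge. (For (1) you still owe the argument that every vertex of $P^s$ occurring \emph{after} an exposed edge at level $\ell$ has distance at least $\ell$ from $s$ --- monotonicity along the chain alone does not make its $p$-th edge equal to $v^s_\ell u^s_\ell$, since $v^s_\ell$ is defined as the \emph{last} vertex of $P^s$ at distance below $\ell$; this is exactly what the two distance claims are needed for.)

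The step I do not believe as written is your justification of property (2). You propose a global accounting (``the two levellings must be compatible at $t$'', using the exact value of $d$ including the $1_{[\bb\in Z]}$ term) to conclude that the $\ell$-indices of exposed edges on $P^s$ and on $P^t$ are disjoint. I don't see how this becomes rigorous: the total length $d$ is consistent with many interleavings of levels, and the subdivision of $sv_{1,\bb}$ merely shifts all levels by one, so it cannot be what forces disjointness. The disjointness is a purely local phenomenon: all $2^k$ exposed edges of column $i$ lie on a single boundary path ($P^s$ or $P^t$ according to $b_i$) and occupy the contiguous block of levels $(\dist(s,v_{i,j}),\,\dist(s,v_{i,j})+2^k]$, while the $(2^k{-}1)$-fold subdivisions flanking each $Q'_i$ force $\dist(s,\cdot)$ to increase by more than $2^k$ before the next column's exposed chain begins; hence the level blocks of distinct columns are pairwise disjoint, so each level $\ell$ carries at most one exposed edge in the whole graph, which is precisely (P2). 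Your proof becomes correct once step (2) is replaced by this column-separation argument, which uses only the monotonicity facts you already invoke for (1).
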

\begin{proof}
    The distance between $s$ and $t$ in $H^\bb_{k,Z}$ equals the length of the shortest $\bb$-homotopic $(s,t)$-path in $H'_{k,Z}$, which is $d = k\cdot 2^{k+1} + \func(\bb) + 1 + 1_{[\bb \in Z]}$ due to \cref{def:homo:subdivision:length}.
    Let $P^s, P^t$ be the $(s,t)$-paths within the outer cycle of $H^\bb_{k,Z}$, defined as in \cref{lem:homo:dual-flow}.
    We also reuse the definitions of vertices $v^s_i, u^s_i, v^t_i, u^t_i$ for $i\in[d]$.
    In order to derive the claim from \cref{lem:homo:dual-flow} we need to prove the following.
    \begin{enumerate}
        \item[(P1)] Every exposed edge is of the form  $v^s_iu^s_i$ or $v^t_iu^t_i$ for some $i\in[d]$.
        \item[(P2)] For each $i\in[d]$ only one of the edges $v^s_iu^s_i$, $v^t_iu^t_i$ can be exposed.
    \end{enumerate}
    
    All distances considered in this proof are measured with respect to the graph $H^\bb_{k,Z}$.

\begin{claim}\label{claim:homo:special-edges:monotone}
For each $i \in [k]$ and any vertices $x \in V(Q_i)$, $y \in V(Q'_i)$, it holds that $\dist(s,x) < \dist(s,y)$.
Furthermore, for $i \in [k-1]$ and any vertices $x \in V(Q'_i)$, $y \in V(Q_{i+1})$ it holds that $\dist(s,x) < \dist(s,y)$.
\end{claim}
\begin{innerproof}
We will use the following three observations.
First, for each $i \in [k]$ any path from $s$ to $ V(Q'_i)$ must intersect $V(Q_i)$ and, when $i > 1$, any path from $s$ to $ V(Q_i)$ must intersect $V(Q'_{i-1})$.
Next, the minimal distance between $V(Q_i)$ and $V(Q'_i)$ or $V(Q'_{i-1})$ is $2^k$.
Finally, for each two $u,v \in V(Q_i)$ (resp. $u,v \in V(Q'_i)$) we have $\dist(u,v) < 2^k$. 

We prove only the first claim in detail, as the second one has an analogous proof.
Let $y$ be the vertex from  $V(Q'_i)$ that minimizes distance from $s$ and $P$ be the shortest $(s,y)$-path in $H^\bb_{k,Z}$.
Then $V(P) \cap V(Q'_i) = \{y\}$. 
Let $x'$ be a vertex from $V(Q_i) \cap V(P)$.
Since the minimal distance between $V(Q_i)$ and $V(Q'_i)$ is $2^k$ we have $\dist(s,x') \le \dist(s,y) - 2^k$.
Now, for any other $x'
\in V(Q_i)$ we have $\dist(x,x') < 2^k$, what implies $\dist(s,x) \le \dist(s,x') + \dist(x',x) < \dist(s,y)$ and proves the 
claim due to the choice of $y$. 
\end{innerproof}

\begin{claim}\label{claim:homo:special-edges:eq}
For each $i \in [k]$ and $j \in \mathsf{Half}(k, {b}_i)$, it holds that $\dist(s,v'_{i,j}) = \dist(s,v_{i,j}) + 2^k$.
\end{claim}
\begin{innerproof}
 By the triangle inequality we have  $\dist(s,v'_{i,j}) \le \dist(s,v_{i,j}) + 2^k$.
 Let $P$ be a shortest $(s,v'_{i,j})$-path in $H^\bb_{k,Z}$.
 Let $j' \in \mathsf{Half}(k, {b}_i)$ be such that $v'_{i,j'}$ is the first vertex from $V(Q'_i)$ on $P$.
 Then $P$ visits $v_{i,j'}$ as well and so $\dist(s,v_{i,j'}) = \dist(s,v'_{i,j}) - \dist(v_{i,j'}, v'_{i,j}) =  \dist(s,v'_{i,j}) - 2^k - |j-j'|$.
 Next, $\dist(s,v_{i,j}) \le \dist(s,v_{i,j'}) + |j-j'| = \dist(s,v'_{i,j}) - 2^k$, what gives the inequality in the second direction.
\end{innerproof}

It follows that every exposed edge is of the form $vu$ where $\dist(s,u) = \dist(s,v) + 1$.
Moreover, when $v$ is the $\ell$-th vertex on the subdivided $(v_{i,j}, v'_{i,j})$-edge (counting from $v_{i,j}$) then $\dist(s,v) = \dist(s,v_{i,j}) + \ell$.

\begin{claim}\label{claim:homo:special-edges:long-enough}
For each $i \in [k]$ and any vertex $x \in V(Q_i)$, it holds that $\dist(s,x) + 2^k < \dist(s,t)$.
\end{claim}
\begin{innerproof}
Due to \cref{claim:homo:special-edges:monotone} it suffices to consider $i = k$. Let $x \in V(Q_k)$.
Let $P$ be a shortest $(s,t)$-path in $H^\bb_{k,Z}$.
This path must intersect  $V(Q_k)$; let $x'$ be a vertex from $V(Q_k) \cap V(P)$.
The minimal distance from $V(Q_k)$ to $t$ is $2^{k+1}$, hence $\dist(s,x') \le \dist(s,t) - 2^{k+1}$.
Since $\dist(x',x) < 2^k$, the claim follows from the triangle inequality.
\end{innerproof}

Let $vu$ be an exposed edge and $u$ be the vertex with 
$\ell = \dist(s,u) = \dist(s,v) + 1$.
Suppose w.l.o.g. that $vu \in E(P^s)$.
From Claims \ref{claim:homo:special-edges:monotone} and \ref{claim:homo:special-edges:eq} we obtain that 
for every vertex $v'$ that lies further than $v$ on $P^s$ it holds $\dist(s,v') \ge \ell$.
\cref{claim:homo:special-edges:long-enough} implies that $\ell \le \dist(s,t) = d$.
We therefore obtain property (P1): $vu = v^s_\ell u^s_\ell$ for some $\ell \in [d]$.
Similarly, when $vu \in E(P^t)$ then $vu = v^t_\ell u^t_\ell$ for some $\ell \in [d]$.

Finally, consider $vu \in E(P^s), v'u' \in V(P^t)$, 
such that $vu, v'u'$
are exposed.
Then there is $i \in [k]$ such that $V(Q'_i)$ separates $v$ from $v'$ and one of $v,v'$ belongs to the same connected component of $H^\bb_{k,Z} - V(Q'_i)$ as $s$.
By Claims \ref{claim:homo:special-edges:monotone} and \ref{claim:homo:special-edges:eq} 
the distances $\dist(s,v)$ and $\dist(s,v')$ are different, what implies property (P2).
We apply \cref{lem:homo:dual-flow} to obtain a family of $({{\widehat s}}^\bb,{{\widehat t}}^\bb)$-paths that satisfies the conditions of the lemma.
\end{proof}

As the last step, we want to employ \cref{lem:homo:special-edges} to construct a certain flow in the $(s,t)$-dual of $H'_{k,Z}$.
\mic{The only modification needed involves extending the paths from the $(s,t)$-dual of $H^\bb_{k,Z}$ that end at ${{\widehat s}}^\bb$ or ${{\widehat t}}^\bb$ but,
when considered in the $(s,t)$-dual of $H'_{k,Z}$, this endpoint corresponds to an internal face of $H'_{k,Z}$.}
Note that when an $({{\widehat s}}^\bb, {{\widehat t}}^\bb)$-path in the $(s,t)$-dual of $H^\bb_{k,Z}$ reaches its endpoint 
through an exposed edge then this endpoint corresponds to $(f_i)^\circ$ for some $i \in [k]$ in the $(s,t)$-dual of $H'_{k,Z}$.
\mic{For the remaining cases, we will extend the path}
to reach ${{\widehat s}}$ or ${{\widehat t}}$ using the subdivided edges between $Q_i$ and $Q'_i$.

\begin{lemma}\label{lem:homo:flow-exists}
Let $k \in \nn$, $Z \subseteq \{0,1\}^k$, $\bb \in \{0,1\}^k$, and $d = k\cdot 2^{k} + \func(\bb) + 1 + 1_{[\bb \in Z]}$.
Let $(H^\circ, {{\widehat s}}, {{\widehat t}})$ be the $(s,t)$-dual of $H'_{k,Z}$ and
$\tcal_{\bb,d}$ be the family of following requests:
     \begin{enumerate}
        \item $({{\widehat s}}, f^\circ_i, 2^k)$ for each $i$ with $b_i = 0$,
        \item $({{\widehat t}}, f^\circ_i, 2^k)$ for each $i$ with $b_i = 1$,
        \item $d$ copies of the request $({{\widehat s}}, {{\widehat t}}, 1)$. 
    \end{enumerate}
    Then there exists a non-crossing $\tcal_{\bb,d}$-flow in $H^\circ$,
    \mic{in which ${{\widehat s}}$ sees $\{f^\circ_i \mid b_i = 0\}$ in the order of decreasing $i$ and ${{\widehat t}}$ sees $\{f^\circ_i \mid b_i = 1\}$ in the order of increasing $i$ (recall \cref{def:homo:seeing-order}). }
\end{lemma}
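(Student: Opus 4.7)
The plan is to apply \cref{lem:homo:special-edges} to $H^\bb_{k,Z}$ and then lift the resulting non-crossing family into $H^\circ$. Applying \cref{lem:homo:special-edges} will yield a family $\pp = \{P^\circ_1, \dots, P^\circ_{d'}\}$ of $d' = k \cdot 2^{k+1} + \func(\bb) + 1 + 1_{[\bb \in Z]}$ edge-disjoint non-crossing $(\widehat{s}^\bb, \widehat{t}^\bb)$-paths in $H^{\circ\bb}$, in which every exposed edge is used and each $P^\circ_j$ uses at most one. A quick tally verifies that $d'$ matches the total demand of $\tcal_{\bb,d}$: the first two families of requests contribute $k \cdot 2^k$ walks in total (since $|\{i : b_i = 0\}| + |\{i : b_i = 1\}| = k$), while $d = k \cdot 2^k + \func(\bb) + 1 + 1_{[\bb \in Z]}$ gives the remaining $(\widehat{s}, \widehat{t})$-walks, so each $P^\circ_j$ should contribute exactly one walk to the sought flow.

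To lift, I will use the canonical correspondence between $H^{\circ\bb}$ and $H^\circ$: since $H^\bb_{k,Z}$ is obtained from $H'_{k,Z}$ by removing some internal vertices (and their incident edges), every edge of $H^{\circ\bb}$ is already an edge of $H^\circ$, while the dual vertex $\widehat{s}^\bb$ corresponds to $\widehat{s}$ merged with all faces $f_i$ (and intermediate cavity faces) whose cavities open toward $\widehat{s}$, and analogously for $\widehat{t}^\bb$. I would partition $\pp$ into $\pp_1, \dots, \pp_k, \pp_{\mathrm{long}}$: $\pp_i$ consists of paths whose unique exposed edge lies on the $i$-th exposed block, while $\pp_{\mathrm{long}}$ consists of the paths that use no exposed edge. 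Each path in $\pp_i$ will lift to a walk whose merged endpoint becomes $f^\circ_i$ (the exposed edge delivers the path straight into $f_i$ in $H^\circ$), and the count of exposed edges in each block will give $2^k$ such walks per $i$, matching the demand. Each path in $\pp_{\mathrm{long}}$ will lift to a $(\widehat{s}, \widehat{t})$-walk.

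The hard part will be to perform the lift so that the resulting flow is non-crossing and so that the walks satisfy the prescribed seeing orders at $\widehat{s}$ and $\widehat{t}$. To resolve the intermediate traversals through the merged regions, I would route each lifted walk along a canonical curve that follows the corresponding cavity in the planar embedding of $H'_{k,Z}$, from the exposed edge at $f_i$ out to $\widehat{s}$ (respectively $\widehat{t}$). Because exposed edges on a cavity appear in monotone order along its boundary and because $\pp$ is non-crossing, these canonical routings cannot cross each other or the lifts of $\pp_{\mathrm{long}}$; moreover, reading them in clockwise order around $\widehat{s}$ (respectively $\widehat{t}$) will produce the strips with $b_i = 0$ in decreasing $i$ (respectively those with $b_i = 1$ in increasing $i$), which is exactly the seeing-order requirement from \cref{def:homo:seeing-order}. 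The verification that the routings within each merged region can be chosen compatibly should follow from the essentially path-like structure of the cavities, which carries the required linear ordering of the walks entering from outside.
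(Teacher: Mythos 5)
Your proposal is correct and follows essentially the same route as the paper: invoke \cref{lem:homo:special-edges} on $H^\bb_{k,Z}$, partition the resulting $\ell = d + k\cdot 2^k$ paths into the families $\pp_1,\dots,\pp_k$ (one exposed edge each, delivering an endpoint at $f^\circ_i$) and $\pp_{long}$, then extend the endpoints that land on cavity faces $(g^i_j)^\circ$ out to $\widehat s$ or $\widehat t$ through the subdivided-edge duals, with the seeing orders inherited from the interval structure of the $\pp_i$ around $\widehat s^\bb,\widehat t^\bb$. The only point treated more loosely than in the paper is the capacity check for the cavity extensions (the paper counts the $2^k$ parallel dual edges between consecutive $(g^i_j)^\circ$ to guarantee edge-disjointness), but this is a routine verification rather than a gap in the approach.
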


\meir{Rotate fig:flow-extention? Will take more space, but will be clearer. \mic{Done. It looks better when merged with the previous figure.}}

\begin{proof}
Let $(H^{\circ\bb}, {{\widehat s}^\bb}, {{\widehat t}^\bb})$ be the $(s,t)$-dual of $H^\bb_{k,Z}$. 
Let $\ell = d + k\cdot 2^k$ and, for $i \in [k]$,
$E^{\circ\bb}_i \sub E(H^{\circ\bb})$ be the set of $2^k$ exposed edges 
located between $Q_i$ and $Q'_i$.
We apply \cref{lem:homo:special-edges} to obtain a non-crossing $({{\widehat s}^\bb},{{\widehat t}^\bb})$-flow $\pp = \{P^\circ_1, \dots, P^\circ_\ell\}$  in $H^{\circ\bb}$.
For each $i \in [k]$ there is a subfamily $\pp_i \sub \pp$ of $2^k$ paths containing an edge from $E^{\circ\bb}_i$.
Moreover, the subfamilies $\pp_1, \dots, \pp_k$ are disjoint.
Let $\pp_{long} = \pp \sm (\pp_1 \cup \dots \cup \pp_k)$.

Every internal face of $H^\bb_{k,Z}$ is also an internal face of $H'_{k,Z}$.
Therefore, every path in $H^{\circ\bb}$ that is internally disjoint from ${{\widehat s}^\bb}, {{\widehat t}^\bb}$
is also a path in $H^\circ$.
We can thus consider the flow $\pp$ in $H^\circ$.
When $P \in \pp_i$ for some $i \in [k]$ then one of its endpoints (incident to an exposed edge) becomes $f^\circ_i$.
The other endpoint (incident to a non-exposed edge) will be either ${{\widehat s}}$, ${{\widehat t}}$ or $g^\circ$, where $g$ is an internal face of $H'_{k,Z}$ that is not present in  $H^\bb_{k,Z}$.
When $P \in \pp_{long}$ then the latter scenario applies to both endpoints of $P$.

Let $g^i_1, \dots, g^i_{2^{k-1}}$ be the internal faces of  $H'_{k,Z}$ between $Q_i$ and $Q'_i$ that are not present in  $H^\bb_{k,Z}$, ordered in such a way that $g^i_1$ is incident to the outer face,  $g^i_{2^{k-1}} = f_i$,
and $g^i_{j+1}$ shares an edge with $g^i_j$.
When $e \in E(H^{\circ\bb})$ is incident to ${{\widehat s}}^\bb$ or ${{\widehat t}}^\bb$ in $H^{\circ\bb}$ then either $e$ is incident to 
${{\widehat s}}$, ${{\widehat t}}$, or some $(g^i_{j})^\circ$  in $H^{\circ}$.
In the last case, when $e$ is non-exposed then $e$ crosses $Q_i$ or $Q'_i$.

For each $i \in [k]$, $j \in [2^{k-1}]$, there is exactly one edge in $H^{\circ}$ incident to $(g^i_j)^\circ$ that crosses $Q_i$ and exactly one that crosses $Q'_i$; in total $2^k$ for fixed $i$.
Observe that for each $j \in [1, 2^{k-1}-1]$ there are $2^k$ parallel edges between the vertices $(g^i_j)^\circ$ and $(g^i_{j+1})^\circ$.
Therefore, paths from $\pp$ that reach some $(g^i_j)^\circ$ via a non-exposed edge 
can be extended to reach ${{\widehat s}}$ (resp.  ${{\widehat t}}$) in a non-crossing manner
(see Figure \ref{fig:homotopy4}, right, and the caption below).
As a result, the paths from families $\pp_i$ are being extended to satisfy requests of types (1, 2), while the paths from $\pp_{long}$ are being extended to satisfy requests of type (3).

\mic{Finally, we argue that ${{\widehat s}}$ (resp.  ${{\widehat t}}$) sees the vertices $f^\circ_1, f^\circ_2, \dots, f^\circ_k$ in the right order.
First, consider the flow $\pp$ in the graph $H^{\circ \bb}$, in which
every path is an $({{\widehat s}^\bb}, {{\widehat t}^\bb})$-path.
Observe that each subfamily $\pp_i \sub \pp$ forms a continuous interval in $\pp$ ordered with respect to the ordering of edges incident to ${{\widehat s}^\bb}$ (or ${{\widehat t}^\bb}$).
Consider now $i < j$ with $\bb_i = \bb_j = 0$, and let $P_i \in \pp_i$, $P_j \in \pp_j$.
Let $e_i, e_j \in E(H^{\circ \bb})$ be the edges on respectively $P_i,P_j$ that are incident to ${{\widehat s}^\bb}$. 
Then $e_i$ occurs later than $e_j$ in the clockwise ordering of edges  incident to ${{\widehat s}^\bb}$ (starting from the edge next to the outer face of $H^{\circ \bb}$), reflecting the relative order of the edges at the other ends of $P_i,P_j$.
The transformation between the flow in $H^{\circ \bb}$ and the flow in $H^\circ$ preserves the relative order of paths $P_i, P_j$, yielding the same relation with respect to  ${{\widehat s}}$.
The analysis for the paths ending at ${{\widehat t}}$ is symmetric: when $i < j$ and $\bb_i = \bb_j = 1$ then any path from $\pp_i$ occurs earlier than any path from $\pp_j$ with respect to the clockwise ordering of edges  incident to ${{\widehat t}^\bb}$.
This concludes the proof.
}
\end{proof}

\mic{The flow considered above is exactly the one that we require in the vector-containment gadget.
We can therefore summarize the construction.}

\begin{proposition}\label{prop:homo:final}
Let $\widehat{\func}_k \colon  \{0,1\}^k \to \nn$ be defined as $\widehat{\func}_k(\bb) = k \cdot 2^{k} + \func_k(\bb) + 1$.
For each $k$ and $Z \sub  \{0,1\}^k$ there exists a $(k,\widehat{\func}_k,Z)${-$\mathsf{Vector\, Containment\, Gadget}$} \mic{of size} $2^{\Oh(k)}$ and it can be constructed in time~$2^{\Oh(k)}$.
\end{proposition}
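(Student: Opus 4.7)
\textbf{Construction and size.} I will set $G$ to be the $(s,t)$-dual $(H^{\circ}, \widehat s, \widehat t)$ of the plane multigraph $H'_{k,Z}$ introduced in Definition~\ref{def:homo:subdivision}. The distinguished vertices are $w_0 := \widehat s$, $w_1 := \widehat t$ (both of which lie on the outer face of $H^\circ$ by construction) and $z_i := f_i^\circ$ for $i \in [k]$, where $f_i$ is the face between $Q_i$ and $Q'_i$ described before Lemma~\ref{lem:homo:special-edges}. The graph $H_k$ has $\Oh(k \cdot 2^{2k})$ vertices (each of the $\binom{2^k}{2}$ pairs of curves $P_\bb$ contribute at most $k-1$ crossings), and the subdivisions of Definition~\ref{def:homo:subdivision} introduce at most $\Oh(k \cdot 2^{2k})$ additional vertices. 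Thus $|V(H'_{k,Z})| = 2^{\Oh(k)}$, and the same bound holds for its dual. The whole construction is performed in time $2^{\Oh(k)}$.

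\textbf{Implication $(a)\Rightarrow(c)$.} Let $d \le \widehat{\func}_k(\bb) + 1_{[\bb\in Z]} = k\cdot 2^k + \func(\bb) + 1 + 1_{[\bb\in Z]}$. Lemma~\ref{lem:homo:flow-exists} provides a non-crossing $\tcal_{\bb,d^\star}$-flow in $H^\circ$ for the maximum value $d^\star = k\cdot 2^k + \func(\bb) + 1 + 1_{[\bb\in Z]}$, with the required seeing orders at $\widehat s$ and $\widehat t$. To obtain a non-crossing $\tcal_{\bb,d}$-flow for smaller $d$, I will simply discard $d^\star - d$ of the $(\widehat s,\widehat t)$-walks; the remaining walks are still pairwise non-crossing and the seeing-order condition at $\widehat s,\widehat t$ is preserved because the removed walks formed a contiguous block in the cyclic ordering given by Lemma~\ref{lem:homo:flow-exists}. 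The implication $(c)\Rightarrow(b)$ is by definition.

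\textbf{Implication $(b)\Rightarrow(a)$.} Suppose a $\tcal_{\bb,d}$-flow $\pp$ exists in $H^\circ$. The total number of walks in $\pp$ equals $k\cdot 2^k + d$ (summing demands $2^k$ over all coordinates and adding the $d$ many copies of $(\widehat s,\widehat t,1)$), and these walks are pairwise edge-disjoint. Let $P$ be a shortest $\bb$-homotopic $(s,t)$-path in $H'_{k,Z}$. I claim that every walk in $\pp$ uses at least one edge dual to an edge of $P$, which yields
\[
k\cdot 2^k + d \;\le\; |E(P)| \;=\; k \cdot 2^{k+1} + \func(\bb) + 1 + 1_{[\bb\in Z]}
\]
by Lemma~\ref{def:homo:subdivision:length}, and rearranging gives exactly $d \le \widehat{\func}_k(\bb) + 1_{[\bb\in Z]}$.

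To prove the claim, I will argue via planar duality (Lemma~\ref{lem:homo:dual-cut-cycle} and the definition of the $(s,t)$-dual). The closed curve $P \cup \{st\}$ in $(H'_{k,Z})_{st}$ separates the plane into two open regions $R_{\widehat s}, R_{\widehat t}$ containing $\widehat s$ and $\widehat t$ respectively. A walk in $H^\circ$ crosses $P$ (i.e., uses an edge of $E(P)^*$) exactly when its endpoints lie in different regions. Hence it suffices to verify:
\begin{itemize}
    \item[(I)] $\widehat s$ and $\widehat t$ lie in different regions (immediate from the construction of the dual),
    \item[(II)] for every $i$ with $b_i = 0$ the vertex $f_i^\circ$ lies in $R_{\widehat t}$, and for every $i$ with $b_i = 1$ it lies in $R_{\widehat s}$.
\end{itemize}
Property (II) follows from $\bb$-homotopy: $P$ crosses $E^i$ at a vertex $v_{i,j}v'_{i,j}$ with $j\in\mathsf{Half}(k,b_i)$, meaning $P$ passes on the $b_i$-side of $f_i$, so $f_i$ lies on the opposite side of $P$ from where $\widehat s$ (which is attached to the outer face by the $b_i$-determined orientation from Definition~\ref{def:homo:dual}) sits. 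Verifying this direction-of-rotation argument carefully, using the coordinate conventions of Figure~\ref{fig:homotopy2}, is the main technical chore. With (I) and (II) in hand, each of the three walk types in $\pp$ has endpoints in distinct regions and so contributes at least one edge of $E(P)^*$. Edge-disjointness of $\pp$ then upgrades this to $|\pp| \le |E(P)|$, finishing the proof.
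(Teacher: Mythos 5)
Your proposal is correct and follows essentially the same route as the paper: the gadget is the $(s,t)$-dual of $H'_{k,Z}$ with $w_0=\widehat s$, $w_1=\widehat t$, $z_i=(f_i)^\circ$, the implication (b)$\Rightarrow$(a) is obtained by counting the edge-disjoint walks against the length of the shortest $\bb$-homotopic $(s,t)$-path via \cref{def:homo:subdivision:length}, and (a)$\Rightarrow$(c) is delegated to \cref{lem:homo:flow-exists}. You merely spell out two points the paper leaves implicit — the region/duality justification that every walk of $\tcal_{\bb,d}$ must cross $P$, and the discarding of surplus $(\widehat s,\widehat t)$-walks for sub-maximal $d$ — both of which are sound.
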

\begin{proof}
    Let $(H^\circ, {{\widehat s}}, {{\widehat t}})$ be the $(s,t)$-dual of the plane graph $H'_{k,Z}$.
    Both graphs can be constructed in time polynomial in their size, which is $2^{\Oh(k)}$. 
    We construct a $(k,\widehat{\func}_k,Z)${-$\mathsf{Vector\, Containment\, Gadget}$} using the graph $H^\circ$.
    We set $w_0 = {{\widehat s}}$, $w_1 = {{\widehat t}}$, and $z_i = (f_i)^\circ$ for $i \in [k]$.
    
    Let $\tcal_{\bb,d}$ be the family of requests defined as in 
    \cref{lem:homo:flow-exists}.
    We need to show the following conditions to be equivalent:
    \begin{enumerate}[label=(\alph*)]
        \item $d \le \widehat{\func}_k(\bb) + 1_{[\bb \in Z]}$,
        \item there exists a  $\mathcal{T}_{\bb,d}$-flow in $H^\circ$,
        \item there exists a non-crossing $\mathcal{T}_{\bb,d}$-flow in $H^\circ$,
         \mic{in which ${{\widehat s}}$ sees $\{f^\circ_i \mid b_i = 0\}$ in the order of decreasing $i$ and ${{\widehat t}}$ sees $\{f^\circ_i \mid b_i = 1\}$ in the order of increasing $i$.}
    \end{enumerate}

    The implication (c) $\Rightarrow$ (b) is trivial. To see (b) $\Rightarrow$ (a), consider some shortest $\bb$-homotopic $(s,t)$-path $P$ in $H'_{k,Z}$.
    By \cref{def:homo:subdivision:length} the length of $P$ equals $k\cdot 2^{k+1} + \func(\bb) + 1 + 1_{[\bb \in Z]}$.
    Each \mic{walk} $Q$ in a $\tcal_{\bb,d}$-flow in $H^\circ$ must cross the path $P$, i.e., there exists $e \in E(P)$ such that $e^\circ \in E(Q)$.
    Therefore the number of paths in a $\tcal_{\bb,d}$-flow, that is $k\cdot 2^k + d$, cannot be greater than the length of~$P$.
    This implies $d \le \widehat{\func}_k(\bb) + 1_{[\bb \in Z]}$.
    The last implication (a) $\Rightarrow$ (c) is proven in 
    \cref{lem:homo:flow-exists}.
\end{proof}

\subsection{Subset gadget}
\label{sec:hard:subset}

The aim of the following gadget is to determine whether a given set $F \sub [k]$ is a subset of one of $2^r$ sets from a family $\scal$.
To make the notation consistent with the previous gadget, we encode the set family as a function from the bit vectors to the subsets of $[k]$.
We use the notation $2^{[k]}$ to distinguish the family of subsets of $[k]$ from the family of vectors of length $k$.

\begin{definition}\label{def:subset:gadget}
Let $r,k$ be integers and $\mathcal{S} \colon \{0,1\}^r \to 2^{[k]}$ be a function.
We say that a pair $(G,\tcal)$ is an $(r,k,\mathcal{S})$-$\mathsf{Subset\, Gadget}$ if the following conditions hold.
\begin{enumerate}
    \item $G$ is a plane multigraph with $2k$ distinguished vertices $s_1,t_1, s_2, t_2, \dots, s_k,t_k$ lying on the outer face in this clockwise order.
    \item $\tcal$ is a set of triples from $V(G) \times V(G) \times \nn$. 
    \item For $F \sub [k]$ let $\tcal_F = \{(s_i,t_i, 1) \mid i \in F\}$.
    Then there exists a non-crossing $(\tcal \cup \tcal_F)$-flow in $G$ if and only if 
    there exists $\bb \in \{0,1\}^r$ for which $F \subseteq \scal(\bb)$.
\end{enumerate}  
\end{definition}

\noindent Our goal is to construct an $(r,k,\mathcal{S})${-$\mathsf{Subset\, Gadget}$} of size $2^{\Oh(r)}\cdot k^{\Oh(1)}$ 
with $|\tcal| = (r+k)^{\Oh(1)}$.

\subsubsection{The first attempt}
\label{sec:subset-simple}
Let $\gamma^0 \colon \{0,1\}^r \to \nn$ be the zero-function, i.e., $\gamma^0(\bb) = 0$ for all $\bb \in \{0,1\}^r$.
In this section we present a simplified construction working under the assumption that for any set $Z \sub \{0,1\}^r$ there exists an $(r,{\func^0},Z)${-$\mathsf{Vector\, Containment\, Gadget}$} of size $2^{\Oh(r)}$. 
Of course, this assumption is overly optimistic but it allows us to first present the pattern propagation mechanism alone. 
We strongly encourage the Reader to first familiarize with this simplified construction before reading the proper proof in \cref{sec:subset-full}.
The proper proof builds atop this construction in an incremental way. 

We use following conventions to describe the constructed graphs.
When $H$ is a graph with a distinguished vertex named $v$
and a graph $G$ is constructed using explicit vertex-disjoint copies of the graph $H$, referred to as $H_1, H_2, \dots, H_\ell$, 
we refer to the copy of $v$ within the subgraph $H_i$ as~$H_i[v] \in V(G)$.
When vertices $u, v$ are connected by multiple parallel edges, we refer to the number of such edges as the capacity of $uv$.

\begin{figure}
    \centering
\centerline{\includegraphics[scale=0.63]{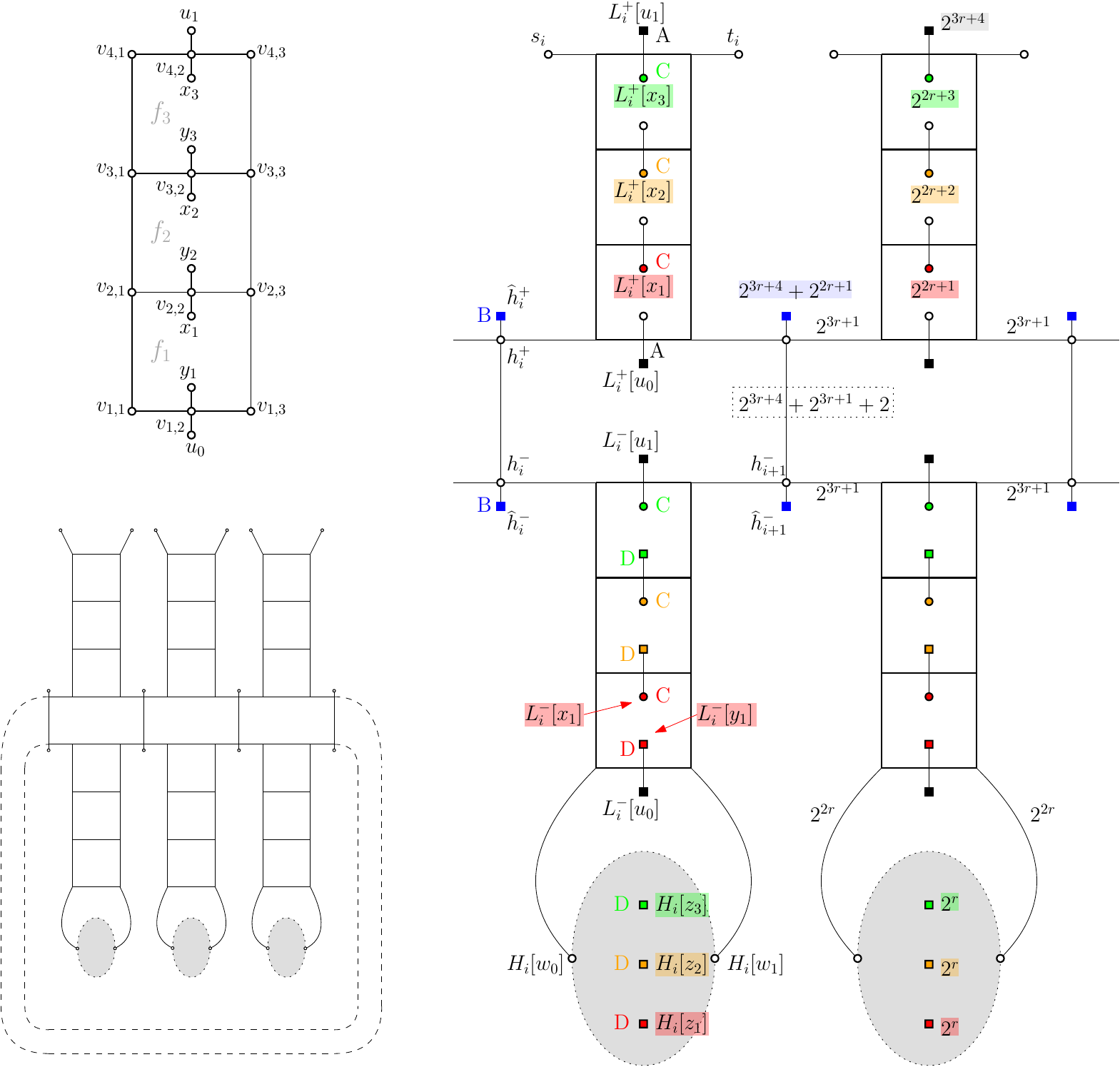}}
\caption{
Top left: a $3$-ladder labeled with vertices' and faces' names.
Right: a fragment of the graph $\mathsf{Ring}(r,k,\mathcal{H})$ for $r=3$.
The subgraph $R_i$ is labeled with the vertices' names while the subgraph $R_{i+1}$ is labeled with the edge capacities and the numbers of walks requested between each terminal pair (on a colorful background). 
The ladder edges with capacities $2^{3r+5}$ are drawn with thicker lines. 
The vertices that need to be connected in the flow $\tcal_{r,k}$ share common colors and shapes.
They are also labeled with letters indicating the types of requests.
The gray ovals in the bottom represent the subgraphs $H_i$, $H_{i+1}$ (the vector-containment gadgets).
Bottom left: a sketch of the ring structure obtained from combining the subgraphs $R_1, R_2, \dots, R_k$.
}
\label{fig:ladder1}
\end{figure}

\paragraph*{The ladder.}
An {\em $r$-ladder} 
is a plane multigraph defined as follows (see \Cref{fig:ladder1}).
We begin construction from $r+1$ disjoint paths $(v_{1,1}, v_{1,2}, v_{1,3}), \dots, (v_{r+1,1}, v_{r+1,2}, v_{r+1,3})$, followed by adding additional edges forming paths $(v_{1,1}, v_{2,1}, \dots, v_{r+1,1})$ and $(v_{1,3}, v_{2,3}, \dots, v_{r+1,3})$.
Next, we duplicate each edge $2^{3r+5}$ times (i.e., we place this many parallel edges).

We create vertices $u_0, u_1$ on the outer face adjacent 
respectively to $v_{1,2}$ and $v_{r+1,2}$.
Let $f_1, \dots, f_r$ be the internal faces in the already constructed graph, numbered in such a way that $v_{1,1}$ is incident to $f_1$ and for each $i \in [r-1]$ the faces $f_i, f_{i+1}$ share an edge.
For each $j \in [r]$ we create vertices $x_j, y_j$ within the face $f_j$, and then
insert edges $x_jv_{j+1,2}$ and $y_jv_{j,2}$.

\paragraph*{The ring.}
Let $\mathcal{H} = H_1, \dots, H_k$ be a sequence of plane multigraphs that
satisfy condition (1) of \cref{def:homo:elemenet-gadget}: each $H_i$ has $r+2$ distinguished vertices $z_1, \dots, z_r, w_0, w_1$ so that the last two lie on the outer face. 
We build the graph $\mathsf{Ring}(r,k,\mathcal{H})$ from $k$ blocks arranged in a ring-like structure, so that $H_i$ will be installed inside the $i$-th block.

For $i \in [k]$ we start construction of the plane multigraph $R_i$ from two copies of an $r$-ladder, $L_i^+, L_i^-$.
For $\odot \in \{+,-\}$ we duplicate the edges incident to $L_i^\odot[u_0], L_i^\odot[u_1]$ times $2^{3r+4}$.
For $j \in [r]$ we duplicate the edge incident to $L^\odot_i[x_j]$ times $2^{2r+j}$ and the edge incident to $L^-_i[y_j]$ times $2^{2r}$.

Next, we create six vertices: $s_i, t_i, h_i^+, \widehat h_i^+, h_i^-, \widehat h_i^-$.
For $\odot \in \{+,-\}$ we insert $2^{3r+4} + 2^{2r}$ parallel edges
$h_i^\odot\widehat h_i^\odot$ 
and $2^{3r+4} + 2^{3r+1} + 2$ parallel edges $h_i^+ h_i^-$.
Next, we put $2^{3r+1}$ parallel edges between $h_i^+$ and $L_i^+[v_{1,1}]$ (the bottom-left corner vertex of the upper ladder),
and $2^{3r+1}$ parallel edges between $h_i^-$ and $L_i^-[v_{r+1,1}]$ (the upper-left corner vertex of the lower ladder).
We insert a single edge between $s_i$ (resp. $t_i$) and $L_i^+[v_{r+1,1}]$ (resp. $L_i^+[v_{r+1,3}]$). 

 Recall that $H_i$ has vertices $H_i[w_0], H_i[w_1]$ on its outer face. 
 We connect $H_i[w_0]$ (resp. $H_i[w_1]$) to $L_i^-[v_{1,1}]$ (resp. $L_i^-[v_{1,3}]$) (the bottom corners of the lower ladder) via $2^{2r}$ parallel edges.
The arrangement of the vertices on the plane is depicted in \Cref{fig:ladder1}.

Finally, we arrange the multigraphs $R_1, R_2, \dots, R_k$ into a ring.
For $i \in [k-1]$ 
we insert $2^{3r+1}$ parallel edges between $L_i^+[v_{1,3}]$ and  $h_{i+1}^+$, as well as between $L_i^-[v_{r+1,3}]$ and  $h_{i+1}^-$.
The graphs $R_k, R_1$ get connected in the same way.
The constructed ring encloses a bounded region incident to the minus-sides of the multigraphs $R_1, R_2, \dots, R_k$.

\vspace{3mm}
\mic{Next, we define the requests of the gadget, divided into four groups.
Although it is possible to achieve the same properties with slightly smaller demands, we choose to use the same numbers that appear in \cref{sec:subset-full} in order to reduce the edit distance between the proofs.
}

\paragraph{The requests.}
We define a family $\tcal_{r,k}$ of requests over  $\mathsf{Ring}(r,k,\mathcal{H})$. 
\begin{enumerate}[label=\Alph*.]
    \item In each ladder $L$ of the form $L^+_{i}, L^-_{i}$ we create a request $(L[u_0], L[u_1], 2^{3r+4})$, $2k$ in total.
    \item For each $i \in [k]$ we create a request $(\widehat h_i^+, \widehat h_i^-, 2^{3r+4} + 2^{2r})$. 
    \item For each $i \in [k]$ and $j \in [r]$ we create a request $(L^+_{i}[x_j], L^-_{i}[x_j], 2^{2r + j})$. 
    \item For each $i \in [k]$ and $j \in [r]$ we create a request $(L^-_{i}[y_j], H_i[z_j], 2^r)$. 
\end{enumerate}

For a  $\tcal_{r,k}$-flow $\pp$ we use variables $\pp^{A+}_i$, $\pp^{A-}_i$, $\pp^B_i$, $\pp^C_{i,j}$, $\pp^D_{i,j}$ to refer to subfamilies of $\pp$ satisfying the respective types of requests.

\begin{observation}\label{obs:numEdgesIncident}
    For every vertex $v$ of the form $\widehat h_i^\odot$, $L^\odot_{i}[u_j]$, $L^\odot_{i}[x_j]$, $L^-_{i}[y_j]$, 
    the number of edges incident to $v$ equals the number of walks in a  $\tcal_{r,k}$-flow that have an endpoint at $v$.
\end{observation}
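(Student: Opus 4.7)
The plan is to verify the equality by a straightforward case analysis over the four vertex types listed, exploiting the fact that each such vertex $v$ is ``pendant-like'' in $\mathsf{Ring}(r,k,\mathcal{H})$: its incident edges form a single bundle of parallel edges to a uniquely determined neighbour, and $v$ appears in exactly one request in $\tcal_{r,k}$. Thus the proof reduces to comparing, for each of the four types, the multiplicity of the incident edge bundle against the demand of the corresponding request.

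First I would read off the multiplicity of the incident edge bundle for each listed vertex directly from the duplication rules in the construction of the blocks $R_i$. Concretely: $\widehat h_i^\odot$ is joined to $h_i^\odot$ by $2^{3r+4}+2^{2r}$ parallel edges; $L_i^\odot[u_0]$ and $L_i^\odot[u_1]$ are joined by bundles of size $2^{3r+4}$ to $L_i^\odot[v_{1,2}]$ and $L_i^\odot[v_{r+1,2}]$ respectively; $L_i^\odot[x_j]$ is joined to $L_i^\odot[v_{j+1,2}]$ by $2^{2r+j}$ parallel edges; and $L_i^-[y_j]$ is joined to $L_i^-[v_{j,2}]$ by the stated bundle. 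At the same time, I would verify that none of the other edges inserted in the construction of $\mathsf{Ring}(r,k,\mathcal{H})$ are incident to these vertices: the hub-to-hub edges $h_i^+ h_i^-$, the inter-block links from $L_i^\pm[v_{\cdot,3}]$ to $h_{i+1}^\pm$, the edges incident to $s_i,t_i$, and the attachment edges between $H_i[w_b]$ and $L_i^-[v_{1,\cdot}]$ all touch corner vertices, hub vertices, or distinguished terminals of embedded gadgets, but never a vertex of the four listed types.

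Next, for each such $v$ I would identify the unique request in $\tcal_{r,k}$ with $v$ as an endpoint and read off its demand. By the partitioning of the requests into the families A, B, C, D, this assignment is forced: $\widehat h_i^\odot$ is covered by a B-request, $L_i^\odot[u_b]$ by an A-request, $L_i^\odot[x_j]$ by a C-request, and $L_i^-[y_j]$ by a D-request. Since a request $(s,t,d) \in \tcal_{r,k}$ contributes exactly $d$ walks with an endpoint at $s$ and $d$ walks with an endpoint at $t$, the number of walks in a $\tcal_{r,k}$-flow with an endpoint at $v$ equals the demand of that single request. It then remains to observe that in each of the four cases this demand coincides with the edge-bundle multiplicity computed in the previous paragraph.

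The only real obstacle is bookkeeping: one must track the duplication factors from the $r$-ladder and the block construction carefully enough to confirm the four equalities without missing edges that could secretly increment the degree of a listed vertex, and without double-counting a request's endpoint. Once this verification is done, the observation follows, and its utility in subsequent arguments is precisely that at any vertex $v$ of one of these four types the incident edges are saturated by the flow, so the assignment of walks to the parallel edges at $v$ is tight and there is no slack to reroute a walk to a different edge of $v$.
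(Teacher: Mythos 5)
Your proposal is correct and coincides with the paper's (implicit) justification: the statement is left unproved there precisely because it is meant to be read off from the construction exactly as you do --- each listed vertex carries a single bundle of parallel edges to one neighbour, appears in exactly one request of $\tcal_{r,k}$, and the bundle's multiplicity is set equal to that request's demand. One caveat on the ``bookkeeping'' you defer: for $L_i^-[y_j]$ the text of the simplified construction duplicates the incident edge $2^{2r}$ times while the D-request has demand only $2^r$; this is evidently a typo (the refined construction uses $2^r$ for both), but a literal reading would falsify the observation at exactly that vertex, so it deserves to be flagged rather than hidden behind ``the stated bundle.''
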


This observation allows us to exclude cases where one walk would start at a vertex $v$ and another walk would only pass through $v$.

\paragraph{Intuition.}
The aim of the (A) requests is to draw a pattern through each ladder which splits it into the left and the right side (see \Cref{fig:ring-solution}).
Then the (C) requests must be satisfied by walks between the upper ladder and the lower ladder that traverse the middle belt either through the left or the right side, according to the pattern above.
The (B) requests work as guards and ensure that no other walks of type (C) are possible (ruling out the possibility that some walk winds around the entire ring).
Since the (C) requests encode powers of two, a partition of them encodes an integer from 0 to $2^r - 1$.
Because the blocks are arranged in a ring structure and the number of walks that can cross $h^+_ih^-_i$ is limited, these integers must coincide.
As a consequence, the pattern drawn in each ladder must be the same; in this way, the flow ``chooses'' a vector $\bb \in \{0,1\}^r$. 

The (D) requests originate from the same faces in the lower ladders as their (C) counterparts, so they also need to stay either on the left or right side of the ladder.
This determines which entrance to the vector-containment gadget $H_i$ they can use (left or right).
Then the vertex $H_i[z_j]$ must be connected to $H_i[w_0]$ when $\bb_j = 0$ or to $H_i[w_1]$ when $\bb_j = 1$.
The vector-containment gadget governed by the zero-function $\gamma^0$ alters its behavior depending on whether $\bb$ belongs to a certain set: it
allows one additional walk between $H_i[w_0]$ and $H_i[w_1]$ if and only if the containment occurs.
Subsequently, this leaves space for an $(s_i,t_i)$-walk, which must go through $H_i$ because the walks of type (A) and (B) block the other passages.
By supplying appropriate sets to the vector-containment gadgets, we enforce that an $(s_i,t_i)$-walk can be accommodated exactly when $i \in \scal(\bb)$. 

\paragraph{}
For $\mathcal{S} \colon \{0,1\}^r \to 2^{[k]}$ and $i \in [k]$ we define $Z^\scal_i \sub \{0,1\}^r$ as the set of vectors $\bb$ for which $i \in \scal(\bb)$.
Assume that for each $i \in [k]$ there exists a plane multigraph $H_i$ which is an $(r,\func^0,Z^\scal_i)$-$\mathsf{Vector\, Containment\, Gadget}$.
We define $\mathcal{H}^{vcg}_\mathcal{S}$ as $H_1, \dots, H_k$.
We are going to show that then
$(\mathsf{Ring}(r,k,\mathcal{H}^{vcg}_\mathcal{S}), \tcal_{r,k})$ forms an $(r,k,\mathcal{S})$-$\mathsf{Subset\, Gadget}$.

We need to prove two implications to establish condition (3) of \cref{def:subset:gadget}.
We begin from the easier one: when $F \sub \scal(\bb)$ for some vector $\bb$, then the desired non-crossing flow exists.
For a flow $\pp$ and a walk $W$, we say that $W$ is non-crossing with $\pp$ if $W$ does not cross or share an edge with any walk in $\pp$.

\begin{figure}
    \centering
\centerline{\includegraphics[scale=0.55]{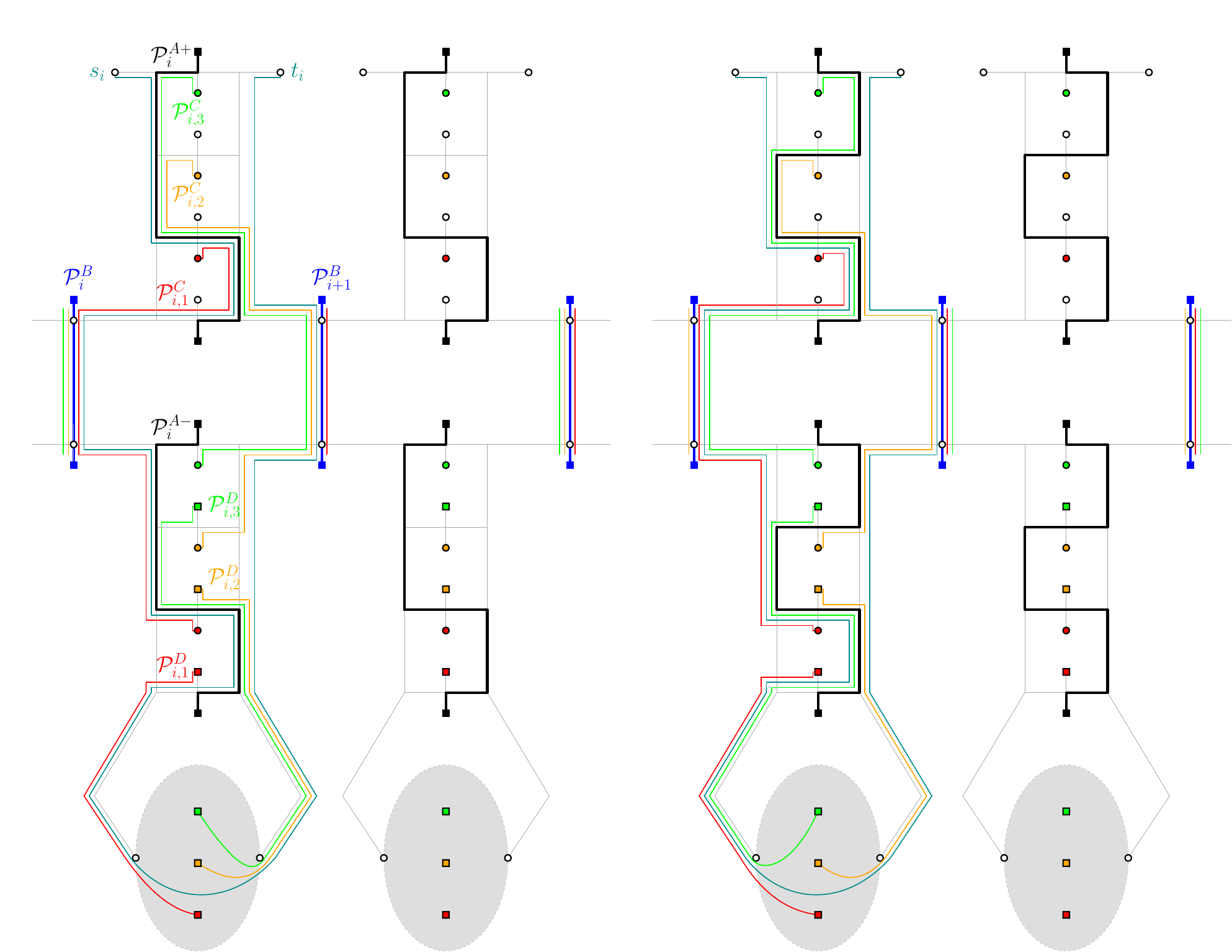}}
\caption{Two examples of solutions constructed in \cref{lem:subset:simple:forward}. 
The walks from families $\pp^{A\odot}_i$ and $\pp^B_i$ are drawn in black and blue, respectively.
The colors red, orange, and green represent 
walks from families $\pp^{C}_{i,j}$ and $\pp^{D}_{i,j}$ (single color for each $j \in \{1,2,3\}$).
In each solution, the choice of which colors go through the left or right passage is fixed for all $i\in [k]$ because otherwise some edge passing the middle belt (along the blue walks from $\pp^B_i$) would be overloaded.
The $(s_i,t_i)$-walk is drawn in~cyan.
}
\label{fig:ring-solution}
\end{figure}

\meir{Mark the families on the figure (\ref{fig:ring-solution})? \mic{done}}

\begin{lemma}\label{lem:subset:simple:forward}
Consider $r,k\in \nn$  and $\mathcal{S} \colon \{0,1\}^r \to 2^{[k]}$. 
Assume that the sequence of multigraphs $\mathcal{H}^{vcg}_\mathcal{S}$ exists.
For $F \sub [k]$ let $\tcal_F = \{(s_j, t_j, 1) \mid j \in F\}$.
If $F \sub \scal(\bb)$ for some $\bb \in \{0,1\}^r$ there exists a non-crossing $( \tcal_{r,k} \cup \tcal_F)$-flow in $\mathsf{Ring}(r,k,\mathcal{H}^{vcg}_\mathcal{S})$.
\end{lemma}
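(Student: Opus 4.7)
The plan is to construct the desired non-crossing $(\tcal_{r,k} \cup \tcal_F)$-flow explicitly. Fix $\bb \in \{0,1\}^r$ with $F \subseteq \scal(\bb)$, so $\bb \in Z^\scal_i$ for every $i \in F$. I will build five groups of walks (families of type A, B, C, D, and the $(s_i,t_i)$-walks) and then verify capacities and non-crossing. The guiding idea is that the vector $\bb$ is ``imprinted'' on every ladder in exactly the same way, and this shared pattern determines on which side of the middle belt each C- and D-walk is routed.

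First, in every ladder $L \in \{L^+_i, L^-_i \mid i \in [k]\}$ I would fix a canonical $(L[u_0], L[u_1])$-walk $\pi_L$ that encodes $\bb$: it enters through the edge $u_0 v_{1,2}$, travels along the left spine as long as the next face $f_j$ has $\bb_j = 0$, and switches to the right spine through an appropriate horizontal row as soon as $\bb_j = 1$ (and back, when $\bb_j$ flips). The family $\pp^{A\odot}_i$ consists of $2^{3r+4}$ parallel copies of $\pi_L$ using distinct parallel edges at each step; this is possible since all edges on $\pi_L$ carry multiplicity at least $2^{3r+4}$, and there are exactly $2^{3r+4}$ parallel $u_0 v_{1,2}$- and $v_{r+1,2} u_1$-edges. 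The image of $\pi_L$ partitions the interior of $L$ into a ``left region'' containing every face $f_j$ with $\bb_j = 0$ and a ``right region'' containing those with $\bb_j = 1$; in particular $L[x_j]$ and $L[y_j]$ end up on the side indicated by $\bb_j$.

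Next I would route the C-walks and D-walks so that they respect this partition. For request $(L_i^+[x_j], L_i^-[x_j], 2^{2r+j})$, all $2^{2r+j}$ walks exit $L_i^+$ via the single edge $x_j v_{j+1,2}$ and travel within the $\bb_j$-side of $L_i^+$ to a spine corner; they then descend through the edges $L^+_i[v_{1,1}]\,h^+_i$ (if $\bb_j=0$) or $L^+_i[v_{1,3}]\,h^+_{i+1}$ (if $\bb_j=1$), cross the middle belt along $h^+_i h^-_i$ or $h^+_{i+1} h^-_{i+1}$, ascend the symmetric spine of $L_i^-$, and re-enter $L_i^-[x_j]$ via $v_{j+1,2}$. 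For request $(L^-_i[y_j], H_i[z_j], 2^r)$, the $2^r$ walks leave $L^-_i$ via $y_j v_{j,2}$, travel to the bottom corner $L^-_i[v_{1,1}]$ (if $\bb_j=0$) or $L^-_i[v_{1,3}]$ (if $\bb_j=1$), cross into $H_i[w_0]$ or $H_i[w_1]$ via the $2^{2r}$ parallel edges, and inside $H_i$ follow the $(w_\bullet, z_j)$-subwalks supplied by the non-crossing $\tcal_{\bb, d}$-flow in $H_i$ guaranteed by \cref{def:homo:elemenet-gadget} with $d = 1_{[\bb \in Z^\scal_i]}$ (which exists because $\gamma^0(\bb) = 0$). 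The B-family is the simplest: $2^{3r+4}+2^{2r}$ walks $\widehat h^+_i \to h^+_i \to h^-_i \to \widehat h^-_i$, which saturate the $\widehat h^\odot_i h^\odot_i$ edges and occupy the portion of $h^+_i h^-_i$ left unused by the C-walks, up to a small slack of $2$ edges reserved for the $(s_i,t_i)$-walks.

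Finally, for each $i \in F$ we use the extra $(w_0, w_1, 1)$-walk that $H_i$ accommodates because $\bb \in Z^\scal_i$: the $(s_i,t_i)$-walk is formed by concatenating $s_i \to L^+_i[v_{r+1,1}]$, a descent along the left spine of $L^+_i$, the reserved $h^+_i\, h^-_i$ edge, an ascent-then-descent around $L^-_i$ on the left side, crossing into $H_i[w_0]$, the extra interior $(w_0,w_1)$-walk, exit to $L^-_i[v_{1,3}]$, an ascent along the right spine of $L^-_i$, the reserved $h^+_{i+1}\, h^-_{i+1}$ edge, an ascent along the right spine of $L^+_i$, and finally to $t_i$. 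The main bookkeeping obstacle is to check that all $\Theta(k\cdot 2^{3r})$ parallel edges used across the ring add up exactly (for example, $h^+_ih^-_i$ has capacity $2^{3r+4}+2^{3r+1}+2$, which must absorb the $2^{3r+4}+2^{2r}$ B-walks, the sum $\sum_{j : \bb_j=0} 2^{2r+j}$ of C-walks from block $i$ and $\sum_{j : \bb_j=1} 2^{2r+j}$ from block $i-1$, plus the single $(s_i,t_i)$-walk, and analogously for every other shared edge), and that the non-crossing condition holds at the shared vertices $h^\pm_i$, the four ladder corners, and at $H_i[w_0], H_i[w_1]$. The former is a direct calculation given the explicit multiplicities; the latter follows by ordering the parallel edges consistently with the cyclic ordering induced by the planar embedding, using \cref{obs:numEdgesIncident} at terminal vertices to rule out any walk merely passing through them and using the ``seeing order'' guaranteed in \cref{def:homo:elemenet-gadget} to align the interior walks of $H_i$ with the D-walks arriving from the two sides of $L^-_i$.
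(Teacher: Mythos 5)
Your proposal is correct and follows essentially the same route as the paper's proof: route the A-walks along a canonical pattern path encoding $\bb$, send the B-walks straight through $h_i^+h_i^-$, dispatch the C- and D-walks to the left or right passage according to $\bb_j$, invoke the vector-containment gadget's non-crossing $\tcal_{\bb,d}$-flow with $d=1_{[\bb\in Z^\scal_i]}$ (using the seeing-order condition to align its interior walks with the incoming D-walks), thread the $(s_i,t_i)$-walk around the left side, through $H_i$, and back up the right side, and finish with the same edge-capacity bookkeeping. The only nitpick is that the edge $h_i^+h_i^-$ absorbs portions of two $(s,t)$-walks (the left leg of walk $i$ and the right leg of walk $i-1$), not one, which is exactly why the capacity carries a slack of $2$.
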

\begin{proof}
The construction is depicted in \Cref{fig:ring-solution}.
We begin by describing which vertices are being visited by each walk and later we check that the edge capacities are sufficient to accommodate all the walks.

First, consider $i \in [k]$ and $\odot \in \{+,-\}$.
Let $P$ be a path that traverses the ladder $L^\odot_i$ in such a way that the face $f_j$ is to the right of $P$ exactly when $\bb_j = 1$. 
Each  walk from $\pp^{A\odot}_i$ traverses $L^\odot_i$ through the same vertices as $P$.
Next, every walk in the family $\pp^{B}_i$ is of the form
$(\widehat h_i^+, h_i^+, h_i^-, \widehat h_i^-)$.

Now we describe the families $\pp^{C}_{i,j}$, $\pp^{D}_{i,j}$.
First, suppose that $i \not\in F$.
Whenever $\bb_j = 0$ all the walks from $\pp^{C}_{i,j}$ go through an edge $h^+_ih^-_{i}$ and otherwise they go through $h^+_{i+1}h^-_{i+1}$ (counting modulo $k$).
We arrange them from left to right in such a way 
that the families
 $\pp^{C}_{i,j}$ appear in the order of increasing $j$ (see \Cref{fig:ring-solution}).
For each $j \in [r]$ there is a common internal face of $L^-_i$ incident to $L^-_i[x_j]$ (the endpoint for $\pp^{C}_{i,j}$) and 
$L^-_i[y_j]$ (the endpoint for $\pp^{D}_{i,j}$).
When constructing the walks in a top-bottom fashion, one can imagine that the walks from $\pp^{D}_{i,j}$ replace the ones from $\pp^{C}_{i,j}$ in the same position among the other walks.
The walks from the families $\pp^{D}_{i,j}$ with $\bb_j=0$ reach the vertex $H_i[w_0]$ in a monotone order (with respect to $j$).
Similarly, the walks from $\pp^{D}_{i,j}$ with $\bb_j=1$ reach the vertex $H_i[w_1]$,
however this time we arrange them from left to right in the order of decreasing $j$.
It remains to connect $H_i[z_j]$ to $H_i[w_0]$ (when $\bb_j = 0$) or to $H_i[w_1]$ (when $\bb_j = 1$) in a non-crossing way.
Since we do not need to accommodate any additional $H_i[w_0]H_i[w_1]$-walks, the value of $d$ in \cref{def:homo:elemenet-gadget} is 0 and the desired non-crossing flow exists.
\mic{The condition (2c) in the definition ensures that the order of walks entering $w_0$ (resp. $w_1$) in $H_i$ matches the order of walks outside $H_i$.}

When $i \in F$ we begin from constructing a non-crossing $\tcal_{\bb,1}$-flow in $H_i$: we request $2^r$ walks from  $H_i[z_j]$ to $H_i[w_0]$ (when $\bb_j = 0$) or to $H_i[w_1]$ (when $\bb_j = 1$) and a single $H_i[w_0]H_i[w_1]$-walk.
Because $i \in F \sub \scal(b)$ we have $\bb \in Z^\scal_i$.
By the definition of a $(r,\gamma^0, Z^\scal_i)$-$\mathsf{Vector\, Containment\, Gadget}$, there exists a non-crossing $\tcal_{\bb,1}$-flow in $H_i$.
We construct the remainders of walks from $\pp^{D}_{i,j}$, as well as walks from $\pp^{C}_{i,j}$, similarly as before.
The only difference is that we place an $H_i[w_0]s_i$-walk $W^0_i$ in between the left-side walks and an $H_i[w_1]t_i$-walk $W^1_i$ in between the right-side walks, creating an $s_it_i$-walk as a result.
We keep the same relative position of $W^0_i$ (resp. $W^1_i$) among the families $\pp^{C}_{i,j}$, $\pp^{D}_{i,j}$ as the position on which it leaves the subgraph $H_i$ (see \Cref{fig:ring-solution}).

Finally, we check that we have a sufficient number of parallel edges.
First consider the edges $h^+_ih^-_i$: there are $2^{3r+4} + 2^{3r+1} + 2$ copies of each.
There are $2^{3r+4} + 2^{2r+1}$ walks in $\pp^B_i$.
Next, for each $j \in [r]$, exactly one of the families  $\pp^C_{i-1,j}$, $\pp^C_{i,j}$ go through $h^+_ih^-_i$.
These sum up to
\[
\sum_{j=1}^r 2^{2r+j} = 2^{2r+1}\cdot (2^r - 1) = 2^{3r+1} - 2^{2r+1}.
\]
The only additional walks that might go through $h^+_ih^-_i$ are $W^1_{i-1}$ and $W^0_i$.
In total we obtain
\[
(2^{3r+4} + 2^{2r+1}) +  (2^{3r+1} - 2^{2r+1}) + 2 = 2^{3r+4} + 2^{3r+1} + 2
\]
walks, as intended.

The number of walks passing between a vertex $h^\odot_i$ and any ladder is at most $\sum_{j=1}^r 2^{2r+j} + 1$ (the additive 1 depends on whether $i \in F$) which is bounded by $2^{3r+1}$, the capacity of this passage.
The number of walks going from $H_i[w_0]$ to $L^-[v_{1,1}]$ (resp. from $H_i[w_1]$ to $L^-[v_{1,3}]$) is at most $r2^r + 1 \le 2^{2r}$.
Within each ladder, every pair of adjacent vertices of the form $v_{x,y}$ is connected via $2^{3r+5}$ parallel edges, which upper bounds the total number of walks in $\pp^{A\odot}_i$, $\pp^C_{i,j}$, $\pp^D_{i,j}$, together with $W^0_i, W^1_i$.
This concludes the proof.
\end{proof}

We have thus established the first implication in the proof of correctness. 
Next, we show that any non-crossing $\tcal_{r,k}$-flow must obey certain properties.
They will allow us to prove the second implication: when adding requests encoding a set $F \sub [k]$ results in a satisfiable instance, then $F$ must be a subset of some $\scal(\bb)$.

\begin{lemma}\label{lem:subset:simple:reverse}
Consider $r,k\in \nn$  and $\mathcal{S} \colon \{0,1\}^r \to 2^{[k]}$.
Let $\mathcal{H} = H_1, \dots, H_k$ be arbitrary and
$\pp$ be a non-crossing $\tcal_{r,k}$-flow in $\mathsf{Ring}(r,k,\mathcal{H})$. 
Then the following hold.
\begin{enumerate}
    \item Let  $i \in [k]$, $\odot \in \{+,-\}$, and $W$ be an $(h_i^\odot, h_{i+1}^\odot)$-walk internally contained in $L^\odot_i$.
    Then $W$ cannot be non-crossing with $\pp^{A\odot}_i$.\label{claim:subset:simple:reverse:AvsW}
    
    \item For each $i \in [k]$ the family $\pp^B_i$ contains a walk on vertices $\{\widehat h_i^+, h_i^+, h_i^-, \widehat h_i^-\}$. Moreover, every walk from $\pp^B_i$ goes through an edge $h_i^+h_i^-$.\label{claim:subset:simple:reverse:B} 

    \item There exists a vector $\bb \in \{0,1\}^r$ such that
    for each $i\in [k]$, 
    $j \in [r]$, every walk $P \in \pp^D_{i,j}$ contains an $(H_i[w_0],H_i[z_j])$-walk in $H_i$ when $\bb_j = 0$, or an $(H_i[w_1],H_i[z_j])$-walk in $H_i$ when $\bb_j = 1$. \label{claim:subset:simple:reverse:D-subwalk}
    
\end{enumerate}
\end{lemma}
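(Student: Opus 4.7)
The plan is to prove the three items in order, using Observation~\ref{obs:numEdgesIncident} as the workhorse.

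For the first claim, I would apply Observation~\ref{obs:numEdgesIncident} to $L_i^\odot[u_0]$ and $L_i^\odot[u_1]$ to conclude that $\pp^{A\odot}_i$ uses every edge incident to these two vertices, so every walk in $\pp^{A\odot}_i$ starts with the unique edge $u_0 v_{1,2}$ and ends with $v_{r+1,2} u_1$. The union of the images of these walks then forms a topological barrier in $L_i^\odot$ from $u_0$ to $u_1$ which, together with the outer-face arc of $L_i^\odot$ passing the $v_{*,1}$ column, encloses $v_{1,1}$ (where $h_i^\odot$ attaches) on one side and $v_{1,3}$ (where $h_{i+1}^\odot$ attaches) on the other. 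Any walk $W$ as in the claim must cross this closed curve; since $W$ is internal to $L_i^\odot$ it cannot cross the outer-face arc, so it must cross some walk of $\pp^{A\odot}_i$, contradicting the non-crossing assumption.

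The second claim builds directly on the first. Observation~\ref{obs:numEdgesIncident} forces $|\pp^B_i|=2^{3r+4}+2^{2r}$ and pins down that every $W \in \pp^B_i$ starts with a $\widehat h_i^+ h_i^+$-edge and ends with a $h_i^- \widehat h_i^-$-edge. In between, $W$ must cross from the plus-side of the ring to the minus-side, and the only plus-to-minus connections in the graph are the direct $h_l^+ h_l^-$ edges (the edges to $H_j[w_0], H_j[w_1]$ live wholly inside the minus-side). If the crossing happens at some $l \ne i$, the walk must travel along the plus-side from $h_i^+$ to $h_l^+$; any such route must traverse some upper ladder from its $h_j^+$-corner to its $h_{j+1}^+$-corner, which the first claim forbids. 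Hence $l=i$, and in particular the walk on vertices $(\widehat h_i^+, h_i^+, h_i^-, \widehat h_i^-)$ belongs to $\pp^B_i$.

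For the third claim, I would first argue block-locally that each lower ladder $L_i^-$ admits a per-block pattern $\bb^i \in \{0,1\}^r$ by the same topological picture used for the first claim: the $A$-walks partition the internal faces $f_1,\dots,f_r$ into those on the $v_{*,1}$-side and those on the $v_{*,3}$-side. Since $\pp^D_{i,j}$ starts inside $f_j$ and cannot cross the $A$-walk barrier, it exits $L_i^-$ on the prescribed side and enters $H_i$ via $H_i[w_0]$ when $\bb^i_j=0$ and via $H_i[w_1]$ when $\bb^i_j=1$. A parallel argument with $\pp^C_{i,j}$ forces the pattern in $L_i^+$ to agree bit-by-bit with the one in $L_i^-$, because the only middle-belt crossings available are at $h_i^+ h_i^-$ or $h_{i+1}^+ h_{i+1}^-$ and each choice prescribes the side of $f_j$ in both ladders.

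The remaining task is to show that $\bb^i$ is independent of $i$, which I plan to do by a ring-capacity argument. Writing $R^i=\sum_{j:\bb^i_j=1} 2^{2r+j}$ and $L^l = 2^{3r+1}-2^{2r+1}-R^l$, the edge $h_l^+ h_l^-$ is used by all of $\pp^B_l$ (by the second claim) together with the $\pp^C_{l,j}$-walks with $\bb^l_j=0$ and the $\pp^C_{l-1,j}$-walks with $\bb^{l-1}_j=1$; comparing the resulting load $2^{3r+4}+2^{2r}+L^l+R^{l-1}$ against the capacity $2^{3r+4}+2^{3r+1}+2$ yields $R^{l-1}-R^l \le 2^{2r}+2$. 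Each $R^i$ is a multiple of $2^{2r+1}$, and for $r \ge 1$ one has $2^{2r+1}>2^{2r}+2$, so the only admissible multiple is non-positive. Telescoping $\sum_l (R^{l-1}-R^l)=0$ around the ring with all summands non-positive forces every summand to vanish; hence all $R^i$, and therefore all $\bb^i$, coincide (the case $r=0$ is vacuous). The main subtlety is the topological bookkeeping in the first claim; the final arithmetic is clean once one notices the tuned gap between $2^{2r+1}$ and $2^{2r}+2$.
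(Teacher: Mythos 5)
Your overall architecture matches the paper's: saturation of terminal degrees via Observation~\ref{obs:numEdgesIncident}, a per-block pattern for the (C)-walks, a ring-capacity argument forcing all patterns to coincide, and then the routing of the (D)-walks into $H_i$. Your telescoping version of the capacity argument ($R^{l-1}-R^l\le 2^{2r}+2$ with each difference a multiple of $2^{2r+1}$, summing to zero around the ring) is a clean equivalent of the paper's ``pick $i$ with $\tau(\bb^i)<\tau(\bb^{i+1})$ and overload $h_{i+1}^+h_{i+1}^-$'' step, and your face-side argument for the per-block pattern replaces the paper's concatenation trick ($W_0+W_1$ would traverse the ladder); both are fine.

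There are, however, two concrete gaps, both of the same kind: you invoke topological barriers without first establishing that the barrier-forming walks are actually confined to the region where you need them. First, in item (1) you assert that the union of the $\pp^{A\odot}_i$-walks ``forms a topological barrier in $L_i^\odot$,'' but a $(u_0,u_1)$-walk is free to leave the ladder through $v_{1,1}$ or $v_{1,3}$ and return, in which case its restriction to the ladder need not separate the left column from the right one. You must first count the edges leaving $L_i^\odot$ (at most $2\cdot 2^{3r+1}+2\cdot 2^{2r}<2^{3r+4}=|\pp^{A\odot}_i|$) to get at least one $A$-walk entirely inside the ladder; then $W$, being non-crossing with that walk and unable to pass through the saturated vertices $u_0,u_1$, is trapped. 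Second, in item (2) the sentence ``hence $l=i$, and in particular the walk on vertices $(\widehat h_i^+,h_i^+,h_i^-,\widehat h_i^-)$ belongs to $\pp^B_i$'' is a non sequitur: knowing that every $B$-walk uses some edge $h_i^+h_i^-$ does not produce a walk confined to those four vertices (a $B$-walk could detour into a ladder and back before crossing). The existence of such a confined walk is an explicit assertion of the lemma and is again obtained by counting: only $4\cdot 2^{3r+1}=2^{3r+3}$ edges leave $\{\widehat h_i^+,h_i^+,h_i^-,\widehat h_i^-\}$, fewer than $|\pp^B_i|=2^{3r+4}+2^{2r}$. Both patches are short, but without them the barrier claims on which items (1)--(3) rest are unsupported.
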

\begin{proof}
    Within this proof, whenever we perform addition +1 or subtraction -1 from $i \in [k]$, we do it modulo $k$, that is, we \mic{adopt the convention that} $k + 1 = 1$ and $1-1 = k$.

     \begin{innerproof}[Proof of (1)] 
     First we argue that there exists a walk $Q \in \pp^{A\odot}_i$ contained entirely within $L^\odot_i$.
    To see this, we count the  total number of edges leaving   $L^\odot_i$; there are at most $2\cdot 2^{3r+1} + 2\cdot 2^{2r}$ of them, which is less than $|\pp^{A\odot}_i| = 2^{3r+4}$. 
    Therefore, at least one walk from $\pp^{A\odot}_i$ never leaves the subgraph~$L^\odot_i$.
     
    Now suppose that $W,Q$ are edge-disjoint and non-crossing; then $W$ must go through either $L^+_i[u_0]$ or $L^+_i[u_1]$.
    But the number of edges incident to each of these vertices equals the number of walks in $\pp^{A+}_i$.
    Therefore $W$ cannot be edge-disjoint with every walk in  $\pp^{A+}_i$; a contradiction.
        
     \end{innerproof}

    \begin{innerproof}[Proof of (2)] 
    As before, we count the total number of edges leaving the subgraph induced by  $\{\widehat h_i^+, h_i^+, h_i^-, \widehat h_i^-\}$ to be $4 \cdot 2^{3r+1} = 2^{3r+3}$.
    Since this number is less than the size of the family $\pp^B_i$, at least one walk from $\pp^B_i$ does not leave the vertex set $\{\widehat h_i^+, h_i^+, h_i^-, \widehat h_i^-\}$. 
    
    Suppose now that $P \in \pp^B_i$ does not go through any edge $h_i^+h_i^-$.
    Then $P$ goes through the vertex $h_{i+1}^+$ or $h_{i-1}^+$.
    Assume w.l.o.g. the first scenario, so $P$ needs to traverse $L^+_i$.
    Then $P$ contains a subwalk that meets the specification of Part (\ref{claim:subset:simple:reverse:AvsW}) of the lemma.
    This contradicts the assumption that $\pp$ is a non-crossing flow.
    \end{innerproof}
    
    We need two intermediate observations to reach the last claim of the lemma.
    \begin{claim}\label{claim:subset:simple:reverse:C-decision}
        For each $i \in [k]$ there exists a
        vector $\bb^i \in \{0,1\}^r$
        such that when $\bb^i_j = 0$ then all the walks from  $\pp^{C}_{i,j}$ go through an edge $h_i^+h_i^-$
        and when $\bb^i_j = 1$ then all the walks from  $\pp^{C}_{i,j}$ go through an edge $h_{i+1}^+h_{i+1}^-$. 
    \end{claim}
    \begin{innerproof}
        First we argue that for every $j \in [r]$ each walk from $P \in \pp^{C}_{i,j}$ goes either through $h_i^+h_i^-$ or $h_{i+1}^+h_{i+1}^-$.
        The edges  $\widehat h_i^+h_i^+$ are saturated by the walks from $\pp^B_i$, so $P$ cannot visit the vertex $\widehat h_i^+$.
        By Part (\ref{claim:subset:simple:reverse:B}), there is a walk $W \in \pp^B_i$ on vertex set $\{\widehat h_i^+, h_i^+, h_i^-, \widehat h_i^-\}$.
        Because $P$ and $W$ do not cross, $P$ is blocked from the left by the path $(\widehat h_i^+, h_i^+, h_i^-, \widehat h_i^-)$.
        Similarly, $P$ is blocked from the right by the path $(\widehat h_{i+1}^+, h_{i+1}^+, h_{i+1}^-, \widehat h_{i+1}^-)$.
        Therefore $P$ must proceed alongside one of these paths.
    
        Suppose now that for some $j \in [r]$ the family $\pp^{C}_{i,j}$
        contains a walk $W_0$ that does not go through any edge $h_{i+1}^+h_{i+1}^-$ and
        a walk $W_1$ that does not go through any edge $h_i^+h_i^-$.
        By \cref{def:reduction:flow} of a non-crossing flow, the concatenation $W_0 + W_1$ does not cross any walk from $\pp^{A+}_i$.
        But then $W_0 + W_1$ contains a subwalk that meets the specification of Part (\ref{claim:subset:simple:reverse:AvsW}).
        This  contradicts the assumption that $\pp$ is a non-crossing flow.
        Therefore for each $j \in [r]$ the choice whether to go via the left passage or the right one is fixed.

    \end{innerproof}

    Let us keep the variable $\bb^i$ to indicate the vector defined in  \cref{claim:subset:simple:reverse:C-decision}.
    
    \begin{claim}\label{claim:subset:simple:reverse:C-equal}
        There exists a single vector $\bb \in \{0,1\}^r$ so that $\bb^i = \bb$ for all $i \in [k]$. 
    \end{claim}
    \begin{innerproof}
        We define $\tau(b_1b_2\dots b_r) = \sum_{h=1}^r b_h\cdot 2^{h-1}$.
        Suppose that the claim does not hold.
        Because we work on a ring structure, there exists $i \in [k]$ for which $\tau(\bb^i) < \tau(\bb^{i+1})$.
        By \cref{claim:subset:simple:reverse:C-decision} the number of walks from $ \pp^C_{i,1} \cup  \pp^C_{i,2} \cup \dots  \pp^C_{i,r}$ that go through an edge $h_{i+1}^+h_{i+1}^-$ equals $2^{2r+1} \cdot \tau(\bb^i)$.
        On the other hand, the number of walks from $ \pp^C_{i+1,1} \cup  \pp^C_{i+1,2} \cup \dots  \pp^C_{i+1,r}$ that go through an edge $h_{i+1}^+h_{i+1}^-$  equals $2^{2r+1} \cdot (2^{r} - 1 - \tau(\bb^{i+1}))$.
        Since $\tau(\bb^i) < \tau(\bb^{i+1})$, this quantity is at least $2^{2r+1} \cdot (2^{r} - \tau(\bb^{i}))$.
        In total, we obtain at least $2^{3r+1}$ walks that go through $h_{i+1}^+h_{i+1}^-$.
        Due to Part (\ref{claim:subset:simple:reverse:B}) of the lemma, all $2^{3r+4} + 2^{2r+1}$ walks from $\pp^B_{i+1}$ also go through $h_{i+1}^+h_{i+1}^-$.
        But there are only $2^{3r+4} + 2^{3r+1} + 2$ parallel edges $h_{i+1}^+h_{i+1}^-$, which are too few to accommodate all $2^{3r+4} + 2^{3r+1} + 2^{2r+1}$ walks above, and       
        so we arrive at a~contradiction.
    \end{innerproof}

    \begin{innerproof}[Proof of (3)]  Let $\bb \in \{0,1\}^r$ be the vector from \cref{claim:subset:simple:reverse:C-equal}.
    Consider some $i \in [k]$ and $j \in [r]$.

    When $\bb_j = 0$ then any walk from $\pp^C_{i,j}$ must enter $L^-_i$ via $L^-_i[v_{r+1,1}]$, that is, the upper left corner, due to \cref{claim:subset:simple:reverse:C-decision}.
    Since $|\pp^C_{i,j}| > 2^{2r}$, there is at least one walk in $\pp^C_{i,j}$ does not use any edge $L^-_i[v_{1,1}]H_i[w_0]$  (there are only $2^{2r}$ such parallel edges).
    This walk includes an $(h^-_i,L^-_i[x_j])$-walk  
    internally contained in $L^-_i$.
    Note that both the vertices $L^-_i[x_j]$, $L^-_i[y_j]$ lie on the face $f_j$ of $L^-_i$ and the number of edges incident to each of them equals the number of walks in $\pp^C_{i,j}$, $\pp^D_{i,j}$, respectively.
    Therefore no walk from $\pp^{A-}_{i}$ can visit  $L^-_i[x_j]$ nor  $L^-_i[y_j]$.

    Consequently, when $\bb_j = 0$ then each walk $W \in \pp^D_{i,j}$ must also stay ``on the left'' of any walk from $\pp^{A-}_{i}$.
    The walk $W$ cannot 
    pass through $L^+_i$ or cross the path $(\widehat h_i^+, h_i^+, h_i^-, \widehat h_i^-)$ by the same argument as in \cref{claim:subset:simple:reverse:C-decision}.
    Therefore, the only possibility for $W$ to reach $H_i[z_j]$ is to enter $H_i$ through $H_i[w_0]$ and utilize some $(H_i[w_0],H_i[z_j])$-walk in $H_i$.
    
    The argument for the case $\bb_j = 1$ is analogous.
    \end{innerproof}
    
 This concludes the proof of \cref{lem:subset:simple:reverse}.
\end{proof}

\mic{Having imposed a structure of a non-crossing $\tcal_{r,k}$-flow, we can finish the correctness proof.} 

\begin{lemma}\label{lem:subset:simple:reverse-final}
Consider $r,k\in \nn$  and $\mathcal{S} \colon \{0,1\}^r \to 2^{[k]}$.
Assume that the sequence of multigraphs $\mathcal{H}^{vcg}_\mathcal{S}$ exists.
For $F \sub [k]$ let $\tcal_F = \{(s_j, t_j, 1) \mid j \in F\}$.
Suppose that there exists a non-crossing $( \tcal_{r,k} \cup \tcal_F)$-flow in $\mathsf{Ring}(r,k,\mathcal{H}^{vcg}_\mathcal{S})$.
Then there exists $\bb \in \{0,1\}^r$ for which $F \sub \scal(\bb)$.
\end{lemma}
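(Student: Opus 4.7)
Since any non-crossing $(\tcal_{r,k} \cup \tcal_F)$-flow restricts to a non-crossing $\tcal_{r,k}$-flow, I would first apply \cref{lem:subset:simple:reverse} to obtain a vector $\bb \in \{0,1\}^r$ witnessing its item~(3): for every $i \in [k]$, $j \in [r]$, each walk in $\pp^D_{i,j}$ contains an $(H_i[w_0], H_i[z_j])$-subwalk in $H_i$ when $\bb_j = 0$ and an $(H_i[w_1], H_i[z_j])$-subwalk in $H_i$ when $\bb_j = 1$. It then suffices to show $\bb \in Z^\scal_i$ for every $i \in F$, from which $F \sub \scal(\bb)$ follows by definition of $Z^\scal_i$.

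\textbf{Central topological claim.} Fix $i \in F$ and let $W_i$ denote the $(s_i, t_i)$-walk prescribed by $\tcal_F$ in the flow. The crux of the argument is to show that $W_i$ contains an $(H_i[w_0], H_i[w_1])$-subwalk $W^H_i$ whose internal vertices all lie in $V(H_i) \sm \{H_i[w_0], H_i[w_1]\}$. I would argue as follows: $s_i$ (resp.\ $t_i$) has a single incident edge---to $L^+_i[v_{r+1,1}]$ (resp.\ $L^+_i[v_{r+1,3}]$)---so $W_i$ must enter and exit $L^+_i$ through its top corners. By \cref{obs:numEdgesIncident} the vertices $L^+_i[u_0], L^+_i[u_1]$ are saturated by $\pp^{A+}_i$, hence forbidden to $W_i$. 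Together with the non-crossing property against the pattern $\pp^{A+}_i$ (which, as in the proof of \cref{lem:subset:simple:reverse}(1), separates the left and right halves of $L^+_i$), this forces $W_i$ to leave $L^+_i$ through a bottom corner on one side and re-enter through a bottom corner on the other. The middle belt must then be traversed twice, and the walls $\pp^B_j$ from \cref{lem:subset:simple:reverse}(2) restrict each such crossing to the passage $h^+_j h^-_j$ for some $j$. An analogous argument applied to $L^-_i$ (whose left and right sides are separated by $\pp^{A-}_i$, with $L^-_i[u_0], L^-_i[u_1]$ again saturated) then forces $W_i$ to bridge the left and right halves of $L^-_i$ through the only available ``gate'', namely $H_i$, entering via $H_i[w_0]$ and leaving via $H_i[w_1]$ (or vice versa). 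The portion of $W_i$ between these two visits, confined to $V(H_i)$, is~$W^H_i$.

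\textbf{Wrap-up via the vector-containment gadget.} Because $\pp \cup \{W_i\}$ is a family of pairwise edge-disjoint walks in $\mathsf{Ring}(r,k,\mathcal{H}^{vcg}_\mathcal{S})$, the subwalk $W^H_i$ is edge-disjoint inside $H_i$ from the $(H_i[w_0/w_1], H_i[z_j])$-subwalks of $\pp^D_{i,\cdot}$ extracted in the first paragraph. Their union therefore forms a $\tcal_{\bb, 1}$-flow in $H_i$ in the sense of \cref{def:homo:elemenet-gadget}. Since $H_i$ is an $(r, \gamma^0, Z^\scal_i)$-$\mathsf{Vector\, Containment\, Gadget}$ with $\gamma^0 \equiv 0$, the implication (2b)~$\Rightarrow$~(2a) of that definition yields $1 \le \gamma^0(\bb) + 1_{[\bb \in Z^\scal_i]} = 1_{[\bb \in Z^\scal_i]}$, so $\bb \in Z^\scal_i$, i.e., $i \in \scal(\bb)$. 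As $i \in F$ was arbitrary, $F \sub \scal(\bb)$.

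\textbf{Main obstacle.} The hardest part will be making the topological routing argument fully rigorous, since \emph{a priori} $W_i$ could behave in two irritating ways: (i)~winding around the ring through blocks other than the $i$-th one, or (ii)~visiting $H_i$ only through ``pendant'' excursions that enter and leave through the same vertex $H_i[w_0]$ (or $H_i[w_1]$) without ever carrying flow from one side to the other. The first scenario should be excluded by combining the bottleneck capacity of the middle-belt passages with the non-crossing property, in the same spirit as \cref{claim:subset:simple:reverse:C-decision,claim:subset:simple:reverse:C-equal}: once the ``colour'' chosen for the left/right passage is committed at one block, the global cyclic structure pins down the colour at every block, and any detour of $W_i$ through a foreign block would violate the blocking analysis used there. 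For the second scenario, the plan is to contract every such $(w_0,w_0)$- or $(w_1,w_1)$-loop to reduce to a ``reduced'' $(s_i,t_i)$-walk that still bridges the two halves of $L^-_i$; such a reduced walk, by the argument above, must contain a genuine $(H_i[w_0], H_i[w_1])$-subwalk, which serves as~$W^H_i$.
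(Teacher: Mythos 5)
Your proposal is correct and follows essentially the same route as the paper's proof: apply Lemma~\ref{lem:subset:simple:reverse} to extract the common pattern $\bb$ and the structure of the $D$-walks, use parts (1) and (2) of that lemma (saturation of $L^\odot_i[u_0],L^\odot_i[u_1]$ and the $\pp^B$-walls at positions $i$ and $i+1$) to force the $(s_i,t_i)$-walk through $H_i[w_0]$ and $H_i[w_1]$, and then invoke the implication (2b)~$\Rightarrow$~(2a) of Definition~\ref{def:homo:elemenet-gadget} with $\gamma^0\equiv 0$ to conclude $\bb\in Z^\scal_i$. The edge cases you flag (winding around the ring, pendant excursions into $H_i$) are handled in the paper exactly by the confinement between the two $\pp^B$-walls and by taking a suitable $(H_i[w_0],H_i[w_1])$-subwalk, so your treatment matches.
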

\begin{proof}
    Let $\pp$ be a $\tcal_{r,k}$-flow and $\pp_F$ be  a $\tcal_{F}$-flow
    so that $\pp \cup \pp_F$ is non-crossing in $\mathsf{Ring}(r,k,\mathcal{H}^{vcg}_\mathcal{S})$.
    We apply \cref{lem:subset:simple:reverse} to $\pp$; let $\bb \in \{0,1\}^r$ be the vector given by Part (\ref{claim:subset:simple:reverse:D-subwalk}) of the lemma.
    Fix $i \in F$.
    We obtain that when $\bb_j = 0$ then
    each walk $P \in \pp^D_{i,j} \sub \pp$
    contains an $(H_i[w_0],H_i[z_j])$-walk within $H_i$, and 
    when $\bb_j = 1$ each walk  $P \in \pp^D_{i,j}$ contains an $(H_i[w_1],H_i[z_j])$-walk within $H_i$.
    Therefore, the subwalks of $\pp^D_{i,j}$ within $H_i$ satisfy request $(H_i[w_0], H_i[z_j], 2^r)$ when  $\bb_j = 0$ or request $(H_i[w_1], H_i[z_j], 2^r)$ when  $\bb_j = 1$.

    Now consider the $(s_i, t_i)$-walk $P_i \in \pp_F$.
    By \cref{lem:subset:simple:reverse}(\ref{claim:subset:simple:reverse:B}), the walk $P_i$ can cross neither the path
    $(\widehat h_i^+, h_i^+, h_i^-, \widehat h_i^-)$ nor the path $(\widehat h_{i+1}^+, h_{i+1}^+, h_{i+1}^-, \widehat h_{i+1}^-)$.
    Next, due to \cref{lem:subset:simple:reverse}(\ref{claim:subset:simple:reverse:AvsW}),
    the walk $P_i$ cannot contain any subwalk that traverses $L^+_i$ nor $L^-_i$ from left to right.
    Hence $P_i$ must go through the following vertices: 
    \[L^+_i[v_{r+1,1}], L^+_i[v_{1,1}], L^-_i[v_{r+1,1}], L^-_i[v_{1,1}], H_i[w_0], H_i[w_1], L^-_i[v_{1, 3}], L^-_i[v_{r+1,3}], L^+_i[v_{1,3}], L^+_i[v_{r+1,3}].\]
    Consequently, $P_i$ contains an $(H_i[w_0], H_i[w_1])$-walk contained in $H_i$.
    Because $H_i$ is an $(r,\func^0,Z^\scal_i)$-$\mathsf{Vector\, Containment\, Gadget}$ and $\func^0(\bb) = 0$, this implies $\bb \in Z^\scal_i$ (\cref{def:homo:elemenet-gadget}, (2b) $\Rightarrow$ (2a)).

  The argument above works for every $i \in [k]$, and so the definition of $Z^\scal_i$
  implies that $i \in \scal(\bb)$ whenever $i \in F$.
\end{proof}

Lemmas \ref{lem:subset:simple:forward} and \ref{lem:subset:simple:reverse-final} imply that if the $(r,\gamma^0, Z^\scal_i)$-$\mathsf{Vector\, Containment\, Gadget}$s existed, then indeed $(\mathsf{Ring}(r,k,\mathcal{H}^{vcg}_\mathcal{S}), \tcal_{r,k})$ would form an $(r,k,\mathcal{S})$-$\mathsf{Subset\, Gadget}$.

\subsubsection{Dynamic flow generators}
\label{sec:subset-full}

We will now get rid of the unrealistic assumption that an $(r,\gamma^0, Z^\scal_i)$-$\mathsf{Vector\, Containment\, Gadget}$ exists.
The construction from the previous section could be easily extended to a setting where the function $\gamma$ is constant for fixed $r$, i.e., $\gamma_r(\bb) = f(r)$ for some function $f$.
One could then simply insert additional requests of the form $(s_i, t_i, f(r))$ to generate this many additional {units of flow to be pushed through} each vector-containment gadget.
The real issue is that \cref{prop:homo:final} provides us with a gadget governed by the following function 

\[\widehat{\func}_r(b_1b_2\dots b_r) = r \cdot 2^{r} + 1 + \sum_{1 \le p < q \le r} 1_{[b_p \ne b_q]} \cdot 2^{r-q+p-1}.\]

This means that the amount of additional flow passing through the vector-containment gadget must depend on the pattern encoded by the bit vector $\bb =b_1b_2\dots b_r$.
We will take advantage of the special form of the function $\widehat{\func}_r$ to extend the previous construction with ``dynamic flow generators'': new requests that could be satisfied either locally, within their ladder, or via walks passing through $H_i$.
We are going to insert ${r \choose 2}$ new blocks between each pair of blocks in the ring structure.
Using the pattern propagation mechanism, we will guarantee that the new block inserted after the $i$-th one, labeled with a triple $(i,p,q)$, generates $ 2^{r-q+p-1}$ additional units of flow exactly when the $p$-th bit and the $q$-th bit in the pattern differ, matching the formula for $\widehat{\func}_r$.

\paragraph{The extended ring.}
Similarly as before,
for $\mathcal{S} \colon \{0,1\}^r \to 2^{[k]}$ and $i \in [k]$ we define $Z^\scal_i \sub \{0,1\}^r$ as the set of vectors $\bb$ for which $i \in \scal(\bb)$.
For $i \in [k]$ let $H_i$ be the $(r,\widehat\func_r,Z^\scal_i)$-$\mathsf{Vector\, Containment\, Gadget}$ provided by \cref{prop:homo:final}. 

We construct the graph $\mathsf{ExRing}(r,k,\scal)$ by extending the building blocks of $\mathsf{Ring}(r,k,\mathcal{H})$ from the previous construction (see \Cref{fig:ladder-full}).
Since the family $H_1, \dots, H_k$ is now fixed for given $\scal$, we directly pass $\scal$ as a parameter of the construction.
We reuse the notion of $r$-ladder from~\cref{sec:subset-simple}.

Let $\Gamma_r$ be the set of pairs $(a,b) \in [r]^2$ with $1\le a < b \le r$, plus one special element $\bot$.
We have $|\Gamma_r| = {r \choose 2} + 1$.
We define an ordering on $\Gamma_r$ so that $\bot$ is the smallest element and the pairs are ordered lexicographically.
For $i \in [k]$ and $q \in \Gamma_r$ we start constructing the plane multigraph $R_{i,q}$ from two copies of an $r$-ladder, $L_{i,q}^+, L_{i,q}^-$.
For $\odot \in \{+,-\}$ we duplicate the edges incident to $L_{i,q}^\odot[u_0], L_{i,q}^\odot[u_1]$ times $2^{3r+4}$.
For $j \in [r]$ we duplicate the edge incident to $L^\odot_{i,q}[x_j]$ times $2^{2r+j}$.
We duplicate
the edge incident to $L^-_{i,\bot}[y_j]$ (only for $q=\bot$) times $2^{r}$, for all $j \in [r]$.
When $q = (a,b)$ for some $1\le a < b \le r$, we duplicate the edges incident to  $L^-_{i,q}[y_a]$,  $L^-_{i,q}[y_b]$ times $2^{r-b+a-1}$.

\begin{figure}
    \centering
\centerline{\includegraphics[scale=0.53]{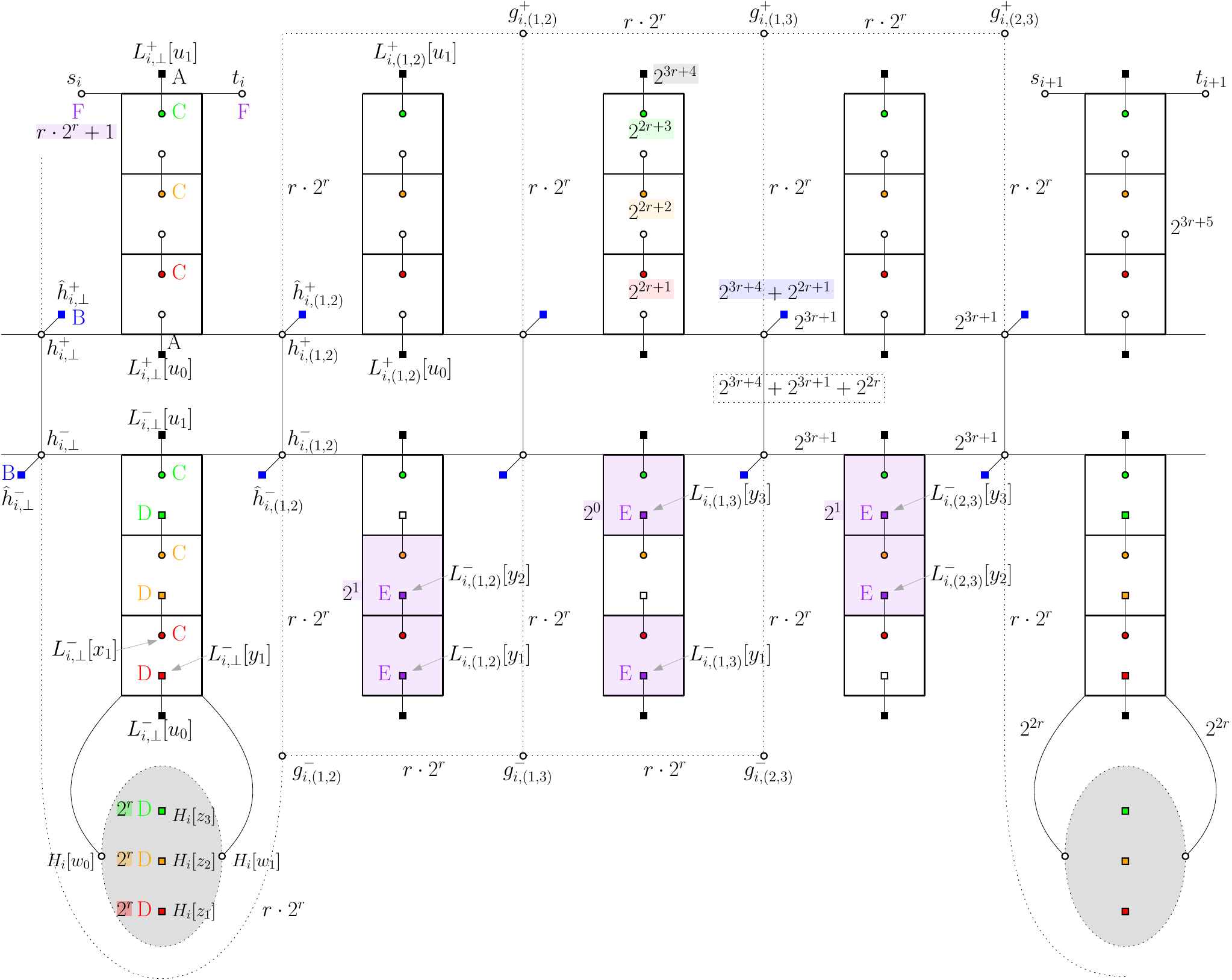}}
\caption{A fragment of the graph $\mathsf{ExRing}(r,k,\scal)$ for $r=3$, comprising subgraphs $R_{i,\bot}$, $R_{i,(1,2)}$, $R_{i,(1,3)}$, $R_{i,(2,3)}$, $R_{i+1,\bot}$.
The edges incident to vertices of the form $g^\odot_{i,q}$. which are not present in the previous construction, are dotted.
The gray ovals in the bottom represent the subgraphs $H_i$, $H_{i+1}$ (the vector-containment gadgets).
The vertices' names and edges' capacities (i.e., numbers of parallel edges) are provided.
The pairs of vertices that need to be connected in a $\widehat\tcal_{r,k}$-flow share common colors and shapes.
The colorful letters indicate the requests' types.
The number of walks requested between each terminal pair is given on a colorful background.
In each ladder of the form $L^-_{i,(a,b)}$ the faces $f_a,f_b$ are highlighted.
Note that the lower part of the figure becomes the interior of the ring structure, so the vertices $s_i,t_i$ end up on the outer face.
}
\label{fig:ladder-full}
\end{figure}

Next, for each $i \in [k]$ and $q \in \Gamma_r$ we create four vertices: $h_{i,q}^+, \widehat h_{i,q}^+, h_{i,q}^-, \widehat h_{i,q}^-$.
For $\odot \in \{+,-\}$ we insert $2^{3r+4} + 2^{2r}$ parallel edges
$h_{i,q}^\odot\widehat h_{i,q}^\odot$ 
and $2^{3r+4} + 2^{3r+1} + 2^{2r}$ parallel edges $h_i^+ h_i^-$.
Note that this is different from the previous construction where the third summand was just 2.
Next, we put $2^{3r+1}$ parallel edges between $h_{i,q}^+$ and $L_{i,q}^+[v_{1,1}]$ (the bottom-left corner vertex of the upper ladder),
and $2^{3r+1}$ parallel edges between $h_{i,q}^-$ and $L_{i,q}^-[v_{r+1,1}]$ (the upper-left corner vertex of the lower ladder).
The arrangement of the vertices on the plane is presented in \Cref{fig:ladder-full}.

 We connect $H_i[w_0]$ (resp. $H_i[w_1]$) to $L_{i,\bot}^-[v_{1,1}]$ (resp. $L_{i,\bot}^-[v_{1,3}]$) (the bottom corners of the lower ladder) via $2^{2r}$ parallel edges.
 We create vertices $s_i, t_i$ and connect each of them with $r\cdot2^r+2$ parallel edges to $L_{i,\bot}^+[v_{r+1,1}]$ or $L_{i,\bot}^+[v_{r+1,3}]$, respectively.
 These steps are omitted for $q \ne \bot$.

Analogously as before, we arrange the multigraphs $R_{i,q}$ into a ring.
We consider the lexicographic order on the set $[k] \times \Gamma_r$.
For each  $i \in [k]$ and $q \in \Gamma_r$
let $(i^\rightarrow, q^\rightarrow)$ denote the successor of $(i,q)$ in this order.
When $(i,q)$ is the last element in $[k] \times \Gamma_r$, then $(i^\rightarrow, q^\rightarrow$) becomes the first element, i.e., $(1,\bot)$.
We insert $2^{3r+1}$ parallel edges between $L_{i,q}^+[v_{1,3}]$ and  $h_{i^\rightarrow, q^\rightarrow}^+$, as well as between $L_{i,q}^-[v_{r+1,3}]$ and  $h_{i^\rightarrow, q^\rightarrow}^-$.
The constructed ring encloses a bounded region incident to the minus-sides of the multigraphs $R_{i,q}$.

The last step is novel compared to the previous construction.
For each $i \in [k]$ and $q \in \Gamma_r$, $q\ne \bot$, we create vertices $g^+_{i,q}, g^-_{i,q}$, connected via $r \cdot 2^{r}$ parallel edges to $h^+_{i^\rightarrow, q^\rightarrow}$ or $h^-_{i,q}$, respectively.
The new edges incident to $h^+_{i^\rightarrow, q^\rightarrow}$ (resp. $h^-_{i,q}$) are located between the edges to 
$L^+_{i,q}[v_{1,3}]$ and $\widehat h^+_{i^\rightarrow, q^\rightarrow}$
(resp. between $\widehat h^-_{i,q}$ and $L^-_{i,q}[v_{1,1}]$).

For $\odot \in \{+,-\}$ we connect $g^\odot_{i,q}$ via $r \cdot 2^{r}$ parallel edges to its predecessor in the ordering given by $\Gamma_r$, unless $q=(1,2)$ (i.e., $q$ is first in the ordering).
The vertex $g^+_{i,(1,2)}$ gets connected to $h^+_{i,(1,2)}$ via $r \cdot 2^{r}$ parallel edges
while $g^-_{i,(1,2)}$ gets connected to $h^-_{i,\bot}$ via $r \cdot 2^{r}$ parallel edges, in an analogous way as before.

\paragraph{The requests.}
We define a family $\widehat \tcal_{r,k}$ of requests over  $\mathsf{ExRing}(r,k,\mathcal{S})$.
The first four groups are analogous to those from \cref{sec:subset-simple}.
\begin{enumerate}[label=\Alph*.]
    \item $(L[u_0],\, L[u_1],\, 2^{3r+4})$ for each ladder $L$ of the form $L^+_{i,q}, L^-_{i,q}$. 
    \item $(\widehat h_{i,q}^+,\, \widehat h_{i,q}^-,\, 2^{3r+4} + 2^{2r})$ for each $i \in [k]$, $q \in \Gamma_r$.
    \item  $(L^+_{i,q}[x_j], \,L^-_{i,q}[x_j], \,2^{2r + j})$ for each $i \in [k]$, $q \in \Gamma_r$, $j \in [r]$. 
    \item $(L^-_{i,\bot}[y_j],\, H_i[z_j],\, 2^r)$ for each $i \in [k]$, $j \in [r]$. 
    \item $(L^-_{i,(a,b)}[y_a],\, L^-_{i,(a,b)}[y_b],\, 2^{r-b+a-1})$ for each $i \in [k]$, $1 \le a < b \le r$.
    \item $(s_i,\, t_i,\, r\cdot 2^r + 1)$  for each  $i \in [k]$. 
\end{enumerate}

For a  $\widehat\tcal_{r,k}$-flow $\pp$ we use variables $\pp^{A+}_{i,q}$, $\pp^{A-}_{i,q}$, $\pp^B_{i,q}$, $\pp^C_{i,q,j}$, $\pp^D_{i,j}$, $\pp^E_{i,a,b}$, $\pp^F_{i}$ to refer to subfamilies of $\pp$ satisfying the respective types of requests.

We make note of an observation analogous to the one (\ref{obs:numEdgesIncident}) from the previous construction.

\begin{observation}
    For every vertex $v$ of the form $\widehat h_{i,q}^\odot$,  $L^\odot_{i,q}[u_j]$, $L^\odot_{i,q}[x_j]$, $L^-_{i,\bot}[y_j]$, $L^-_{i,(a,b)}[y_a]$, $L^-_{i,(a,b)}[y_b]$, 
    the number of edges incident to $v$ equals the number of walks in a  $\widehat \tcal_{r,k}$-flow that have an endpoint at $v$.
\end{observation}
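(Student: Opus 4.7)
The plan is to verify the claim by case analysis on the vertex type, in each case reading off from the construction both the number of edges incident to $v$ and the total demand over requests in $\widehat\tcal_{r,k}$ that name $v$ as an endpoint. Since every walk in a $\widehat\tcal_{r,k}$-flow must use one edge at each of its endpoints, the number of walk-endpoints at $v$ is at most the degree of $v$; the content of the observation is that this inequality is tight, i.e., every edge incident to $v$ actually hosts the endpoint of a walk.

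First I would list, for each vertex type, the edges incident to it. For $\widehat h_{i,q}^\odot$ the construction inserts exactly $2^{3r+4}+2^{2r}$ parallel edges to $h_{i,q}^\odot$ and no other edges. For $L^\odot_{i,q}[u_0]$ and $L^\odot_{i,q}[u_1]$ the only incident edges are the $2^{3r+4}$ copies of the ladder edges to $v_{1,2}$ or $v_{r+1,2}$, respectively. For $L^\odot_{i,q}[x_j]$ the only incident edges are the $2^{2r+j}$ parallel edges to $v_{j+1,2}$. For $L^-_{i,\bot}[y_j]$ the construction gives $2^r$ edges to $v_{j,2}$; for $L^-_{i,(a,b)}[y_a]$ and $L^-_{i,(a,b)}[y_b]$, where $q=(a,b)\ne\bot$, the only incident edges are $2^{r-b+a-1}$ parallel copies of the $y$-edge.

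Next I would compare with the request lists (A)--(F): the request of type (B) contributes exactly $2^{3r+4}+2^{2r}$ walks with endpoint at each $\widehat h_{i,q}^\odot$; type (A) contributes exactly $2^{3r+4}$ walks at each $L^\odot_{i,q}[u_j]$; type (C) contributes $2^{2r+j}$ walks at each $L^\odot_{i,q}[x_j]$; type (D) contributes $2^r$ walks at each $L^-_{i,\bot}[y_j]$; and type (E) contributes $2^{r-b+a-1}$ walks at each of $L^-_{i,(a,b)}[y_a]$ and $L^-_{i,(a,b)}[y_b]$. One should additionally observe that no \emph{other} request in $\widehat\tcal_{r,k}$ lists these vertices as endpoints (a quick scan through the six request types (A)--(F) confirms this), so the tally of walk-endpoints is exactly the demand of the single request at $v$.

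The only subtlety is the direction of the inequality: the equality is strict only because every incident edge of $v$ must be used by some walk in a $\widehat\tcal_{r,k}$-flow. But that follows since the total demand of walks with endpoint at $v$ already equals the degree of $v$, and each such walk uses a distinct incident edge (walks are edge-disjoint). No step should present a real obstacle; the observation is a bookkeeping check designed so that in the subsequent structural arguments (mirroring Lemma~\ref{lem:subset:simple:reverse}) no walk can merely pass through one of these vertices without being the one anchored there.
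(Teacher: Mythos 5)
Your proposal is correct and matches the paper's intent exactly: the paper states this as an unproved observation (analogous to Observation~3.2's counterpart in the simplified construction), and the intended justification is precisely the degree-versus-demand bookkeeping you carry out, with all the numerical checks ($2^{3r+4}+2^{2r}$, $2^{3r+4}$, $2^{2r+j}$, $2^r$, $2^{r-b+a-1}$) agreeing with the construction and the request list (A)--(F). The closing remark about "tightness of an inequality" is an unnecessary framing -- both quantities are fixed numbers equal by design -- but it does not affect correctness.
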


In order to keep the calculations as clean as possible, we will neglect small values of $r$ (for which we will be able to solve the instance we reduce from in polynomial time) and work in the setting where the following convenient inequalities hold.
They compare the maximal amount of flow that needs to go through particular edges with these edges' capacities. 

\begin{lemma}\label{lem:subset:full:capacities}
     For $r \ge 6$, $k \ge 1$, and a $\widehat\tcal_{r,k}$-flow $\pp$, the following bounds hold for each fixed $i \in [k]$, $q \in \Gamma_r$, and $\odot \in \{+,-\}$,
     with summation over all $q' \in \Gamma_r$ and $j \in [r]$.
\begin{enumerate}
    \item $\sum  |\pp^{E}_{i,q'}| \le r\cdot 2^{r}$
    \item $|\pp^{B}_{i,q}| + \sum  |\pp^{C}_{i,q,j}|
    + 4\cdot (\sum  |\pp^{E}_{i,q'}| + |\pp^{F}_{i}| + 1) \le 2^{3r+4} + 2^{3r+1} + 2^{2r}$
    \item $\sum |\pp^{D}_{i,j}| +
     \sum  |\pp^{E}_{i,q'}| + |\pp^{F}_{i}| + 1 \le 2^{2r}$
    \item $|\pp^{A\odot}_{i,q}| + \sum |\pp^{C}_{i,q,j}| + \sum |\pp^{D}_{i,j}| +
     4\cdot (\sum  |\pp^{E}_{i,q'}| + |\pp^{F}_{i}| + 1) \le 2^{3r+5}$
     \item $\sum |\pp^{C}_{i,q,j}| + 
     2\cdot (\sum  |\pp^{E}_{i,q'}| + |\pp^{F}_{i}| + 1) \le 2^{3r+1}$
\end{enumerate}
\end{lemma}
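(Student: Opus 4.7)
The strategy is direct arithmetic verification: each of the five inequalities reduces to plugging in the explicit demands $2^{3r+4}$, $2^{3r+4}+2^{2r}$, $2^{2r+j}$, $2^r$, $2^{r-b+a-1}$, $r\cdot 2^r + 1$ from the request definitions, and then comparing the resulting expression to the stated edge capacity.

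Two preliminary sum computations will simplify all five bounds. First, for the $E$-family sum, substituting $e=r-(b-a)$ and using the identity $\sum_{e=1}^{n} e\cdot 2^{e-1} = (n-1)\cdot 2^{n} + 1$ gives
\[
\sum_{1\le a<b\le r} 2^{r-b+a-1} \;=\; \sum_{e=1}^{r-1} e\cdot 2^{e-1} \;=\; (r-2)\cdot 2^{r-1}+1,
\]
which immediately entails (1) since $(r-2)\cdot 2^{r-1}+1 \le r\cdot 2^{r}$ for every $r\ge 1$. Second, for the $C$-family sum the geometric series gives $\sum_{j=1}^{r} 2^{2r+j} = 2^{3r+1}-2^{2r+1}$. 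Recall also $\sum_{j=1}^{r} |\pp^{D}_{i,j}| = r\cdot 2^{r}$ and $|\pp^{F}_{i}|+1 = r\cdot 2^{r}+2$.

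For each of (2)--(5) I would substitute these expressions and rearrange so that the dominant exponential capacity terms on the right cancel with matching terms on the left, leaving only an inequality of the form ``polynomial-in-$r$ times $2^{r}$ is bounded by $2^{cr}$'' for some $c\ge 2$. For instance in (3) the left-hand side evaluates to $2^{r-1}(5r-2)+3$ and the bound becomes $(5r-2) + 3/2^{r-1} \le 2^{r+1}$, which holds for $r\ge 6$ since $2^{r+1}\ge 128$ while $5r-2\le 5r$. In (5) the slack between $\sum |\pp^{C}_{i,q,j}|$ and capacity $2^{3r+1}$ is exactly $2^{2r+1}$, and one checks that $2(\sum |\pp^{E}_{i,q'}| + |\pp^{F}_{i}|+1)$ contributes only a $(3r-2)\cdot 2^{r} + O(1)$ term, easily absorbed. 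Cases (2) and (4) work the same way: after cancelling the $2^{3r+4}$ and $2^{3r+1}$ terms with the corresponding pieces on the right, each reduces to showing that $(\alpha r+\beta)\cdot 2^{r} + O(1) \le 2^{2r+c}$ for a concrete $\alpha\le 7$, $\beta=O(1)$, and $c\in\{1,4\}$, which is immediate for $r\ge 6$ since $2^{r}\ge 64 \gg \alpha r + \beta$.

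There is no conceptual obstacle; the only thing to watch is bookkeeping of exponents and signs when simplifying (2) and (4), where several positive and negative terms of various magnitudes appear together. Writing the computation as ``rearrange to isolate the leading exponentials, then factor $2^{r}$ out of the remainder'' keeps the arithmetic transparent and yields the five claimed inequalities for all $r\ge 6$.
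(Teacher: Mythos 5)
Your proposal is correct and follows essentially the same route as the paper: substitute the explicit demands, use $\sum_{j=1}^r 2^{2r+j}=2^{3r+1}-2^{2r+1}$, and absorb the lower-order $\Theta(r\cdot 2^r)$ terms into the $2^{2r}$-scale slack available for $r\ge 6$. The only (harmless) difference is that you evaluate the $E$-sum exactly as $(r-2)\cdot 2^{r-1}+1$, whereas the paper settles for the cruder bound $\sum_{b=1}^r 2^{r-b}\cdot 2^b = r\cdot 2^r$, which is all that Part (1) requires.
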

\begin{proof}
    (Part 1.) We estimate 
\[\sum_{1 \le a < b \le r}  |\pp^{E}_{i,(a,b)}| =  \sum_{1 \le a < b \le r} 2^{r-b+a-1} = \sum_{b=1}^r \br{2^{r-b} \cdot \sum_{a=1}^{b-1} 2^{a-1}} \le 
\sum_{b=1}^r 2^{r-b} \cdot 2^b = r\cdot 2^{r}.
\]

    (Part 2.) Using the bound above, we obtain
\[
\sum_{{1 \le a < b \le r}}  |\pp^{E}_{i,(a,b)}| + |\pp^{F}_{i}| + 1 \le r\cdot 2^{r+1} + 2.
\]
Starting from $r = 6$ the right-hand side becomes bounded by $2^{2r-2}$.
By multiplying this by 4 we obtain $2^{2r}$.
It remains to inspect the remaining summands.
\[
|\pp^{B}_{i,q}| + \sum_{j=1}^r |\pp^{C}_{i,q,j}|  = (2^{3r+4} + 2^{2r+1}) + \sum_{j=1}^r 2^{2r+j}  = (2^{3r+4} + 2^{2r+1}) + (2^{3r+1} - 2^{2r+1}) = 2^{3r+4} + 2^{3r+1}.
\]

    (Part 3.) We have already established that
$\sum_{{1 \le a < b \le r}}  |\pp^{E}_{i,(a,b)}| + |\pp^{F}_{i}| + 1 \le r\cdot 2^{r+1} + 2 \le 2^{2r-2}$.
From the second inequality we can derive
$\sum_{j=1}^r  |\pp^{D}_{i,j}| =  r\cdot 2^{r} \le 2^{2r-2}$.
Summing these two terms leads to the claimed bound.

    (Part 4.)
    We have $|\pp^{A\odot}_{i,q}| = 2^{3r+4}$
    and $\sum_{j=1}^r |\pp^{C}_{i,q,j}| \le  2^{3r+1}$.
    From the previous calculations we get $\sum_{j=1}^r  |\pp^{D}_{i,j}| \le 2^{2r-2}$
    and the last term is bounded by $2^{2r}$.
    In total, these numbers clearly cannot exceed $2^{3r+5}$.
    
    (Part 5.)
    We have $\sum_{j=1}^r |\pp^{C}_{i,q,j}| = 2^{3r+1} - 2^{2r+1}$ while the second term can be at most $2^{2r}$. The bound follows.
\end{proof}

\mic{Recall that $(i^\rightarrow, q^\rightarrow)$ stands for the successor of $(i,q)$ in the cyclic ordering of $[k]\times \Gamma_r$; \meir{Already defined \mic{rephrased}}we will also denote by $(i^\leftarrow, q^\leftarrow)$ the predecessor of $(i,q)$. }
We are going to show that $(\mathsf{ExRing}(r,k,\scal), \widehat\tcal_{r,k})$ forms a truly working $(r,k,\scal)$-$\mathsf{Subset\, Gadget}$.
We move on to the first implication in the proof of correctness.
\mic{The sketch of the construction is provided in Figures \ref{fig:ring-solution-full}, \ref{fig:ring-flow-gadget}, and \ref{fig:ladder-full-outline} (on page \pageref{fig:ladder-full-outline}) but
we need to check (in a rather tedious way) that the edge capacities suffice to accommodate the~flow. 
}

\begin{figure}
    \centering
\includegraphics[scale=0.45]{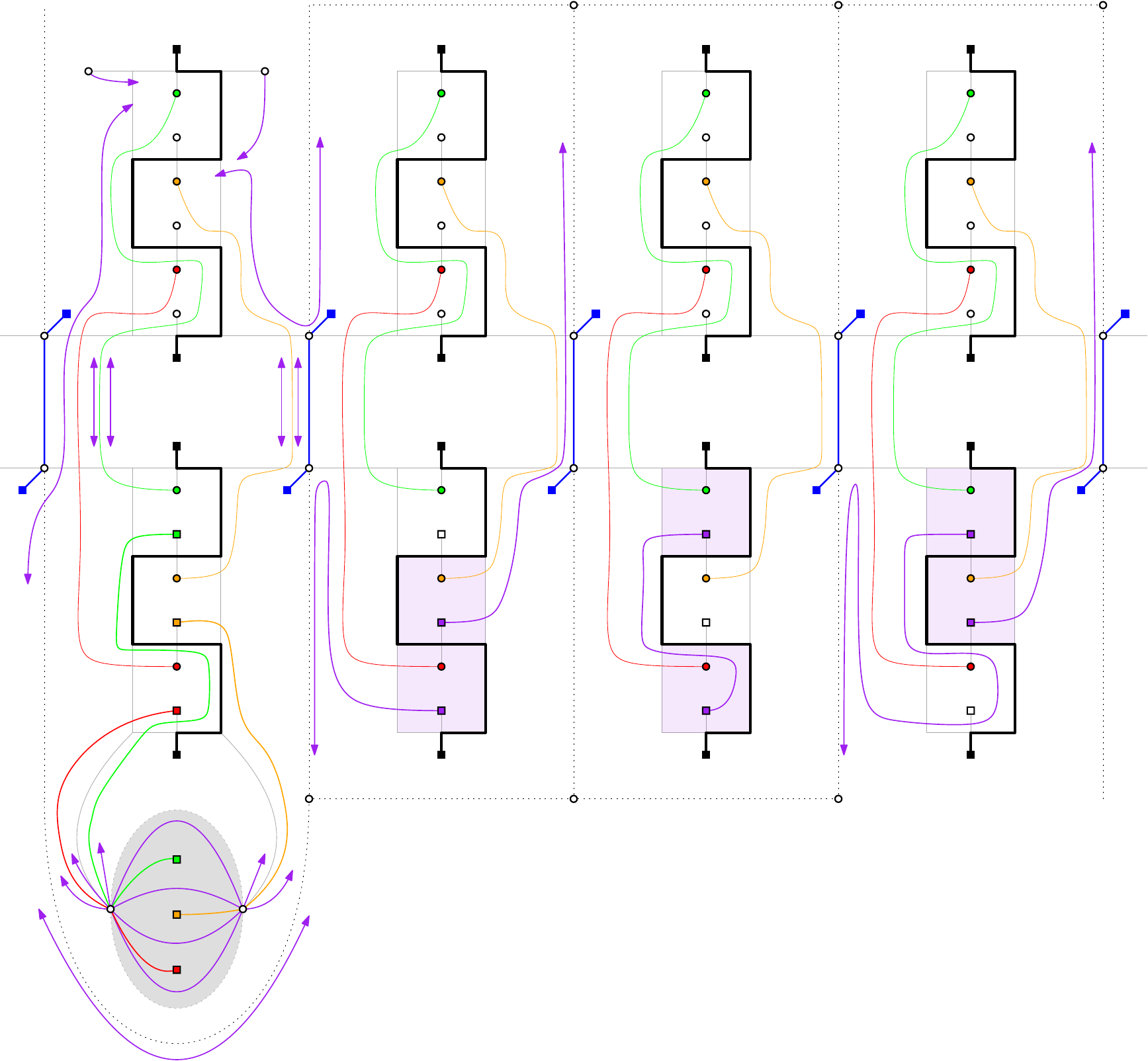}
\caption{
An illustration for the proof of \cref{lem:subset:full:forward}.
For a less detailed version with only walks of types (E, F) see \Cref{fig:ladder-full-outline}.
Due to the abundance of different walks in the flow, the walks are only roughly sketched with the colorful curves.
Their colors represent the types of requests: black (A), blue (B), green (C, D), orange (C, D), red (C, D), and purple (E,~F).
For $a = 1, b= 2$ the request of type (E) cannot be satisfied within the ladder $L^-_{1,2}$ because the vertices 
$L^-_{1,2}[y_1]$, $L^-_{1,2}[y_2]$ lie on different sides of the black curve (the flow for a request of type (A)).
The same applies to  $a = 2, b= 3$.
The respective flow must use the lower dotted edges to reach the subgraph $R_{i,\bot}$, traverse the subgraph $H_i$, and proceed through the upper dotted edges. 
The general strategy of bundling the walks on parallel edges stays the same as in \Cref{fig:ring-solution}, while the detailed view on how the purple walks are routed is provided in \Cref{fig:ring-flow-gadget}.
}
\label{fig:ring-solution-full}
\end{figure}

\begin{lemma}\label{lem:subset:full:forward}
Consider $r \ge 6$, $k\ge 1$,  and $\mathcal{S} \colon \{0,1\}^r \to 2^{[k]}$. 
For $F \sub [k]$ let $\tcal_F = \{(s_j, t_j, 1) \mid j \in F\}$.
If $F \sub \scal(\bb)$ for some $\bb \in \{0,1\}^r$, then there exists a non-crossing $( \widehat\tcal_{r,k} \cup \tcal_F)$-flow in $\mathsf{ExRing}(r,k,\mathcal{S})$.
\end{lemma}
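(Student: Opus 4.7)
The plan is to generalize the explicit routing from the proof of \cref{lem:subset:simple:forward}. Fix $\bb \in \{0,1\}^r$ with $F \sub \scal(\bb)$. In each ladder $L$ of the form $L^\odot_{i,q}$, pick a single $(L[u_0], L[u_1])$-path whose homotopy encodes $\bb$ (the face $f_j$ lies on the left exactly when $\bb_j = 0$), and bundle the $2^{3r+4}$ walks of $\pp^{A\odot}_{i,q}$ alongside it. Route $\pp^B_{i,q}$ as copies of the walk $(\widehat h^+_{i,q}, h^+_{i,q}, h^-_{i,q}, \widehat h^-_{i,q})$, direct each walk of $\pp^C_{i,q,j}$ through the passage $h^+_{i,q}h^-_{i,q}$ when $\bb_j = 0$ or through $h^+_{i^\rightarrow,q^\rightarrow}h^-_{i^\rightarrow,q^\rightarrow}$ when $\bb_j = 1$, and direct each walk of $\pp^D_{i,j}$ along the same side of $L^-_{i,\bot}$, entering $H_i$ through $w_0$ or $w_1$ accordingly and terminating at $z_j$ via the non-crossing $\tcal_{\bb,d}$-flow provided by \cref{prop:homo:final}.

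The novel task is to dispatch the (E) and (F) requests. For an (E) request with $\bb_a = \bb_b$, both endpoints $y_a, y_b$ lie on the same side of the (A)-path in $L^-_{i,(a,b)}$, so all $2^{r-b+a-1}$ walks are routed locally inside the ladder alongside that side. If $\bb_a \ne \bb_b$, the endpoints are separated and we push the walks through the irrigation system: each walk exits $L^-_{i,(a,b)}$ on the side of $y_a$, joins the lower pipe via $g^-_{i,(a,b)}$, travels along the $g^-$-chain to $R_{i,\bot}$, enters $H_i$ at $w_0$, crosses it as a single $(w_0, w_1)$-walk, exits at $w_1$, climbs the upper pipe through the $g^+$-vertices back to $(i, (a,b))$, and terminates at $y_b$ on the opposite side. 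The $r\cdot 2^r + 1$ walks of $\pp^F_i$ for $i \in F$ are routed entirely through the $(i, \bot)$-block, bundled together on one side: from $s_i$ through $L^+_{i,\bot}$ and the middle belt to $w_0$ of $H_i$, across $H_i$ as $(w_0, w_1)$-walks, and symmetrically back to $t_i$.

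Feasibility reduces to two quantitative checks. First, the total $(w_0, w_1)$-flow demanded of $H_i$ equals $\func_r(\bb) + (r\cdot 2^r + 1) \cdot 1_{[i \in F]}$, which is at most $\widehat\func_r(\bb) + 1_{[\bb \in Z^\scal_i]}$ because $i \in F \sub \scal(\bb)$ forces $\bb \in Z^\scal_i$ and $\widehat\func_r(\bb) = r\cdot 2^r + 1 + \func_r(\bb)$; hence \cref{prop:homo:final} yields a non-crossing $\tcal_{\bb,d}$-flow inside each $H_i$ whose prescribed clockwise orderings at $w_0$ and $w_1$ match the orderings in which the (D)-, pipe (E)-, and (F)-walks arrive from outside, so these subwalks can be stitched together consistently. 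Second, the edge capacities suffice by \cref{lem:subset:full:capacities}: parts (1) and (3) bound the pipe-edge flow against the $r\cdot 2^r$ pipe capacity, parts (2) and (5) bound the flow through the $h^+_{i,q}h^-_{i,q}$ passages and the ladder-to-$h^\odot$ connectors, and part (4) handles the internal ladder edges of capacity $2^{3r+5}$.

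The main obstacle will be verifying non-crossingness at the vertices $h^\odot_{i,q}$, where the (B)-walks, the (C)-walks from both adjacent blocks, the pipe (E)-walks, and, for $i \in F$, the (F)-walks all converge. The prescribed placement of the new pipe edges (as specified in the construction of $\mathsf{ExRing}(r, k, \scal)$) opens exactly the slot needed to interleave pipe walks without forcing them to cross any (A)-path or (B)-walk, and, combined with the ordering guaranteed by the vector-containment gadget inside $H_i$, this yields global non-crossingness of the constructed flow.
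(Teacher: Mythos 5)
Your construction follows the paper's proof essentially step for step: the (A)--(D) walks are routed exactly as in \cref{lem:subset:simple:forward}, the (E) requests with $\bb_a \ne \bb_b$ are pushed through the pipe system into $H_i$, the stitching at $H_i[w_0]$, $H_i[w_1]$ relies on the seeing-order guarantee of \cref{prop:homo:final}, and the capacity verification is delegated to \cref{lem:subset:full:capacities}.

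There is, however, one accounting slip that must be repaired. You state that the $(w_0,w_1)$-flow demanded of $H_i$ equals $\func_r(\bb) + (r\cdot 2^r+1)\cdot 1_{[i\in F]}$, and earlier you route the walks of $\pp^F_i$ only ``for $i\in F$''. But the type-(F) request $(s_i,t_i,\,r\cdot 2^r+1)$ belongs to $\widehat\tcal_{r,k}$ for \emph{every} $i\in[k]$, so these $r\cdot 2^r+1$ walks must traverse $H_i$ unconditionally; only the single additional $(s_i,t_i)$-walk coming from $\tcal_F$ is conditional on $i\in F$. The correct total is $\func_r(\bb) + (r\cdot 2^r+1) + 1_{[i\in F]} = \widehat\func_r(\bb) + 1_{[i\in F]} \le \widehat\func_r(\bb) + 1_{[\bb\in Z^\scal_i]}$, which is exactly what the gadget tolerates. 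As literally written, your flow leaves the (F) requests unsatisfied whenever $i\notin F$, so it is not a $(\widehat\tcal_{r,k}\cup\tcal_F)$-flow; moreover, under your count the demand for $i\in F$ is only $\widehat\func_r(\bb)$, so your appeal to $\bb\in Z^\scal_i$ does no work --- a sign of the miscount, since the entire point of the baseline $r\cdot 2^r+1$ units is to make the membership $\bb\in Z^\scal_i$ the deciding factor for the one extra $\tcal_F$-walk. The fix is immediate (route $\pp^F_i$ for all $i$), and the capacity bounds in \cref{lem:subset:full:capacities} already accommodate it.
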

\begin{proof}
    We deal with the requests of types A,B,C,D similarly as in \cref{lem:subset:simple:forward}.
    Again, we begin with describing which vertices are being visited by each walk and later we check that the edge capacities are large enough to receive all the walks.

    For each $i\in[k]$, $q\in\Gamma_r$, and $\odot \in \{+,-\}$, we consider a path $P$ that traverses the ladder $L^\odot_{i,q}$ in such a way that the face $f_j$ is to the right of $P$ exactly when $\bb_j = 1$.
    Each walk from $\pp^{A\odot}_{i,q}$ traverses $L^\odot_{i,q}$ through the same vertices as $P$.
    Every walk in the family $\pp^B_{i,q}$ is of the form 
    $(\widehat h_{i,q}^+, h_{i,q}^+, h_{i,q}^-, \widehat h_{i,q}^-)$.

    Now consider $i\in[k]$, $q\in\Gamma_r$, and $j \in [r]$.
    When $\bb_j = 0$ then all the walks from the family $\pp^C_{i,q,j}$ go through an edge $h_{i,q}^+h_{i,q}^-$ and otherwise they go through $h_{i^\rightarrow, q^\rightarrow}^+h_{i^\rightarrow, q^\rightarrow}^-$. 
    When  $\bb_j = 0$ then all the walks from the family $\pp^D_{i,j}$ go through an edge $L^-_{i,\bot}[v_{1,1}]H_{i}[w_0]$ and otherwise through  $L^-_{i,\bot}[v_{1,3}]H_{i}[w_1]$.

    Let $d_i = \gamma_r(\bb) + 1_{[i \in \scal(\bb)]}$.
    Note that the condition $i \in \scal(\bb)$ is equivalent to $\bb \in Z^\scal_i$.
    We define family $\tcal_i$ following \cref{def:homo:elemenet-gadget}:
    for each $j \in [r]$ with $\bb_j = 0$ we add request $(H_i[w_0], H_i[z_j], 2^r)$, for each $j \in [r]$ with $\bb_j = 1$ we add request $(H_i[w_1], H_i[z_j], 2^r)$, and finally we add request $(H_i[w_0], H_i[w_1], d_i)$.
    From the definition of an $(r,\gamma_r,Z^\scal_i)$-$\mathsf{Vector\, Containment\, Gadget}$, we obtain that there exists a non-crossing $\tcal_i$-flow $\pp^H_i$
    in $H_i$.
    Moreover, in this flow the vertices $H_i[w_0]$, $H_i[w_1]$ see the vertices $H_i[z_j]$ in the order consistent with the ordering of families $\pp^D_{i,j}$ on the edges  leaving $H_i$ (recall \cref{def:homo:seeing-order}).
    However, we have no control how the $(H_i[w_0], H_i[w_1])$-walks in $\pp^H_i$ are intertwined with the other walks at $H_i[w_0]$ and at $H_i[w_1]$.
    Therefore, we will adjust the routing of the remaining walks to fit between the families  $\pp^D_{i,j}$ in the same fashion.

    We move on to the requests of type (E).
    Consider $i \in [k]$ and $1 \le a < b \le r$.
    If $\bb_a = \bb_b$ then the faces $f_a, f_b$ of the ladder $L^-_{i,(a,b)}$ are on the same side of the walks from family $\pp^{A-}_{i,(a,b)}$.
    The walks in family $\pp^{E}_{i,(a,b)}$ must connect $L^-_{i,(a,b)}[y_a]$ to $L^-_{i,(a,b)}[y_b]$.
    The only issue is to avoid crossing the walks from families $\pp^{C}_{i,(a,b),j}$, $j\in[r]$, which also need to reach vertices within $L^-_{i,(a,b)}$.
    Observe that both vertices $L^-_{i,(a,b)}[y_a]$, $L^-_{i,(a,b)}[y_b]$ can be reached from 
    $L^-_{i,(a,b)}[v_{1,1}]$ (when $\bb_a = \bb_b = 0$) or from $L^-_{i,(a,b)}[v_{1,3}]$ (when $\bb_a = \bb_b = 1$) without crossing the other walks.
    Hence every walk in $\pp^{E}_{i,(a,b)}$ can be obtained via a concatenation of such two walks.

    Suppose now that $\bb_a \ne \bb_b$ and assume w.l.o.g. that $\bb_a = 0$ and $\bb_b = 1$.
    First, one can reach $L^-_{i,(a,b)}[v_{r+1,1}]$ from $L^-_{i,(a,b)}[y_a]$ without crossing other walks, and then one can move to   $h_{i,(a,b)}^-$.
    Due to ordering of edges incident to $ h_{i,(a,b)}^-$, the walks can proceed ``down'' to $g_{i,(a,b)}^-$ and then follow the ``lower'' path from $g_{i,(a,b)}^-$ to $h^+_{i,\bot}$ (see \Cref{fig:ring-solution-full}).
    Looking from the other end, starting at $L^-_{i,(a,b)}[y_b]$, one first reaches $L^-_{i,(a,b)}[v_{r+1,3}]$, then $h_{i^\rightarrow, q^\rightarrow}^-$, $h_{i^\rightarrow, q^\rightarrow}^+$, 
     $g_{i,(a,b)}^+$, and follows the ``upper'' path towards $h^+_{i,(1,2)}$ (the vertex just to the right of the subgraph $R_{i,\bot}$).

    So far we have explained how to reach the left side  the subgraph $R_{i,\bot}$ from each $L^-_{i,(a,b)}[y_a]$ with $\bb_a = 0$ and how to reach the right side $R_{i,\bot}$ from each $L^-_{i,(a,b)}[y_b]$ with $\bb_b=1$.
    The total amount of flow from families $\pp^E_{i,(a,b)}$ that need to traverse $R_{i,\bot}$ equals
    \[
    \sum_{1\le a < b \le r} 1_{[\bb_a \ne \bb_b]} \cdot 2^{r-b+a-1}.
    \]
    
    We also need to take care of the request $(s_i, t_i, r\cdot 2^r + 1)$ and, when $i \in \scal(\bb)$, the request $(s_i, t_i, 1)$ from $\tcal_F$.
    In total, we need to push exactly $d_i$ units of flow through $R_{i,\bot}$, from left to right.
    The flow $\pp^H_i$ in $H_i$ already includes this many $(H_i[w_0], H_i[w_1])$-walks.
    It remains to group the walks coming from the left side (that is, from $s_i$ and  $h^+_{i,\bot}$) into bundles, reflecting the bundles of $(H_i[w_0], H_i[w_1])$-walks in $\pp^H_i$ between the $(H_i[w_0], H_i[z_j])$-walks at vertex $H_i[w_0]$, and accommodate these bundles respectively between the families $\pp^C_{i,\bot,j}$.
    Similarly, we group the walks coming from the right side (that is, from $t_i$ and  $h^+_{i,(1,2)}$) accordingly to the bundles of  $(H_i[w_0], H_i[w_1])$-walks in $\pp^H_i$ between the $(H_i[w_1], H_i[z_j])$-walks at vertex $H_i[w_1]$.
    The order of walks in these two families is symmetric, so one can match the requested endpoints (see \Cref{fig:ring-flow-gadget}).

    Finally, we check that the edge capacities are sufficient to accommodate all the walks.
    The edges incident to vertices of the form $g^\odot_{i,q}$ have capacity $r \cdot 2^{r}$. 
In an extreme case, such an edge might be utilized by all families $\pp^E_{i,(a,b)}$ for $1 \le a < b \le r$ and fixed $i$.
The total number of walks in these families is bounded by  $r \cdot 2^{r}$ due to \cref{lem:subset:full:capacities}(1).
    
    Now consider the edges $h^+_{i,q}h^-_{i,q}$:
    there are $2^{3r+4} + 2^{3r+1} + 2^{2r}$ copies of each.
    In our construction each walk from family $\pp^B_{i,q}$ uses one of these edges.
    For each $j \in [r]$, exactly one of the families  $\pp^C_{i^\leftarrow, q^\leftarrow,j}$, $\pp^C_{i,q,j}$ goes through $h^+_ih^-_i$.
    This gives a number of walks equal to $\sum_{j=1}^r |\pp^C_{i,q,j}|$.
    We might also need to accommodate the walks of types (E), (F), and the $(s_i,t_i)$-walk requested by $\tcal_F$.
    Each of these walks might go at most twice through $h^+_{i,q}h^-_{i,q}$.
    For $q = \bot$ there might be also walks indexed with $(i^\leftarrow, q^\leftarrow)$ that traverse $h^+_{i,q}h^-_{i,q}$ to the left of family $\pp^B_{i,q}$.
    We upper bound the number of all these walks by multiplying $\sum_{1\le a < b \le r} |\pp^E_{i,(a,b)}| + |\pp^F_i| +1$ by 4.
    In total we get a quantity bounded $2^{3r+4} + 2^{3r+1} + 2^{2r}$, due to \cref{lem:subset:full:capacities}(2).

The edges of the form $H_i[w_0]L^-_{i,\bot}[v_{1,1}]$
or $H_i[w_1]L^-_{i,\bot}[v_{1,3}]$ have capacity $2^{2r}$.
Such an edge might be used by all families $\pp^D_{i,j}$, $\pp^E_{i,(a,b)}$, $\pp^F_{i}$, for fixed $i$, and the $(s_i,t_i)$-walk requested in $\tcal_F$.
By \cref{lem:subset:full:capacities}(3), the total number of these walks is at most $2^{2r}$.

For each $i \in [k]$, $q \in \Gamma_r$, and $\odot \in \{+,-\}$, the edges within the ladder $L^\odot_{i,q}$, which are not adjacent to any terminal, might need to accommodate the all respective families of types (A), (C), (D), (E), (F), and the $(s_i,t_i)$-walk requested by $\tcal_F$.
The walks of the last three types might traverse up to four copies of a single edge: by going ``up'' and ``down'' on each of two sides of the ladder.
We use \cref{lem:subset:full:capacities}(4) to bound the total number of necessary parallel edges by $2^{3r+5}$, that is, the number of copies for each edge.

The last non-trivial case is the passage between a vertex of the form $h^\odot_{i,q}$ and a ladder.
The edges therein are utilized by the walks of type (C) for a fixed pair $(i,q)$ and possibly the walks of types (E), (F) plus the $(s_i,t_i)$-walk requested by $\tcal_F$. 
We multiply by 2 the total amount of flow from the last three types to cover potential detours of walks.
By \cref{lem:subset:full:capacities}(5), we need no more than $2^{3r+1}$ parallel edges, which is exactly the capacity.

This concludes the construction of a non-crossing $(\widehat\tcal_{r,k} \cup \tcal_F)$-flow.
\end{proof}

\begin{figure}
    \centering
\centerline{\includegraphics[scale=0.65]{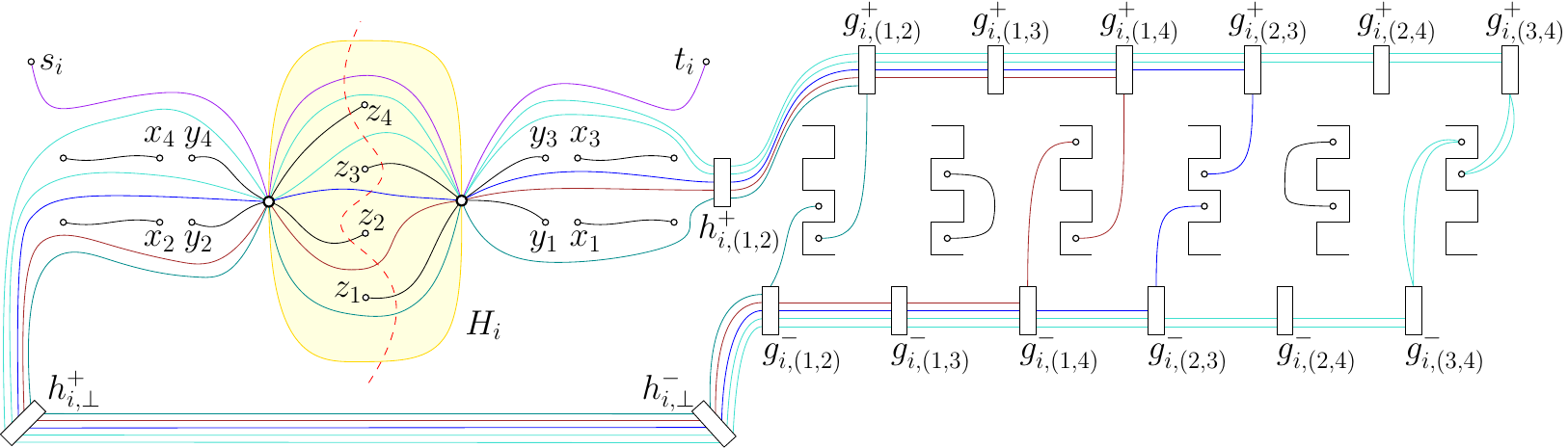}}
\caption{
A topological view on the construction from \cref{lem:subset:full:forward} for $r=4$ and $\bb = (1010)$, illustrating walks traversing the vector-containment gadget $H_i$.
The length of the red dashed curve, measuring how many  $(H_i[w_0], H_i[w_1])$-walks can pass through the gadget, is governed by the choice of the vector $\bb$.
The walks of types (C) and (D) are drawn in black, with a single curve representing each family $\pp^C_{i,\bot,j}$ or $\pp^D_{i,j}$.
The labels $x_j, y_j$ mark vertices $L^-_{i,\bot}[x_j]$, $L^-_{i,\bot}[y_j]$.
For $(a,b) \in \{(1,3), (2,4)\}$ the vertices $L^-_{i,(a,b)}[y_a]$, $L^-_{i,(a,b)}[y_b]$ lie on the same side of the pattern drawn by walks of type (A), so they can be connected within the ladder $L^-_{i,(a,b)}$.
For the remaining pairs $(a,b)$, the flow $\pp^E_{i,(a,b)}$ must go through the subgraph $H_i$.
Each such family, as well as the family of $(s_i,t_i)$-walks, is represented by a single curve, except for $\pp^E_{i,(3,4)}$.
In that last case, two walks are drawn to demonstrate that even walks satisfying a single request may exhibit different behavior when traversing $H_i$ (they pass vertex $H_i[z_4]$ from different sides).
The coloring of the curves illustrates that all the requests can be satisfied in a non-crossing way. 
}
\label{fig:ring-flow-gadget}
\end{figure}

\mic{We move on to the second implication in the correctness proof.}
It will be convenient to formally define a {\em pattern} of a walk $Q \in \pp^{A\odot}_{i,q}$.
We do not use the term ``homotopy'' in order to avoid confusion with \cref{def:homo:homotopic}.

\begin{definition}\label{def:subset:full:homotopic}
Consider $i \in [k]$, $q \in \Gamma_r$, and $\odot \in \{+,-\}$.
Let $Q$ be an $(L^\odot_{i,q}[u_0],L^\odot_{i,q}[u_1])$-walk in $\mathsf{ExRing}(r,k,\scal)$ and $\bb \in  \{0,1\}^r$.
We say that $Q$ has {\em pattern $(i,q,\odot,\bb)$} if $Q$ is contained in the subgraph $L^\odot_{i,q}$ and for each $j \in [r]$ there exists a walk $W_j$ such that:
\begin{enumerate}
    \item $W_j$ starts at $L^\odot_{i,q}[x_j]$ and ends at $h^\odot_{i,q}$ (when $\bb_j = 0$) or $h^\odot_{i^\rightarrow, q^\rightarrow}$ (when $\bb_j = 1$);
    \item $W_j$ is internally contained in $L^\odot_{i,q}$;
    \item $W_j$ and $Q$ are non-crossing.
\end{enumerate}
\end{definition}

\mic{Intuitively, the face $f_j$ is to the left of $Q$ when $\bb_j = 0$ and to the right when $\bb_j = 1$.}
The next lemma plays the same role as \cref{lem:subset:simple:reverse} in \cref{sec:subset-simple}.
We prove that any non-crossing $\widehat\tcal_{r,k}$-flow must enjoy essentially the same structure as the solution constructed in \cref{lem:subset:full:forward}.
\mic{The difference with \cref{lem:subset:simple:reverse} is that now some walks of types (A), (B), (C) may potentially use the new connections through the $g$-vertices.
We argue that their numbers are large enough, compared to those edges' capacities, to ensure that the majority of walks exhibits the same behaviour as before.}

\begin{lemma}\label{lem:subset:full:reverse}
Consider $r \ge 6$, $k\ge 1$,  and $\mathcal{S} \colon \{0,1\}^r \to 2^{[k]}$.
Let $\pp$ be a non-crossing $\widehat\tcal_{r,k}$-flow in $\mathsf{ExRing}(r,k,\mathcal{S})$. 
Then the following hold.
\begin{enumerate}
    \item Let  $i \in [k]$, $q\in \Gamma_r$, $\odot \in \{+,-\}$, and $W$ be an $(h_{i,q}^\odot, h_{i^\rightarrow, q^\rightarrow}^\odot)$-walk internally contained in $L^\odot_{i,q}$.
    Then $W$ crosses with $\pp^{A\odot}_{i,q}$. \label{claim:subset:full:reverse:AvsW}  
    \item For each $i \in [k]$, $q \in \Gamma_r$, the family $\pp^B_{i,q}$ contains a walk on vertices $\{\widehat h_{i,q}^+, h_{i,q}^+, h_{i,q}^-, \widehat h_{i,q}^-\}$. Moreover, there are at least $|\pp^B_{i,q}| - r\cdot 2^r$ walks in $\pp^B_{i,q}$ that go through an edge $h_{i,q}^+h_{i,q}^-$. \label{claim:subset:full:reverse:B}
    \item There exists a vector $\bb \in \{0,1\}^r$ such that
    for each $i\in [k]$,  $q\in \Gamma_r$, and $\odot \in \{+,-\}$, there is a walk $Q^\odot_{i,q} \in \pp^{A\odot}_{i,q}$ with pattern $(i,q,\odot,\bb)$. \label{claim:subset:full:reverse:D-subwalk}
\end{enumerate}
\end{lemma}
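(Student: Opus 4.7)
The plan is to mimic the three-step structure of \cref{lem:subset:simple:reverse}, while carefully accounting for the newly introduced alternative routes through the $g$-vertices. Throughout, we rely on the capacity bounds of \cref{lem:subset:full:capacities} and the assumption $r \ge 6$, which ensures that the ``pipe'' capacities (of the order $r\cdot 2^r$) are dwarfed by the middle-belt quantities $2^{3r+1}, 2^{3r+4}$.

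For Part (1), we first show as in \cref{lem:subset:simple:reverse}(1) that some $Q\in \pp^{A\odot}_{i,q}$ is entirely contained in $L^\odot_{i,q}$: the total capacity of edges leaving $L^\odot_{i,q}$ is $2\cdot 2^{3r+1}+ 2\cdot 2^{2r} +  2(r\cdot 2^r+2)$ (counting the connections to $h^\odot_{i,q}, h^\odot_{i^\rightarrow,q^\rightarrow}$, and, when $q=\bot$, to $H_i[w_0], H_i[w_1]$ and $s_i, t_i$), which is $<|\pp^{A\odot}_{i,q}|=2^{3r+4}$ for $r\ge 6$. Any walk $W$ meeting the specification of Part (1) that does not cross $Q$ must pass through $L^\odot_{i,q}[u_0]$ or $L^\odot_{i,q}[u_1]$, which is impossible since the edges incident to these two vertices are fully saturated by the $(L^\odot_{i,q}[u_0],L^\odot_{i,q}[u_1])$-walks in $\pp^{A\odot}_{i,q}$.

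For Part (2), we count the total capacity of edges leaving $\{\widehat h^+_{i,q},h^+_{i,q},h^-_{i,q},\widehat h^-_{i,q}\}$. Besides the connections into the two ladders ($2\cdot 2^{3r+1}$ each), the vertex $h^+_{i,q}$ may be incident to at most two $g$-edges of total capacity at most $2r\cdot 2^r$ (one edge towards $g^+_{i^\leftarrow,q^\leftarrow}$ or $g^+_{i,(1,2)}$, and possibly one towards $g^+_{i,q}$), and symmetrically for $h^-_{i,q}$. The resulting total is $<|\pp^B_{i,q}|=2^{3r+4}+2^{2r}$, so at least one walk of $\pp^B_{i,q}$ stays on $\{\widehat h^+_{i,q},h^+_{i,q},h^-_{i,q},\widehat h^-_{i,q}\}$; it must take the form $(\widehat h^+_{i,q},h^+_{i,q},h^-_{i,q},\widehat h^-_{i,q})$. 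For the second statement, any walk in $\pp^B_{i,q}$ that avoids the edge $h^+_{i,q}h^-_{i,q}$ cannot traverse a ladder (this would produce a subwalk contradicting Part (1), using condition (b) of \cref{def:reduction:flow} to concatenate the two halves of the walk), hence it must leave through a $g$-edge. The total $g$-edge capacity incident to $\{h^+_{i,q},h^-_{i,q}\}$ is at most $r\cdot 2^r$ (the relevant $g$-edges either all enter or all exit, not both), giving the claimed bound.

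Part (3) is the main obstacle and is proved in two stages, imitating Claims~\ref{claim:subset:simple:reverse:C-decision} and \ref{claim:subset:simple:reverse:C-equal}. First, for each triple $(i,q,\odot)$ let $Q^\odot_{i,q}\in \pp^{A\odot}_{i,q}$ be a walk contained in $L^\odot_{i,q}$ (existing by Part (1)); define its pattern $\bb^{i,q,\odot}\in\{0,1\}^r$ by setting the $j$-th bit to~$0$ iff the face $f_j$ lies on the $L^\odot_{i,q}[v_{\cdot,1}]$-side of $Q^\odot_{i,q}$. By Part~(1) and saturation of the edges at $L^\odot_{i,q}[x_j]$, every walk $P\in\pp^C_{i,q,j}$ must leave $L^\odot_{i,q}$ on the side prescribed by $\bb^{i,q,\odot}_j$; combining this with Part~(2), which forces all but $r\cdot 2^r$ walks of $\pp^B_{i,q}$ to use the $h^+_{i,q}h^-_{i,q}$ bundle, the walks of $\pp^C_{i,q,j}$ (except possibly $\le r\cdot 2^r$ of them globally, which may be rerouted via $g$-vertices) must proceed through either $h^+_{i,q}h^-_{i,q}$ or $h^+_{i^\rightarrow,q^\rightarrow}h^-_{i^\rightarrow,q^\rightarrow}$. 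Second, to show that all $\bb^{i,q,\odot}$ agree, observe that $\bb^{i,q,+}=\bb^{i,q,-}$ follows from the same argument applied to both ladders connected by the $\pp^C_{i,q,j}$ walks. To link $\bb^{i,q,\odot}$ with $\bb^{i^\rightarrow, q^\rightarrow,\odot}$, we run the flow-balance argument through the bundle $h^+_{i^\rightarrow, q^\rightarrow}h^-_{i^\rightarrow,q^\rightarrow}$: a pattern discrepancy $\tau(\bb^{i,q,+})<\tau(\bb^{i^\rightarrow,q^\rightarrow,+})$, where $\tau(\bb)=\sum b_h 2^{h-1}$, forces at least $2^{2r+1}$ units more flow than in the equal case, which overflows even after absorbing the $O(r\cdot 2^r)$ slack contributed by $\pp^B$-walks that detour via $g$-vertices and by the requests of types $(E),(F)$ (bounded by \cref{lem:subset:full:capacities}(2)). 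The only remaining challenge is to propagate from the $(i,\bot)$-block back to $(i^\leftarrow, q^\leftarrow)$; this is handled symmetrically by observing that the capacity discrepancies are powers of two while the slack is only polynomial in~$r$. The common vector $\bb$ thus obtained is the one claimed in Part (3).
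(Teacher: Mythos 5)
Your proof follows the paper's argument essentially step for step: Part (1) by counting the edges leaving the ladder and using the saturation of the edges at $L[u_0],L[u_1]$; Part (2) by counting the edges leaving $\{\widehat h^+_{i,q},h^+_{i,q},h^-_{i,q},\widehat h^-_{i,q}\}$ (now including the $g$-bundles) and observing that a $\pp^B_{i,q}$-walk avoiding $h^+_{i,q}h^-_{i,q}$ must either traverse a ladder (excluded by Part (1)) or use one of the $r\cdot 2^r$ $g$-edges; and Part (3) by first fixing a per-block pattern via the behaviour of the $(C)$-walks and then equalizing all patterns with the ring flow-balance count at an overloaded $h^+h^-$ junction. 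The only cosmetic differences are that the paper derives the per-block consistency of each family $\pp^C_{i,q,j}$ from the concatenation condition (b) of \cref{def:reduction:flow} (which avoids reasoning about which side of the contained $(A)$-walk the face $f_j$ lies on, including the degenerate corner cases), and that it closes Part (3) by explicitly exhibiting the witness walks $W_j$ of \cref{def:subset:full:homotopic} as prefixes of $(C)$-walks — a step you leave implicit.
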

\begin{proof}
    The proof of Part (\ref{claim:subset:full:reverse:AvsW}) is analogous to the one of \cref{lem:subset:simple:reverse}(\ref{claim:subset:simple:reverse:AvsW}).

    \begin{innerproof}[Proof of (\ref{claim:subset:full:reverse:B}).]
    We count the total number of edges leaving the subgraph induced by $\{\widehat h_{i,q}^+, h_{i,q}^+, h_{i,q}^-, \widehat h_{i,q}^-\}$ to be $4 \cdot 2^{3r+1} + 2r \cdot 2^{r}$. For $r \ge 6$ this is less than $2^{3r+4} < |\pp^B_{i,q}|$. Hence there is at least one walk from $\pp^B_{i,q}$ that does not leave the vertex set $\{\widehat h_{i,q}^+, h_{i,q}^+, h_{i,q}^-, \widehat h_{i,q}^-\}$.

    Suppose now that $P \in \pp^B_{i,q}$ does no go through any edge $h_{i,q}^+h_{i,q}^-$ nor $h_{i,q}^+g_{i,q}^+$. Then $P$ needs to traverse $L^+_{i,q}$ or $L^+_{i^\leftarrow, q^\leftarrow}$.
    This means that $P$ contains a subwalk that meets the specification of Part (\ref{claim:subset:full:reverse:AvsW}) of the lemma.
    This contradicts the assumption that $\pp$ is a non-crossing flow.
    Since the number of edges $h_{i,q}^+g_{i,q}^+$ is $r\cdot 2^r$, we obtain that at least $|\pp^B_{i,q}| - r\cdot 2^r$ walks in $\pp^B_{i,q}$ go through $h_{i,q}^+h_{i,q}^-$.
    \end{innerproof}

    Similarly to the proof of  \cref{lem:subset:simple:reverse}, we first establish two intermediate claims.
    We say that a walk $W$ leaves a subgraph $H$ through  vertex $v$ if exactly one of the endpoints of $W$ belongs to $V(H)$ and $v$ is the first vertex on $W$ (counting from this endpoint) that does not belong to $V(H)$.    
    Unlike \cref{claim:subset:simple:reverse:C-decision} in the previous section, we first only specify the vertex through which the walks from $\pp^{C}_{i,q,j}$ leave 
     $L^+_{i,q}$, and then inspect the next edge on these walks in \cref{claim:subset:full:reverse:C-equal}.

    \begin{claim}\label{claim:subset:full:reverse:C-decision}
        For each $i \in [k]$ and $q \in \Gamma_r$ there exists a
        vector $\bb^{i,q} \in \{0,1\}^r$
        such that when $\bb^{i,q}_j = 0$ then all the walks from  $\pp^{C}_{i,q,j}$ leave $L^+_{i,q}$ through vertex $h_{i,q}^+$,
        and when $\bb^{i,q}_j = 1$ then all the walks from  $\pp^{C}_{i,q,j}$ leave $L^+_{i,q}$ through vertex $h_{i^\rightarrow, q^\rightarrow}^+$.
    \end{claim}
    \begin{innerproof}
        Suppose that there are walks $W_0, W_1 \in \pp^{C}_{i,q,j}$ so that $W_0$ leaves   $L^+_{i,q}$ through  $h_{i,q}^+$ and $W_1$ leaves $L^+_{i,q}$ through $h_{i^\rightarrow, q^\rightarrow}^+$.
        By \cref{def:reduction:flow} of a non-crossing flow, the concatenation $W_0 + W_1$ does not cross any walk from $\pp^{A+}_i$.
        But then $W_0 + W_1$ contains a subwalk that meets the specification of Part (\ref{claim:subset:full:reverse:AvsW}).
        This  contradicts the assumption that $\pp$ is a non-crossing flow.
        Therefore for each $j \in [r]$ the choice whether to leave $L^+_{i,q}$ through  $h_{i,q}^+$ or $h_{i^\rightarrow, q^\rightarrow}^+$ is fixed.
    \end{innerproof}

    \begin{claim}\label{claim:subset:full:reverse:C-equal}
        There exists a single vector $\bb \in \{0,1\}^r$ so that $\bb^{i,q} = \bb$ for all $i \in [k]$ and $q \in \Gamma_r$.
    \end{claim}
    \begin{innerproof}
        We define $\tau(b_1b_2\dots b_r) = \sum_{h=1}^r b_h\cdot 2^{h-1}$.
        Suppose that the claim does not hold.
        Because we work on a ring structure, there exist $i \in [k]$, $q \in \Gamma_r$, for which $\tau(\bb^{i,q}) < \tau(\bb^{i^\rightarrow, q^\rightarrow})$.
        By \cref{claim:subset:full:reverse:C-decision} the number of walks from $ \pp^C_{i,q,1} \cup  \pp^C_{i,q,2} \cup \dots  \pp^C_{i,q,r}$ that leave $L^+_{i,q}$ through $h_{i^\rightarrow, q^\rightarrow}^+$ equals $2^{2r+1} \cdot \tau(\bb^i)$.
        On the other hand, the number of walks from $ \pp^C_{i^\rightarrow, q^\rightarrow,1} \cup  \pp^C_{i^\rightarrow, q^\rightarrow,2} \cup \dots  \pp^C_{i^\rightarrow, q^\rightarrow,r}$ that leave $L^+_{i^\rightarrow, q^\rightarrow}$ through $h_{i^\rightarrow, q^\rightarrow}^+$  equals $2^{2r+1} \cdot (2^{r} - 1 - \tau(\bb^{i^\rightarrow, q^\rightarrow}))$.     
        Since $\tau(\bb^{i,q}) < \tau(\bb^{i^\rightarrow, q^\rightarrow})$, this quantity is at least $2^{2r+1} \cdot (2^{r} - \tau(\bb^{i,q}))$.
        In total, we obtain at least $2^{3r+1}$ walks from the two ladders that meet at  $h_{i^\rightarrow, q^\rightarrow}^+$.
        
        By Part (\ref{claim:subset:full:reverse:AvsW}), when a walk of type (C) leaves the ladder $L^+_{i,q}$ (resp. leaves $L^+_{i^\rightarrow, q^\rightarrow}$) from the right side (resp. the left side), it 
        must at some point reach a neighbor of $h_{i^\rightarrow, q^\rightarrow}^+$ 
        \mic{that does not belong to $V(L^+_{i,q})$ (resp. $V(L^+_{i^\rightarrow, q^\rightarrow})$).}
        \meir{Be more explicit about ``the corner of the ladder of origin''? \mic{changed}}
        There are at most $r\cdot 2^r$ walks that can use an edge $h_{i^\rightarrow, q^\rightarrow}^+q_{i,q}^+$
        \mic{and no (C)-type walk can use an edge $h_{i^\rightarrow, q^\rightarrow}^+\widehat h_{i^\rightarrow, q^\rightarrow}^+$ because they are all used by walks from $\pp^B_{i^\rightarrow, q^\rightarrow}$.
        Also, due to Part (\ref{claim:subset:full:reverse:B}), at least $2^{3r+4} + 2^{2r+1} - r\cdot 2^r$ walks from $\pp^B_{i^\rightarrow, q^\rightarrow}$ go through $h_{i^\rightarrow, q^\rightarrow}^+h_{i^\rightarrow, q^\rightarrow}^-$.
        Since the remaining walks of type (C) need to go through $h_{i^\rightarrow, q^\rightarrow}^+h_{i^\rightarrow, q^\rightarrow}^-$ as well,}
        in total we have at least $2^{3r+4} + 2^{3r+1} + 2^{2r+1} - r\cdot 2^{r+1}$ walks that need to go through this passage.
        On the other hand, there are only $2^{3r+4} + 2^{3r+1} + 2^{2r}$ parallel edges $h_{i^\rightarrow, q^\rightarrow}^+h_{i^\rightarrow, q^\rightarrow}^-$.
        Since for $r \ge 6$ we have $r\cdot 2^{r+1} < 2^{2r}$, there are too few edges to accommodate all the walks above, and      
        so we arrive at a~contradiction.
    \end{innerproof}

    \begin{innerproof}[Proof of (\ref{claim:subset:full:reverse:D-subwalk}).]
    Let $\bb$ be the vector from \cref{claim:subset:full:reverse:C-equal}.
    Fix $i \in [k]$, $q\in \Gamma_r$, and $\odot\in\{+,-\}$.
    First we argue that there exists a walk $Q^\odot_{i,q} \in \pp^\odot_{i,q}$ entirely contained in $L^\odot_{i,q}$.
    This follows from counting the edges leaving $L^\odot_{i,q}$: there are at most $2\cdot 2^{3r+1} + 2\cdot 2^{2r}$ of them, which is less than $|\pp^\odot_{i,q}|$.
    To see that $Q^+_{i,q}$ has pattern $(i,q,+,\bb)$,
    fix $j \in [r]$.
    The existence of the walk $W_j$ in \cref{def:subset:full:homotopic} follows from
     \cref{claim:subset:full:reverse:C-decision}: it can be chosen as a subwalk of any walk from $\pp^C_{i,q,j}$.
     The argument 
     that $Q^-_{i,q}$ has pattern $(i,q,-,\bb)$
     is the~same.
    \end{innerproof}

    This concludes the proof of \cref{lem:subset:full:reverse}.
\end{proof}

We can now take advantage of the structure imposed on a non-crossing $\widehat\tcal_{r,k}$-flow to analyze which walks need to go through the subgraphs $H_i$.
The following lemma is based on the analogous observations as \cref{lem:subset:simple:reverse-final}.

\begin{lemma}\label{lem:subset:full:reverse-final}
Consider $r \ge 6$, $k\ge 1$,  and $\mathcal{S} \colon \{0,1\}^r \to 2^{[k]}$. 
For $F \sub [k]$ let $\tcal_F = \{(s_j, t_j, 1) \mid j \in F\}$.
If there exists a non-crossing $( \widehat\tcal_{r,k} \cup \tcal_F)$-flow in $\mathsf{ExRing}(r,k,\mathcal{S})$,
then $F \sub \scal(\bb)$ for some $\bb \in \{0,1\}^r$.
\end{lemma}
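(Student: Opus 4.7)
The plan is to mirror the proof of Lemma \ref{lem:subset:simple:reverse-final}, adding a treatment of the dynamic flow-generators introduced by the (E)-requests. For each $i \in [k]$, I aim to isolate inside $H_i$ a non-crossing $\tcal_{\bb, d_i}$-flow (in the sense of Definition \ref{def:homo:elemenet-gadget}) whose $(w_0,w_1)$-demand $d_i$ equals $\widehat\gamma_r(\bb) + 1_{[i \in F]}$. Since $H_i$ is an $(r,\widehat\gamma_r,Z^\scal_i)$-$\mathsf{Vector\, Containment\, Gadget}$, the definition then forces $d_i \le \widehat\gamma_r(\bb) + 1_{[\bb \in Z^\scal_i]}$, so $i \in F$ implies $\bb \in Z^\scal_i$, i.e., $i \in \scal(\bb)$; hence $F \sub \scal(\bb)$.

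First, I would apply Lemma \ref{lem:subset:full:reverse} to the subflow of $\pp$ that satisfies $\widehat\tcal_{r,k}$, obtaining a vector $\bb \in \{0,1\}^r$ together with, for every $i \in [k]$, $q \in \Gamma_r$, $\odot \in \{+,-\}$, a walk $Q^\odot_{i,q}$ inside $L^\odot_{i,q}$ with pattern $(i,q,\odot,\bb)$. This walk splits the ladder into a left side carrying the faces $f_j$ with $\bb_j = 0$ and a right side carrying those with $\bb_j = 1$.

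Next, fixing $i \in [k]$, I would collect the subwalks of walks in $\pp \cup \tcal_F$ that lie inside $H_i$. Four contributions arise: (a) each walk in $\pp^D_{i,j}$ yields an $(H_i[w_{\bb_j}], H_i[z_j])$-subwalk, exactly as in Lemma \ref{lem:subset:simple:reverse-final}; (b) each of the $r\cdot 2^r + 1$ walks in $\pp^F_i$ yields a $(H_i[w_0], H_i[w_1])$-subwalk, because the ring-shaped barriers formed by $\pp^{A\odot}_{i,q}$ together with the edges $\widehat h_{i,q}^\odot h_{i,q}^\odot$ saturated by $\pp^B_{i,q}$ (Lemma \ref{lem:subset:full:reverse}(\ref{claim:subset:full:reverse:B})) block every alternative $(s_i,t_i)$-route; (c) when $i \in F$, the $(s_i,t_i)$-walk from $\tcal_F$ contributes one more $(H_i[w_0], H_i[w_1])$-subwalk by the same barrier argument; (d) for every pair $1 \le a < b \le r$ with $\bb_a \ne \bb_b$, the faces $f_a, f_b$ of $L^-_{i,(a,b)}$ sit on opposite sides of $Q^-_{i,(a,b)}$, so each of the $2^{r-b+a-1}$ walks in $\pp^E_{i,(a,b)}$ must exit the ladder and, channelled by the narrow $g^\odot$-pipes, yields a further $(H_i[w_0], H_i[w_1])$-subwalk. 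Summing (b)--(d) gives
\[
d_i = (r\cdot 2^r + 1) + 1_{[i \in F]} + \sum_{1 \le a < b \le r} 1_{[\bb_a \ne \bb_b]} \cdot 2^{r-b+a-1} = \widehat\gamma_r(\bb) + 1_{[i \in F]},
\]
while (a) accounts for the remaining requests $(w_0,z_j,2^r)$ and $(w_1,z_j,2^r)$ in $\tcal_{\bb,d_i}$, and the conclusion follows.

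The hard part will be establishing (d) rigorously: that every (E)-walk with $\bb_a \ne \bb_b$ is forced through $H_i$ with no alternative detour. I expect to treat this by a local inspection of the layout around every block $R_{i,(a,b)}$, combining the barriers supplied by Lemma \ref{lem:subset:full:reverse} with a pointwise analysis of the pipe capacities in the spirit of Lemma \ref{lem:subset:full:capacities}. Any detour of an (E)-walk must either cross an $A^\odot$-walk, cross a $B$-saturated wall, or overflow one of the $g^\odot$-pipes of capacity $r \cdot 2^r$; ruling out all combinations of these options under non-crossingness is the technical heart of the proof, and is what ensures that the total $(w_0,w_1)$-flow through $H_i$ is exactly $\widehat\gamma_r(\bb) + 1_{[i \in F]}$ rather than some smaller number that could evade the bound imposed by $Z^\scal_i$.
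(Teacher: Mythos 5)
Your proposal is correct and follows essentially the same route as the paper: extract the common pattern $\bb$ via Lemma \ref{lem:subset:full:reverse}, show that the type-(D), (E) (for $\bb_a \ne \bb_b$), (F), and $\tcal_F$ walks are all forced through $H_i$ by the (A)- and (B)-barriers, sum them to $d_i = \widehat\gamma_r(\bb) + 1_{[i\in F]}$, and invoke the implication (b) $\Rightarrow$ (a) of Definition \ref{def:homo:elemenet-gadget}. The step you defer as the ``technical heart'' is resolved in the paper exactly as you anticipate — Lemma \ref{lem:subset:full:reverse}(1) blocks ladder traversals, Part (2) plus saturation of the $\widehat h$-edges blocks crossing the $(\widehat h^+, h^+, h^-, \widehat h^-)$ paths, and these two observations together already rule out every left-to-right connection except through $H_i$, with no further capacity analysis of the $g$-pipes needed.
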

\begin{proof}
    Let $\pp$ be a $\widehat\tcal_{r,k}$-flow and $\pp_F$ be  a $\tcal_{F}$-flow
    so that $\pp \cup \pp_F$ is non-crossing in $\mathsf{ExRing}(r,k,\scal)$.
    We apply \cref{lem:subset:full:reverse} to $\pp$; let $\bb \in \{0,1\}^r$ be the vector given by Part (\ref{claim:subset:full:reverse:D-subwalk}) of the lemma. 
    Fix $i \in [k]$ for the rest of the proof.
    For each $q \in \Gamma_R$ and $\odot \in \{+,-\}$ we apply \cref{lem:subset:full:reverse}(\ref{claim:subset:full:reverse:AvsW})
    to obtain that no walk $W$ that traverses the ladder $L^\odot_{i,q}$ from left to right can be non-crossing with $\pp^{A\odot}_{i,q}$.
    Let $j \in [r]$.
    By the same concatenation argument as in \cref{claim:subset:full:reverse:C-decision} we arrive at the following.
    
    \begin{observation}\label{obs:subset:full:reverse-final:block}
    If there is walk $W'_j$, non-crossing with $\pp^{A\odot}_{i,q}$, that starts at $L^\odot_{i,q}[x_j]$ and leaves the ladder through vertex $h^\odot_{i,q}$, then $\bb_j = 0$.
    Symmetrically, if $W'_j$ leaves the ladder through vertex $h^\odot_{i^\rightarrow, q^\rightarrow}$, then $\bb_j = 1$.
    \end{observation}

    When $L^-_{i,q}[y_j]$ occurs as a terminal in $\widehat\tcal_{r,k}$ (that is, when (I) $q=\bot$ or (II) $q=(a,b)$ and $j \in \{a,b\}$)
    then the number of edges incident to  $L^-_{i,q}[x_j]$ or $L^-_{i,q}[y_j]$ equals the number of walks in $\pp$ ending at this vertex.
    Therefore, no walk from $\pp^{A-}_{i,q}$ can visit $L^-_{i,q}[x_j]$ nor  $L^-_{i,q}[y_j]$.
    Since these two vertices share a face,
    they are both to the left or both to the right of the walk $Q^{A-}_{i,q} \in \pp^{A-}_{i,q}$ with pattern $(i,q,-,\bb)$.
    This means that in these cases \cref{obs:subset:full:reverse-final:block} remains true if we can replace $L^-_{i,q}[x_j]$ with  $L^-_{i,q}[y_j]$.   

    By \cref{lem:subset:full:reverse}(\ref{claim:subset:full:reverse:B}),
    each family $\pp^B_{i,q}$ contains a walk on vertices $\{\widehat h_{i,q}^+, h_{i,q}^+, h_{i,q}^-, \widehat h_{i,q}^-\}$.
    Because the number of edges incident to $\widehat h_{i,q}^+$ or $\widehat h_{i,q}^-$ equals $|\pp^B_{i,q}|$, no walk from $\pp$ can cross the path $(\widehat h_{i,q}^+, h_{i,q}^+, h_{i,q}^-, \widehat h_{i,q}^-)$. Together with \cref{obs:subset:full:reverse-final:block}, this rules out all the connections between the two sides of a ladder other than the way through $H_i$.

    \begin{observation}
        For each $q \in \Gamma_r$ and $a,b \in [r]$ with $\bb_a \ne \bb_b$, any  $(L^-_{i,q}[y_a],L^-_{i,q}[y_b])$-walk $W$, that is non-crossing with the walks in $\pp$ of types (A) and (B), must contain an $(H_i[w_0], H_i[w_1])$-subwalk.
    \end{observation}

    The sum of $|P^E_{i,(a,b)}|$ over $(a,b)$ satisfying  $\bb_a \ne \bb_b$ equals  
    \[
    \sum_{1\le a < b \le r} 1_{[\bb_a \ne \bb_b]} \cdot 2^{r-b+a-1}.
    \]
    
    Next, each $(s_i, t_i)$-walk in $\pp \cup \pp_F$ must contain an $(H_i[w_0], H_i[w_1])$-subwalk as well.
    In total, the number $d_i$ of walks that traverse $H_i$ from left to right equals $\widehat\gamma_r(\bb) + 1_{[i \in F]}$.
    Furthermore, the only way for the walks of type (D) to reach the vertices $H_i[z_j]$ is to enter $H_i$ from the same side of the walk $Q^{A-}_{i,\bot}$ as $L^-_{i,\bot}[y_j]$ is located.  

    \begin{observation}\label{obs:subset:full:reverse-final:D}
        For each $j \in [r]$, every walk $P \in \pp^D_{i,j}$ contains an $(H_{i,}[w_0],H_i[z_j])$-walk in $H_i$ when $\bb_j = 0$, or an $(H_i[w_1],H_i[z_j])$-walk in $H_i$ when $\bb_j = 1$.
    \end{observation}
    Therefore, the subwalks of $\pp^D_{i,j}$ within $H_i$ satisfy request $(H_i[w_0], H_i[z_j], 2^r)$ when  $\bb_j = 0$ or request $(H_i[w_1], H_i[z_j], 2^r)$ when  $\bb_j = 1$.
    Together with $d_i$ units of the $(H_i[w_0], H_i[w_1])$-flow, these match the flow requested in \cref{def:homo:elemenet-gadget}.
    Since $H_i$ is an $(r,\widehat\gamma_r, Z^\scal_i)${-$\mathsf{Vector\, Containment\, Gadget}$} this implies that $\bb \in Z^\scal_i$ whenever $i \in F$.
    By the definition of $ Z^\scal_i$, we obtain $i \in F \Rightarrow i \in \scal(\bb)$.
    This concludes the proof of the containment $F \sub \scal(\bb)$.
\end{proof}

We can finally summarize the entire construction of the subset gadget.

\begin{proposition}\label{prop:subset:full:final}
    There is a polynomial-time algorithm that, given $r\ge 6$, $k \ge 1$, and a function $\mathcal{S} \colon \{0,1\}^r \to 2^{[k]}$, outputs an $(r,k,\scal)$-$\mathsf{Subset\, Gadget}$ $(G,\tcal)$ with $|V(G)| + |E(G)| = k \cdot 2^{\Oh(r)}$ and $|\tcal| = \Oh(k \cdot r^3)$.
    Moreover, for each request $(u_i,v_i, d_i) \in \tcal$ it holds that $d_i \le \Oh(2^{3r})$. 
\end{proposition}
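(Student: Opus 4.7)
My plan is to simply combine the constructions and correctness arguments already developed throughout \cref{sec:subset-full}. Specifically, I will set $G = \mathsf{ExRing}(r,k,\scal)$ and $\tcal = \widehat\tcal_{r,k}$, where the internal subgraphs $H_1,\dots,H_k$ used inside the ring are obtained by invoking \cref{prop:homo:final}: for each $i \in [k]$ let $Z^\scal_i = \{\bb \in \{0,1\}^r \mid i \in \scal(\bb)\}$ and let $H_i$ be an $(r,\widehat\gamma_r,Z^\scal_i)$-$\mathsf{Vector\, Containment\, Gadget}$ of size $2^{\Oh(r)}$, which \cref{prop:homo:final} builds in time $2^{\Oh(r)}$. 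Note that $Z^\scal_i$ is directly computable from the input $\scal$.

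For correctness, the two directions of condition~(3) of \cref{def:subset:gadget} match exactly the statements of \cref{lem:subset:full:forward} (forward) and \cref{lem:subset:full:reverse-final} (backward), both of which are applicable because we assume $r \ge 6$. So the correctness argument is already done; the remaining work is purely a bookkeeping of sizes and demands.

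For the sizes, the ring structure consists of $k \cdot |\Gamma_r| = k \cdot \bigl(\binom{r}{2}+1\bigr) = \Oh(kr^2)$ blocks of the form $R_{i,q}$. Each block is built from a constant number of ladders, auxiliary vertices, and bundles of parallel edges whose total multiplicity is $2^{\Oh(r)}$, so each block contributes $2^{\Oh(r)}$ to $|V(G)|+|E(G)|$. Adding the $k$ vector-containment gadgets, each of size $2^{\Oh(r)}$, yields the claimed bound $|V(G)|+|E(G)| = k \cdot 2^{\Oh(r)}$.

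For the request count, I will just sum the six groups: type~(A) contributes $2k \cdot |\Gamma_r| = \Oh(kr^2)$, type~(B) contributes $k \cdot |\Gamma_r| = \Oh(kr^2)$, type~(C) contributes $k \cdot |\Gamma_r| \cdot r = \Oh(kr^3)$, type~(D) contributes $kr$, type~(E) contributes $k\binom{r}{2} = \Oh(kr^2)$, and type~(F) contributes $k$, giving $|\tcal| = \Oh(kr^3)$. The largest demand appearing among the six groups is $2^{3r+4} + 2^{2r}$ in group~(B), which is $\Oh(2^{3r})$. Finally, the whole construction runs in time polynomial in the output size, since all bundles of parallel edges and all internal subgraphs $H_i$ are constructed in time $2^{\Oh(r)} \cdot k^{\Oh(1)}$.

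Since all substantive work is already done in the preceding lemmas, no real obstacle remains; the only minor care is in verifying that the size bookkeeping matches the demands stated in the proposition, which is straightforward arithmetic.
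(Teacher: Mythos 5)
Your proposal is correct and follows essentially the same route as the paper's own proof: instantiate the $H_i$ via \cref{prop:homo:final}, take $(G,\tcal) = (\mathsf{ExRing}(r,k,\scal), \widehat\tcal_{r,k})$, and invoke Lemmas~\ref{lem:subset:full:forward} and~\ref{lem:subset:full:reverse-final} for correctness. Your size, request-count, and demand bookkeeping is a more detailed (and accurate) version of what the paper states tersely, so there is nothing to fix.
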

\begin{proof}
    For each $i \in [k]$ we use \cref{prop:homo:final} to construct an $(r,\gamma_r, Z^\scal_i)$-$\mathsf{Vector\, Containment\, Gadget}$ of size $2^{\Oh(r)}$ in time polynomial in $2^r$, which is the input size.
    This allows us to construct the graph $\mathsf{ExRing}(r,k,\scal)$.
    It is divided into $k \cdot \br{{r \choose 2} + 1}$ blocks, each equipped with $\Oh(r)$ requests from $\widehat\tcal_{r,k}$.
    Due to Lemmas \ref{lem:subset:full:forward} and \ref{lem:subset:full:reverse-final}, the pair $(\mathsf{ExRing}(r,k,\scal), \widehat\tcal_{r,k})$ forms an $(r,k,\scal)$-$\mathsf{Subset\, Gadget}$.
\end{proof}

\subsection{From a set cover to a non-crossing flow}
\label{sec:hard:setcover}

In this section, we finish the reduction from \textsc{Set Cover} to \nullnoncross.
We need two more simple gadgets, which are adaptations of the gadgets used in the NP-hardness proof of \textsc{Planar Disjoint Paths}~\cite{kramer1984complexity}.
The first gadget encodes which of $\ell$ sets should cover an element $i \in [k]$.

\begin{definition}\label{def:setcover:existential}
    For $\ell \in \nn$, a pair $(G,\tcal)$ is an {\em $\ell$-$\mathsf{Existential\, Gadget}$} if the following conditions hold.
\begin{enumerate}
    \item $G$ is a plane graph with $2\ell$ distinguished vertices $s_1,t_1, s_2, t_2, \dots, s_\ell,t_\ell$ lying on the outer face in this clockwise order.
    \item $\tcal$ is a set of triples from $V(G) \times V(G) \times \{1\}$. 
    \item For $F \sub [\ell]$, let $\tcal_F = \{(s_i,t_i, 1) \mid i \in F\}$.
    Then, there exists a non-crossing $(\tcal \cup \tcal_F)$-flow in $G$ if and only if 
    $|F| < \ell$.
\end{enumerate}
\end{definition}

When seeking a set cover of size $\ell$,
we make a single copy of an $\ell$-$\mathsf{Existential\, Gadget}$ for each $i \in [k]$.
We will allow an index $j \in [\ell]$ to belong to $F$ when the element $i$ is not covered by the set $S_j$ in a solution $S_1,S_2,\dots, S_\ell$ to \textsc{Set Cover}.
Condition (3) ensures that one of the indices will be missing in $F$, implying that $i$ gets covered.

\begin{figure}[t]
    \centering
\includegraphics[scale=0.8]{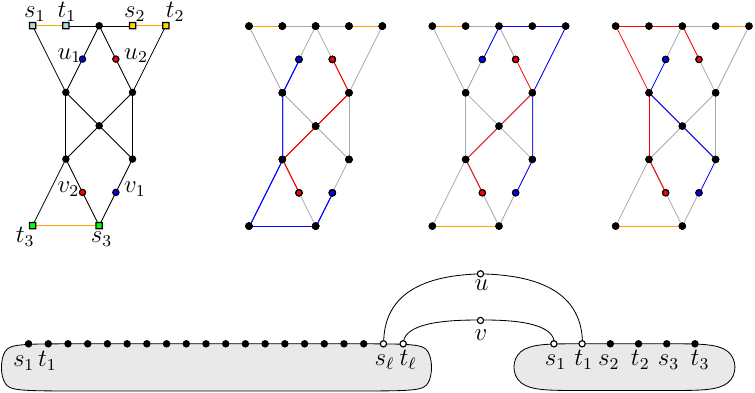} 
\caption{
An illustration for \cref{lem:setcover:existential-exists}.
Top left: A 3-$\mathsf{Existential\, Gadget}$ $(G_3,\tcal_3)$ with $\tcal_3$ given as $\{(u_1,v_1,1), (u_2,v_2,1) \}$.
Top right: Constructing a non-crossing $(\tcal_3 \cup \tcal_F)$-flow in $G_3$ is possible whenever  $F$ misses some element from $\{1,2,3\}$.
Bottom: A construction of an $(\ell+1)$-$\mathsf{Existential\, Gadget}$ using an $\ell$-$\mathsf{Existential\, Gadget}$ and a~3-$\mathsf{Existential\, Gadget}$.
The black vertices become the new terminals.
}
\label{fig:existential}
\end{figure}

\begin{lemma}\label{lem:setcover:existential-exists}
    For each $\ell \ge 3$ there exists an $\ell$-$\mathsf{Existential\, Gadget}$ $(G_\ell, \tcal_\ell)$ with $|V(G_\ell)| + |E(G_\ell)| + |\tcal_\ell| = \Oh(\ell)$.
    Furthermore, $(G_\ell, \tcal_\ell)$ can be constructed in time $\ell^{\Oh(1)}$.
\end{lemma}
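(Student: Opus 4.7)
The plan is induction on $\ell$, with a hand-built base case $\ell = 3$ and an inductive step that glues a fresh copy of $(G_3,\tcal_3)$ onto $(G_\ell,\tcal_\ell)$ along a shared interface (compare the bottom of Figure~\ref{fig:existential}).

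\textbf{Base case $\ell = 3$.} I would describe a plane graph $G_3$ on $\Oh(1)$ vertices matching the top-left of Figure~\ref{fig:existential}: the six terminals $s_1,t_1,s_2,t_2,s_3,t_3$ placed clockwise on the outer face together with two distinguished internal pairs $(u_1,v_1),(u_2,v_2)$ connected by disjoint ``channels.'' Set $\tcal_3 = \{(u_1,v_1,1),(u_2,v_2,1)\}$. Correctness is then verified by a direct inspection of the $2^3$ possible subsets $F \sub [3]$: for $|F| \le 2$ we exhibit a non-crossing routing by hand (top-right of Figure~\ref{fig:existential}), and for $|F| = 3$ we observe that the two mandatory internal walks cut the embedding disc into three regions in such a way that, by the cyclic order of the $s_i,t_i$ around the outer face, at least one $(s_i,t_i)$-walk would be forced to cross an internal walk or another $(s_j,t_j)$-walk.

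\textbf{Inductive step.} Given $(G_\ell,\tcal_\ell)$, take a disjoint copy of $(G_3,\tcal_3)$ and form $G_{\ell+1}$ by identifying the outer terminal pair $(s_\ell^{(\ell)},t_\ell^{(\ell)})$ of $G_\ell$ with the first outer pair $(s_1^{(3)},t_1^{(3)})$ of the 3-gadget, calling the merged vertices $a$ and $b$. Embed the fresh 3-gadget inside a small disc attached to the outer face of $G_\ell$ at the position of the old $(s_\ell,t_\ell)$, so that planarity is preserved and the $2(\ell+1)$ remaining outer terminals appear clockwise as $s_1,t_1,\ldots,s_{\ell-1},t_{\ell-1}$ followed by the relabeled second and third pairs of the 3-gadget, now called $(s_\ell,t_\ell)$ and $(s_{\ell+1},t_{\ell+1})$ of the combined gadget. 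Define $\tcal_{\ell+1} := \tcal_\ell \cup \tcal_3 \cup \{(a,b,1)\}$, where the extra request on the shared pair is the mechanism that transfers the ``blocker'' semantics through the interface.

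\textbf{Correctness of the inductive step.} Let $F^{(\ell)} := F \cap [\ell-1]$ and let $F^{(3)} \sub \{2,3\}$ be the image of $F \cap \{\ell,\ell+1\}$. Since $G_\ell$ and the 3-gadget share only the vertices $a,b$, any non-crossing $(\tcal_{\ell+1}\cup\tcal_F)$-flow restricts to a non-crossing flow inside each sub-gadget, plus a $(a,b,1)$-walk that, after a standard shortcutting argument eliminating ``zigzags'' across the interface, lies entirely in one sub-gadget. Writing $\beta \in \{0,1\}$ for the indicator that the shared walk is routed through $G_\ell$, the inductive hypothesis reduces existence to finding $\beta$ with
\[
|F^{(\ell)}| + \beta \le \ell - 1 \quad \text{and} \quad |F^{(3)}| + (1-\beta) \le 2.
\]
A case split on $|F^{(\ell)}|,|F^{(3)}|$ shows that such a $\beta$ exists iff $|F^{(\ell)}| + |F^{(3)}| = |F| \le \ell$, i.e., iff $|F| < \ell + 1$. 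The size and time bounds follow immediately: the base case has $\Oh(1)$ vertices, edges, and requests, and each inductive step adds a constant, giving $|V(G_\ell)|+|E(G_\ell)|+|\tcal_\ell| = \Oh(\ell)$, constructible in time $\ell^{\Oh(1)}$.

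\textbf{Main obstacle.} The two delicate points are (i) choosing the identification so that the new $2(\ell+1)$ outer terminals still appear in the prescribed clockwise order on a single outer face, which is handled by embedding the fresh 3-gadget in the face of $G_\ell$ adjacent to the old $(s_\ell,t_\ell)$, and (ii) justifying the ``no zigzag'' reduction so that the mandatory $(a,b,1)$-walk may be assumed to lie entirely in one sub-gadget; this uses the fact that the interface is only two vertices, so any walk crossing it an odd number of times can be shortcut into a single crossing while preserving the non-crossing condition. Everything else is bookkeeping on the cyclic ordering of terminals around $a$ and $b$, which is inherited from the two sub-embeddings.
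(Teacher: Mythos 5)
Your proposal is correct and follows essentially the same route as the paper: a hand-verified $3$-gadget as the base case, and an inductive step that glues a fresh $3$-gadget onto the $(s_\ell,t_\ell)$ interface of $G_\ell$ with one shared unit request whose walk must consume the ``last'' connection of whichever sub-gadget it is routed through, yielding exactly your $\beta$-arithmetic. The only difference is cosmetic: the paper introduces two new degree-$2$ junction vertices $u,v$ joined to $G_\ell[s_\ell],G_3[t_1]$ and $G_\ell[t_\ell],G_3[s_1]$ instead of identifying terminals, which makes the claim that the shared walk lies entirely in one sub-gadget immediate and lets you skip the zigzag-shortcutting argument.
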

\begin{proof}
    A construction of a 3-$\mathsf{Existential\, Gadget}$ $(G_3,\tcal_3)$ is given in \Cref{fig:existential}.
    We refer to this figure in the arguments below.
    We define $\tcal_3$  as $\{(u_1,v_1,1), (u_2,v_2,1) \}$.
We argue that $(G_3,\tcal_3)$ satisfies condition (3) of \cref{def:setcover:existential}.
If $F = [3]$, then each orange edge must be used by an $(s_i,t_i)$-walk in any $(\tcal_3 \cup \tcal_F)$-flow, as otherwise the walks in the flow would not be edge-disjoint. 
Let $G'_3$ be obtained from $G_3$ by removing the orange edges. 
Then $u_1,u_2,v_1,v_2$ lie on the outer face of $G'_3$ in this order.
Since the pairs $(u_1,v_1)$ and $(u_2,v_2)$ cross and each of these vertices has degree 2 in $G'_3$, no non-crossing $\tcal_3$-flow exists in $G'_3$.
Consequently, a non-crossing $(\tcal_3 \cup \tcal_F)$-flow exists in $G_3$.
On the other hand, whenever $|F| < 3$, then a non-crossing $(\tcal_3 \cup \tcal_F)$-flow exists in $G_3$: see the top right of the figure.

    Suppose now that an  $\ell$-$\mathsf{Existential\, Gadget}$ $(G_\ell, \tcal_\ell)$ with the claimed size exists.
    We show inductively how to construct an  $(\ell+1)$-$\mathsf{Existential\, Gadget}$ $(G_{\ell+1}, \tcal_{\ell+1})$.
    We build $G_{\ell+1}$ from a disjoint union of $G_\ell$ and $G_3$, insert new vertices $u,v$ on the outer face, and add edges $uG_\ell[s_\ell]$, $uG_3[t_1]$, $vG_\ell[t_\ell]$, $vG_3[s_1]$ (see \Cref{fig:existential}, bottom).
    We define $\tcal_{\ell+1}$ as a union of the requests $\tcal_{\ell}$ in $G_\ell$ and $\tcal_3$ in $G_3$, together with request $(u,v,1)$.
    The distinguished vertices of $G_{\ell+1}$ are: $G_\ell[s_1], G_\ell[t_1], \dots, G_\ell[s_{\ell-1}], G_\ell[t_{\ell-1}], G_3[s_2], G_3[t_2], G_3[s_3], G_3[t_3]$.
    There are $(\ell+1)$ pairs of them; let $\widehat\tcal$ denote a family of $\ell+1$ unitary requests, one for each pair.

    Clearly, the terminals of $G_{\ell+1}$ can be arranged on the outer face in the presented order.
    To establish condition (3) of \cref{def:setcover:existential}, we need to show that there is no non-crossing $(\tcal_{\ell+1} \cup \widehat\tcal)$-flow in $G_{\ell+1}$
    but removing any request from $\widehat\tcal$ suffices to construct the flow.
    Suppose that there exists a non-crossing $(\tcal_{\ell+1} \cup \widehat\tcal)$-flow $\pp$ in $G_{\ell+1}$.
    It contains a $(u,v)$-walk $W_{uv}$ that needs to go either through $G_\ell$ or $G_3$.
    Suppose w.l.o.g. the first scenario.
    Then $W_{uv}$ contains a $(G_\ell[s_\ell], G_\ell[t_\ell])$-subwalk $W$ in $G_\ell$.
    Let $\widehat\tcal_\ell$ be the family of $\ell-1$ requests from $\widehat\tcal$ concerning the terminals of $G_\ell$.
    Next, let $\pp_\ell$ be the non-crossing $(\tcal_\ell \cup \widehat\tcal_\ell)$-flow contained in $\pp$.
    Since no walk from $\pp_\ell$ can use edges from $E(W_{uv})$, this flow must be entirely contained in the graph $G_\ell$.
    Therefore, the non-crossing flow $\pp_\ell \cup \{W\}$ satisfies all the requests from $\tcal_\ell$ and all of the form $(G_\ell[s_i], G_\ell[t_i])$ for $i \in [\ell]$. This contradicts the assumption that $(G_\ell, \tcal_\ell)$ is an $\ell$-$\mathsf{Existential\, Gadget}$.

    Next, consider a family $\widehat\tcal'$ obtained from $\widehat\tcal$ by removal of any single request.
    We argue that there exists a non-crossing $(\tcal_{\ell+1} \cup \widehat\tcal')$-flow in $G_{\ell+1}$.
    Suppose w.l.o.g. that $\widehat\tcal'$ is missing a request concerning a pair of terminals from $G_\ell$.
    Let $\widehat\tcal'_\ell = \br{\widehat\tcal' \sm \{(G_3[s_2], G_3[t_2], 1), (G_3[s_3], G_3[t_3], 1)\}} \cup \{(G_\ell[s_\ell], G_\ell[t_\ell], 1)\}$.
    Note that $\widehat\tcal'_\ell$ has less than $\ell$ elements. 
    By the definition of an $\ell$-$\mathsf{Existential\, Gadget}$, there exists a non-crossing  $(\tcal_\ell \cup \widehat\tcal'_\ell)$-flow $\pp_\ell$ in $G_\ell$.
    Next, let $\widehat\tcal'_3 = \{(G_3[s_2], G_3[t_2], 1), (G_3[s_3], G_3[t_3], 1)\}$.
    Again by the definition, there exists a non-crossing  $(\tcal_3 \cup \widehat\tcal'_3)$-flow $\pp_3$ in $G_3$.
    We take the union of $\pp_\ell$ and $\pp_3$ and extend the $(G_\ell[s_\ell], G_\ell[t_\ell])$-walk from $\pp_\ell$ with edges   $uG_\ell[s_\ell]$,  $vG_\ell[t_\ell]$, so it becomes a $(u,v)$-walk.
    This forms a non-crossing $(\tcal_{\ell+1} \cup \widehat\tcal')$-flow in $G_{\ell+1}$.

    We have thus established that $(G_{\ell+1}, \tcal_{\ell+1})$ is indeed an $(\ell+1)$-$\mathsf{Existential\, Gadget}$.
    In the inductive step we increase the size of the graph and the number of requests by $\Oh(1)$, so the claimed bound holds.
    The construction of $(G_\ell, \tcal_\ell)$ can be easily performed in time polynomial in size of~$G_\ell$.
\end{proof}

Suppose we want to encode an instance $(k,\scal,\ell)$ of \textsc{Set Cover} with $|\scal| = 2^r$.
Here is the first (naive) attempt. We make $k$ copies of an $\ell$-$\mathsf{Existential\, Gadget}$, $\ell$ copies of an $(r,k,\scal)$-$\mathsf{Subset\, Gadget}$ and, for each $i \in [k]$, $j \in [\ell]$, we add terminals $u_{i,j}$, $v_{i,j}$, connected to the $j$-th pair of terminals in the $i$-th existential gadget  and the $i$-th pair of terminals in the $j$-th subset gadget.
For each created pair $u_{i,j}$, $v_{i,j}$, we demand a single unit of flow between $u_{i,j}$ and $v_{i,j}$.
By condition (3) of \cref{def:setcover:existential}, for each $i \in [k]$ there needs to be at least one $j \in [\ell]$ for which the $(u_{i,j}, v_{i,j})$-walk goes through the $j$-th subset gadget.
Next, condition (3) of \cref{def:subset:gadget} implies that for each $j \in [\ell]$ the set of such indices $i$ 
forms a subset of some set from $\scal$. 
The problem is that already for $\ell = k = 3$ such a graph contains $K_{3,3}$ as a minor, so it cannot be planar.
To circumvent this issue, we need yet another gadget to allow the links between each $i$-th existential gadget and each $j$-th subset gadget to cross.
Since the number of such links is only $k\cdot \ell$, we can afford adding $\Oh(1)$ new requests to implement such a crossing in a planar fashion.

\begin{figure}
    \centering
\includegraphics[scale=0.8]{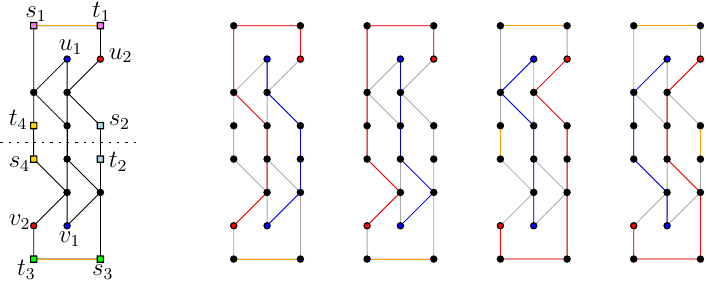} 
\caption{An illustration for \cref{lem:setcover:junction-exists}: a $\mathsf{Junction\, Gadget}$. The graph $G$ is on the left and $\tcal$ is given as $\{(u_1,v_1,1), (u_2,v_2,1) \}$. The four flows on the right demonstrate that whenever $F$ excludes one of $1,3$ and one of $2,4$, then we can construct a non-crossing $(\tcal \cup \tcal_F)$-flow.
}
\label{fig:junction}
\end{figure}

\begin{lemma}\label{lem:setcover:junction-exists}
    There exists a pair $(G,\tcal)$ (called a $\mathsf{Junction\, Gadget}$) with the following properties.
\begin{enumerate}
    \item $G$ is a plane graph with $8$ distinguished vertices $s_1,t_1, s_2, t_2, s_3, t_3, s_4,t_4$ lying on the outer face in this clockwise order.
    \item $\tcal$ is a set of triples from $V(G) \times V(G) \times \{1\}$. 
    \item For $F \sub [4]$, let $\tcal_F = \{(s_i,t_i, 1) \mid i \in F\}$.
    Then, there exists a non-crossing $(\tcal \cup \tcal_F)$-flow in $G$ if and only if 
    $\{1,3\} \not\sub F$ and $\{2,4\} \not\sub F$.
\end{enumerate}
\end{lemma}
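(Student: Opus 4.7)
My plan is to construct $(G,\tcal)$ as sketched in Figure~\ref{fig:junction}: take an $8$-terminal planar graph whose internal structure contains two auxiliary pairs $u_1,v_1$ and $u_2,v_2$, and set $\tcal=\{(u_1,v_1,1),(u_2,v_2,1)\}$. The idea is to arrange $u_1,v_1$ on opposite sides of the gadget so that any $(u_1,v_1)$-walk must separate the ``west'' half (holding $s_1,t_1$) from the ``east'' half (holding $s_3,t_3$) up to one passage, and symmetrically place $u_2,v_2$ so that a $(u_2,v_2)$-walk  separates the ``north'' half ($s_2,t_2$) from the ``south'' half ($s_4,t_4$) up to one passage. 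The two auxiliary walks will themselves need to cross, and the gadget is designed so that exactly one of the pairs $(s_1,t_1)/(s_3,t_3)$ and exactly one of the pairs $(s_2,t_2)/(s_4,t_4)$ can simultaneously be routed without using the same edges or creating a crossing at any shared internal vertex. Concretely, the cyclic order of edges around each internal vertex will be chosen to make the two exclusion mechanisms independent.

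The proof of correctness splits into two directions. For the negative direction, I would assume a non-crossing $(\tcal\cup\tcal_F)$-flow $\pp$ exists with $\{1,3\}\subseteq F$, and derive a contradiction by a planarity/parity argument: the $(u_1,v_1)$-walk in $\pp$, together with the cyclic ordering of the terminals on the outer face, separates the plane into two regions in such a way that $s_1,t_1$ lie in one region and $s_3,t_3$ in the other, so the $(s_1,t_1)$- and $(s_3,t_3)$-walks cannot both avoid crossing the $(u_1,v_1)$-walk (which they must, by \cref{def:reduction:flow}). The symmetric argument rules out $\{2,4\}\subseteq F$.

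For the positive direction, I would explicitly exhibit a non-crossing $(\tcal\cup\tcal_F)$-flow for each of the $3\times 3=9$ allowed values of $F$ (those for which at most one of $1,3$ and at most one of $2,4$ lies in $F$). In fact, by monotonicity it is enough to exhibit flows for the four maximal allowed sets $F\in\{\{1,2\},\{1,4\},\{2,3\},\{3,4\}\}$, since removing a request from $\tcal_F$ only removes a walk from the flow. These four flows are depicted in Figure~\ref{fig:junction} (right): in each of them the $(u_1,v_1)$-walk is routed through the one of the east/west passages that is free of $(s_i,t_i)$-walks of $\tcal_F$, and likewise for $(u_2,v_2)$, so all walks can be made pairwise edge-disjoint and non-crossing at every internal vertex by the choice of the rotation system.

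The main obstacle will be verifying that the two exclusion mechanisms are \emph{independent}, i.e.\ that the $(u_1,v_1)$-walk needed to block the $(s_1,t_1)$/$(s_3,t_3)$ pair does not interfere with the $(u_2,v_2)$-walk needed to block the $(s_2,t_2)$/$(s_4,t_4)$ pair, or with whichever of $(s_i,t_i)$-walks we still want to route. This is a finite case check on a small constant-size graph, and it will be handled by carefully drawing the gadget (with the rotation system  of Figure~\ref{fig:junction}) so that the two auxiliary walks cross in exactly one internal vertex whose rotation system lets them pass without crossing in the sense of \cref{def:reduction:flow}, while still leaving enough free edges around each outer-face terminal to carry one $(s_i,t_i)$-walk in each of the four allowed configurations.
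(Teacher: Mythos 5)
Your overall architecture matches the paper's: a constant-size gadget with two auxiliary requests $\tcal=\{(u_1,v_1,1),(u_2,v_2,1)\}$, and a positive direction handled by exhibiting flows for the four maximal admissible sets $F$ (monotonicity then covers the rest). The gap is in the negative direction. Your separation argument is a non-sequitur: if the $(u_1,v_1)$-walk (plus an arc of the outer face) puts $s_1,t_1$ together in one region and $s_3,t_3$ together in the other, then \emph{neither} the $(s_1,t_1)$-walk nor the $(s_3,t_3)$-walk is forced to cross it --- each can stay on its own side. A walk is only forced to cross a separating curve when its two endpoints lie in different regions, which is not the configuration you describe. The paper's mechanism is different and worth contrasting: for $\{1,3\}\sub F$ it argues that edge-disjointness forces the $(s_1,t_1)$- and $(s_3,t_3)$-walks to consume certain edges, and that in the residual graph the four auxiliary terminals $u_1,u_2,v_1,v_2$ have degree $2$ and appear interleaved on the outer face, so the two \emph{auxiliary} walks would have to cross each other; for $\{2,4\}\sub F$ it uses a plain capacity argument (a $3$-edge cut that four pairwise edge-disjoint walks would all have to traverse). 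In particular the two exclusions are established by genuinely different, asymmetric arguments, so your plan to "apply the symmetric argument" presupposes a symmetry the gadget need not have.

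A second, related problem is your remark that "the two auxiliary walks will themselves need to cross" at an internal vertex "whose rotation system lets them pass without crossing in the sense of \cref{def:reduction:flow}". This is contradictory: the crossing relation of \cref{def:reduction:flow} is precisely the combinatorial formalization of a transversal crossing at a vertex --- two walks whose consecutive edge pairs interleave in the rotation system \emph{do} cross, and there is no rotation system that makes a genuine transversal crossing admissible. If your gadget forced the two auxiliary walks to cross, there would be no non-crossing $\tcal$-flow even for $F=\emptyset$. The gadget must instead provide enough spare edges (in the paper, the "orange" edges) so that in every admissible configuration the auxiliary walks can be rerouted around each other; it is exactly the consumption of these spare edges by the $(s_1,t_1)$- and $(s_3,t_3)$-walks that triggers the $\{1,3\}$ exclusion. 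To repair your proof you should exhibit the gadget explicitly and replace the separation argument by (i) an edge-cut counting argument for one pair of opposite terminals and (ii) a forced-crossing argument for the two auxiliary walks for the other pair.
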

\begin{proof}
    The graph $G$ is depicted in \Cref{fig:junction}.
    We refer to this figure in the arguments below.
    The family $\tcal$ is given as $\{(u_1,v_1,1), (u_2,v_2,1) \}$.
Let $F \sub [4]$.
The three edges crossing the dotted line separate $\{u_1,u_2,s_2,t_4\}$ from $\{v_1,v_2,t_2,s_4\}$; so, when $\{2,4\} \sub F$, then there can be no $(\tcal \cup \tcal_F)$-flow in $G$.
Suppose now that $\{1,3\} \sub F$ and there exists a non-crossing $(\tcal \cup \tcal_F)$-flow in $G$. 
The orange edges must be utilized by the $(s_1,t_1)$-walk and the $(s_3,t_3)$-walk, as otherwise the walks would not be edge-disjoint with the $\tcal$-flow.
Let $G'$ be obtained from $G$ by removing the orange edges. 
Then $u_1,u_2,v_1,v_2$ lie on the outer face of $G'$ in this order.
Since the pairs $(u_1,v_1)$ and $(u_2,v_2)$ cross and each of these vertices has degree 2 in $G'$, no non-crossing $\tcal$-flow exists in $G'$.
Hence we arrive at a contradiction.
The four flows on the right of the figure demonstrate that whenever $\{1,3\} \not\sub F$ and $\{2,4\} \not\sub F$, then a non-crossing $(\tcal \cup \tcal_F)$-flow exists in~$G$. 
\end{proof}

We are ready to present the proper reduction.

\begin{theorem}
\label{prop:setcover:final}
    There is a polynomial-time algorithm that, given an instance $(k,\scal, \ell)$ of \textsc{Set Cover},
    outputs an equivalent instance $(G,\tcal)$ of $\nullnoncross$ with $|\tcal| = \Oh(k^5)$.
    \mic{The demands $d_i$ for $(s_i,t_i,d_i) \in \tcal$ are bounded by $2^{\Oh(k)}$.}
\end{theorem}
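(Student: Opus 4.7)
The plan is to combine the three gadgets already established, in the grid-like pattern sketched in \Cref{fig:non-crossing-setcover}. First we preprocess: without loss of generality $\scal$ contains no duplicates, so $|\scal| \le 2^k$; we pad $\scal$ with empty sets until $|\scal| = 2^r$ for $r = \lceil \log_2 |\scal| \rceil \le k$, and view it as a function $\scal \colon \{0,1\}^r \to 2^{[k]}$. In the small regimes $r < 6$, $\ell \le 2$, or $\ell > k$ the instance is solvable in polynomial time by brute force or a direct check, and we output a constant-size dummy \nullnoncross instance of the right answer; henceforth assume $r \ge 6$ and $3 \le \ell \le k$.

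For each element $i \in [k]$ I will instantiate a copy $(E_i, \tcal^E_i)$ of an $\ell$-$\mathsf{Existential\, Gadget}$ (\cref{lem:setcover:existential-exists}) with terminal pairs $(s^i_j, t^i_j)_{j \in [\ell]}$ on its outer face, and for each slot $j \in [\ell]$ a copy $(S_j, \tcal^S_j)$ of an $(r,k,\scal)$-$\mathsf{Subset\, Gadget}$ (\cref{prop:subset:full:final}) with terminal pairs $(\sigma^j_i, \tau^j_i)_{i \in [k]}$ on its outer face. For each pair $(i,j) \in [k]\times[\ell]$ I introduce two wire terminals $u_{i,j}, v_{i,j}$, connect them via short paths both to the $j$-th pair of $E_i$ and to the $i$-th pair of $S_j$, and add the unit request $(u_{i,j}, v_{i,j}, 1)$ to $\tcal$. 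The resulting bipartite wire structure contains $K_{k,\ell}$ and is not planar, so I arrange the $k\ell$ wires in a grid layout and insert one $\mathsf{Junction\, Gadget}$ (\cref{lem:setcover:junction-exists}) at each of the $\Oh(k^2\ell^2)$ pairwise crossings; its exclusion property $\{1,3\} \not\sub F$ and $\{2,4\} \not\sub F$ allows the two crossing wires to traverse the junction using the two disjoint pair-groups $\{1,3\}$ and $\{2,4\}$ without interfering. Adding up, the request count is
\begin{equation*}
|\tcal| \le k \cdot \Oh(\ell) + \ell \cdot \Oh(k r^3) + k\ell + \Oh(k^2\ell^2) = \Oh(k^5),
\end{equation*}
using $\ell, r \le k$; only the subset gadgets contribute non-unit demands, which are bounded by $2^{\Oh(k)}$ by \cref{prop:subset:full:final}.

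For correctness, given any non-crossing $\tcal$-flow let $F^E_i \sub [\ell]$ record which wires pass through $E_i$ and $F^S_j \sub [k]$ record which wires pass through $S_j$. The junction property implies that for every $(i,j)$ at least one of $j \in F^E_i$ or $i \in F^S_j$ holds. The defining property of the existential gadget forces $|F^E_i| < \ell$ for each $i$, so there exists $j^* \in [\ell]\sm F^E_i$, forcing $i \in F^S_{j^*}$. The subset gadget then yields a vector $\bb_j \in \{0,1\}^r$ with $F^S_j \sub \scal(\bb_j)$ for each $j$, so $\scal(\bb_1) \cup \dots \cup \scal(\bb_\ell) = [k]$, i.e.\ an $\ell$-cover. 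Conversely, from an $\ell$-cover $\scal(\bb_1), \dots, \scal(\bb_\ell)$ I route each wire $(u_{i,j}, v_{i,j})$ through $S_j$ whenever $i \in \scal(\bb_j)$ and through $E_i$ otherwise; the three gadget definitions then guarantee that the corresponding non-crossing flows exist locally. The main obstacle will be stitching these local non-crossing flows into a global non-crossing flow in $G$: one must verify that at every junction the clockwise order in which each wire enters and leaves agrees with the ordering dictated by the neighbouring gadgets, and that the flows inside the existential and subset gadgets can be arranged consistently with the orientations in which wires arrive at their boundary terminal pairs. This bookkeeping is enabled by the ``seeing-order'' guarantees of \cref{def:homo:elemenet-gadget}(2c) and the symmetry of the junction gadget, but is the most delicate part of the construction.
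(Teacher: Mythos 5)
Your overall architecture matches the paper's: $k$ existential gadgets, $\ell$ subset gadgets, junction gadgets at the $\Oh(k^2\ell^2)$ road crossings, unit ``wire'' requests, the same request count, and the same skeleton for both directions of correctness. The gap is in how a wire actually gets from $E_i$ to $S_j$ across the junctions. You place a single request $(u_{i,j},v_{i,j},1)$ per road and assert that its walk can ``traverse the junction using the two disjoint pair-groups $\{1,3\}$ and $\{2,4\}$.'' That configuration is exactly what the junction gadget forbids: its defining property is that a non-crossing $(\tcal\cup\tcal_F)$-flow exists only if $\{1,3\}\not\sub F$, so a single road may not use both of its opposite pairs at one junction. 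Moreover, each pair $s_x,t_x$ of the junction lies on the same side of the gadget (they are adjacent in the clockwise boundary order), so an $(s_x,t_x)$-walk re-emerges on the side it entered --- there is no way for one walk to pass from one side of a junction to the other. Nor could any gadget provide one: two walks that genuinely cross cannot both be routed in a non-crossing flow, and this impossibility is precisely what the junction gadget's own correctness proof exploits.

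The mechanism that works (and that the paper uses) replaces the single per-road request by a relay chain. The $(i,j)$-road is built from two parallel paths and is cut by its junctions into subroads; each subroad between consecutive gadgets $H_1,H_2$ receives its own unit request whose walk must dip into exactly one of $H_1,H_2$, entering at $s_x$ and returning at $t_x$ of the \emph{same} pair. The exclusion $\{1,3\}\not\sub F$ then says each junction can absorb at most one of its two adjacent subroads on a given road, so by pigeonhole at least one of the subroad walks must dip into $E_i$ or into $S_j$ --- which is the statement ``$j\in F^E_i$ or $i\in F^S_j$'' that your reverse direction needs but that your single-wire setup does not deliver. The forward direction is repaired the same way: route every subroad walk of the $(i,j)$-road into its neighbour closer to $S_j$ when $i\in\scal(\bb_j)$ and closer to $E_i$ otherwise, so that each junction is used by exactly one of its two road-side pairs. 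This changes the number of wire requests from $k\ell$ to $\Oh(k^2\ell^2)$, which still gives $|\tcal|=\Oh(k^5)$. The remainder of your proposal (deduplication and padding to $2^r$, disposing of the small-parameter regimes, the counting, and the demand bound inherited from the subset gadget) is fine.
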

\begin{proof}
    By removing duplicates in the family $\scal$, we can assume that it contains at most $2^k$ sets.
    Next, by padding the family $\scal$ with empty sets and increasing its size \micr{removed ``by''} 
    at most twice, we can assume that the size $\scal$ is a power of $2$.
    Let $r \in \nn$ be such that $|\scal| = 2^r$.
    We can solve $(k,\scal, \ell)$ in polynomial time when $r < 6$ or $\ell < 3$.
    Therefore, we can assume that $6 \le r \le k$ and $\ell \ge 3$, so we will meet preconditions of the used lemmas.
    Note that the running time of the form $2^{\Oh(r)}$ is polynomial in the input size.

    By a slight abuse of notation, we represent the set family $\scal$ as a function $\{0,1\}^r \to 2^{[k]}$.
    The choice of this function is irrelevant.
    We begin the construction of the graph $G$ by creating $k$ copies $A_1,\dots,A_k$ of an $\ell$-$\mathsf{Existential\, Gadget}$ (\cref{lem:setcover:existential-exists}) and $\ell$ copies $B_1,\dots,B_\ell$ of an $(r,k,\scal)$-$\mathsf{Subset\, Gadget}$ (\cref{prop:subset:full:final}).
    We arrange them on the plane as illustrated in \Cref{fig:setcover}.
    For each $i \in [k]$, $j \in [\ell]$, we draw two L-shaped paths, first connecting vertices $A_i[s_j]$ and $B_j[t_i]$, the second one connecting  vertices $A_i[t_j]$ and $B_j[s_i]$.
    We refer to the union of these two paths as the $(i,j)$-road.
    When the $(i_1,j_1)$-road and the $(i_2,j_2)$-road cross, we place there a copy of the $\mathsf{Junction\, Gadget}$ $J_{i_1, j_1, i_2, j_2}$ (\cref{lem:setcover:junction-exists}).
    Let $\mathcal{H}$ be the family of all constructed gadgets of three types.
    Now each road becomes divided into subroads, each connecting a pair of terminals $s_x,t_x$ from some gadget $H_1 \in \mathcal{H}$ to a pair of terminals $s_y,t_y$ from some gadget $H_2 \in \mathcal{H}$.
    We place two new vertices $u,v$ in the middle of this subroad
    and insert edges $uH_1[s_x]$, $uH_2[t_y]$, $vH_1[s_y]$, $vH_2[t_x]$.
    The edges are chosen in such a way as to avoid any edge crossings; note that the ordering of terminals $s_1,t_1, s_2, t_2,\dots$ for each gadget is clockwise around its outer face.
    For each such pair $(u,v)$, we insert a request $(u,v,1)$ to a family $\tcal_{road}$.
    For a gadget $H \in \mathcal{H}$, let $\tcal[H]$ denote the family of its internal requests.
    We define $\tcal$ as $\tcal_{road} \cup \bigcup_{H \in \mathcal{H}} \tcal[H]$.
    This finishes the construction of the instance $(G,\tcal)$.

\begin{figure}[t]
    \centering
\includegraphics[scale=0.7]{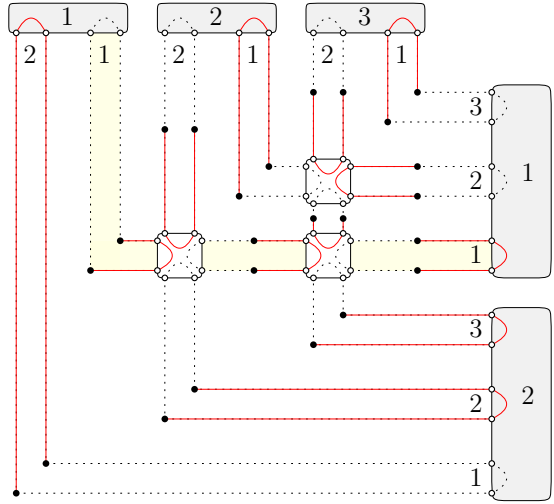}
\caption{(\Cref{fig:non-crossing-setcover} restated) A visualization of the reduction in \cref{prop:setcover:final} with $k = 3$, $\ell = 2$.
The existential gadgets are on the top and the subset gadgets are on the right.
The terminal pairs in each gadget are numbered in a clockwise manner. 
The three squares in the middle are the junction gadgets.
The (1,1)-road is highlighted.
The red $\tcal_{road}$-flow encodes a solution $S_1 = \{1\}$, $S_2 = \{2,3\}$.
}
\label{fig:setcover}
\end{figure}

    \begin{claim}
        If there exist $\ell$ vectors $\bb_1, \dots, \bb_\ell \in \{0,1\}^r$ for which $\bigcup_{i=1}^\ell \scal(\bb_i) = [k]$, then there exists a non-crossing $\tcal$-flow in $G$. 
    \end{claim}
    \begin{innerproof}
        Consider the $(i,j)$-road for $i \in [k]$, $j \in [\ell]$.
        If $i \in \scal(\bb_j)$, then for each pair $(u,v)$ of terminals on the $(i,j)$-road, we route the $(u,v)$-walk through the gadget $H \in \mathcal{H}$ that is closer to the subset gadget $B_j$.
        Otherwise, we route the  $(u,v)$-walk through the gadget closer to the existential gadget $A_i$.
        First, observe that in every junction gadget 
        $J$ we either use a connection between $J[s_1]$ and $J[t_1]$ or between $J[s_3]$ and $J[t_3]$.
        We also either use a connection between $J[s_2]$ and $J[t_2]$ or between $J[s_4]$ and $J[t_4]$.
        By condition (3) of \cref{lem:setcover:junction-exists}, such two connections can be realized via a non-crossing flow within $J$ together with $\tcal[J]$.

        Consider the existential gadget $A_i$ for $i \in [k]$.
        Let $F_i \sub [\ell]$ indicate the indices $j$ for which a $(u,v)$-walk from the $(i,j)$-road goes through $A_i$.
        Since there is at least one $j \in [\ell]$ with $i \in \scal(\bb_j)$, we obtain $|F_i| < \ell$.
        By condition (3) of \cref{def:setcover:existential}, there exists a non-crossing $(\tcal[A_i] \cup \tcal_{F_i})$-flow in $A_i$.
        Finally, consider the subset gadget $B_j$.
        Let $F'_j \sub [k]$ indicate the indices $i$ for which a $(u,v)$-walk from the $(i,j)$-road goes through $B_j$.
        By construction, we have $F'_j = \scal(\bb_j)$.
        Hence by condition (3) of \cref{def:subset:gadget}, a non-crossing $(\tcal[B_j] \cup \tcal_{F'_j})$-flow exists in $B_j$.
        The claim follows.
    \end{innerproof}

      \begin{claim}
       If there exists a non-crossing $\tcal$-flow $\pp$ in $G$, then
        there exist vectors $\bb_1, \dots, \bb_\ell \in \{0,1\}^r$ for which $\bigcup_{i=1}^\ell \scal(\bb_i) = [k]$.
    \end{claim}
    \begin{innerproof}
        Since the walks from $\pp$ are edge-disjoint, the requests from family $\tcal_{road}$ forbid any walk from family $\tcal[H]$, for $H \in \mathcal{H}$, to use edges outside the subgraph $H$.
        Therefore, for every $H \in \mathcal{H}$, the $\tcal[H]$-flow included in $\pp$ is entirely contained in $H$.

        Let $F_i \sub [\ell]$ indicate the indices $j$ for which a $(u,v)$-walk from the $(i,j)$-road goes through $A_i$.
        Similarly, let $F'_j \sub [k]$ indicate the indices $i$ for which a $(u,v)$-walk from the $(i,j)$-road goes through $B_j$.
        Consider the $(i,j)$-road for $i \in [k]$, $j \in [\ell]$.
        By the properties of a junction gadget, no two pairs $(u_1,v_1)$, $(u_2,v_2)$ of terminals located on this road can make use of a single junction gadget.
        This implies that either $j \in F_i$ or $i \in F'_j$ or possibly both conditions hold.
        By the properties of an existential gadget, for each $i \in [k]$ there exists $\tau(i) \in [\ell]$ such that $\tau(i) \not\in F_i$.
        Consequently, $i \in F'_{\tau(i)}$.
        By the properties of a subset gadget, for each $j \in [\ell]$ there exists a vector $\bb_j$ such that $F'_j \sub \scal(\bb_j)$.
        We infer that an element $i \in [k]$ is contained in the set $\scal(\bb_{\tau(i)})$.
        Therefore, a set cover of size $\ell$ exists.
    \end{innerproof}

    We have thus established that the instances $(k,\scal,\ell)$ and $(G,\tcal)$ are equivalent.
    \mic{All the requests in $\tcal$ are unitary apart from those in the subset gadget.
    \cref{prop:subset:full:final} guarantees that the demands in this gadget are bounded by $\Oh(2^{3r}) = 2^{\Oh(k)}$.}
    It remains to count the number of requests in $\tcal$.
    Each of $k$ existential gadgets requires $\Oh(\ell)$ requests.
    Each of $\ell$ subsets gadgets requires $\Oh(k r^3)$ requests.
    The number of junction gadgets equals the number of crossings between the roads,
    which is at most the number of roads squared, that is, $\Oh(k^2\ell^2)$.
    Each such gadget contains only $\Oh(1)$ requests. 
    The size of the family $\tcal_{road}$ is proportional to the number of junction gadgets, so also $|\tcal_{road}| = \Oh(k^2\ell^2)$.
    Since both $r,\ell$ are bounded by $k$, we obtain $|\tcal| = \Oh(k^5)$.
    Finally, each of the three types of gadgets can be constructed in time polynomial in the input size.
    This concludes the proof.
\end{proof}

\subsection{Implementing weights}
\label{sec:hard:weights}

As the last step, we need to get rid of large demands in the request family $\tcal$.
Our construction of the subset gadget requires demands as large as $2^{\Oh(k)}$ and we cannot afford requesting this many vertex-disjoint paths in a meaningful reduction.   
Instead, we shall implement such a request with $\Oh(r^2)$ unitary requests by utilizing the construction by Adler and Krause~\cite{AdlerLB}.

We begin by simplifying the requests so that the demands are of the form $2^i-1$ and the number of edges incident to each terminal $v$ equals the number of paths starting at $v$.
We also place additional ``guarding'' requests of demand 1 that will come in handy in further topological arguments.
This operation is depicted in \Cref{fig:binary}.

\begin{definition}[Binary simplification]
\label{def:weight:binary}
Let $(G, \tcal)$ be an instance of \textsc{Non-crossing Multicommodity Flow} and $(s,t,d) \in \tcal$.
We obtain an instance $(G', \tcal')$ from  $(G, \tcal)$ as follows.
First, we replace  the vertex $s$ (resp. $t$) in $G$ with a cycle $C_s$ (resp. $C_t)$ having a single vertex for each edge incident to $s$ (resp. $t$).
We multiply each of the edges on the cycle $|E(G)|$ times.
We pick an arbitrary vertex $v$ on $C_s$ (resp. $C_t$) and create a new vertex $s'$ (resp. $t'$) in the interior of the cycle, connected to $v$ via $d$ parallel edges.
Next, let $D \sub \nn$ be the set of 1's in the binary representation of $d$, i.e., $d = \sum_{i \in D}2^i$.
For each $i \in D$ we create vertices $v_s^i$, $u_s^i$, adjacent to $s'$, and  vertices $v_t^i$, $u_t^i$, adjacent to $t'$.
We place them in a clockwise manner around $s'$ and in an counter-clockwise manner around $t'$.
For every vertex of the form $u_s^i, u_t^i$, we multiply the only edge incident to it times $2^i-1$.
We remove $(s,t,d)$ from $\tcal$ and replace it with $\bigcup_{i \in D} \{(v_s^i, v_t^i, 1), (u_s^i, u_t^i, 2^i-1)\}$.
We say that $(G', \tcal')$ is obtained from $(G, \tcal)$ via {\em binary simplification} of $(s,t,d)$.
\end{definition}

\begin{figure}
    \centering
\includegraphics[scale=0.8]{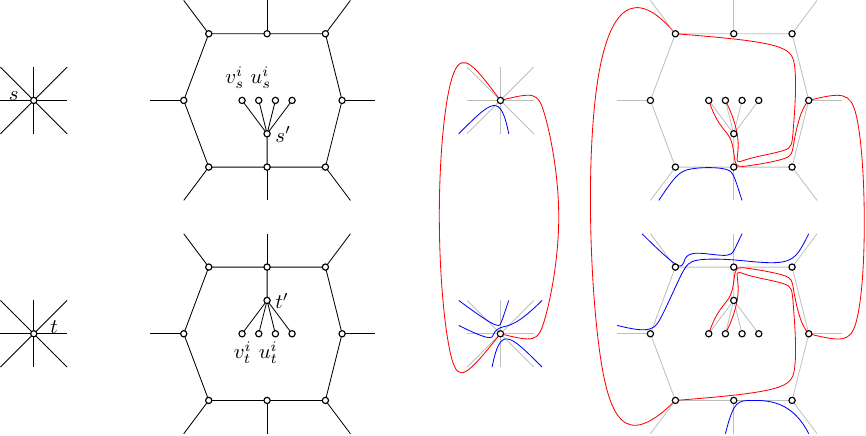} 
\caption{Left: An example of binary simplification of a request $(s,t,d)$, where $d$ has two 1's in the binary representation.
Right: A correspondence between non-crossing flows in the original \mic{multigraph} and the \mic{multigraph} after modification.
Note that the edges on the cycles are duplicated.
Two exemplary $(s,t)$-paths are drawn in red.
}
\label{fig:binary}
\end{figure}

\begin{lemma}\label{lem:weight:binary}
    Let $(G', \tcal')$ be obtained from $(G, \tcal)$ via {binary simplification} of $(s,t,d) \in \tcal$.
    Then these two instances of $\nullnoncross$ are equivalent.
\end{lemma}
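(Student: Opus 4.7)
The plan is to prove both implications by exploiting the planar topology of the local gadget built around $s$ and $t$, together with Observation~\ref{obs:reduction:contract} for the easier direction.

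For the forward direction, assume a non-crossing $\tcal$-flow $\pp$ exists in $G$, and let $\pp_{(s,t,d)} = \{W_1, \dots, W_d\}$ be the subfamily of $d$ many $(s,t)$-walks. By condition~(b) of Definition~\ref{def:reduction:flow}, the initial edges of $W_1, \dots, W_d$ form a consecutive interval in the cyclic ordering $\pi_G(s)$; order them clockwise as $e_1, \dots, e_d$. The same walks yield initial edges at $t$ forming a consecutive interval in $\pi_G(t)$, and since $\pp_{(s,t,d)}$ is itself non-crossing, the matching between these two intervals is order-reversing (this is the standard planar non-crossing observation). Now partition $[d]$ into $2|D|$ consecutive segments, one of length $1$ per index $i\in D$ (for the pair $v_s^i, v_t^i$) and one of length $2^i-1$ per index $i\in D$ (for the pair $u_s^i, u_t^i$), following the clockwise ordering of $v_s^i, u_s^i$ around $s'$. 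Because the vertices $v_t^i, u_t^i$ are placed counter-clockwise around $t'$, the same partition of $[d]$ matches the reversed cyclic order at $t$, hence is consistent with the endpoints in $G'$. For each $W_j$ with initial edge $e_j$ at $s$ and final edge at $t$, prepend a short path $(v^i_s\text{ or }u^i_s,\, s',\, v,\, C_s\text{-arc},\, x_{e_j})$ in $G'$, choosing the arc direction along $C_s$ so that the concatenation of these prepended paths around $s'$ respects the clockwise ordering; then append the symmetric path at $t$. The cycle edges are duplicated $|E(G)|\ge d$ times so capacities suffice, and by construction the extended walks are pairwise edge-disjoint and non-crossing, both among themselves and with the untouched walks of $\pp\setminus \pp_{(s,t,d)}$. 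The remaining walks of $\pp$ are copied verbatim, giving a non-crossing $\tcal'$-flow in $G'$.

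For the reverse direction, assume a non-crossing $\tcal'$-flow $\pp'$ exists in $G'$. Every walk of $\pp'$ that starts at $v_s^i$ or $u_s^i$ must traverse the edge to $s'$, one of the $d$ parallel edges $s'v$, and then a segment of $C_s$ before exiting via an edge that corresponds to an edge incident to $s$ in $G$; the symmetric statement holds at $t$. Let $D\subset\mathbb{R}^2$ be a closed topological disc containing the images of $C_s$, of $s'$, and of all the newly inserted vertices $v_s^i, u_s^i$, and whose boundary crosses exactly the edges of $G'$ that correspond to edges of $E_G(s)$; define $D'$ analogously around $t$. Applying Observation~\ref{obs:reduction:contract} twice (to $D$ and then to $D'$) contracts this local structure back to the single vertex $s$ (respectively $t$) and transforms $\pp'$ into a non-crossing flow in $G$ in which the $\sum_{i\in D}((2^i-1)+1)=d$ walks originating from the terminals inside $D$ become $d$ non-crossing edge-disjoint $(s,t)$-walks, while walks satisfying the other requests are unaffected. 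This yields a non-crossing $\tcal$-flow in $G$ witnessing $(s,t,d)$ and all remaining requests.

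The main obstacle is the forward direction: one must verify that the segment-by-segment bundling of $W_1,\dots,W_d$ into the slots around $s'$ and $t'$ is consistent, i.e.\ that the clockwise interval at $s'$ and the counter-clockwise interval at $t'$ can be matched by the same partition. This is exactly the order-reversal property of planar non-crossing paths sharing two endpoints, and once it is recorded, the rest of the verification (capacity, non-crossing at $s'$, $t'$, $v$, and along $C_s$) is routine bookkeeping using the generous edge multiplicity $|E(G)|$ on the cycle.
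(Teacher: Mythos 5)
Your proposal is correct and follows essentially the same route as the paper: the forward direction routes the $d$ walks of $\pp_{st}$ around $C_s$, $C_t$ using the order-reversal/consecutive-interval property guaranteed by condition (b) of Definition~\ref{def:reduction:flow} and the $|E(G)|$-fold edge multiplicities, and the reverse direction contracts the two local discs back to $s$ and $t$ via Observation~\ref{obs:reduction:contract}, noting that the total demand of the new requests sums to $d$. The only (harmless) imprecision is the claim that the initial edges of $W_1,\dots,W_d$ literally form a consecutive interval in $\pi_G(s)$ — condition (b) only guarantees they can be bundled without crossing other walks — but this is exactly how both you and the paper use it.
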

\begin{proof}
    Note that by the definition of the problem, no other terminals from $\tcal$ coincide with $s, t$.
    
    Consider a non-crossing $\tcal$-flow $\pp$ in $G$ and let $\pp_{st} \sub \pp$ be the family of $d$ walks connecting $s$ with $t$.
    Since the vertices of the cycle $C_s$ (resp. $C_t$) are connected via $|E(G)|$ parallel edges and each walk from $\pp$ visiting $s$ or $t$ must use some edge incident to this vertex, there is enough space to route all the walks  from $\pp \sm \pp_{st}$ 
    alongside the cycle.
    The walks from $\pp_{st}$ can be translated into $(V(C_s), V(C_t))$-walks in $G'$.
    Let us order them as $P_1, \dots, P_d$ in a clockwise manner around $C_s$, starting from an arbitrary one.
    Then  $P_1, \dots, P_d$ arrive at $C_t$ in an counter-clockwise manner.
    By \cref{def:reduction:flow} of a non-crossing flow, no two walks $P_i, P_j$ cross with any other walk from $\pp$ at $C_s$ (resp. $C_t$).
    Therefore we can route them using the innermost edges on the cycles $C_s, C_t$ to reach $s'$ and $t'$ in an order reflecting the terminal pairs $(v_s^i, v_t^i)$ and $(u_s^i, u_t^i)$ (see \Cref{fig:binary}).

    Consider now a  non-crossing $\tcal'$-flow $\pp'$ in $G'$.
    Let $\pp'_{st} \sub \pp'$ be the family of walks realizing the requests created due to  {binary simplification} of $(s,t,d)$.
    We contract the vertex set $V(C_s)$ together with all the vertices lying inside $C_s$ into a single vertex $s$, and similarly for $C_t$, thus obtaining the graph $G$ again.
    By \cref{obs:reduction:contract}, this operation transforms $\pp'$ into a non-crossing flow in $G$.
    \mic{Because there are exactly $d$ parallel edges between $s'$ and the fixed vertex on $C_s$ (resp. $t'$ and the fixed vertex on $C_t$),
    no paths from $\pp' \sm \pp'_{st}$ can visit $s'$ or $t'$.}
     The only terminals contracted into $s$ (resp. $t$) are the ones created due to binary simplification of $(s,t,d)$, so we obtain a non-crossing $\tcal$-flow.
\end{proof}

The binary simplification imposes a convenient structure on the multigraph, which we will utilize next to replace each request of demand $2^i-1$ with just $i$ unitary requests.
We will analyze the reduction using the following artificial problem.

\begin{definition}[{\cite[Def. 3]{AdlerLB}}]
\label{def:weight:dpp}
    Given a subset $X$ of the plane and a set of $k$ pairs of terminals
$\tcal \sub X^2$, the {\sc Topological Disjoint Paths} problem is to determine whether there exist $k$ pairwise disjoint curves in $X$, such that each
curve $P_i$ is homeomorphic to $[0, 1]$ and its ends are $s_i$ and $t_i$ where $(s_i, t_i) \in \tcal$. 
\end{definition}

We will use this problem only for the sake of analysis, so we do not have to specify how the set $X$ is encoded.
For $k \in \nn$ we define an instance $(X_k, \tcal_k)$ of {\sc Topological Disjoint Paths}.
This is a concise version of~\cite[Definition 4]{AdlerLB}, where the set $X_k$ is called a {\em disc-with-edges}.
See \Cref{fig:topological-dpp} for an illustration.

\begin{figure}
    \centering
\centerline{\includegraphics[scale=0.43]{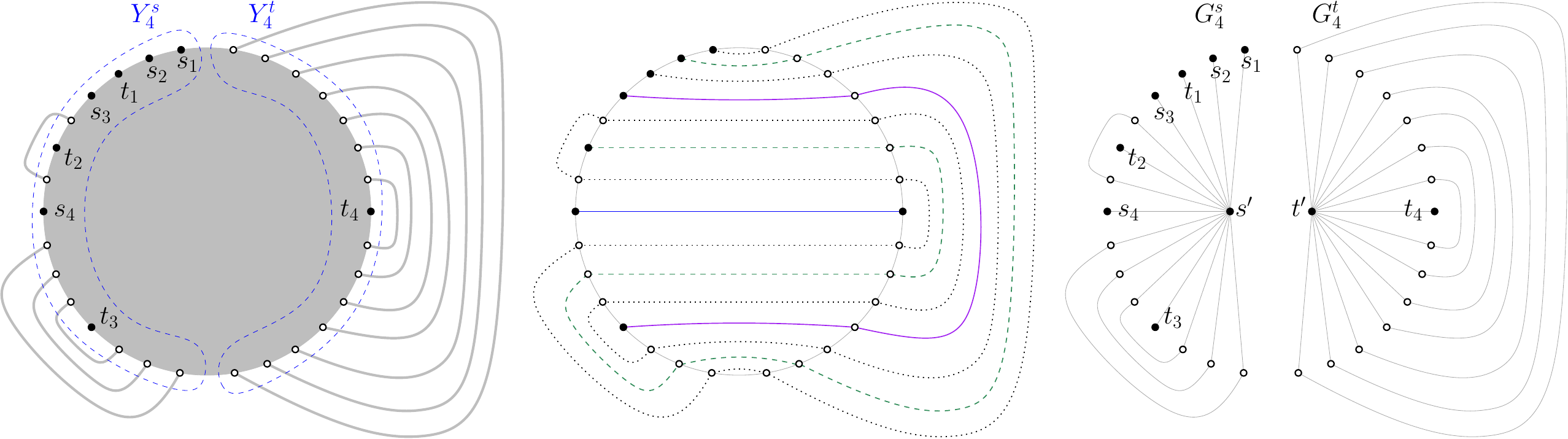}} 
\caption{Left: The instance $(X_4, \tcal_4)$ of {\sc Topological Disjoint Paths}.
The set $X_4$ is gray and the points from $\tcal_4$ are black discs.
The black and hollow discs form the set $Y_4$, which is divided into $Y_4^s$ and $Y_4^t$.
Middle: The unique solution to $(X_4, \tcal_4)$ traverses the disc $2^4-1$ times from left to right.
Right: The gadgets $G_4^s$ and $G_4^t$.
}
\label{fig:topological-dpp}
\end{figure}

\begin{definition}\label{def:weight:disc}
    Let $L_k$ be an ordered list of elements from $\{s_1,t_1,s_2,t_2,\dots,s_k,t_k\}$ defined inductively.
    For $k=1$ we set $L_1 = (s_1,t_1)$.
    Assume that $L_k$ is already constructed with $t_k$ being its last element.
    We obtain $L_{k+1}$ from $L_k$ by inserting $s_{k+1}$ before $t_k$ and inserting $t_{k+1}$ after $t_k$.
    \mic{We define $\tcal_k = \{(s_i,t_i) \mid i \in [k]\}$.}

    Now we construct the set $X_k \sub \rr^2$. 
    Let $D \sub \rr^2$ be a closed disc.
    We place the elements of $L_k$ on the boundary of $D$ in the counter-clockwise order. 
    For $i \in [k]$ let $S_i,T_i$ be the connected components of $\partial D \sm L_k$ neighboring $t_i$.
    For each  $i \in [k]$ let $E_i$ be a family of $2^{i-1}-1$ curves connecting $S_i$ and $T_i$ outside $D$, in such a way that the are no crossings in $\bigcup_{i=1}^k E_i$. 
    We set $X_k = D \cup \bigcup_{i=1}^k E_i$ and
    refer to the curves from $E_1,\dots,E_k$
    as the edges of $X_k$.

    Next,
    let $Y_k \sub X_k \cap \partial D$ be the union of the set $L_k$
    and all the endpoints of edges in $X_k$.
    We set $Y_k^t = Y_k \cap (E_k \cup \{t_k\})$ and
    $Y_k^s = Y_k \sm Y_k^t$.
\end{definition}

Note that $2^{1-1}-1 = 0$ so $E_1 = \emptyset$.
The sets $Y_k^s, Y_k^t$ divide the distinguished points in $X_k$ into the left side and the right side.
Observe that $|Y_k^s| = |Y_k^t| = 2^k - 1$.
The instance $(X_k, \tcal_k)$ is designed to have a unique solution, depicted in \Cref{fig:topological-dpp}.
In this solution, the $(s_i,t_i)$-path must ``go around'' the $(s_{i-1},t_{i-1})$-path, thus traversing the disc twice as many times.

\begin{lemma}[{\cite[Lem. 1, Thm. 4]{AdlerLB}}]
\label{lem:weight:unique}
    For each $k \in \nn$ the instance $(X_k, \tcal_k)$ of {\sc Topological Disjoint Paths} has a unique solution (up to homeomorphism). 
    The curves in this solution contain $2^k - 1$ subcurves connecting a point in $Y_k^s$ to a point in $Y_k^t$.
\end{lemma}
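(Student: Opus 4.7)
The plan is to prove both parts of the lemma simultaneously by induction on $k$. The base case $k=1$ is immediate: $X_1$ is a closed disc with just two boundary terminals $s_1, t_1$, so any simple arc joining them in $D$ is unique up to ambient homeomorphism, and it produces exactly one $(Y_1^s, Y_1^t)$-subcurve, matching $2^1 - 1 = 1$.

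For the inductive step, I would first pin down the shape of the $(s_k, t_k)$-curve $P_k$. By the inductive definition of $L_k$, the three points $s_k, t_{k-1}, t_k$ lie cyclically consecutively on $\partial D$, so the ``short'' direct chord of $D$ from $s_k$ to $t_k$ encloses only $t_{k-1}$ on its inner side. The claim is that, up to homeomorphism, $P_k$ is forced to be this short chord. To see this, I would do a non-crossing chord-diagram argument on the cyclically ordered points of $Y_k$: if $P_k$ instead used any handle of $E_k$, it would split into several non-crossing chords of $D$ whose pattern, combined with the non-crossing requirement on the remaining $P_i$'s, would block some $(s_j, t_j)$ pair from being connected inside $D\cup \bigcup_i E_i$ (a Jordan-curve style separation argument).

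Once $P_k$ is fixed as the short chord, the terminal $t_{k-1}$ is isolated on the inner side, reachable from the outer side only through the $2^{k-1}-1$ handles of $E_k$ (which join the $S_k$-portion of $\partial D$ to the $T_k$-portion externally). I would then observe that the list $L_k$ with $\{s_k, t_k\}$ removed coincides with $L_{k-1}$, so after ``tunneling through'' $E_k$ the resulting configuration is homeomorphic to $(X_{k-1}, \tcal_{k-1})$, with each handle of $E_k$ playing the topological role of the boundary arc of $D_{k-1}$ near $t_{k-1}$. The inductive hypothesis then gives uniqueness (up to homeomorphism) of the remaining curves $P_1, \dots, P_{k-1}$, and in particular forces every handle of $E_k$ to be used by exactly one of them.

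For the count, the inductive hypothesis contributes $2^{k-1}-1$ subcurves from the reduced instance. Translating back to $X_k$, each use of an $E_k$-handle splits one chord of the reduced-instance solution into two chords of $D$, each connecting a point of $Y_k^s$ to a point of $Y_k^t$ (since both endpoints of every $E_k$-edge lie in $Y_k^t$ while the chord inherited from the reduced instance had both endpoints in $Y_k^s$). This adds exactly $2^{k-1}-1$ extra chords; together with the single new chord $P_k$ from $s_k\in Y_k^s$ to $t_k\in Y_k^t$, we obtain $2(2^{k-1}-1)+1 = 2^k-1$. The main obstacle I expect is the topological rigidity argument that forces $P_k$ to be the short chord and identifies the residual problem with $(X_{k-1},\tcal_{k-1})$: this requires a careful chord-diagram/peeling analysis that correctly tracks how endpoints of $E_k$-handles are glued in pairs when reducing to the smaller instance.
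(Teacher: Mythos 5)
First, note that the paper offers no proof of this lemma at all --- it is imported from Adler and Krause \cite{AdlerLB} --- so your argument must stand on its own. Its overall shape (induction on $k$, peel off the $(s_k,t_k)$-curve, reduce to $(X_{k-1},\tcal_{k-1})$, recover the count as $2(2^{k-1}-1)+1$) is the right one, and your observation that the inherited chords have both endpoints in $Y_k^s$ while every endpoint of $E_k$ lies in $Y_k^t$ is exactly what makes the count come out.

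The gap is the reduction step, which is false as stated once $k\ge 3$. The chord $P_k$ does not cut off ``only $t_{k-1}$'': the arc of $\partial D$ from $s_k$ through $t_{k-1}$ to $t_k$ also carries all endpoints of $E_{k-1}$ (since $S_{k-1}$ and $T_{k-1}$ flank $t_{k-1}$) and one endpoint of every edge of $E_k$ (since $S_k=T_{k-1}$). Hence the residual space consists of an outer disc $R_{\mathrm{out}}$ and an inner disc $R_{\mathrm{in}}$ joined by the $2^{k-1}-1$ tunnels of $E_k$, with all of $E_{k-1}$ hanging off $R_{\mathrm{in}}$; this is homeomorphic to $X_{k-1}$ only for $k=2$ (a single tunnel), and for $k\ge 3$ even the Euler characteristics disagree. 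Nor can $R_{\mathrm{in}}$ be collapsed to a boundary arc near $t_{k-1}$: already for $k=3$ the curve $P_1$ must enter $R_{\mathrm{in}}$ twice and traverse the $E_2$-handle attached to it. The correct correspondence is combinatorial rather than topological: one must show that every solution uses each $E_k$-tunnel exactly once, that deleting $P_k$ and merging the two chords incident to each used tunnel yields a solution of $(X_{k-1},\tcal_{k-1})$, and conversely that the lift of the unique $X_{k-1}$-solution back to $X_k$ is forced --- e.g.\ the chords of $R_{\mathrm{out}}$ ending at tunnel mouths must form the unique non-crossing perfect matching between two contiguous arcs of $\partial R_{\mathrm{out}}$, which is what pins down which curve uses which tunnel. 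That bijection, together with an actual argument that $P_k$ itself uses no handles (which you only assert via an unspecified separation argument), is the substance of the lemma and is missing from the proposal.
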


Adler and Krause~\cite{AdlerLB} used this lemma to construct an instance of {\sc Planar $k$-Disjoint Paths} with a $(2^k-1)\times(2^k-1)$-grid, in which every vertex is used by the unique solution.
This constitutes an example of a large-treewidth instance in which no vertex is irrelevant.

We want to turn the instance $(X_k, \tcal_k)$ into two gadgets that can be plugged into a plane multigraph, enforcing a flow of size $2^k-1$ between the gadgets by using only $k$ terminal pairs.
To this end, we need to translate the topological structure of the disc-with-edges into a graph structure.

\begin{definition}
    Consider  the instance $(X_k, \tcal_k)$ of {\sc Topological Disjoint Paths}.
    We define a gadget $G_k^s$ by placing a vertex $s'$ in the interior of $D$, inserting edges between $s'$ and each element in $Y_k^s$, and adding the edges of $X_k$ with both endpoints in $Y_k^s$.
    Analogously we obtain a gadget $G_k^t$ by placing a vertex $t'$ in the interior of $D$, inserting edges between $t'$ and each element in $Y_k^t$, and adding the edges of $X_k$ with both endpoints in $Y_k^t$.
    We refer to the vertices $s', t'$ as the roots of the gadgets. 
\end{definition}

The gadgets are depicted in \Cref{fig:topological-dpp}.
By a slight abuse of notation, we will treat
\mic{$s_i,t_i$ as vertices from $G_k^s \cup G_k^t$ and $\tcal_k$ as a set of pairs of vertices.} 
\meir{Maybe say this in the def itself? \mic{I prefer to keep it as it is. The def of $\tcal_k$ does not change but the points $s_i,t_i$ now refer to vertices}}

\begin{definition}[Weight implementation]
\label{def:weight:implement}
    Let $(G, \tcal)$ be an instance of \textsc{Non-crossing Multicommodity Flow} and $(s,t,d) \in \tcal$ be such that $d = 2^i-1$, $i > 0$, and each of $s, t$ has exactly $d$ incident edges, which are parallel.
    We obtain an instance $(\widehat G, \widehat \tcal)$ from  $(G, \tcal)$ as follows.
    Let $s',t'$ be the only neighbors of $s,t$, respectively, in $G$.
    We replace $s$ with the gadget $G_i^s$ rooted at $s'$ and replace $t$ with the gadget $G_i^t$ rooted at $t'$.
    We replace  $(s,t,d)$ in $\tcal$ with the set $\{(s_j,t_j,1) \mid (s_j,t_j) \in \tcal_i\}$. \meir{$s,t$ overloaded. In ${\cal T}_i$, they have a diff name. \mic{added a subscript}}
\end{definition}

\begin{figure}
    \centering
\centerline{\includegraphics[scale=0.5]{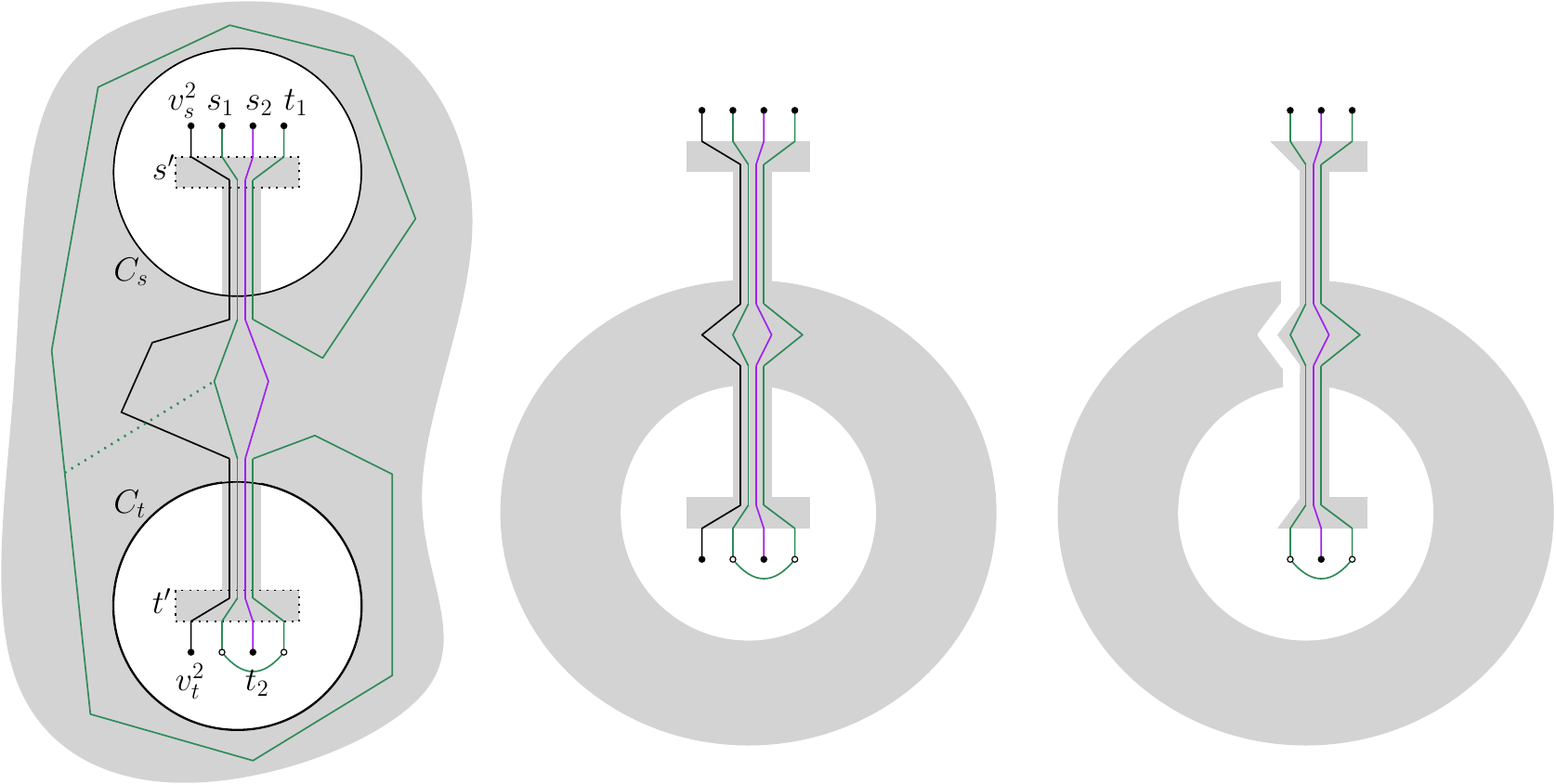}} 
\caption{An illustration to the proof of \cref{lem:weight:implement} for $d=4, i=2$. 
Left: A multigraph after binary simplification and weight implementation. 
The vertices $s,t$ got replaced by cycles $C_s, C_t$.
The vertices $s', t'$ and the incident parallel edges are part of the gray area.
Next, the vertices $u_s^2, u_t^2$ and request $(u_s^2, u_t^2, 3)$ got replaced with the gadgets $G_2^s, G_2^t$ and requests   $(s_1, t_1, 1)$, $(s_2, t_2, 1)$.
The crux of the lemma is to show that the green walk cannot use the dotted shortcut and it needs to enter the cycle $C_t$.
Middle: After flipping the embedding we can assume that $s'$ lies on the outer face.
Right: After removing the $(v_s^2, v_t^2)$-walk, the remaining gray area becomes a topological disc.
This reduces the analysis to the instance $(X_2, \tcal_2)$ where the unique solution traverses $2^i-1 = 3$ times between $C_s$ and $C_t$.
}
\label{fig:weights}
\end{figure}

\mic{In order to prove correctness of this transformation, we need to show that $\tcal_i$-walks in a non-crossing $\widehat\tcal$-flow in $\widehat G$ traverse $2^i-1$ many times between the gadgets $G_i^s, G_i^t$.
We will take advantage of the ``guarding'' request $(v_s^i, v_t^i,1)$
to reduce the analysis to the case where the $\tcal_i$-flow traverses a topological disc.
Then we could apply \cref{lem:weight:unique} to reveal the structure of the flow.}

\begin{lemma}\label{lem:weight:implement}
    Let $(G', \tcal')$ be obtained from $(G, \tcal)$ via {binary simplification} of $(s,t,d) \in \tcal$ and $i \in \nn$ belong to the binary representation of $d$.
    Next, let $(\widehat G, \widehat \tcal)$ be obtained from $(G', \tcal')$ by applying the transformation from \cref{def:weight:implement} to the request $(u_s^i, u_t^i, 2^i - 1) \in \tcal'$.
    Then the instances $(G', \tcal')$ and $(\widehat G, \widehat \tcal)$ are equivalent.
\end{lemma}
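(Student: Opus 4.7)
I plan to prove the two implications separately, relying on \cref{lem:weight:unique} to exploit the rigid topological structure of the disc-with-edges $(X_i,\tcal_i)$.

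For the forward direction, suppose $\pp'$ is a non-crossing $\tcal'$-flow in $G'$. The subfamily $\pp'_{uu}\subseteq\pp'$ realizing $(u_s^i,u_t^i,2^i-1)$ consists of $2^i-1$ edge-disjoint $(u_s^i,u_t^i)$-walks, which must saturate the parallel edges incident to $u_s^i$ and $u_t^i$. The idea is to reroute these walks through the gadgets $G_i^s, G_i^t$ as dictated by the unique topological solution of \cref{lem:weight:unique}: each of the $2^i-1$ disc-crossings in that solution becomes one walk in $\pp'_{uu}$, and each of the $i$ ``go-around'' arcs through the edges of $X_i$ becomes a walk satisfying one request of $\tcal_i$. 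The disc-with-edges structure of $G_i^s\cup G_i^t$ is designed exactly so that these combined arcs embed as a non-crossing $\tcal_i$-flow in $\widehat G$ whose restriction on the $(C_s,C_t)$-side agrees with $\pp'_{uu}$ up to relabeling of the endpoints. Combined with the rest of $\pp'$, which is untouched, this yields the desired non-crossing $\widehat\tcal$-flow in $\widehat G$.

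For the reverse direction, let $\pp$ be a non-crossing $\widehat\tcal$-flow in $\widehat G$, and let $\pp_i\subseteq\pp$ be the subfamily realizing $\tcal_i$. The plan is to show that $\pp_i$, viewed together with the guarding walk $W$ realizing $(v_s^i,v_t^i,1)$, traverses the interface between $G_i^s$ and $G_i^t$ exactly $2^i-1$ times. I will first flip the embedding so that $s'$ lies on the outer face (as in the middle panel of \Cref{fig:weights}), then argue that $W$ together with the saturated parallel bundles $s'v_s^i,\, t'v_t^i$ and the arcs on $C_s, C_t$ carve the relevant region of $\widehat G$ into a topological disc $D$ playing the role of the ground disc in \cref{def:weight:disc}. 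Inside $D$, the walks of $\pp_i$ form a topological solution to an instance isomorphic to $(X_i,\tcal_i)$, so by \cref{lem:weight:unique} they realize (up to homeomorphism) the unique solution and hence cross the $(C_s,C_t)$-interface exactly $2^i-1$ times. Contracting $V(G_i^s)$ back into $u_s^i$ and $V(G_i^t)$ back into $u_t^i$ using \cref{obs:reduction:contract} turns these crossings into $2^i-1$ edge-disjoint $(u_s^i,u_t^i)$-walks non-crossing with the rest of $\pp$, recovering a non-crossing $\tcal'$-flow in $G'$.

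The main obstacle is the reverse direction: making rigorous the topological reduction to $(X_i,\tcal_i)$. Concretely, one has to check that (a) the walks of $\pp_i$ cannot take ``shortcuts'' outside the carved disc (ruled out by the saturated parallel-edge bundles from the binary simplification, which prevent any extra walk from entering $s'$ or $t'$), and (b) the guarding walk $W$ together with $C_s, C_t$ is genuinely placed so that removing them disconnects the plane into regions whose combinatorics match the edges $E_j$ of $X_i$. Once these two points are settled, \cref{lem:weight:unique} forces the $2^i-1$ disc-crossings, and the rest is bookkeeping via \cref{obs:reduction:contract}.
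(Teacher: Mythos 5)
Your proposal is correct and follows essentially the same route as the paper: both directions hinge on the guarding request $(v_s^i,v_t^i,1)$, whose walk cuts the relevant region into a topological disc so that \cref{lem:weight:unique} forces exactly $2^i-1$ traversals between the two gadgets, and the forward direction stitches the $2^i-1$ parallel walks into the $\tcal_i$-pattern of the unique solution. The obstacles you flag (no shortcuts, the disc structure after removing the guarding walk) are precisely the points the paper's proof addresses.
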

\begin{proof}
    First, consider a non-crossing $\tcal'$-flow $\pp'$ in $G'$ and the family $\pp'_u \sub \pp'$ realizing the $(u_s^i, u_t^i)$-paths.
    They all must visit $s'$ and $t'$.
    Because $(v_s^i, v_t^i, 1) \in \pp'$, there is a $(v_s^i, v_t^i)$-path $P_v$ in $\pp'$, also visiting $s'$ and $t'$.
    Since $P_v$ is non-crossing with $\pp'_u$, the order in which the paths from $\pp'_u$ enter $u_s^i$ is symmetric to the order in which they enter $u_t^i$.
    Therefore, $\pp'_u$ can be translated into a non-crossing $\tcal_i$-flow.
    
    Next,  consider a non-crossing $\widehat\tcal$-flow $\widehat\pp$ in $\widehat G$ and the family $\widehat\pp_u \sub \widehat\pp$ being a $\tcal_i$-flow.
    We need to show that $\widehat\pp_u$ contains $2^i - 1$ subwalks connecting $s'$ and $t'$.
    We can flip the embedding of $\widehat G$ to make the face incident to $s'$ the outer face, without changing the rotation system; this transformation preserves the property of being a non-crossing flow (see \Cref{fig:weights}).
    Let $D_G \sub \rr^2$ be the subset of the plane without the outer face of  $\widehat G$ and the face incident to $t'$.
    Due to the existence of a $(v_s^i, v_t^i)$-path $P_v$ in $\widehat\pp$, which is non-crossing with $\widehat\pp_u$ and has endpoints of degree 1, the flow $\widehat\pp_u$ can be drawn as a family of non-crossing curves in $D_G \sm P_v$.
    This set is a topological disc, i.e.,
    there is a homotopy that translates $(D_G \sm P_v, \tcal_i)$ into $(X_i, \tcal_i)$.
    Hence \cref{lem:weight:unique} implies that $\widehat\pp_u$ is equivalent to the unique solution to the {\sc Topological Disjoint Paths} instance  $(X_i, \tcal_i)$, and so it contains $2^i - 1$ walks connecting $s'$ and $t'$.
    This concludes the proof.
\end{proof}

We are ready to summarize our reduction.
Recall that an instance  $(G, \tcal)$ of \nullnoncross is called unitary if every demand in $\tcal$ is 1 and every terminal occurring in $\tcal$ has degree 1 in $G$.
Observe that the newly created requests are unitary and the new terminals are of degree 1.
Hence applying both transformations to all original requests yields a unitary instance.

\begin{proposition}\label{prop:weight:final}
    Let $(G, \tcal)$ be an instance of $\nullnoncross$ and $\ell\in\nn$ be such that $d_i \le 2^\ell$ for each $(s_i, t_i, d_i) \in \tcal$.
    Then in polynomial time we can transform $(G, \tcal)$ into an equivalent unitary instance $(\widehat G, \widehat\tcal)$ of $\nullnoncross$ satisfying $|\widehat\tcal| = \Oh(\ell^2) \cdot |\tcal|$.
\end{proposition}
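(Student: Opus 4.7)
The plan is to simply compose the two transformations developed in this subsection: binary simplification (Lemma~\ref{lem:weight:binary}) and weight implementation (Lemma~\ref{lem:weight:implement}). I would apply the first exhaustively to every request in $\tcal$, and then apply the second exhaustively to every non-unitary request produced.

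First, iterate over each request $(s_i, t_i, d_i) \in \tcal$ and apply binary simplification as in \cref{def:weight:binary}. By \cref{lem:weight:binary} the instance obtained after each step is equivalent to the previous one. For a single request with $d_i \le 2^\ell$, the set $D$ in the definition has size at most $\ell + 1$, so binary simplification replaces $(s_i, t_i, d_i)$ with $\Oh(\ell)$ new requests of two kinds: unitary guard requests $(v_s^j, v_t^j, 1)$, and power-of-two requests $(u_s^j, u_t^j, 2^j - 1)$ for each $j \in D$. Crucially, in the resulting multigraph each vertex $u_s^j$ and $u_t^j$ has exactly $2^j - 1$ incident edges, which are all parallel to a common neighbor; this matches the precondition required by \cref{def:weight:implement}.

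Next, for every non-unitary request, which by the above is necessarily of the form $(u_s^j, u_t^j, 2^j - 1)$, apply the weight-implementation transformation of \cref{def:weight:implement}. By \cref{lem:weight:implement} the instance remains equivalent. Each such application replaces the single request $(u_s^j, u_t^j, 2^j - 1)$ with the $j \le \ell$ unitary requests $\{(s_h,t_h,1) : (s_h,t_h)\in \tcal_j\}$ arising from the gadgets $G_j^s, G_j^t$, and the new terminals each have degree $1$ by construction of the gadgets. Since the guard requests $(v_s^j, v_t^j, 1)$ were already unitary with degree-1 endpoints, after all applications every request in $\widehat\tcal$ is unitary and every terminal has degree $1$; hence $(\widehat G, \widehat\tcal)$ is a unitary instance.

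For the size bound, each of the $|\tcal|$ original requests contributes $\Oh(\ell)$ requests through binary simplification, and each of those in turn contributes $\Oh(\ell)$ unitary requests through weight implementation, yielding $|\widehat\tcal| = \Oh(\ell^2)\cdot|\tcal|$. Both transformations can clearly be carried out in polynomial time and increase the multigraph size by only a polynomial factor (the cycles $C_s, C_t$ and the gadgets $G_j^s, G_j^t$ all have size polynomial in $2^\ell$ and in the input size). I do not anticipate any real obstacle: the proof is essentially a bookkeeping composition of the two preceding lemmas, with the only point worth checking explicitly being that the preconditions of \cref{def:weight:implement} are met for the power-of-two requests produced by binary simplification, which as noted above is immediate from \cref{def:weight:binary}.
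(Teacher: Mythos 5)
Your proposal is correct and follows essentially the same route as the paper's proof: apply binary simplification to every request, then weight implementation to every resulting non-unitary request, with equivalence guaranteed by Lemmas~\ref{lem:weight:binary} and~\ref{lem:weight:implement} and each stage multiplying the number of requests by $\Oh(\ell)$. Your explicit check that the power-of-two requests produced by binary simplification satisfy the precondition of Definition~\ref{def:weight:implement} is a worthwhile detail that the paper leaves implicit.
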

\begin{proof}
    We first apply binary simplification to all requests in $\tcal$, thus increasing the number of requests by the factor of $\Oh(\ell)$.
    By \cref{lem:weight:binary}, the obtained instance $(G',\tcal')$ is equivalent to $(G,\tcal)$.
    Next, we apply weight implementation to each non-unitary request in $\tcal'$, obtaining a unitary instance $(\widehat G, \widehat\tcal)$.
    Again, the number of requests is being multiplied by $\Oh(\ell)$.
    The instance $(\widehat G, \widehat\tcal)$ is equivalent to $(G',\tcal')$ due to \cref{lem:weight:implement}. 
\end{proof}

\mic{We remark that almost all the requests used in our reduction from {\sc Set Cover} have demands being powers of two.
The only exceptions are the requests of type (F) in \cref{sec:subset-full}.
While it is possible to replace each of them with $r$ requests 
of demand $2^r$ (without increasing the total number of requests significantly) and shave off a single $\Oh(\ell)$-factor in weight implementation, we have chosen not to further complicate the description of the already complex gadget.
Besides, we think that 
\cref{prop:weight:final} in its general form might find applications also outside our context.}

\subsection{From a non-crossing flow to disjoint paths}
\label{sec:hard:final}

Recall that for a vertex set $X \sub V(G)$, a set of pairs $\tcal \sub X^2$ is called realizable if there exists a $\tcal$-linkage (that is, a $\tcal$-family of vertex-disjoint paths) in $G$.
Also recall the notion of a proper embedding from \cref{sec:prelims} and the property of being cross-free from \cref{sec:tw:single-face}. 

\begin{lemma}
\label{lem:weight:gadget}
Let $I$ be a noose and $X \sub I$ be a finite set of size $k$.
There exists a subcubic plane graph $H$ 
properly embedded in $\disc(I)$
such that $H \cap I = X$, $\deg_H(x) \le 2$ for every $x \in X$, and every cross-free $\tcal \sub X^2$ is realizable in $H$.
This graph can be constructed in time polynomial in~$k$.
\end{lemma}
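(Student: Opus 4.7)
The plan is to adapt the construction of \cref{lem:single:gadget} to satisfy the additional subcubic constraint and the extra constraint that $X$-vertices have degree at most $2$. For $k\le 2$ the statement is trivial (empty graph for $k\le 1$, a single edge $x_1x_2$ for $k=2$), so I focus on $k\ge 3$. I start from the $(k,k)$-cylindrical grid $C_k^k$ used in \cref{lem:single:gadget} and apply two local modifications. First, for each internal grid vertex $c_i^j$ with $1<j<k$ (degree~$4$ in $C_k^k$) I ``vertex-split'' $c_i^j$ into two new vertices $c_{i,j}^{A},c_{i,j}^{B}$ joined by a new ``bridge'' edge, distributing the four incident edges in cyclic-rotation order so that $c_{i,j}^{A}$ inherits two consecutive edges and $c_{i,j}^{B}$ inherits the other two. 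This preserves planarity and drops the max degree to~$3$. Second, for each $i\in[k]$ I attach a pendant vertex $x_i$ to $c_i^k$, placed on the noose $I$ in the correct cyclic position, and then vertex-split $c_i^k$ (now of degree~$4$) in the same fashion. The resulting $H$ is subcubic, properly embedded in $\disc(I)$ with $H\cap I=\{x_1,\dots,x_k\}=X$, and each $x_i$ has degree exactly~$1\le 2$.

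To verify that every cross-free $\tcal\sub X^2$ is realizable in $H$, I invoke \cref{lem:single:criterion}: it suffices to show $\mu_H(X_1,X_2)\ge \min(|X_1|,|X_2|)$ for every canonical division $(X_1,X_2)$ of $X$, since $X_1$ itself is an $(X_1,X_2)$-separator of size $|X_1|$ (and symmetrically for $X_2$), giving equality, while $\mu_\tcal(X_1,X_2)\le \min(|X_1|,|X_2|)$ is automatic. For the lower bound I would lift the explicit $\min(|X_1|,|X_2|)$-linkage constructed inside $C_k^k$ in \cref{lem:single:gadget}: each original path through a pre-split vertex $v$ becomes in $H$ either a path that stays in one half of the split, or a path that traverses the new bridge edge between the two halves, depending on which pair of incident edges the original path used at~$v$. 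Vertex-disjointness of the lifted paths is inherited from vertex-disjointness of the originals, because distinct original paths never share a pre-split vertex and hence never compete for the same split pair or bridge edge; the pendant edges at the endpoints then extend each lifted path to terminate at the corresponding~$x_i$.

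The main (rather mild) obstacle is purely bookkeeping: every split must be performed consistently with the planar rotation so that $H$ remains properly embedded in $\disc(I)$ and the pendants $x_i$ land on $I$ in the prescribed cyclic order; this is routine once one fixes the convention of always peeling off two cyclically-consecutive edges onto one of the two copies, with the bridge edge inserted in the appropriate slot of the rotation. The number of vertices grows only by a constant factor, so $H$ has $O(k^2)$ vertices and can be produced in time polynomial in~$k$, which is all that the statement requires.
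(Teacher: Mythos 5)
Your proposal is correct and follows essentially the same route as the paper: the paper simply replaces the $(k,k)$-cylindrical grid of \cref{lem:single:gadget} by a $k$-cylindrical wall (identifying its degree-2 outer vertices with $X$) and reuses the cut criterion of \cref{lem:single:criterion}, and your vertex-splitting of the degree-4 grid vertices produces exactly such a wall-like subcubic graph with the same linkage/cut verification. The only nitpick is the $k=1$ base case, where $H$ must be the single vertex $x_1$ rather than the empty graph, since $H\cap I=X$ is required.
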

\begin{proof}
    The claim is trivial for $k \le 2$.
    For $k \ge 3$ we use a construction similar to 
    that from \cref{lem:single:gadget} but instead of a $(k,k)$-cylindrical grid (that have vertices of degree 4) we use a $k$-cylindrical wall (with $k$ cycles and $k$ edges between each pair of consecutive cycles; see \Cref{fig:non-crossing-setcover} on page \pageref{fig:non-crossing-setcover} with 3 cycles) and identify the degree-2 vertices on the outer face with $X$.
    A $k$-cylindrical wall is a subcubic graph. 
    The argument that every cross-free $\tcal \sub X^2$ is realizable is the same as for the cylindrical grid and follows from the criterion from
    \cref{lem:single:criterion}.  
\end{proof}

Armed with this gadget, we can assume that a given graph is simple and subcubic. 

\begin{lemma}\label{lem:weight:subcubic}
    There is a polynomial-time algorithm that, given a unitary instance $(G, \tcal)$ of $\nullnoncross$, transforms it into an equivalent unitary instance $(G', \tcal')$ such that $G'$ is simple, subcubic, and $|\tcal'| = |\tcal|$.
\end{lemma}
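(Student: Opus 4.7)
The plan is a two-stage construction. First, I would eliminate parallel edges by subdividing every edge of $G$ once, inserting a new degree-$2$ vertex inside each edge; a set of $m$ parallel $uv$-edges thus becomes $m$ internally disjoint $uv$-paths of length two. No new terminals are introduced, every terminal keeps degree~$1$, and the resulting graph is simple. This step preserves the answer: by \cref{obs:reduction:contract}, contracting each subdivision vertex back establishes a correspondence between non-crossing $\tcal$-flows on both sides, because the subdivision vertices have degree~$2$ and distinct walks through parallel copies of an edge remain edge-disjoint after contraction.

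Second, for every non-terminal vertex $v$ of degree $d \ge 4$, let $e_1,\dots,e_d$ be its incident edges in the cyclic order $\pi_G(v)$. Pick a small topological disc $D_v$ around $v$ that meets each $e_i$ in a single point $x_i$, and set $X_v = \{x_1,\dots,x_d\} \sub \partial D_v$. Apply \cref{lem:weight:gadget} to $(\partial D_v, X_v)$ to obtain a subcubic plane graph $H_v \sub D_v$ with $H_v \cap \partial D_v = X_v$ and $\deg_{H_v}(x_i) \le 2$, that realizes every cross-free $\tcal_v \sub X_v^2$ by a vertex-disjoint linkage. Replace the interior of $D_v$ with $H_v$, reattaching $e_i$ to $x_i$; the resulting graph $G'$ is simple and planar, every $x_i$ has total degree at most $3$, and every internal vertex of $H_v$ is subcubic by the lemma. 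Terminals of $G$ are left untouched, so $(G',\tcal')$ with $\tcal' = \tcal$ is again unitary and satisfies $|\tcal'| = |\tcal|$; the size is polynomial because $|V(H_v)| = \Oh(d^2)$.

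For equivalence, a non-crossing $\tcal$-flow $\pp$ in $G$ induces, at each replaced vertex $v$, a set $\tcal_v$ of pairs $(x_i, x_j)$ corresponding to pairs of incident edges of $v$ used at each visit of each walk in $\pp$ to $v$. The non-crossing condition at $v$ in \cref{def:reduction:flow} translates exactly to $\tcal_v$ being cross-free with respect to the cyclic ordering of $X_v$ along $\partial D_v$. \cref{lem:weight:gadget} then realizes $\tcal_v$ by a vertex-disjoint linkage inside $H_v$; splicing these paths into the walks of $\pp$ produces a non-crossing $\tcal$-flow in $G'$, with planarity and vertex-disjointness of the linkage precluding any crossings inside the gadget. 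Conversely, a non-crossing $\tcal'$-flow in $G'$ yields a non-crossing $\tcal$-flow in $G$ by contracting each $D_v$ back to $v$, as granted by \cref{obs:reduction:contract}.

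The only subtle point in the correctness proof is the translation between the non-crossing condition at a vertex $v$ of $G$ and the cross-free condition for the pair family $\tcal_v$ on $\partial D_v$. This follows by matching the two clauses of \cref{def:reduction:flow}: the first clause handles pairs arising from two distinct walks visiting $v$, while the second clause handles pairs sharing an endpoint at $v$ that come from walks with $v$ as a request terminal or from multiple visits of a single walk to $v$. Everything else reduces to a direct application of \cref{lem:weight:gadget} and \cref{obs:reduction:contract}, and the whole transformation is clearly computable in polynomial time.
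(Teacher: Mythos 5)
Your proof is correct and follows essentially the same route as the paper: the core step is identical, namely replacing high-degree vertices by the cylindrical-wall gadget of \cref{lem:weight:gadget}, translating the non-crossing condition of \cref{def:reduction:flow} at a vertex into cross-freeness of the induced pair set on the surrounding noose, and using \cref{obs:reduction:contract} for the converse direction. The only (cosmetic) difference is that you remove parallel edges by a preliminary subdivision, whereas the paper achieves simplicity by applying the gadget replacement to every non-terminal vertex of degree at least two, so that the two endpoints of any pair of parallel edges acquire distinct attachment points.
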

\begin{proof}
    By the definition of a unitary instance, when $v \in V(G)$ occurs as a terminal in $\tcal$ then $\deg_G(v) = 1$.
    Therefore it suffices to reduce the degrees of non-terminal vertices.
    Let $v \in V(G)$ be a vertex of degree $k = \deg_G(v) \ge 2$; then $v$ does not appear in $\tcal$.
    We draw a noose $I$ around $v$, intersecting each edge from $E_G(v)$ once, and no more edges or vertices from $G$.
    Let $X$ be the set of intersections of $I$ with  $E_G(v)$.
    We replace $v$ with a gadget from \cref{lem:weight:gadget}, creating a subcubic subgraph $H$ properly embedded in $\disc(I)$.
    Since $\deg_H(x) \le 2$ for for every $x \in X$, the degree of $x$ in $G'$ becomes 3.
    
    We argue that this transformation yields an equivalent instance.
    Consider a non-crossing $\tcal$-flow $\pp$ in $G$.
    Let $\tcal_v \sub X^2$ be the set of pairs representing
    pairs of consecutive edges from $E_G(v)$ traversed by 
 walks from $\pp$ (recall that no path from $\pp$ has $v$ as an endpoint).
    Then $\tcal_v$ is cross-free with respect to $H$.
    By \cref{lem:weight:gadget}, there exists a $\tcal_v$-linkage in $H$.
    Since vertex-disjoint paths are clearly non-crossing, this allows us to transform $\pp$ into  a non-crossing $\tcal$-flow in $G'$.
    On the other hand, when a non-crossing $\tcal$-flow exists in $G'$, it can be turned into a non-crossing $\tcal$-flow in $G$ by simply contracting $H$ back to a single vertex (see \cref{obs:reduction:contract}).

    We apply this modification to every vertex in $G$ with degree at least 2, thus creating a subcubic graph.
    Note that every pair of parallel edges in $G$ connects two vertices of degree at least 2, and during the transformation their endpoints become distinct.
    Hence the outcome is also a simple graph.
\end{proof}

In a subcubic graph, the notions of (non-crossing) edge-disjointness and vertex-disjointness coincide, so we can treat an instance of \nullnoncross as an instance of {\sc Planar (Edge-)Disjoint Paths}.
Note that in the first problem we assume that a plane embedding is provided in the input, while in the last two we do not.
However, once the reduction is done, we can discard the fixed embedding.

\begin{theorem}\label{thm:weight:setcover}
    There is a polynomial-time algorithm that, given an instance $(k,\scal, \ell)$ of \textsc{Set Cover},
    outputs an equivalent instance $(G,\tcal)$ of {\sc Planar (Edge-)Disjoint Paths} with $|\tcal| = \Oh(k^7)$.
\end{theorem}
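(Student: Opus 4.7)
The plan is to assemble the reduction by chaining together the three preceding propositions and then verifying that on a simple subcubic plane graph, the three notions of routing coincide. First, I would apply \cref{prop:setcover:final} to the input $(k,\scal,\ell)$, obtaining in polynomial time an equivalent instance $(G_0,\tcal_0)$ of \nullnoncross with $|\tcal_0|=\Oh(k^5)$ and all demands bounded by $2^{\Oh(k)}$. Setting $L=\Oh(k)$ so that every demand is at most $2^L$, I would invoke \cref{prop:weight:final} to transform $(G_0,\tcal_0)$ into an equivalent \emph{unitary} instance $(G_1,\tcal_1)$ with $|\tcal_1|=\Oh(L^2)\cdot|\tcal_0|=\Oh(k^7)$. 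Finally, \cref{lem:weight:subcubic} yields an equivalent unitary instance $(G,\tcal)$ with $G$ simple, subcubic and planar, and $|\tcal|=|\tcal_1|=\Oh(k^7)$. This handles the quantitative bound on $|\tcal|$ and the polynomial running time.

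The remaining step is to argue that the unitary instance $(G,\tcal)$ on a simple subcubic plane graph may be interpreted as an equivalent instance of both {\sc Planar Disjoint Paths} and {\sc Planar Edge-Disjoint Paths}. Here the key observation is that every terminal has degree $1$ in $G$ (by unitarity), and every internal vertex of $G$ has degree at most $3$. Therefore, if two $(s_i,t_i)$- and $(s_j,t_j)$-walks in a non-crossing $\tcal$-flow shared an internal vertex $v$, they would jointly use four edge-slots at $v$, exceeding its degree; thus they must share an edge, contradicting edge-disjointness. Consequently edge-disjoint walks are automatically (a) simple paths, (b) internally vertex-disjoint, and (c) non-crossing (the last because there is no vertex at which two paths could cross). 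In the other direction, a vertex-disjoint collection of $(s_i,t_i)$-paths is trivially edge-disjoint and non-crossing, and hence a valid non-crossing $\tcal$-flow.

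Combining these observations, realizability of $\tcal$ in $G$ as a non-crossing flow, as an edge-disjoint path family, and as a vertex-disjoint path family are all equivalent, so the pair $(G,\tcal)$ (discarding the fixed plane embedding) simultaneously serves as an equivalent instance of {\sc Planar Disjoint Paths} and {\sc Planar Edge-Disjoint Paths}, with $|\tcal|=\Oh(k^7)$. I do not expect a serious obstacle in the write-up: the heavy lifting has already been done in Sections \ref{sec:hard:setcover} and \ref{sec:hard:weights}, and the remaining subcubic equivalence is a short counting argument using the degree bound.
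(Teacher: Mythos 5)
Your proposal is correct and follows essentially the same route as the paper's proof: chain \cref{prop:setcover:final}, \cref{prop:weight:final}, and \cref{lem:weight:subcubic}, then observe that on a simple subcubic graph with degree-$1$ terminals the notions of non-crossing edge-disjoint walks, edge-disjoint paths, and vertex-disjoint paths coincide. Your explicit degree-counting argument for that last equivalence is exactly the justification the paper relies on implicitly.
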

\begin{proof}
    \cref{prop:setcover:final} allows us to 
    compute an instance $(G,\tcal)$ of \nullnoncross with $|\tcal| = \Oh(k^5)$ that is equivalent to $(k,\scal, \ell)$.
    The demands in $(G,\tcal)$ are bounded by $2^{\Oh(k)}$.
    Next, we use \cref{prop:weight:final} to transform $(G,\tcal)$ into an equivalent unitary instance $(\widehat G, \widehat \tcal)$ with $|\widehat\tcal| = \Oh(k^7)$.
    Subsequently, $(\widehat G, \widehat \tcal)$ can be transformed into an equivalent unitary instance $(G',\tcal')$ such that $G'$ is simple, subcubic, and $|\tcal'| = |\widehat\tcal|$ (\cref{lem:weight:subcubic}).
    Let $\tcal_{DP} = \{(s,t) \mid (s,t,1) \in \tcal'\}$.
    In a subcubic graph with all terminals of degree 1, every walk in a solution is a path and two paths are edge-disjoint if and only if they are vertex-disjoint. 
    In turn, vertex-disjointness implies being non-crossing.
    Hence the following statements are equivalent.
    \begin{enumerate}[nolistsep]
        \item There is a non-crossing $\tcal'$-flow in $G'$.
        \item There is a $\tcal_{DP}$-family of vertex-disjoint paths in $G'$.
        \item There is a $\tcal_{DP}$-family of edge-disjoint paths in $G'$.
    \end{enumerate}
    As a consequence, $(G',\tcal_{DP})$ is a yes-instance of {\sc Planar Disjoint Paths} (resp. {\sc Planar Edge-Disjoint Paths}) if and only if $(G',T')$ is a yes-instance of \nullnoncross.
    This concludes the proof.
\end{proof}

The {\sc Set Cover} problem parameterized by the universe size is known not to admit a polynomial kernel unless  coNP $\subseteq$ NP/poly~\cite{DomLS09}.
\cref{thm:weight:setcover} implies the same for {\sc Planar (Edge-)Disjoint Paths} parameterized by the number of the requests.
Since  {\sc Set Cover} under this parameterization is also
WK[1]-complete~\cite{hermelin2015completeness},
we establish WK[1]-hardness of {\sc Planar (Edge-)Disjoint Paths} as well.
As a consequence, we obtain
Theorems~\ref{thm:noPolyKer}, \ref{thm:MK2}, and \ref{thm:noPolyKerEdge}.

\section{Conclusion}\label{sec:conclusion}

We conclude the paper with several open questions. Possibly, the ideas used in this paper will be useful in solving some of these questions.
\mic{In particular, we believe that our construction of a $2^{\Oh(k^2)}\cdot n^{\Oh(1)}$-time algorithm for {\sc Planar Disjoint Paths}, based on the irrelevant edge rule, could be easier to generalize to bounded-genus graphs than the approach based on enumerating homotopy classes of~\cite{cho2023parameterized, LokshtanovMPSZ20}.
It is unclear, however, how to extend the treewidth reduction procedure~\cite{AdlerKKLST17} and the $n^{\Oh(k)}$-time algorithm~\cite{schrijver1994finding}.
This leads to the first open question.}

\begin{itemize}
    \item Can {\sc Disjoint Paths} on graph classes substantially larger than the class of planar graphs (such as \mic{bounded-genus or} minor-free graphs) be solved in time $2^{k^{\Oh(1)}}\cdot n^{\Oh(1)}$, and does it admit a polynomial kernel when parameterized by $k+\mathsf{tw}$? Currently, the best known running time for \mic{proper minor-closed graph classes} is galactic as in the general case. 

    \item Can {\sc Planar Disjoint Paths} be solved in time \mic{$2^{o(k^2)}\cdot n^{\Oh(1)}$} or even $2^{\Oh(k)}\cdot n^{\Oh(1)}$? 
    We remark that the existing NP-hardness proof for {\sc Planar Disjoint Paths}~\cite{kramer1984complexity} only implies that the problem is not solvable in time $2^{o(\sqrt{k})}\cdot n^{\Oh(1)}$ unless the ETH is false.
    \mic{Note that even though  $2^{o(\sqrt{k})}$ may seem a natural parameter dependency for a ``genuinely planar'' problem, for the related {\sc Planar Steiner Tree} problem a $2^{o({k})}\cdot n^{\Oh(1)}$ lower bound is known~\cite{MarxPP18}.}
   
    \item Can the extension of {\sc Planar Disjoint Paths} to directed graphs be solved in time $2^{k^{\Oh(1)}}\cdot n^{\Oh(1)}$, and does it admit a polynomial kernel when parameterized by $k+\mathsf{tw}$? Currently, the best known running time is $2^{2^{k^{\Oh(1)}}}\cdot n^{\Oh(1)}$~\cite{DBLP:conf/focs/CyganMPP13}.   
    
    \item While {\sc Disjoint Paths} is long known to be in FPT, the existing algorithms are very complicated. Can one design a ``simple'' $f(k)\cdot n^{\Oh(1)}$-time (or even $n^{f(k)}$-time) algorithm for this problem?   
    
    \item Being a close relative of {\sc Disjoint Paths}, can {\sc Topological Minor Testing} on the class of planar graphs be solved in time $2^{k^{\Oh(1)}}\cdot n^{\Oh(1)}$? Currently, the best known running time is $2^{2^{2^{k^{\Oh(1)}}}}\cdot n^{\Oh(1)}$~\cite{DBLP:conf/stoc/FominLP0Z20}. Note that the question is not asked for {\sc Minor Testing} since it becomes ``easy'' on planar graphs (due to reasons that do not apply to {\sc Disjoint Paths} and {\sc Topological Minor Testing} on planar graphs, and not to {\sc Minor Testing} in general)~\cite{adler2012fast}.
    
    \item The {\sc Min-Sum Disjoint Paths} problem is the optimization version of {\sc Disjoint Paths} where we do not only need to determine whether a solution exists, but, if the answer is positive, find one where the sum of the lengths of the solution paths is minimized. The {\sc Shortest Disjoint Paths} is a restricted case of this problem where the task is to determine whether there exists a solution where every path is a shortest one between its endpoints. Although introduced more than 20 years ago~\cite{DBLP:journals/dam/Eilam-Tzoreff98}, to date, all we know on {\sc Shortest Disjoint Paths} is that it is W[1]-hard~\cite{DBLP:conf/soda/Lochet21} (and hence also {\sc Min-Sum Disjoint Paths} is W[1]-hard), in XP~\cite{DBLP:conf/soda/Lochet21}, and, on digraphs, in P when $k=2$~\cite{BercziK17} (in contrast to {\sc Disjoint Paths}). The status of {\sc Min-Sum Disjoint Paths} is grimmer: all we know is that it is in P when $k=2$~\cite{DBLP:journals/siamcomp/BjorklundH19}, and, on digraphs, it is NP-hard even when $k=2$ (because it generalizes {\sc Disjoint Paths}). Specifically, we ask: Is {\sc Shortest Disjoint Paths} (or even {\sc Min-Sum Disjoint Paths}) on planar graphs in FPT? Is {\sc Min-Sum Disjoint Paths} (on undirected graphs) in XP? Is {\sc Shortest Disjoint Paths} on digraphs in XP? 
    
    \item Does {\sc Disjoint Paths} admit a polynomial kernel when restricted to chordal graphs? Currently, a positive answer is known for split graphs~\cite{heggernes2015finding,yang2018kernelization}, and, more generally, well-partitioned chordal graphs~\cite{ahn2020well}.

    \item For which other problems that admit a treewidth reduction can we show the impossibility of a polynomial treewidth reduction? We refer to \cite{DBLP:conf/stoc/FominLP0Z20,golovach2013obtaining,DBLP:journals/jcss/Grohe04,DBLP:conf/stoc/GroheKMW11,DBLP:conf/soda/LokshtanovP0SZ18,marx2013finding} for \mic{some} examples of problems other than {\sc Disjoint Paths} and {\sc Minor Testing} that are known to admit \mic{super-polynomial} treewidth reductions.
\end{itemize}
 Lastly, we remark that it might be interesting to study lossy kernels and FPT approximation algorithms~\cite{DBLP:journals/algorithms/FeldmannSLM20} for optimization versions of the above-mentioned problems in future works.  

\bibliographystyle{plainurl}
\bibliography{main}

\end{document}